\theoremstyle{plain} \newtheorem{proposition}{Proposition}[section]
\theoremstyle{plain} \newtheorem{theorem}[proposition]{Theorem}
\theoremstyle{plain} \newtheorem{corollary}[proposition]{Corollary}
\theoremstyle{plain} \newtheorem{lemma}[proposition]{Lemma}
\theoremstyle{remark} \newtheorem{remark}[proposition]{Remark}
\theoremstyle{definition} \newtheorem{example}[proposition]{Example}
\theoremstyle{definition} \newtheorem{definition}[proposition]{Definition}
\theoremstyle{plain}	
\theoremstyle{plain}	\newtheorem{fact}[proposition]{Fact}
\renewcommand{\thefigure}{\ifnum \c@section>\z@ \thesection.\fi
 \@arabic\c@figure}
\renewcommand{\thetable}{\ifnum \c@section>\z@ \thesection.\fi
 \@arabic\c@table}
\def\A{\mathcal{A}}
\def\G{\mathcal{G}}
\def\S{\mathcal{S}}
\def\k{\mathfrak{K}}
\def\N{\mathbb{N}}
\def\Z{\mathbb{Z}}
\def\bw{\mathbf{w}}
\def\bx{\mathbf{x}}
\newcommand{\fac}[2]{{\rm Fac}_{#1}(#2)}
\newcommand{\card}[1]{{\rm Card}(#1)}
\newcommand{\alp}{{\rm Alph}}
\newcommand{\CPref}{{\rm CP}}
\newcommand{\CSuff}{{\rm CS}}
\definecolor{light-gray}{gray}{0.7}
\def\restriction#1#2{\mathchoice
              {\setbox1\hbox{${\displaystyle #1}_{\scriptstyle #2}$}
              \restrictionaux{#1}{#2}}
              {\setbox1\hbox{${\textstyle #1}_{\scriptstyle #2}$}
              \restrictionaux{#1}{#2}}
              {\setbox1\hbox{${\scriptstyle #1}_{\scriptscriptstyle #2}$}
              \restrictionaux{#1}{#2}}
              {\setbox1\hbox{${\scriptscriptstyle #1}_{\scriptscriptstyle #2}$}
              \restrictionaux{#1}{#2}}}
\def\restrictionaux#1#2{{#1\,\smash{\vrule height .8\ht1 depth .85\dp1}}_{\,#2}} 
\renewcommand{\phi}{\varphi}
\newcommand{\emptyword}{\varepsilon}
\newcommand{\definir}[1]{\index{#1}\textit{#1}}
\title[An $S$-adic characterization of minimal subshifts with $1 \leq p(n+1) - p(n) \leq 2$]{An $S$-adic characterization of minimal subshifts with first difference of complexity $1 \leq p(n+1) - p(n) \leq 2$}
\begin{document}


\author{Julien Leroy}	


\address{Mathematics Research Unit, FSTC, University of Luxembourg, 6, rue Coudenhove-
Kalergi, L-1359 Luxembourg, Luxembourg}	

\email{julien.leroy[at]uni.lu}

\subjclass{68R15, 37B10}
\keywords{$S$-adic conjecture, factor complexity, special factor, Rauzy graph}

\begin{abstract}
In [Ergodic Theory Dynam. System, 16 (1996)  663--682], S. Ferenczi proved that any minimal subshift with first difference of complexity bounded by 2 is $S$-adic with $\card S \leq 3^{27}$. In this paper, we improve this result by giving an $\S$-adic charaterization of these subshifts with a set $\S$ of 5 morphisms, solving by this way the $S$-adic conjecture for this particular case.
\end{abstract}





\maketitle

\section{Introduction}

A classical tool in the study of sequences (or infinite words) with values in an alphabet $A$ is the \textit{complexity function} $p$ that counts the number $p(n)$ of words of length $n$ that appear in the sequence. Thus this function allows to measure the regularity in the sequence. For example, it allows to describe all ultimately periodic sequences as exactly being  those for which $p(n) \leq n$ for some length $n$~\cite{Morse-Hedlund}. By extension, this function can obviously be defined for any language or any symbolic dynamical system (or \textit{subshift}). For surveys over the complexity function, see~\cite{Allouche_survey,Ferenczi_survey} or~\cite[Chapter~4]{CANT}.

The complexity function can also be used to define the class of \textit{Sturmian sequences}: it is the family of aperiodic sequences with minimal complexity $p(n) = n +1$ for all lengths $n$. Those sequences are therefore defined over a binary alphabet (because $p(1) =2$) and a large literature is devoted to them (see~\cite[Chapter~1]{Lothaire} and~\cite[Chapter~6]{Pytheas-Fogg} for surveys). In particular, these sequences admit several equivalent definitions such as natural codings of rotations with irrational angle or aperiodic balanced sequences. Moreover, it is well known~\cite{Morse-Hedlund} that the subshifts they generate can be obtained by successive iterations of two morphisms (or substitutions) $R_0$ and $R_1$ defined (when the alphabet $A$ is $\{0,1\}$) by $R_0(0) = 0$, $R_0(1) = 10$, $R_1(0) = 01$ and $R_1(1) = 1$. To generate not all Sturmian subshifts but all sturmian sequences it is necessary~\cite{Mignosi-Seebold,Berthe-Holton-Zamboni} to consider two additional morphisms $L_0$ and $L_1$ defined by $L_0(0) = 0$, $L_0(1) = 01$, $L_1(0) = 10$ and $L_1(1) = 1$. In general, a sequence (or subshift) obtained by such a method, that is, obtained by successive iterations of morphisms belonging to a set $S$, is called an \textit{$S$-adic} sequence (or subshift), accordingly to the terminology of adic systems introduced by A. M.  Vershik~\cite{Vershik-Livshits}.


Beside Sturmian sequences, many other families of sequences are usually studied in the literature. Among them one can find generalizations of Sturmian sequences, such as codings of rotations~\cite{Didier_codage_combinatoire,Rote} or of intervals exchanges~\cite{Rauzy_iet,Ferenczi-Zamboni_language}, Arnoux-Rauzy sequences~\cite{Arnoux-Rauzy} 
and episturmian sequences~\cite{Glen-Justin}. One can also think about \textit{automatic sequences}~\cite{Allouche-Shallit} linked to automata theory and morphisms.

An interesting point is that all these mentioned sequences have a linear complexity, i.e., there exist a constant $D$ such that for all positive integers $n$, $p(n) \leq Dn$. In addition, we can usually associate  a (generally finite) set $S$ of morphisms to these sequences in such a way that they are $S$-adic. It is then natural to ask whether there is a connection between the fact of being $S$-adic and the fact of having a linear complexity. Both notions cannot be equivalents since, thanks to Pansiot's work~\cite{Pansiot}, there exist purely morphic sequences with a quadratic complexity. However, we can imagine a stronger notion of $S$-adicity that would be equivalent to having a linear complexity. In other words, we would like to find a condition $C$ such that \textit{a sequence has a sub-linear complexity if and only if it is $S$-adic satisfying the condition $C$}. This problem is called the \textit{$S$-adic conjecture} and is due to B . Host. Up to now, we have no idea about the nature of the condition $C$. It may be a condition on the set $S$ of morphisms, or a condition on the way in which they must occur in the sequence of morphisms. There exist examples~\cite{Durand-Leroy-Richomme} supporting the idea that the answer should be a combination of both, supporting the difficulty of the conjecture.

A difficulty of the conjecture is that all known $S$-adic representations of families of sequences strongly depend on the nature of these sequences which makes general properties difficult to extract. In addition, the characterization of everywhere growing purely morphic sequences with linear complexity (obtained by Pansiot) can only be generalized into a sufficient condition for $S$-adic sequences~\cite{Durand_LR,Durand_corrigentum} and many (\textit{a priori} natural) conditions over $S$-adic sequences are even not sufficient to guarantee a linear complexity~\cite{Durand-Leroy-Richomme}. Nevertheless, S. Ferenczi~\cite{Ferenczi} provided a general method that, given any uniformly recurrent sequence with linear complexity, produces an $S$-adic representation with a finite set $S$ of morphisms and such that all images of letters under the product of morphisms have length growing to infinity. By a refinement of Ferenczi's proof, the author~\cite{Leroy} managed to highlight a few more necessary conditions of these $S$-adic representations, but which unfortunately were not sufficient to ensure linear complexity. A different (although closely linked) proof of that result can also be obtained using a generalization of return words~\cite{Leroy-Richomme}, a tool that has been helpful to find an $S$-adic characterization of the family of linearly recurrent sequences~\cite{Durand_LR,Durand_corrigentum} (that includes the primitive substitutive sequences~\cite{Durand_UR,Durand-Host-Skau}).

In Ferenczi-s proof, the algorithm that produces the morphisms is based on an extensive use of \textit{Rauzy graphs}. These graphs are powerful tools to study combinatorial properties of sequences or subshifts. For example, they are the basis of a strong Cassaigne's result proving that a sequence has a sub-linear complexity if and only if the first difference of its complexity $p(n+1)-p(n)$ is bounded (see~\cite{Cassaigne_big_thm}). They also allowed T. Monteil~\cite[Chapter~7]{CANT}, \cite[Chapter~5]{Monteil_phd} to improve a result due to M. Boshernitzan~\cite{Boshernitzan} by giving a better bound on the number of ergodic invariant measures of a subshift. However, these graphs are usually difficult to compute as soon as the complexity exceeds a very low level. For this reason, the extraction of properties of the $S$-adic representation from these graphs is usually hard. Anyway, applying these methods to subshifts for which the difference of complexity $p(n+1)-p(n)$ is no more than to 2 for every $n$, Ferenczi succeeded to prove that the number of morphisms built in such a way is less than $3^{27}$.

In this paper, we strongly improve this bound and show the existence of a set $\S$ of 5 morphisms such that any minimal subshift with first difference of complexity bounded by 2 is $\S$-adic. Furthermore, we give necessary and sufficient conditions on sequences in $\S^{\N}$ to be an $\S$-adic representation of such a subshift. In other words, we solve the $S$-adic conjecture for this particular case. This characterization contains the subshifts with complexity $2n$, some of which were studied by G. Rote~\cite{Rote}.

As a corollary, the obtained $\S$-adic representations provide Bratteli-Vershik representations of the concerned subshifts. Historically, O. Bratteli~\cite{Bratteli} introduced infinite graphs (subsequently called \textit{Bratteli diagrams}) partitioned in levels in order to approximate $C^*$-algebras. With other motivations, Vershik~\cite{Vershik} associated dynamics (\textit{adic transformations}) to these diagrams by introducing a lexicographic ordering on the infinite paths of the diagrams. This ordering is induced by a partial order on the arcs between two consecutive levels, it can then be defined by an adjacent matrix between the two considered levels and thus by a morphism. For more details, see~\cite[Chapter~6]{CANT} and see~\cite{Wargan} for the link between Bratteli diagrams and $S$-adic systems.

By a refinement of Vershik's constructions, the authors of~\cite{Herman-Putnam-Skau} have proved that any minimal Cantor system is topologically isomorphic to a Bratteli-Vershik system (Vershik already obtained this result in~\cite{Vershik} in a measure theoretical context). These Bratteli-Vershik representations are helpful in dynamics, mainly with problems about recurrence. But, being given a minimal Cantor system, it is generally difficult to find a \enquote{canonical} Bratteli-Vershik representation (see~\cite{Durand-Host-Skau} for examples). However, Ferenczi proved that for minimal subshift with sub-linear complexity, the number of morphisms read on the associated Bratteli diagram (in a measure theoretical context) is finite~\cite{Ferenczi}. In particular, he obtained an upper bound on the rank of these systems and proved that they cannot be strongly mixing. In addition, Durand showed that, in the case of linearly recurrent subshifts, the morphisms appearing in the $S$-adic representation are exactly those read on the Bratteli diagram. Furthermore, unlike in Ferenczi's result, the subshift is topologically conjugated to the Bratteli-Vershik system. Similarly to that last case, the $\S$-adic representations obtained in this paper are exactly those that can be read on a Bratteli-Vershik system which is topologically conjugated to the $\S$-adic subshift~\cite{Durand-Leroy}.

The paper is organized as follows. Section~\ref{section: Backgrounds} contains all needed definitions and backgrounds. Section~\ref{section: adicity of minimal subshift} concerns $S$-adic representations of minimal subshift. We define the tools that are needed for the announced $\S$-adic characterization in a more general case. In Section~\ref{section: $S$-adicity of subshifts with complexity $2n$}, we start a detailed description of Rauzy graphs corresponding to minimal subshifts with first difference of complexity bounded by 2. This allows us to explicitly compute all needed morphisms and we show that they all can be decomposed into compositions of only five morphisms. In Section~\ref{section: caraterisation}, we improve the results obtained in Section~\ref{section: $S$-adicity of subshifts with complexity $2n$} by studying even more the sequences of possible evolutions of Rauzy graphs. This allows us to obtain an $S$-adic characterization, hence the condition $C$ of the conjecture for this particular case.

\section{Backgrounds}
\label{section: Backgrounds}

\subsection{Words, sequences and languages}
\label{subsection: words and sequences}

We assume that readers are familiar with combinatorics on words; for basic (possibly omitted)
definitions we follow \cite{Lothaire1983,Lothaire,CANT}.

Given an \definir{alphabet} $A$, that is a finite set of symbols called \definir{letters}, we denote by $A^*$ the set of all finite words over $A$ (that is the set of all finite sequences of elements of $A$).
As usual, the \definir{concatenation} of two words $u$ and $v$ is simply denoted $uv$. 
It is well known that the set $A^*$ embedded with the concatenation operation is a free monoid with neutral element $\emptyword$, the \definir{empty word}.

For a word $u = u_1 \cdots u_{\ell}$ of length $|u|=\ell$, we write $u[i,j] = u_i \cdots u_j$ for $1 \leq i \leq j \leq \ell$. A word $v$ is a \textit{factor} of a word $u$ (or \textit{occurs at position $i$} in $u$) if $u[i,j] = v$ for some integers $i$ and $j$. It is a \textit{prefix} (resp. \textit{suffix}) if $i = 1$ (resp. $j = |u|$). The \definir{language} of $u$ is the set $\fac{}{u}$ of all factors of $u$;

A \definir{two-sided sequence} (resp. \definir{one-sided sequence}) is an element of $A^{\N}$ (resp. $A^{\Z}$); sequences will be denoted by bold letters. 
When no information are given, \textit{sequence} means \textit{two-sided sequence}.
With the product topology of the discrete topology over $A$, $A^{\Z}$ and $A^{\N}$ are compact metric spaces.

We extend the notions of factor, prefix and suffix to two-sided sequences (resp. one-sided sequences) putting $i,j \in \Z$ (resp. $i,j \in \N$), $i \leq j$, $i = -\infty$ (resp. $i=0$) for prefixes and $j=+\infty$ for suffixes.

Let $u$ be a non-empty finite word over $A$. We let $u^{\omega}$ (resp. $u^{\infty}$) denote the one-sided sequence  $uuu\cdots$ (resp. two-sided sequence $\cdots uuu.uuu \cdots$) composed of consecutive copies of $u$. A (one-sided) sequence $\bw$ is \definir{periodic} if there is a word $u$ such that $\bw \in \{ u^{\omega}, u^{\infty} \}$.

A sequence $\bw$ is \definir{recurrent} if every factor occurs infinitely often. It is \definir{uniformly recurrent} if it is recurrent and every factor occurs with bounded gaps, \textit{i.e.}, if $u$ is a factor of $\bw$, there is a constant $K$ such that for any integers $i,j$ such that $\bw[i,i+|u|-1]$ and $\bw[j,j+|u|-1]$ are two consecutive occurrences of $u$ in $\bw$, then $|i-j| \leq K$.

\subsection{Subshifts and minimality}
\label{subsection: Subshifts and minimality}

A \definir{subshift} over $A$ is a couple $(X,\restriction{T}{X})$ (or simply $(X,T)$) where $X$ is a closed $T$-invariant ($T(X) = X$) subset of $A^{\Z}$ and $T$ is the \definir{shift transformation} $T : A^{\Z} \rightarrow A^{\Z}, \ (\bw_i)_{i \in \Z} \mapsto (\bw_{i+1})_{i \in \Z}$.

The \definir{language} of a subshift $X$ is the union of the languages of its elements and we denote it by $\fac{}{X}$.

Let $\mathbf{w}$ be a sequence (or a one-sided sequence) over $A$. We denote by $X_{\bw}$ the set $\{ \bx \in A^{\Z} \mid \bx[i,j] \in \fac{}{\bw} \text{ for all } i,j \in \Z, i \leq j \}$. Then, $(X_{\mathbf{w}},T)$ is a subshift called the \definir{subshift generated by $\mathbf{w}$}. For $\bw \in A^\Z$, we have $X_{\bw} = \overline{\{ T^n(\bw) \mid n \in \Z \}}$

A subshift $(X,T)$ is \textit{periodic} whenever $X$ is finite. Observe that in this case, $X$ contains only periodic sequences. It is \textit{minimal} if the only closed $T$-invariant subsets of $X$ are $X$ and $\emptyset$, or, equivalently, if for all $\bw \in X$, we have $X = X_{\bw}$. We also have that $(X_{\bw},T)$ is minimal if and only if $\mathbf{w}$ is uniformly recurrent.

In the sequel, we will mostly consider minimal subshifts.

\subsection{Factor complexity and special factors}

The \definir{factor complexity} of a subshift $X$ is the function $p_X : \N \to \N$ that counts the number of words of each length that occur in elements of $X$, \textit{i.e.}, $p_X(n) = \card{\fac{n}{X}}$, where $\fac{n}{X} = \fac{}{X} \cap A^n$. 

The first difference of complexity $s(n) = p(n+1)-p(n)$ is closely related to special factors~\cite{Cassaigne_resume}. 
A word $u$ in $\fac{}{X}$ is a \definir{right special factor} (resp. a \definir{left special factor}) if there are two letters $a$ and $b$ in $A$ such that $ua$ and $ub$ (resp. $au$ and $bu$) belong to $\fac{}{X}$. 
For $u$ in $\fac{}{X}$, if $\delta^+u$ (resp. $\delta^-u$) denotes the number of letters $a$ in $A$ such that $ua$ (resp. $au$) is in $\fac{}{X}$ we have
\begin{eqnarray}
\label{eq 1}
	p_X(n+1)-p_X(n) & = & 	\sum_{\substack{u \in \fac{n}{X} \\ u \text{ right special}}} \underbrace{(\delta^+u-1)}_{\geq 1} \\
\label{eq 2}		& = &	\sum_{\substack{u \in \fac{n}{X} \\ u \text{ left special}}} \underbrace{(\delta^-u-1)}_{\geq 1}
\end{eqnarray}

It is well known~\cite{Morse-Hedlund} that a subshift is aperiodic if and only if $p_X(n) \geq n+1$ for all $n$, or, equivalently, if there is at least one right (resp. left) special factor of each length.

The second difference of complexity $s(n+1)-s(n)$ is related to \definir{bispecial factors}, \textit{i.e.}, to facotrs that are both left and right special. Indeed, if $u$ is a bispecial factor in $\fac{}{X}$, its \definir{bilateral order} is $m(u) = \# (\fac{}{X} \cap A u A) - \delta^+u - \delta^-u +1$ and we have\footnote{Observe that for non-bispecial factors $u$, we have $m(u)=0$.}
\[
	s_X(n+1) - s_X(n) = \sum_{u \in \fac{n}{X}} m(u).
\]
A bispecial factor $u$ is said to be \definir{weak} (resp. \definir{neutral}, \definir{strong}) whenever $m(u)<0$ (resp. $m(u)=0$, $m(u)>0$).

\subsection{Morphisms and $S$-adicity}

Given two alphabets $A$ and $B$, a free monoid morphism, or simply \definir{morphism} $\sigma$, is a map from $A^*$ to $B^*$ such that $\sigma(uv) = \sigma(u) \sigma(v)$ for all words $u$ and $v$ over $A$ (note this implies $\sigma(\emptyword) = \emptyword$). 
It is well known that a morphism is completely determined by the images of letters.

When a morphism is not \definir{erasing}, that is the images of letters are never the empty word, the notion of morphism extends naturally to (one-sided) sequences.
If $A = B$ and if there is a letter $a \in A$ such that $\sigma(a) \in a A^*$, then $\sigma^{\omega}(a) = \lim_{n \to +\infty} \sigma^n(a^{\omega})$ is a one-sided sequence which is a fixed point of $\sigma$. If there is also a letter $b$ such that $\sigma(b) \in A^* b$, then the two-sided sequence $\sigma^{\omega}(\prescript{\omega}{}b.a^\omega) = \lim_{n \to +\infty} \sigma^n(\cdots bbb.aaa \cdots)$ is also a fixed point of $\sigma$.

Let $\bw$ be a sequence over $A$. An \definir{adic representation} of $\mathbf{w}$ is given by a sequence $(\sigma_n: A_{n+1}^* \rightarrow A_n^*)_{n \in \N}$ of morphisms and a sequence $(a_n)_{n \in \N}$ of letters, $a_i \in A_i$ for all $i$ such that $A_0 = A$, $\lim_{n \rightarrow +\infty} |\sigma_0 \sigma_1 \cdots \sigma_n (a_{n+1})| = +\infty$ and
\[
	\bw = \lim_{n \rightarrow +\infty} \sigma_0 \sigma_1 \cdots \sigma_n (a_{n+1}^{\infty}).
\]
The sequence $(\sigma_n)_{n \in \N}$ is the \definir{directive word} of the representation. Let $S$ be a set of morphisms. We say that $\bw$ is \definir{$S$-adic} (or that $\bw$ is \definir{directed by $(\sigma_n)_{n \in \N}$})  if $(\sigma_n)_{n \in \N} \in S^{\N}$. In the sequel, we will say that a sequence $\bw$ is $S$-adic whenever there is a set $S$ of morphisms such that $\bw$ admits an $S$-adic representation. We say that a subshift $(X,T)$ is \definir{$S$-adic} if it is the subshift generated by an $S$-adic sequence.

Let $(\sigma_n)_{n \in \N}$ be a sequence of morphisms. The sequence of morphisms $(\tau_n: B_{n+1}^* \rightarrow B_n^*)_{n \in \N}$ is a \definir{contraction} of $(\sigma_n: A_{n+1}^* \rightarrow A_n^*)_{n \in \N}$ if there is a sequence of integers $(i_n)_{n \in \N}$ such that for all $n$ in $\N$, $B_n = A_{i_n}$ and 
\[
	\tau_n = \sigma_{i_n} \sigma_{i_n+1} \cdots \sigma_{i_{n+1}-1}. 
\]

A sequence $(\sigma_n)_{n \in \N}$ of morphisms is said to be \definir{weakly primitive} if for all $r \in \N$, for all $s > r$ and for all letters $a \in A_r$ and $b \in A_{s+1}$, the letter $a$ occurs in $\sigma_r \cdots \sigma_s(b)$.  A sequence $(\sigma_n)_{n \in \N}$ of morphisms is said to be \definir{primitive} if it is weakly primitive and there is a constant $k$ such that $s$ can be replaced by $r+k$.

\begin{remark}
\label{remark: weak prim and prim}
A sequence of morphisms is weakly primitive if and only if it admits a contraction which is primitive.
\end{remark}

\subsection{Rauzy graphs}
\label{subsection: Rauzy graphs}

Let $(X,T)$ be a subshift over an alphabet $A$. 

\begin{definition}
The \definir{Rauzy graph of order $n$} of $(X,T)$ (also called \textit{graph of words of length $n$}), denoted by $G_n(X)$ (or simply $G_n$), is the labelled directed graph $(V(n),E(n))$, where the set $V(n)$ of vertices is $\fac{n}{X}$ and	there is an edge from $u$ to $v$ if there exist some letters $a$ and $b$ in $A$ such that $ub = av \in \fac{n+1}{X}$; 
this edge is labelled\footnote{In the literature, there are different ways of labelling the edges. Indeed, the edges are sometimes labelled by the letter $a$, by the letter $b$, by the ordered pair $(a,b)$ or by the word $av$.} by $ub$ and is denoted by $(u,(a,b),v)$.
\end{definition}

Let us introduce some notation: for an edge $e=(u,(a,b),v)$, let us call $o(e) = u$ its \definir{outgoing vertex}, $i(e) = v$ its \definir{incoming vertex}, $\lambda_L(e)=a$ its \definir{left label}, $\lambda_R(e)=b$ its \definir{right label} and $\lambda(e) = ub = av$ its \definir{full label}. Same definitions hold for labels of paths (left and right labels being words of same length as the considered path) where we naturally extend the map $\lambda$ to the set of paths by $\lambda\left( (u_0,(a_1,b_1),u_1)(u_1,(a_2,b_2),u_2)\cdots (u_{\ell-1},(a_{\ell},b_{\ell}),u_{\ell}) \right) = u_0 b_1 b_2 \cdots b_{\ell}$. In this paper we will mostly consider right labels.

\begin{example} 
Let $(X_{\phi},T)$ be the subshift generated by the \textit{Fibonacci sequence} $\phi^{\omega}(0)$ where $\phi$ is the morphism defined by $\phi(0)=01$, $\phi(1)=0$. Figure~\ref{Rauzy graphs of the Fibonacci sequence} represents the three first Rauzy graphs of $(X_{\phi},T)$ (with full labels on the edges).
\begin{figure}[h!tbp]
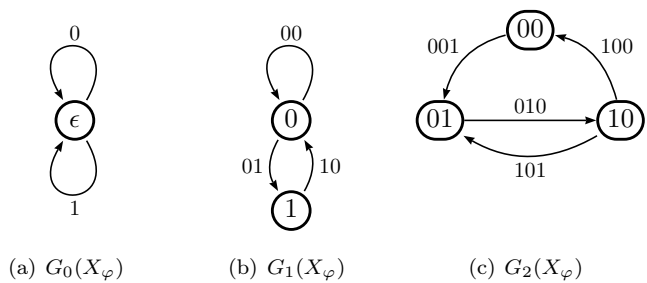

\centering
\subfigure[$G_0(X_{\phi})$]{
\label{figure: rauzy graph G_0 of the fibonacci sequence}
\scalebox{0.6}{
\begin{VCPicture}{(0.5,0)(3.5,5)}
\ChgEdgeLabelScale{0.8}
\StateVar[\epsilon]{(2,3)}{epsilon}
\LoopR[0.5]{90}{epsilon}{0}
\LoopL[0.5]{-90}{epsilon}{1}
\end{VCPicture}
}}
\qquad
\subfigure[$G_1(X_{\phi})$]{
\label{figure: rauzy graph G_1 of the fibonacci sequence}
\scalebox{0.6}{
\begin{VCPicture}{(0.5,0)(3.5,5)}
\ChgEdgeLabelScale{0.8}
\StateVar[0]{(2,3)}{0}
\StateVar[1]{(2,1)}{1}
\LoopR[0.5]{90}{0}{00}
\LArcR[0.5]{0}{1}{01}
\LArcR[0.5]{1}{0}{10}
\end{VCPicture}
}}
\qquad
\subfigure[$G_2(X_{\phi})$]{
\label{figure: rauzy graph G_2 of the fibonacci sequence}
\scalebox{0.6}{
\begin{VCPicture}{(0,0)(4,5)}
\ChgEdgeLabelScale{0.8}
\StateVar[01]{(0,3)}{01}
\StateVar[10]{(4,3)}{10}
\StateVar[00]{(2,5)}{00}
\EdgeL[0.5]{01}{10}{010}
\LArcR[0.5]{10}{00}{100}
\LArcR[0.5]{00}{01}{001}
\LArcL[0.5]{10}{01}{101}
\end{VCPicture}
}}

\caption{First Rauzy graphs of the Fibonacci sequence}
\label{Rauzy graphs of the Fibonacci sequence}
\end{figure}
\end{example}

\begin{remark}
\label{remark: uniformly recurrent strongly connected}
Any minimal subshift has only strongly connected Rauzy graphs (that is, for all vertices $u$ and $v$ of $G_n$ there is a path from $u$ to $v$).
\end{remark}
We say that a vertex $v$ is \textit{right special} (resp. \textit{left special}, \textit{bispecial}) if it corresponds to a right special (resp. left special, bispecial) factor.

By definition of Rauzy graphs, any word $u \in \fac{}{X}$ is the full label of a path in $G_n(X)$ for $n < |u|$. Figure~\ref{figure: rauzy graph G_1 of the fibonacci sequence} shows that the converse is not true: the word $000$ is the full label of a path of length $2$ but does not belong to $\fac{}{X_{\phi}}$. Hence a path $p$ is said to be \definir{allowed} if $\lambda(p) \in \fac{}{X}$. The next proposition follows immediately from definitions.

\begin{proposition}
\label{prop: path in Rauzy graphs}
Let $G_n$ be a Rauzy graph of order $n$. For all paths $p$ of length $\ell \leq n$ in $G_n$, the left (resp. right) label of $p$ is a prefix (resp. a suffix) of $o(p)$ (resp. of $i(p)$).
\end{proposition}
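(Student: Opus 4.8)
The plan is to argue by induction on the length $\ell$ of the path, unwinding the defining relation of the edges of a Rauzy graph. Recall that an edge $e = (u,(a,b),v)$ exists precisely when $ub = av \in \fac{n+1}{X}$. Writing $u = u_1 \cdots u_n$ and $v = v_1 \cdots v_n$ and comparing the two words $u_1 \cdots u_n b$ and $a v_1 \cdots v_n$ position by position, one reads off $a = u_1$, $v_i = u_{i+1}$ for $1 \leq i \leq n-1$, and $v_n = b$. In other words, a single edge shifts its outgoing vertex one place to the left, dropping the leading letter $\lambda_L(e)=a$ and appending the trailing letter $\lambda_R(e)=b$. This already settles the case $\ell = 1$: the left label $a$ is the length-$1$ prefix of $o(e) = u$ and the right label $b$ is the length-$1$ suffix of $i(e) = v$.

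For the general case, fix a path $p = e_1 e_2 \cdots e_\ell$ of length $\ell \leq n$, with $e_k = (u_{k-1},(a_k,b_k),u_k)$, so that $o(p) = u_0$ and $i(p) = u_\ell$. First I would establish, from the single-edge relation above by a straightforward induction on $k$, the closed form
\[
	u_k = (u_0)_{k+1} (u_0)_{k+2} \cdots (u_0)_n \, b_1 b_2 \cdots b_k, \qquad a_k = (u_0)_k,
\]
valid for all $0 \leq k \leq \ell$ (here the hypothesis $\ell \leq n$ is what guarantees that every index used stays within the range $1, \dots, n$, so that both sides are well-defined words of length $n$). Reading off the two labels of $p$ then finishes the argument: the left label $a_1 a_2 \cdots a_\ell = (u_0)_1 (u_0)_2 \cdots (u_0)_\ell$ is exactly the length-$\ell$ prefix of $o(p) = u_0$, while from the displayed form of $u_\ell$ the word $b_1 b_2 \cdots b_\ell$ is exactly its length-$\ell$ suffix, i.e.\ the right label of $p$ is a suffix of $i(p)$.

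Since the statement is only an unwinding of the definition of edges, there is no genuine obstacle; the one point requiring care is the bookkeeping of indices, and in particular the use of $\ell \leq n$ to ensure that a word of length $\ell$ can actually sit as a prefix (resp.\ suffix) of the length-$n$ vertices $o(p)$ and $i(p)$. If one prefers to avoid writing out the explicit closed form, the same conclusion follows by peeling off the first edge $e_1$ to treat the left label (the left label of $e_2 \cdots e_\ell$ being, by induction, a prefix of $u_1$, itself the left shift of $u_0$) and, symmetrically, peeling off the last edge $e_\ell$ to treat the right label.
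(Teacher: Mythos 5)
Your proof is correct, and it is exactly the argument the paper has in mind: the paper states that the proposition ``follows immediately from definitions'' and omits the proof, and your induction simply writes out that unwinding of the edge relation $ub=av$ in full detail. Nothing to object to.
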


\begin{definition}
The \textit{reduced Rauzy graph} of order $n$ of $(X,T)$ is the directed graph $g_n(X)$ such that 
\begin{list}{-}{}
\item 	the vertices are the vertices of $G_n(X)$ that are either special or \enquote{boundary}, \textit{i.e.}, at least one value in $\{\delta^+v,\delta^-v\}$ is null and 
\item 	there is an edge from $u$ to $v$ is there is a path $p$ in $G_n(X)$ from $u$ to $v$ such that all interior vertices of $p$ are not special.
\end{list}
\end{definition}
The (left, right and full) labels of an edge in $g_n(X)$ are the (left, right and full) labels of the corresponding path in $g_n(X)$. To avoid any confusion, edges of reduced Rauzy graphs are represented by double lines. Figure~\ref{figure: figure fibonacci reduced} represents the reduced Rauzy graph $g_{\phi}(2)$ with full labels on the edges.

\begin{figure}[h!tbp]
\centering
\scalebox{0.6}{
\begin{VCPicture}{(-1,0)(5,5)}
\ChgEdgeLabelScale{0.8}
\StateVar[01]{(0.5,2.5)}{ab}
\StateVar[10]{(3.5,2.5)}{ba}
\EdgeLineDouble
\VCurveL[]{angleA=45,angleB=135,ncurv=2}{ba}{ab}{1001}
\VCurveR[]{angleA=-45,angleB=-135,ncurv=2}{ba}{ab}{101}
\EdgeL{ab}{ba}{010}
\end{VCPicture}
}
\caption{$g_{2}(X_{\phi})$}
\label{figure: figure fibonacci reduced}
\end{figure}

\section{Adicity of minimal subshifts using Rauzy graphs}
\label{section: adicity of minimal subshift}

Let $(X,T)$ be a minimal subshift over an alphabet $A$. In this section we prove the following theorem.

\begin{theorem}
\label{thm adic minimal}
An aperiodic subshift $(X,T)$ is minimal if and only if it is primitive and proper $S$-adic. Moreover, if $X$ does not have linear complexity, then $S$ is infinite.
\end{theorem}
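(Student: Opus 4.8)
The plan is to prove the two implications separately, treating the sufficiency (primitive proper $S$-adic $\Rightarrow$ minimal) as the routine direction and the necessity as the heart of the argument, and then to obtain the ``Moreover'' clause from Cassaigne's theorem on the first difference of complexity.

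For the sufficiency, suppose $(X,T)$ is generated by $\bw = \lim_n \sigma_0 \cdots \sigma_n(a_{n+1}^{\infty})$ with primitive directive word $(\sigma_n)$. I would show directly that $\bw$ is uniformly recurrent, which, by the characterization recalled in Section~\ref{subsection: Subshifts and minimality}, forces $(X,T)$ to be minimal. Given a factor $u$ of $\bw$, the condition that the image lengths tend to infinity places an occurrence of $u$ inside $\sigma_0 \cdots \sigma_r(b)$ for some level $r$ and some letter $b \in A_{r+1}$. Primitivity supplies a constant $k$ with $b$ occurring in $\sigma_{r+1} \cdots \sigma_{r+k}(c)$ for \emph{every} letter $c$, so every block $\sigma_0 \cdots \sigma_{r+k}(c)$ contains a copy of $u$; since $\bw$ is a concatenation of such blocks, read off $a_{r+k+1}^{\infty}$, of uniformly bounded length, the gaps between consecutive occurrences of $u$ are bounded. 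Properness plays no role here; it only strengthens the conclusion of the converse.

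For the necessity I would build the representation from the Rauzy graphs. Fix $\bw \in X$; it is uniformly recurrent, so for every factor $w$ the set of return words to $w$ is finite and $\bw$ decomposes, without overlap, as a concatenation of return words. Taking a nested sequence of prefixes $w_0 \prec w_1 \prec \cdots$ of $\bw$ of strictly increasing length (they exist at all lengths by the aperiodicity criterion of~\cite{Morse-Hedlund} recalled above), each step $w_k \to w_{k+1}$ yields a morphism $\sigma_k$ sending the alphabet that indexes the return words to $w_{k+1}$ onto their expansions over the return words to $w_k$. This is exactly where the Rauzy graphs make the construction explicit: a return word to $w_k$ is a first-return loop at the vertex $w_k$ in $G_{|w_k|}$, and the reduced graph $g_{|w_k|}$ records how these loops concatenate through the special vertices. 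Left-properness comes for free, since cutting $\bw$ at the occurrences of $w_k$ makes every block begin with the first letter of $w_k$. The directive word so obtained is \emph{weakly} primitive, because bounded gaps force every return word to $w_k$ to occur in the expansion of every return word to $w_{k+1}$ once $k$ is large; a contraction (Remark~\ref{remark: weak prim and prim}) then upgrades this to genuine primitivity, and, as the composition of proper morphisms is again proper, properness is preserved under contraction while the length condition $|\sigma_0 \cdots \sigma_n(a_{n+1})| \to \infty$ survives by aperiodicity.

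The main obstacle is achieving \emph{right}-properness simultaneously with primitivity: raw return words begin with a common letter but need not end with one, so full properness requires an additional normalization—recentering the coding on the marker $w_k$ (reading each return loop as a block that both starts and terminates at the vertex $w_k$, using Proposition~\ref{prop: path in Rauzy graphs} to control its full label)—and one must check that this recentering interacts correctly with the desubstitution of Rauzy graphs without collapsing the successive alphabets. Finally, for the ``Moreover'' clause I would argue by contraposition: assuming a proper primitive $S$-adic representation with $S$ finite, each left special factor of length $n$ desubstitutes, through the proper morphisms, into a left special factor at a bounded lower level preceded by a bounded prefix, so with only finitely many morphisms available the number of left special factors of each length—and hence, by~\eqref{eq 2}, the value of $p_X(n+1)-p_X(n)$—is bounded. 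Cassaigne's theorem~\cite{Cassaigne_big_thm} then yields that $X$ has linear complexity, and contrapositively, if $X$ does not have linear complexity the set $S$ cannot be finite.
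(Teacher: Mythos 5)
Your outline follows essentially the same route as the paper: by Remark~\ref{remark: lien entre n-circuit et mot de retour}, the allowed $n$-circuits based at $U_n$ are exactly the return words to $U_n$, so your return-word construction and the paper's circuit construction produce the same morphisms; the sufficiency direction via uniform recurrence is standard, and the \enquote{Moreover} clause is Proposition~\ref{prop: durand primitif}. (A small point there: once you have bounded $p(n+1)-p(n)$ you get linear complexity by summing; Cassaigne's theorem is the \emph{converse} implication and is not what you need.)

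The genuine gap is properness, which you correctly flag as the main obstacle but do not resolve. With markers nested as \emph{prefixes}, left properness of a contraction is indeed essentially free (once $|w_{k+1}|$ exceeds $|w_k|$ plus the longest return word to $w_k$, the first return word to $w_k$ in each decomposition is dictated by $w_{k+1}$ alone), but right properness genuinely fails and no contraction repairs it. In the Fibonacci subshift the return words to $01$ are $010$ and $01$, which decompose over the return words to $0$ as $(01)(0)$ and $(01)$, ending in different letters; the same happens at every level, because the last return word to $w_r$ inside a return word to $w_s$ is governed by the left context of the terminal occurrence of $w_s$, and every prefix $w_s$ is left special, so that context varies. \enquote{Recentering on the marker $w_k$} does not fix this: the complete return words $w_k v$ overlap and do not tile $\bw$. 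The missing ingredient is the conjugation trick of Lemma~\ref{lemma: conjugue de morphism} and Fact~\ref{fact: conjugue}: replace every other morphism by its conjugate (in your prefix-based setting, move the common initial return word of the images to the end) and contract in pairs, so that each composed morphism is a left proper morphism followed by a right proper one, hence proper, while Lemma~\ref{lemma: conjugue de morphism} guarantees the language is unchanged. The paper instead gets right properness structurally (Lemma~\ref{lemma: toutes les images finissent pareil}: the suffix-nesting of the $U_n$ forces the tail of every long circuit via Proposition~\ref{prop: path in Rauzy graphs}) and then conjugates to add left properness; either way the conjugation step is indispensable, and it is also what yields the two-sided convergence of $\tau_0\cdots\tau_n(b_{n+1}^{\infty})$ required by the definition of an adic representation, a point your argument leaves unaddressed.
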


The construction of the $S$-adic representation is based on the evolution of Rauzy graphs. Similar construction can be found in~\cite{Leroy} (see also~\cite{Ferenczi}) where we give a method to build a $S$-adic representation of any uniformly recurrent sequence with a sub-linear complexity. In that paper, the construction is based on the $n$-segments although here we work with the $n$-circuits (see Section~\ref{subsection: Definition of the morphisms} below for the definition). However the techniques are the same.

\subsection{$n$-circuits}

For $n \in \N$, an \definir{$n$-circuit} is a non-empty path $p$ in $G_n(X)$ such that $o(p) = i(p)$ is a right special vertex and no interior vertex of $p$ is $o(p)$.

\begin{remark}
An $n$-circuit is not necessarily an allowed path of $G_n(X)$. Indeed, consider the subshift $X_{\mu}$ generated by the \definir{Thue-Morse sequence} $\mu^{\omega}(0)$ where $\mu$ is the \definir{Thue-Morse morphism} defined by $\mu(0)=01$ and $\mu(1)=10$. The path
\[
	010 \to \left( 101 \to 011 \to 110 \to 101 \right)^3 \to 010
\]
in Figure~\ref{fig: Rauzy graph of order 3 of bt} is a $3$-circuit and its full label contains the word $(101)^3$ which is not a factor of $\mu^{\omega}(0)$ since the Thue-Morse sequence is cube-free~\cite{Thue_1,Thue_2}.
\begin{figure}[h!tbp]
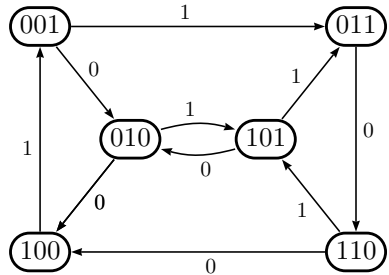

\centering
\scalebox{0.6}{
\begin{VCPicture}{(0,-0.5)(7,5)}
\ChgEdgeLabelScale{0.8}
\StateVar[001]{(0,5)}{001}
\StateVar[011]{(7,5)}{011}
\StateVar[010]{(2,2.5)}{010}
\StateVar[101]{(5,2.5)}{101}
\StateVar[100]{(0,0)}{100}
\StateVar[110]{(7,0)}{110}
\EdgeL{001}{011}{1}
\EdgeL{001}{010}{0}
\EdgeL{011}{110}{0}
\EdgeL{010}{100}{0}
\ArcL{010}{101}{1}
\ArcL{101}{010}{0}
\EdgeL{101}{011}{1}
\EdgeL{010}{100}{0}
\EdgeL{100}{001}{1}
\EdgeL{110}{101}{1}
\EdgeL{110}{100}{0}
\end{VCPicture}
}
\caption[Rauzy graph of order 3 of the Thue-Morse sequence.]{Rauzy graph of order 3 (with right labels on the edges) of $X_{\mu}$.}
\label{fig: Rauzy graph of order 3 of bt}
\end{figure}
\end{remark}

\begin{remark}
\label{remark: lien entre n-circuit et mot de retour}
The notion of $n$-circuit is closely related to the notion of return word. Let us recall that if $u \in \fac{}{X}$, a \definir{return word} to $u$ in $X$ is a non-empty word $v$ such that $uv \in \fac{}{X}$ and that contains exactly two occurrences of $u$, one as a prefix and one as a suffix. If $u$ is a right special vertex in $G_n(X)$, then $\{ \lambda_R(v) \mid v \text{ allowed } n\text{-circuit starting from } u\}$ is exactly the set of return words to $u$. 
\end{remark}

\begin{fact}
\label{fact: alphabet fini}
A subshift is minimal if and only if for all $n$, the number of its allowed $n$-circuits is finite.
\end{fact}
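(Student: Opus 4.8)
The plan is to prove both directions of the equivalence using the structure of uniformly recurrent sequences and the finiteness of the alphabet $A$. Recall that by the results in Subsection~\ref{subsection: Subshifts and minimality}, a subshift $(X,T) = (X_{\bw},T)$ is minimal if and only if $\bw$ is uniformly recurrent, so I will freely translate between the two formulations.

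\medskip

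\noindent\textbf{The forward direction.} First I would assume $(X,T)$ is minimal and fix $n \in \N$; I must show the set of allowed $n$-circuits is finite. By Remark~\ref{remark: uniformly recurrent strongly connected}, the Rauzy graph $G_n(X)$ is strongly connected, and since $X$ is a subshift over the finite alphabet $A$, each $G_n(X)$ has finitely many vertices (namely $p_X(n)$ of them) and finitely many edges. The key point is that an allowed $n$-circuit $p$ starts and ends at a right special vertex $u$ with no interior occurrence of $u$. By Remark~\ref{remark: lien entre n-circuit et mot de retour}, the right labels of allowed $n$-circuits starting from a right special vertex $u$ are exactly the return words to $u$ in $X$. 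Since $\bw$ is uniformly recurrent, every factor $u$ occurs with bounded gaps, so there is a constant $K$ bounding the length of any return word to $u$; hence there are only finitely many return words to $u$, and thus finitely many allowed $n$-circuits starting from $u$. Summing over the finitely many right special vertices of $G_n(X)$ gives the finiteness of the full set of allowed $n$-circuits.

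\medskip

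\noindent\textbf{The converse.} For the other direction I would prove the contrapositive: if $X$ is not minimal, then some $G_n(X)$ has infinitely many allowed $n$-circuits. If $X$ is not minimal, then $\bw$ is not uniformly recurrent, so there is a factor $u \in \fac{}{X}$ that does not occur with bounded gaps; that is, the gaps between consecutive occurrences of $u$ are unbounded, which produces return words to $u$ of arbitrarily large length. Choosing $n \geq |u|$ large enough that $u$ sits inside a right special vertex (using aperiodicity, or more directly tracking $u$ through longer words), I would argue via Remark~\ref{remark: lien entre n-circuit et mot de retour} that these unboundedly long return words correspond to infinitely many distinct allowed $n$-circuits, since circuits of distinct lengths are distinct. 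The main technical care here is to pass from ``$u$ has unbounded gaps'' to ``some right special vertex at level $n$ has infinitely many return words''; one may need to replace $u$ by a suitable extension that is right special, or to observe that long return words to $u$ force long allowed circuits somewhere in the strongly connected graph.

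\medskip

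\noindent The step I expect to be the main obstacle is precisely this last passage in the converse: relating the combinatorial failure of bounded gaps for an arbitrary factor $u$ to the existence of infinitely many \emph{allowed $n$-circuits at a fixed order $n$}, anchored at a right special vertex. One subtlety is that a return word need not correspond to a single $n$-circuit but possibly to a concatenation of several; controlling this decomposition, and ensuring the resulting circuits are genuinely distinct and allowed, is where the argument requires the most attention. The forward direction, by contrast, should follow cleanly from the return-word correspondence of Remark~\ref{remark: lien entre n-circuit et mot de retour} together with the bounded-gap property of uniformly recurrent sequences.
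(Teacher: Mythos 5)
Your forward direction is correct, and it is essentially all the paper ever uses of Fact~\ref{fact: alphabet fini} (the paper states it without proof and invokes it only in Definition~\ref{definition: definition des morphismes} to conclude that $\A_n$ is finite for a minimal subshift). Minimality gives uniform recurrence, hence a uniform bound on the gaps between occurrences of a fixed right special factor $U$ of length $n$, hence finitely many return words to $U$; by Remark~\ref{remark: lien entre n-circuit et mot de retour} these are exactly the right labels of the allowed $n$-circuits based at $U$, and since a path in $G_n$ is determined by its starting vertex together with its right label, the allowed $n$-circuits at $U$ are in bijection with these return words. Summing over the at most $p_X(n)$ right special vertices finishes this half. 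Incidentally, the worry you raise about a return word corresponding to a concatenation of several circuits is unfounded here: that phenomenon concerns the decomposition of $(n+1)$-circuits into $n$-circuits in Lemma~\ref{lemma: decomposition circuit}, not return words to the base vertex itself, which by definition correspond to paths avoiding the base vertex in their interior, i.e.\ to single circuits.

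The converse contains a genuine gap, precisely at the step ``$\bw$ is not uniformly recurrent, so there is a factor $u$ whose consecutive occurrences have unbounded gaps, which produces return words to $u$ of arbitrarily large length.'' Uniform recurrence can also fail because some factor is not recurrent at all, i.e.\ occurs only finitely often; such a factor has few (possibly no) return words, and your construction then produces nothing. This is not a repairable oversight: the paper's own Example~\ref{ex: not valid path 2} (the subshift generated by $\cdots 000.1000\cdots$, Figure~\ref{figure: 0001000}) is not minimal, yet its only right special factor of each length $n$ is $0^n$, and $0^n$ admits exactly two allowed $n$-circuits, with right labels $0$ and $10^n$. So the implication ``finitely many allowed $n$-circuits for all $n$ $\Rightarrow$ minimal,'' read literally, is false, and no argument can close this direction without an additional hypothesis (for instance recurrence of every factor, which is exactly what your argument implicitly assumes when you pass from ``not uniformly recurrent'' to ``unbounded gaps between consecutive occurrences''). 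The right course is to prove the forward implication as you did and to note that the stated equivalence requires such a supplementary assumption.
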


The next lemma is also well known.

\begin{lemma}
\label{lemma: pas de court}
Let $(X,T)$ be a minimal and aperiodic subshift. Then 
\[
	\lim_{n \to +\infty} \min\{|\lambda_R(p)| \mid p \text{ allowed } n\text{-circuit}\} = +\infty.
\]
\end{lemma}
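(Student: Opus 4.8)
The plan is to prove the contrapositive: if these minimal lengths did not tend to $+\infty$, I would produce a periodic point of $X$, contradicting aperiodicity. Concretely, negating the conclusion yields an integer $M$ and infinitely many orders $n$ for which $G_n(X)$ carries an allowed $n$-circuit $p$ with $|\lambda_R(p)| \le M$. I fix such an $n > M$, such a circuit $p$, and write $u = o(p) = i(p) \in \fac{n}{X}$ for its base vertex.

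The first step is to read off a short period of $u$. By Remark~\ref{remark: lien entre n-circuit et mot de retour}, $\lambda_R(p)$ is a return word to $u$; in particular the factor $u\,\lambda_R(p) \in \fac{}{X}$ has $u$ both as a prefix and as a suffix. Hence $u$ occurs in $u\,\lambda_R(p)$ at position $0$ and at position $q := |\lambda_R(p)|$, with $1 \le q \le M < n$, so the two occurrences overlap and force $u$ to admit the period $q \le M$. Running over the infinitely many $n$ at hand, I obtain factors of $X$ of unbounded length, each with period at most $M$.

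The second step extracts a genuine periodic word by a double pigeonhole. Since every period produced lies in the finite set $\{1,\dots,M\}$, one value $q$ recurs for infinitely many of these $n$; and since a word of period $q$ is determined by its length-$q$ prefix, one fixed word $w$ with $|w|=q$ is that prefix for infinitely many $n$. For each such $n$ the corresponding factor is a prefix of a power of $w$, so $w^k \in \fac{}{X}$ for arbitrarily large $k$, hence $w^k \in \fac{}{X}$ for every $k$. A standard closedness argument --- approximating $w^{\infty}$ on ever larger central windows by suitably shifted elements of $X$ --- then yields $w^{\infty} \in X$, and minimality gives $X = X_{w^{\infty}}$, a finite subshift. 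This contradicts aperiodicity and proves the limit is $+\infty$.

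I expect the overlap step and the two pigeonhole reductions to be routine; the substantive point, where aperiodicity and minimality are actually consumed, is the final passage from \emph{bounded period along a subsequence of lengths} to an honest periodic bi-infinite word lying inside the closed set $X$.
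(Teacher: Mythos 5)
The paper does not prove this lemma at all --- it is stated with the remark that it ``is also well known'' --- so there is no argument of the paper's to compare against. Your proof is correct and is the standard one: the overlap of the prefix and suffix occurrences of $u$ in $u\lambda_R(p)$ (valid because $|\lambda_R(p)|\leq M< n=|u|$, using Remark~\ref{remark: lien entre n-circuit et mot de retour} to identify $\lambda_R(p)$ with a return word) does force period $q\leq M$, the double pigeonhole and the compactness/closedness argument produce $w^{\infty}\in X$, and minimality then collapses $X$ to a finite, hence periodic, subshift, contradicting aperiodicity.
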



\subsection{Definition of the morphisms of the adic representation}
\label{subsection: Definition of the morphisms}

The adic representation that we will compute is based on the behaviour of $n$-circuits when $n$ increases. 
To this aim we define a map $\psi_n$ on the set of paths of $G_{n+1}(X)$ in the following way. 
For each path $p$ in $G_{n+1}(X)$ with right label $\lambda_R(p)=u$, $\psi_n(p)$ is the unique path $q$ in $G_n(X)$ whose right label is $\lambda_R(q)=u$ and such that $o(q)$ and $i(q)$ are suffixes of $o(p)$ and $i(p)$ respectively.
The next lemma is obvious.

\begin{lemma}
\label{lemma: decomposition circuit}
Let $(X,T)$ be a subshift. If $u \in \fac{n+1}{X}$ is a right special factor, then for all allowed $(n+1)$-circuit $p$ starting from $u$, there exist some allowed $n$-circuits $q_1, q_2, \dots, q_k$ starting from the right special factor $u[2,n+1] \in \fac{n}{X}$ such that $\psi_n(p) = q_1 q_2 \cdots q_k$. Moreover, if $G_n(X)$ does not contain any bispecial vertex, then $\psi_n$ is a bijective map such that for every allowed $(n+1)$-circuit, $\psi_n(p)$ is an allowed $n$-circuit.
\end{lemma}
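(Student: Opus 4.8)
The plan is to unwind the definition of $\psi_n$ and to observe that it acts on a path of $G_{n+1}(X)$ by replacing each length-$(n+1)$ vertex with its length-$n$ suffix while leaving the right label unchanged. First I would record two elementary facts used throughout. (i) Any path in a Rauzy graph is determined by its outgoing vertex together with its right label, since from a vertex $w$ the edge with right label $b$ necessarily leads to $w[2,|w|]b$. (ii) For a path $p$ we have $\lambda(p)=o(p)\lambda_R(p)$ by the definition of $\lambda$ on paths, and since $o(\psi_n(p))$ is the length-$n$ suffix of $o(p)$ while $\lambda_R(\psi_n(p))=\lambda_R(p)$, the full label $\lambda(\psi_n(p))$ is a suffix of $\lambda(p)$. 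In particular $\psi_n(p)$ is allowed whenever $p$ is, because $\fac{}{X}$ is closed under taking factors.

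For the first assertion, write $u'=u[2,n+1]$. It is right special in $\fac{n}{X}$: if $uc,ud\in\fac{}{X}$ for two distinct letters, then $u'c,u'd$ are suffixes of $uc,ud$ and hence lie in $\fac{}{X}$. Since $o(p)=i(p)=u$, fact (ii) shows that $\psi_n(p)$ is an allowed closed path based at $u'$. I would then decompose this closed walk at its successive returns to $u'$: letting $0=t_0<t_1<\dots<t_k=|p|$ be exactly the times at which $\psi_n(p)$ sits on $u'$, the segment $q_i$ between times $t_{i-1}$ and $t_i$ is a non-empty path with $o(q_i)=i(q_i)=u'$ and no interior occurrence of $u'$, i.e. an $n$-circuit, and $\psi_n(p)=q_1\cdots q_k$. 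Each $q_i$ is allowed because $\lambda(q_i)$ is the factor of $\lambda(\psi_n(p))$ read between positions $t_{i-1}+1$ and $n+t_i$, hence belongs to $\fac{}{X}$.

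For the moreover part I would exploit the absence of bispecial vertices in $G_n(X)$. In a minimal subshift every factor extends in both directions, so $\delta^-u'\ge 1$; not being bispecial, $u'$ is not left special, whence $\delta^-u'=1$ and $u'$ admits a unique left extension $a$. Consequently $au'$ is the unique length-$(n+1)$ factor whose length-$n$ suffix is $u'$, and since the given vertex $u$ is such a factor, $u=au'$. The key step is then the exact correspondence of visits: $\psi_n(p)$ sits on $u'$ at a time $t$ if and only if the vertex of $p$ at time $t$ has suffix $u'$, i.e. equals $u$. As $p$ is an $(n+1)$-circuit it meets $u$ only at its endpoints, so $k=1$ and $\psi_n(p)$ is a single allowed $n$-circuit. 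Bijectivity follows from the same ideas: injectivity from fact (i) together with the fact that $o(p)=u$ is forced by $o(\psi_n(p))=u'$ via the unique left extension; and surjectivity by lifting an allowed $n$-circuit $q$ based at $u'$ to the path $p$ of $G_{n+1}(X)$ with outgoing vertex $u$ and right label $\lambda_R(q)$, which is well defined and allowed because $u'$ being always preceded by $a$ gives $au'\lambda_R(q)=u\,\lambda_R(q)\in\fac{}{X}$, and which is an $(n+1)$-circuit by reversing the visit correspondence.

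The routine part is the decomposition of a closed walk into circuits; the real content, and the step I expect to require the most care, is the moreover part, where the no-bispecial hypothesis must be converted into the unique-left-extension property and then into a bijective, visit-by-visit correspondence between $p$ and $\psi_n(p)$, so that no spurious return to $u'$ can appear and every $n$-circuit lifts uniquely.
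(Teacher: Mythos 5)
Your proof is correct and is the argument the paper has in mind: the paper states this lemma without proof (declaring it obvious), and your writeup --- suffix-preservation under $\psi_n$, cutting the closed walk at its returns to $u'$, and using the unique left extension of the non-bispecial $u'$ to match visits of $p$ to $u$ with visits of $\psi_n(p)$ to $u'$ and to lift $n$-circuits --- supplies exactly the missing details, including the allowedness of each $q_i$ as a factor of $\lambda(\psi_n(p))$. The only cosmetic point is that you invoke minimality to get $\delta^-u'\geq 1$, which in fact holds for any subshift since every factor of a two-sided sequence extends to the left; relatedly, the lift in the surjectivity step tacitly uses that $u=au'$ is right special, which follows from $\delta^+(au')=\delta^+(u')\geq 2$ because every occurrence of $u'$ is preceded by $a$.
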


Lemma~\ref{lemma: decomposition circuit} allows to define some morphisms coding how the $n$-circuits can be concatenated to create the $(n+1)$-circuits. However we can see in this lemma that we can only put in relations the $n$-circuits and $(n+1)$-circuits that are starting in vertices with the same suffix of length $n$. Lemma~\ref{lemma: suite de speciaux droits} below allows to choose some particular vertices; it comes from aperiodicity and from the observation that any suffix of a right special factor is also right special.

\begin{lemma}
\label{lemma: suite de speciaux droits}
Let $(X,T)$ be an aperiodic subshift on an alphabet $A$. There exists an infinite sequence $(U_n \in \fac{n}{X})_{n \in \N}$ such that for all $n$, $U_n$ is a right special factor and is a suffix of $U_{n+1}$.
\end{lemma}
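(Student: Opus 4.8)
\emph{The plan} is to obtain $(U_n)$ by a compactness (K\"onig's lemma) argument carried out on the tree of right special factors ordered by the suffix relation. The construction rests on two elementary observations. The first is that, since $(X,T)$ is aperiodic, there is a right special factor of \emph{every} length: indeed, by the Morse--Hedlund characterization recalled above, aperiodicity is equivalent to $p_X(n+1)-p_X(n)\ge1$ for all $n$, and by~\eqref{eq 1} this difference is the sum of $(\delta^+u-1)$ over the right special factors $u\in\fac{n}{X}$, so there is at least one such factor for each $n$. The second is that every suffix of a right special factor is again right special: if $u$ is right special there are distinct letters $a,b$ with $ua,ub\in\fac{}{X}$, and for any suffix $v$ of $u$ the words $va,vb$ are suffixes of $ua,ub$, hence lie in $\fac{}{X}$, giving $\delta^+v\ge2$.

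\emph{Next} I would assemble the right special factors into a single rooted tree. Its nodes are the right special factors of $X$; its root is $\emptyword$ (which is right special because aperiodicity forces $\card{\fac{1}{X}}\ge2$); and the parent of a right special factor $V$ of length $n+1$ is declared to be its suffix of length $n$, which is right special by the second observation. Any child of a node $V$ of length $n$ has the form $aV$ with $a\in A$, so each node has at most $\card{A}$ children and the tree is finitely branching. By the first observation the tree contains a node at every level, so it is infinite.

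\emph{Finally}, K\"onig's lemma applied to this infinite, finitely branching, rooted tree produces an infinite branch $\emptyword=U_0,U_1,U_2,\dots$; by construction each $U_n$ is a right special factor of $\fac{n}{X}$ and, being the parent of $U_{n+1}$, is a suffix of $U_{n+1}$, which is exactly the asserted sequence. The whole argument is a compactness statement and presents no serious difficulty; the only points that require attention are the two preliminary observations and the verification that the suffix tree is simultaneously infinite (nonempty at each level) and finitely branching before K\"onig's lemma is invoked.
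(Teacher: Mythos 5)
Your proof is correct and follows exactly the route the paper intends: the paper gives no formal proof, only the remark that the lemma ``comes from aperiodicity and from the observation that any suffix of a right special factor is also right special,'' and your argument supplies precisely these two observations together with the K\"onig's lemma compactness step that the paper leaves implicit.
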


\begin{definition}
\label{definition: definition des morphismes}
Let $(X,T)$ be a minimal and aperiodic subshift and let $(U_n)_{n \in \N}$ be a sequence as in Lemma~\ref{lemma: suite de speciaux droits}. 
For each non-negative integer $n$, we let $\A_n$ denote the set of allowed $n$-circuits starting from $U_n$ ($\A_n$ is finite due to Fact~\ref{fact: alphabet fini}).
Now define the alphabet $A_n = \{0,1,\dots,\card{\A_n}-1 \}$ and consider a bijection $\theta_n : A_n \rightarrow \A_n$. We can extend $\theta_n$ to an isomorphism by putting $\theta_n(ab) = \theta_n(a) \theta_n(b)$ for all letters $a,b$ in $A_n$ (observe that $\theta_n(a) \theta_n(b)$ might not be a path in $G_n(X)$). Then, for all $n$ we define the morphism $\gamma_n : A_{n+1}^* \rightarrow A_n^*$ as the unique morphism satisfying
\[
	\theta_n \gamma_n = \psi_n \theta_{n+1}.
\]
\end{definition}

\begin{remark}
\label{remark: sigme = identity}
Let $(i_n)_{n \in \N}$ be the increasing sequence of non-negative integers such that there is a bispecial factor in $\fac{k}{X}$ if and only if $k = i_n$ for some $n$. It is a direct consequence of Lemma~\ref{lemma: decomposition circuit} that if $k \notin \{i_n \mid n \in \N\}$, then the morphism $\gamma_k$ is simply a bijective and letter-to-letter morphism. This morphism only depends on the differences that could exist between $\theta_k$ and $\theta_{k+1}$. In that case, we can suppose without loss of generality that $\theta_k$ and $\theta_{k+1}$ satisfy $\psi_k \theta_{k+1} (i) = \theta_k(i)$ for all letters $i$ in $A_{k+1}$ so that $\gamma_k$ is the identity morphism. As a consequence, to build an adic representation of a subshift, it would suffice to consider the subsequence $(\gamma_{i_n})_{n \in \N}$ of $(\gamma_n)_{n \in \N}$. Depending on the context, we will sometimes consider the sequence $(\gamma_n)_{n \in \N}$ or the subsequence $(\gamma_{i_n})_{n \in \N}$.
\end{remark}

\begin{remark}
\label{remark = A_0 = A}
If the alphabet of $(X,T)$ is $A=\{0,\dots,k\}$, the Rauzy graph $G_0(X)$ is as in Figure~\ref{Rauzy graph of order 0} so we have $\lambda(\A_0) = A$. We can suppose that $\theta_0$ is such that $\lambda_R \theta_0(a) = a$ for all $a \in A_0$.
\begin{figure}[h!tbp]
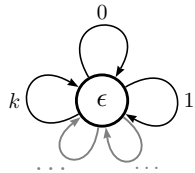

\centering
\scalebox{0.6}{
\begin{VCPicture}{(0,1)(5,4)}
\ChgEdgeLabelScale{0.8}
\LargeState
\State[\epsilon]{(2.5,2.5)}{epsilon}
\LoopN[0.5]{epsilon}{0}
\LoopE[0.5]{epsilon}{1}
\LoopL[0.5]{180}{epsilon}{k}
\DimEdge	\CLoopL[0.5]{-60}{epsilon}{\dots}	\RstEdge
\DimEdge	\CLoopL[0.5]{-120}{epsilon}{\dots}	\RstEdge
\end{VCPicture}
}
\caption{Rauzy graph $G_0$ of any subshift over $\{0,\dots, k\}$}
\label{Rauzy graph of order 0}
\end{figure}
\end{remark}

\subsubsection{An example}
\label{subsubsection: An example}

Consider a graph as represented in Figure~\ref{figure: Sturmian graph with bispecial factor and labels} and let us give all possible evolutions from it. The letters $a$ and $b$ (resp. $\alpha$ and $\beta$) represent the right (resp. left) extending letters of $U_n$.

\begin{figure}[h!tbp]
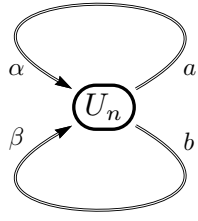

\centering
\scalebox{0.7}{
\begin{VCPicture}{(0,0.5)(5,4.5)}
\ChgEdgeLabelScale{0.8}
\StateVar[U_n]{(2.5,2.5)}{B}
\EdgeLineDouble
\VCurveR[.1]{angleA=30,angleB=150,ncurv=7}{B}{B}{a}		\LabelR[.9]{\alpha}
\VCurveL[.1]{angleA=-30,angleB=-150,ncurv=7}{B}{B}{b}	\LabelL[.9]{\beta}
\end{VCPicture}
}
\caption{Reduced Rauzy graph $g_n$ with some additional labels}
\label{figure: Sturmian graph with bispecial factor and labels}
\end{figure}

By definition of the Rauzy graph, the words $\alpha U_n$, $\beta U_n$, $U_na$ and $U_nb$ are vertices of $G_{n+1}$. Since the subshifts we are considering satisfy $p(n+1)-p(n) \geq 1$ for all $n$, at least one of the vertices $\alpha U_n$ and $\beta U_n$ is right special and at least one of the vertices $U_na$ and $U_nb$ is left special. Moreover, by definition of the reduced Rauzy graphs, the two loops of $g_n$ become edges respectively from $U_na$ to $\alpha U_n$ and from $U_nb$ to $\beta U_n$. Thus, the only missing information are which edges are starting from $\alpha U_n$ and $\beta U_n$ and which edges are arriving to $U_na$ and $U_nb$. By minimality, $G_{n+1}$ has to be strongly connected so we have only three possibilities (2 of them being symmetric). The possible evolutions are represented at Figure~\ref{figure: evolution of a sturmian graph}.

\begin{figure}[h!tbp]
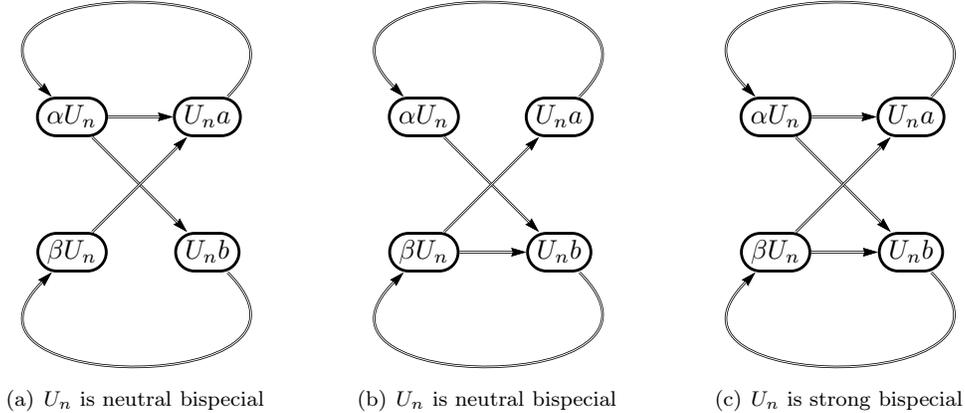

\begin{center}
\subfigure[$U_n$ is neutral bispecial]{
\label{figure: Evolution of a sturmian graph with ordinary bispecial factor}
\scalebox{0.6}{
\begin{VCPicture}{(-0.5,-2)(5.5,6.5)}
\ChgEdgeLabelScale{0.8}
\StateVar[\alpha U_n]{(1,4)}{alpha B}
\StateVar[\beta U_n]{(1,1)}{beta B}
\StateVar[U_na]{(4,4)}{Ba}
\StateVar[U_nb]{(4,1)}{Bb}
\EdgeLineDouble
\Edge{alpha B}{Ba}
\Edge{alpha B}{Bb}
\Edge{beta B}{Ba}
\VCurveR[]{angleA=45,angleB=135,ncurv=2}{Ba}{alpha B}{}
\VCurveR[]{angleA=-45,angleB=-135,ncurv=2}{Bb}{beta B}{}
\end{VCPicture}
}}
\qquad
\subfigure[$U_n$ is neutral bispecial]{
\label{figure: Evolution of a sturmian graph with ordinary bispecial factor'}
\scalebox{0.6}{
\begin{VCPicture}{(-0.5,-2)(5.5,6.5)}
\ChgEdgeLabelScale{0.8}
\StateVar[\alpha U_n]{(1,4)}{alpha B}
\StateVar[\beta U_n]{(1,1)}{beta B}
\StateVar[U_na]{(4,4)}{Ba}
\StateVar[U_nb]{(4,1)}{Bb}
\EdgeLineDouble
\Edge{alpha B}{Bb}
\Edge{beta B}{Bb}
\Edge{beta B}{Ba}
\VCurveR[]{angleA=45,angleB=135,ncurv=2}{Ba}{alpha B}{}
\VCurveR[]{angleA=-45,angleB=-135,ncurv=2}{Bb}{beta B}{}
\end{VCPicture}
}}
\qquad
\subfigure[$U_n$ is strong bispecial]{
\label{figure: Evolution of a sturmian graph with strong bispecial factor}
\scalebox{0.6}{
\begin{VCPicture}{(-0.5,-2)(5.5,6.5)}
\ChgEdgeLabelScale{0.8}
\StateVar[\alpha U_n]{(1,4)}{alpha B}
\StateVar[\beta U_n]{(1,1)}{beta B}
\StateVar[U_na]{(4,4)}{Ba}
\StateVar[U_nb]{(4,1)}{Bb}
\EdgeLineDouble
\Edge{alpha B}{Ba}
\Edge{alpha B}{Bb}
\Edge{beta B}{Ba}
\Edge{beta B}{Bb}
\VCurveR[]{angleA=45,angleB=135,ncurv=2}{Ba}{alpha B}{}
\VCurveR[]{angleA=-45,angleB=-135,ncurv=2}{Bb}{beta B}{}
\end{VCPicture}
}}
\end{center}
\caption{Possible evolutions of the graph represented in Figure~\ref{figure: Sturmian graph with bispecial factor and labels}}
\label{figure: evolution of a sturmian graph}
\end{figure}

Suppose that the bijection $\theta_n$ maps $0$ to the $n$-circuit starting with an $a$ and $1$ to the $n$-circuit starting with a $b$. Consider the same definition of $\theta_{n+1}$ for the two first evolutions (since $\# \A_{n+1} = 2$). For the third one, suppose that $\# \A_{n+1} = r+1$ ($1 \leq r < + \infty$) and that if $U_{n+1} = \alpha U_n$ (resp. $\beta U_n$), $\theta_{n+1}(0)$ is the loop starting with the edge from $\alpha U_n$ to $U_na$ (resp. $\beta U_n$ to $U_nb$) and let $k_1, \dots, k_r$ be integers such that $\theta_{n+1}(i)$ is the path going to $U_nb$ (resp. to $U_na$) and going $k_i$ times through the loop $U_nb \rightarrow \beta U_n \rightarrow U_nb$ (resp. $U_na \rightarrow \alpha U_n \rightarrow U_na$) before coming back to $\alpha U_n$ (resp. $\beta U_n$).

Then for the two first possible evolutions, the morphisms coding them are respectively
\begin{eqnarray}
	\begin{cases}
		0 \mapsto 0	\\
		1 \mapsto 10
	\end{cases}
	&
	\text{and }
	&
	\begin{cases}
		0 \mapsto 1	\\
		1 \mapsto 01
	\end{cases}
\end{eqnarray}
and the morphism coding the third evolution is one of the following, depending on the choice of $U_{n+1}$:
\begin{eqnarray}
\label{equation: morphisms 1 to 8}
	\begin{cases}
		0 \mapsto 0		\\
		1 \mapsto 1^{k_1} 0 	\\
		2 \mapsto 1^{k_2} 0		\\
		\vdots	\\
		r \mapsto 1^{k_r} 0
	\end{cases}
	&
	\begin{cases}
		0 \mapsto 1		\\
		1 \mapsto 0^{k_1} 1 	\\
		2 \mapsto 0^{k_2} 1		\\
		\vdots	\\
		r \mapsto 0^{k_r} 1	
	\end{cases}
\end{eqnarray}

\subsection{Adic representation of $\fac{}{X}$}
\label{subsection: Adicity of L(X)}

The next two result shows \textit{a posteriori} that this makes sense to build an $S$-adic representation using $n$-circuits: it states that when considering a sequence $(U_n)_{n \in \N}$ as Lemma~\ref{lemma: suite de speciaux droits}, the labels of $n$-circuits starting from $U_n$ provide the entire language of $X$ when $n$ goes to infinity.

\begin{lemma}
\label{lemma: langage de X et circuits}
Let $(X,T)$ be a minimal and aperiodic subshift. If $(U_n \in \fac{n}{X})_{n \in \N}$ is a sequence of right special vertices such that $U_n$ is suffix of $U_{n+1}$, then for all $n$
\begin{multline}
\label{eq 1 lemma langage}
	\fac{}{\{ \lambda_R(p) \mid p \text{ allowed } (n+1)\text{-circuit starting from } U_{n+1}\}^*} 	\\
	\subseteq \fac{}{\{ \lambda_R(p) \mid p \text{ allowed } n\text{-circuit starting from } U_n\}^*} 
\end{multline}
and
\begin{equation}
\label{eq 2 lemma langage}
	\bigcap_{n \in \N} \fac{}{\{ \lambda_R(p) \mid p \text{ allowed } n\text{-circuit starting from } U_n\}^* } = \fac{}{X}.
\end{equation}
Furthermore, for all non-negative integers $\ell$, there is a non-negative integer $N_\ell$ such that 
\begin{equation}
\label{eq 3 lemma langage}
	\fac{\leq \ell}{\{ \lambda_R(p) \mid p \text{ allowed } N_\ell\text{-circuit starting from } U_n\}^*} = \fac{\leq \ell}{X},
\end{equation}
where $\fac{\leq \ell}{X}$ stands for $\bigcup_{0 \leq n \leq \ell} \fac{n}{X}$.
\end{lemma}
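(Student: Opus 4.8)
Throughout, write $\C_n = \{\lambda_R(p) \mid p \text{ allowed } n\text{-circuit starting from } U_n\}$, so that the sets appearing in the statement are exactly $\fac{}{\C_n^*}$. The starting observation is that, since $U_n$ is a suffix of $U_{n+1}$ and $|U_n| = n$, we have $U_n = U_{n+1}[2,n+1]$. For the inclusion~\eqref{eq 1 lemma langage} I would apply Lemma~\ref{lemma: decomposition circuit} to the right special factor $U_{n+1}$: every allowed $(n+1)$-circuit $p$ starting from $U_{n+1}$ satisfies $\psi_n(p) = q_1 \cdots q_k$ with each $q_j$ an allowed $n$-circuit starting from $U_{n+1}[2,n+1] = U_n$. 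Since $\psi_n$ preserves right labels by definition, $\lambda_R(p) = \lambda_R(q_1) \cdots \lambda_R(q_k) \in \C_n^*$. Hence $\C_{n+1} \subseteq \C_n^*$, so $\C_{n+1}^* \subseteq \C_n^*$ and therefore $\fac{}{\C_{n+1}^*} \subseteq \fac{}{\C_n^*}$.

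For the forward inclusion $\fac{}{X} \subseteq \fac{}{\C_n^*}$ holding for every $n$ (which yields $\supseteq$ in~\eqref{eq 2 lemma langage}), I would use that minimality makes $X$ uniformly recurrent. Given $w \in \fac{}{X}$, pick $\mathbf{x} \in X$ containing $w$ and, using uniform recurrence of $U_n$, choose an occurrence of $U_n$ in $\mathbf{x}$ ending strictly before $w$ and another starting strictly after $w$. The factor $s$ of $\mathbf{x}$ delimited by these two occurrences starts and ends with $U_n$ and contains $w$ inside its suffix $t$, where $s = U_n t$. Reading $s$ as a path in $G_n(X)$ from $U_n$ to $U_n$ and cutting at every interior occurrence of $U_n$ decomposes it into allowed $n$-circuits starting from $U_n$, whose right labels concatenate to $t$; thus $t \in \C_n^*$ and $w$ is a factor of $t$.

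The reverse inclusion in~\eqref{eq 2 lemma langage} and the whole of~\eqref{eq 3 lemma langage} rest on a single claim, which I expect to be the main obstacle: \emph{for $\ell$ fixed, once $n$ is large the words of $\fac{\leq \ell}{\C_n^*}$ already lie in $\fac{}{X}$}. By Lemma~\ref{lemma: pas de court} I can choose $N_\ell \geq \ell$ so that every allowed $N_\ell$-circuit has right label of length $> \ell$. Any $w$ with $|w| \leq \ell$ that is a factor of a product $\lambda_R(p_1) \cdots \lambda_R(p_m)$ of such circuit labels then either lies inside a single factor $\lambda_R(p_i)$ --- and is thus a factor of $U_n \lambda_R(p_i) = \lambda(p_i) \in \fac{}{X}$ --- or straddles exactly one seam, writing $w = w' w''$ with $w'$ a suffix of $\lambda_R(p_i)$ and $w''$ a prefix of $\lambda_R(p_{i+1})$. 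In the latter case Proposition~\ref{prop: path in Rauzy graphs}, applied to the last $|w'| \leq n$ edges of the circuit $p_i$ (whose incoming vertex is $U_n$), shows $w'$ is a suffix of $U_n$; meanwhile $U_n w''$ is a prefix of $\lambda(p_{i+1}) = U_n \lambda_R(p_{i+1}) \in \fac{}{X}$. Hence $w' w''$ is a factor of $U_n w'' \in \fac{}{X}$, so $w \in \fac{}{X}$.

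Combining this claim at level $N_\ell$ with the forward inclusion gives $\fac{\leq \ell}{\C_{N_\ell}^*} = \fac{\leq \ell}{X}$, which is~\eqref{eq 3 lemma langage}; and since any $w \in \bigcap_n \fac{}{\C_n^*}$ lies in $\fac{}{\C_{N_{|w|}}^*}$, the claim forces $w \in \fac{}{X}$, giving the reverse inclusion of~\eqref{eq 2 lemma langage}. The delicate point is guaranteeing that a short $w$ crosses at most one seam of the concatenation $\lambda_R(p_1) \cdots \lambda_R(p_m)$, which is precisely what Lemma~\ref{lemma: pas de court} secures; without it, $w$ could span several circuit labels and the boundary analysis via Proposition~\ref{prop: path in Rauzy graphs} would break down.
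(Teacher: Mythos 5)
Your proof is correct and follows the route the paper intends: inclusion~\eqref{eq 1 lemma langage} from Lemma~\ref{lemma: decomposition circuit}, the forward inclusion of~\eqref{eq 2 lemma langage} from uniform recurrence, and the reverse inclusion together with~\eqref{eq 3 lemma langage} from Lemma~\ref{lemma: pas de court} and Proposition~\ref{prop: path in Rauzy graphs}. The paper only states this in one line (``follows from Lemma~\ref{lemma: decomposition circuit} and minimality''), and your write-up supplies exactly the missing details, including the key point that long circuit labels force a short word to cross at most one seam.
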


\begin{proof}
Indeed, \eqref{eq 1 lemma langage} directly follows from Lemma~\ref{lemma: decomposition circuit} and~\eqref{eq 2 lemma langage} and~\eqref{eq 3 lemma langage} are consequences of the minimality. 
\end{proof}

The next result is just a reformulation of Lemma~\ref{lemma: pas de court} and of Lemma~\ref{lemma: langage de X et circuits}.

\begin{corollary}
\label{cor: langage}
Let $(X,T)$ be a minimal and aperiodic subshift and let $(\gamma_n)_{n \in \N}$ be the sequence of morphisms as in Definition~\ref{definition: definition des morphismes}. We have 
\[
	\min_{n \to +\infty} \min_{a \in A_{n+1}} |\gamma_0 \cdots \gamma_n(a)| = +\infty
\]
and for all sequences of letters $(a_n \in A_n)_{n \in \N}$,
\[
	\bigcap_{n \in \N} \fac{}{\gamma_0 \cdots \gamma_n(a_{n+1})} = \fac{}{X}. 
\]
\end{corollary}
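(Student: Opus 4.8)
The plan is to show that Corollary~\ref{cor: langage} follows by translating the two statements of Lemma~\ref{lemma: langage de X et circuits} and of Lemma~\ref{lemma: pas de court} through the coding provided by the bijections $\theta_n$ of Definition~\ref{definition: definition des morphismes}. The key observation, which I would establish first, is that the morphisms $\gamma_n$ are nothing but the conjugation by the $\theta_n$ of the substitution maps $\psi_n$ acting on $n$-circuits: by definition $\theta_n \gamma_n = \psi_n \theta_{n+1}$, and composing these relations for $0,1,\dots,n$ yields $\theta_0 \gamma_0 \gamma_1 \cdots \gamma_n = \psi_0 \psi_1 \cdots \psi_n \theta_{n+1}$. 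Hence for every letter $a \in A_{n+1}$, the path $\theta_0 \gamma_0 \cdots \gamma_n(a)$ in $G_0(X)$ is exactly $\psi_0 \cdots \psi_n$ applied to the $(n+1)$-circuit $\theta_{n+1}(a)$, and by Lemma~\ref{lemma: decomposition circuit} this is a concatenation of allowed $0$-circuits whose full label is precisely the right label of the $(n+1)$-circuit $\theta_{n+1}(a)$. Using Remark~\ref{remark = A_0 = A}, where $\lambda_R \theta_0(a) = a$, I would conclude the identity $\gamma_0 \gamma_1 \cdots \gamma_n(a) = \lambda_R(\theta_{n+1}(a))$ as words over $A_0 = A$, i.e. the word coded by the letter $a$ under the telescoped morphism is literally the right-label of the corresponding $(n+1)$-circuit starting from $U_{n+1}$.

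Granting this dictionary, both assertions of the corollary become immediate. For the length statement, I would note that $|\gamma_0 \cdots \gamma_n(a)| = |\lambda_R(\theta_{n+1}(a))|$ is the length of the right label of an allowed $(n+1)$-circuit; since $\theta_{n+1}$ is a bijection onto the allowed $(n+1)$-circuits starting from $U_{n+1}$, the minimum over $a \in A_{n+1}$ of these lengths is at least $\min\{|\lambda_R(p)| \mid p \text{ allowed } (n+1)\text{-circuit}\}$, which tends to $+\infty$ by Lemma~\ref{lemma: pas de court}. For the language statement, the set $\{\gamma_0 \cdots \gamma_n(a) \mid a \in A_{n+1}\}$ is exactly the set $\{\lambda_R(p) \mid p \text{ allowed } (n+1)\text{-circuit starting from } U_{n+1}\}$, so $\fac{}{\gamma_0 \cdots \gamma_n(a_{n+1})}$ is a subset of the factor set of the latter $*$-closure; applying Lemma~\ref{lemma: langage de X et circuits} (equations~\eqref{eq 1 lemma langage} and~\eqref{eq 2 lemma langage}) I get the inclusion $\bigcap_n \fac{}{\gamma_0 \cdots \gamma_n(a_{n+1})} \subseteq \fac{}{X}$.

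The reverse inclusion $\fac{}{X} \subseteq \bigcap_n \fac{}{\gamma_0 \cdots \gamma_n(a_{n+1})}$ is the point requiring a little more care, and I expect it to be the main obstacle. Here the difficulty is that for a \emph{fixed} choice of the sequence of letters $(a_n)_{n \in \N}$, the single word $\gamma_0 \cdots \gamma_n(a_{n+1})$ is the label of just one $(n+1)$-circuit, not of all of them, so equation~\eqref{eq 3 lemma langage} cannot be invoked verbatim. I would resolve this by exploiting minimality, equivalently the uniform recurrence furnished by the (weak) primitivity of the directive sequence: given a factor $w \in \fac{\ell}{X}$ and using~\eqref{eq 3 lemma langage} there is an $N_\ell$-circuit $p$ starting from $U_{N_\ell}$ whose label contains $w$; then, by uniform recurrence of the circuits, for $n$ large enough the single circuit $\theta_{n+1}(a_{n+1})$ is long enough to force the circuit coded by $p$ to occur inside $\psi_{N_\ell} \cdots \psi_n(\theta_{n+1}(a_{n+1}))$, so that $w$ appears in $\gamma_0 \cdots \gamma_n(a_{n+1})$. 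Once both inclusions are in hand, the intersection equals $\fac{}{X}$ and the corollary is proved.
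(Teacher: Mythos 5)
Your proof is correct and follows the same route as the paper, which disposes of this corollary in a single sentence by declaring it ``just a reformulation'' of Lemma~\ref{lemma: pas de court} and Lemma~\ref{lemma: langage de X et circuits}. What you add is the explicit dictionary that justifies the reformulation: telescoping $\theta_n\gamma_n=\psi_n\theta_{n+1}$ into $\theta_0\gamma_0\cdots\gamma_n=\psi_0\cdots\psi_n\theta_{n+1}$ and reading off, via Remark~\ref{remark = A_0 = A} and the fact that $\psi_n$ preserves right labels, that $\gamma_0\cdots\gamma_n(a)=\lambda_R(\theta_{n+1}(a))$; since $\theta_{n+1}$ is a bijection onto the allowed $(n+1)$-circuits starting from $U_{n+1}$, the length statement and the inclusion $\bigcap_n\fac{}{\gamma_0\cdots\gamma_n(a_{n+1})}\subseteq\fac{}{X}$ do follow at once from the two cited lemmas. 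More importantly, you correctly spot that the reverse inclusion is \emph{not} a literal consequence of Lemma~\ref{lemma: langage de X et circuits}: equation~\eqref{eq 3 lemma langage} concerns the star-closure of \emph{all} circuit labels at a given level, whereas the corollary fixes a single letter (hence a single circuit) per level, so one genuinely needs uniform recurrence --- every sufficiently long factor of $X$ contains any prescribed $w$ --- combined with the length statement to conclude that $w$ occurs in $\gamma_0\cdots\gamma_n(a_{n+1})$ for $n$ large; the paper silently elides this step, and your patch is the right one. One residual caveat concerns the statement rather than your argument: what you (or any correct argument) obtain is that $w\in\fac{}{\gamma_0\cdots\gamma_n(a_{n+1})}$ for all \emph{sufficiently large} $n$, since for small $n$ these words are too short to contain $w$; the intersection over all $n\in\N$ must therefore be read as an intersection over tails (equivalently a $\liminf$), and your closing sentence, like the paper's formulation, should be adjusted accordingly.
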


\subsection{Adic representation of $X$}
\label{subsection: Adicity of X}

In this section we prove that, up to a little change, the directive word $(\gamma_n)_{n \in \N}$ introduced in Definition~\ref{definition: definition des morphismes} is an adic representation of a sequence in $X$, \textit{i.e.}, we provide a slightly different directive word $(\tau_n:B_{n+1}^* \to B_n^*)_{n \in \N}$ such that $(\tau_0 \cdots \tau_n(a_{n+1}^{\infty}))_{n \in \N}$ converges in $A^{\Z}$ to $\bw \in X$. The proof of Theorem~\ref{thm adic minimal} follows immediately from Lemma~\ref{lemma: convergence taU_n} and Proposition~\ref{prop: durand primitif}.

The next result shows that there is a contraction $(\gamma'_n:A_{n+1}'^* \to A_n'^*)_{n \in \N}$ of $(\gamma_n)_{n \in \N}$ such that every morphism $\gamma_n'$ is \definir{right proper}, \textit{i.e.}, there is a letter $a \in A_n'$ such that $\gamma_n'(A_{n+1}') \subset A_n'^* a$.

\begin{lemma}
\label{lemma: toutes les images finissent pareil}
Let $(X,T)$ be a minimal subshift and let $(\gamma_n)_{n \in \N}$ be the sequence of morphisms defined in Definition~\ref{definition: definition des morphismes}. For all non-negative integers $r$ there is an integer $s >r$ and a letter $a$ in $A_{r}$ such that $\gamma_r \cdots \gamma_s(A_{s+1}) \subset A_r^* a$.
\end{lemma}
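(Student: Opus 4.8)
The plan is to show that, given any level $r$, the images $\gamma_r \cdots \gamma_s(b)$ of all letters $b \in A_{s+1}$ eventually share a common last letter once $s$ is large enough. The key observation is that the last letter of $\gamma_r \cdots \gamma_s(b)$ records a specific piece of combinatorial data about the subshift, and this data stabilizes as $s$ grows.

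\textbf{Translating the claim through $\theta$.} First I would unwind the definitions. By $\theta_n \gamma_n = \psi_n \theta_{n+1}$, the composition satisfies $\theta_r (\gamma_r \cdots \gamma_s) = (\psi_r \cdots \psi_s) \theta_{s+1}$. Hence for a letter $b \in A_{s+1}$, the last letter of $\gamma_r \cdots \gamma_s(b)$ corresponds, via $\theta_r$, to the \emph{last} $r$-circuit appearing in the decomposition of the $(s+1)$-circuit $\theta_{s+1}(b)$ when it is pushed down to $G_r(X)$ through $\psi_r \cdots \psi_s$. By iterating Lemma~\ref{lemma: decomposition circuit}, every $(s+1)$-circuit starting from $U_{s+1}$ decomposes as a concatenation $q_1 q_2 \cdots q_k$ of $r$-circuits starting from $U_r$ (all these circuits start at a right special factor that is a suffix of $U_{s+1}$, and by the choice of the sequence $(U_n)$ these suffixes are exactly the $U_n$). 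So the ``common last letter'' property is equivalent to saying: there is an $s > r$ such that every $(s+1)$-circuit from $U_{s+1}$, read as a word over the alphabet $\A_r$ of $r$-circuits, \emph{ends with the same $r$-circuit}.

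\textbf{Finding the common final $r$-circuit.} To produce such a common final block I would use minimality (uniform recurrence) together with Lemma~\ref{lemma: pas de court}. The idea is: since $\lambda_R(U_{s+1})$ ends with $U_r$ and more, the very last $r$-circuit traversed before returning to $U_{s+1}$ is determined by the suffix of $\lambda_R$ of the full circuit, i.e. by the right context of $U_r$ inside the word $\lambda_R$ of the $(s+1)$-circuit just before it closes up at $U_{s+1}$. Concretely, every $(s+1)$-circuit ends by returning to $U_{s+1}$, and the last $r$-circuit in its $\psi$-image is the one whose right label is the return word to $U_r$ that immediately precedes an occurrence of $U_{s+1}$. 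Because $U_{s+1}$ has $U_r$ as a suffix, any occurrence of $U_{s+1}$ is preceded by a uniquely determined return word to $U_r$ (namely the one compatible with reading $U_{s+1}$ backwards past its length-$r$ suffix), and this return word is the same for all $(s+1)$-circuits. Thus the final $r$-circuit is the \emph{same} fixed $r$-circuit $c$ for every $b \in A_{s+1}$, and setting $a = \theta_r^{-1}(c)$ gives $\gamma_r \cdots \gamma_s(A_{s+1}) \subseteq A_r^* a$. In fact this holds as soon as $|U_{s+1}| \geq r + M$ where $M$ bounds the length needed for the left context of $U_r$ to be forced, so by Lemma~\ref{lemma: pas de court} (which makes circuit lengths, and hence the relevant special factors, grow) a suitable $s$ exists.

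\textbf{Main obstacle.} The delicate point is justifying that the final $r$-circuit is genuinely independent of the starting $(s+1)$-circuit $b$, rather than merely that \emph{some} common suffix circuit exists. This rests on the fact that all the circuits in question start and end at the distinguished right special vertices $U_m$, so closing a circuit at $U_{s+1}$ forces reading through the length-$r$ suffix $U_r$ of $U_{s+1}$ in one prescribed way; the return word to $U_r$ realized on this final stretch is therefore forced. I expect the bulk of the rigorous argument to be in pinning down this ``forced return word before $U_{s+1}$'' claim, and in checking that it is insensitive to which $r$-circuits were traversed earlier in the path. Once that independence is established, the conclusion and the choice of the letter $a$ are immediate.
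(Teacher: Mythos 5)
Your proposal is correct and follows essentially the same route as the paper: both identify the last letter of $\gamma_r\cdots\gamma_s(b)$ with the final $r$-circuit in the decomposition of the pushed-down $(s+1)$-circuit, and both observe that this final circuit is forced for every $b$ because all such circuits terminate at $U_{s+1}$, whose suffix (long enough, by minimality, to contain a second occurrence of $U_r$, with Lemma~\ref{lemma: pas de court} guaranteeing the circuits are long enough to realize it) pins down the last return word to $U_r$. The paper phrases this via the path in $G_r$ with full label $U_k$ rather than via return words, but the content is identical.
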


\begin{proof}
Let $(U_n)_{n \in \N}$ be a sequence as defined in Lemma~\ref{lemma: suite de speciaux droits} and let $(\gamma_n)_{n \in \N}$ be the sequence of morphisms as in Definition~\ref{definition: definition des morphismes}. Let $r$ be a non-negative integer. By definition, for all integers $j >r$, $U_j$ is a word in $\fac{}{X}$ that admits $U_r$ as a suffix. Consequently, we can associate to $U_j$ a path $p_j$ of length $j-r$ in $G_r$ such that $\lambda(p_j) = U_j$ and $i(p_j)=U_r$. 

By Lemma~\ref{lemma: pas de court} there is an integer $s > k$ such that all $s$-circuits starting from $U_s$ have length at least $k+r$. Let $c$ be such an $s$-circuit. Since $U_k$ is a suffix of length $k$ of $U_s$, we deduce from Proposition~\ref{prop: path in Rauzy graphs} that $D_k$ is a suffix of $\lambda_R(c)$. Let $q_{k,c}$ the suffix of $c$ with right label $U_k$ and let $t_{k,c}$ the suffix of $q_{k,c}$ of length $k$. By construction, we have $o(t_{k,c}) \in A^* U_r$, $i(t_{k,c}) \in A^* U_r$ and $\lambda_R(t_{k,c}) = \lambda_R(p_k)$, and so, $\psi_r \psi_{r+1} \cdots \psi_{s-1}(t_{k,c}) = p_k$. Denoting $a = \theta_r^{-1}(p_k) \in A_r$ and observing that $A_s = \{ \theta_s^{_1} \mid c = s\text{-circuit starting from } U_s \}$, we have $\gamma_r \cdots \gamma_{s-1}(A_s) \in A_r^* a$. 
\end{proof}

The following trick allows us to define the directive word $(\tau_n)_{n \in \N}$ mentioned above. If $\sigma: A^* \to B^*$ is a right proper morphism with ending letter $r \in B$, then its \definir{left conjugate} is the morphism $\sigma^{(L)} : A^* \to B^*$ defined by $\sigma^{(L)}(a) = r u$ whenever $\sigma(a) = ur$. Thus, it is a \definir{left proper} morphism, \textit{i.e.}, there is a letter $a \in B$ such that $\sigma(A) \subset a B^*$ (in our case, $a = r$).

\begin{lemma}
\label{lemma: conjugue de morphism}
If $\sigma: A^* \rightarrow B^*$ is a right proper morphism and if $\bw$ is a sequence in $A^{\Z}$, then $T(\sigma^{(L)}(\bw)) = \sigma(\bw)$. In particular, $\fac{}{\sigma^{(L)}(\bw)} = \fac{}{\sigma(\bw)}$.
\end{lemma}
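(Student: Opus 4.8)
The plan is to reduce this statement about two-sided sequences to a single pointwise identity between the images of letters, and then to propagate that identity through the bi-infinite concatenation while carefully tracking where position $0$ lands. Note first that right-properness forces every $\sigma(a)$ to be non-empty, so $\sigma$ is non-erasing and both $\sigma(\bw)$ and $\sigma^{(L)}(\bw)$ are well-defined sequences in $B^{\Z}$.

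First I would record the algebraic identity underlying the conjugation. Let $r \in B$ be the common ending letter of $\sigma$, so that for every letter $a \in A$ we may write $\sigma(a) = u_a r$ with $u_a \in B^*$, and then by definition $\sigma^{(L)}(a) = r u_a$. A one-line computation gives $r \sigma(a) = r u_a r = \sigma^{(L)}(a) r$ for every $a$, and since both sides are morphisms of the free monoid this extends immediately to $r \sigma(v) = \sigma^{(L)}(v) r$ for every finite word $v \in A^*$. This identity is the engine of the proof: prepending $r$ to $\sigma(v)$ produces the same word as appending $r$ to $\sigma^{(L)}(v)$.

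Next I would transfer this to the bi-infinite setting. Write $\ell_i = |\sigma(\bw_i)| = |\sigma^{(L)}(\bw_i)|$ (conjugation preserves length) and let $S_i$ denote the starting position of the $i$-th block, so that $S_0 = 0$ and $S_{i+1} = S_i + \ell_i$; with the usual convention both $\sigma(\bw)$ and $\sigma^{(L)}(\bw)$ place the image of $\bw_i$ on the positions $[S_i, S_{i+1})$. Setting $\bx = \sigma(\bw)$ and $\by = \sigma^{(L)}(\bw)$, the goal $T(\by) = \bx$ unwinds to the pointwise claim $\by_{j+1} = \bx_j$ for every $j \in \Z$. Inside a single block $[S_i, S_{i+1})$ the word $u_{\bw_i}$ appears in $\bx$ on the positions $[S_i, S_{i+1}-1)$ and in $\by$ on the positions $(S_i, S_{i+1})$, that is, shifted by exactly one place; this matches $\bx_j$ with $\by_{j+1}$ at all non-terminal positions of the block. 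The only point requiring care is the block boundary: at $j = S_{i+1}-1$ one has $\bx_j = r$ (the last letter of $\sigma(\bw_i)$), while $\by_{j+1} = \by_{S_{i+1}} = r$ is the first letter of $\sigma^{(L)}(\bw_{i+1})$, so the two agree there as well. Verifying this boundary case is really the whole content, and it is exactly what the identity $r\sigma(\bw_i) = \sigma^{(L)}(\bw_i)\, r$ encodes.

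Finally, the \enquote{in particular} clause follows at once, since the set of factors of a two-sided sequence is invariant under the shift: $\fac{}{\by} = \fac{}{T\by} = \fac{}{\bx}$. The main (and essentially only) obstacle is the bookkeeping of positions and of the origin for two-sided sequences; once the pointwise identity for letters is in hand, no genuine difficulty remains.
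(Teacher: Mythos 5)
Your proof is correct. The paper states this lemma without any proof (it is introduced as an immediate \enquote{trick}), so there is no argument of the author's to compare against; your verification --- the conjugation identity $r\,\sigma(v) = \sigma^{(L)}(v)\,r$ together with the positional bookkeeping showing $\by_{j+1}=\bx_j$ both inside blocks and across block boundaries --- is exactly the standard check one would supply, and the \enquote{in particular} clause follows from shift-invariance of the factor set as you say.
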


\begin{fact}
\label{fact: conjugue}
Let $(\gamma_n')_{n \in \N}$ be a contraction of $(\gamma_n)_{n \in \N}$ such that all morphisms $\gamma_n'$ are right proper. Every morphism $\tau_n = \gamma_{2n}' \gamma_{2n+1}'^{(L)}$ is both left and right proper.  
\end{fact}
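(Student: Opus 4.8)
The plan is to check the two properness conditions of $\tau_n$ one at a time, each being inherited from exactly one of the two factors in the composition $\tau_n = \gamma_{2n}' (\gamma_{2n+1}')^{(L)}$. Throughout I would use that all morphisms in play are non-erasing: images of letters under $\gamma_{k}'$ are products of $n$-circuits, which are non-empty paths by definition, and taking the left conjugate does not change the length of any image. Hence the first and last letters of every image are well defined.

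First I would treat right-properness, which comes from the \emph{outer} morphism $\gamma_{2n}'$. Since $\gamma_{2n}'$ is right proper, fix the letter $r \in A_{2n}'$ with $\gamma_{2n}'(A_{2n+1}') \subset (A_{2n}')^* r$. For any letter $b \in A_{2n+2}'$, the word $(\gamma_{2n+1}')^{(L)}(b)$ is a non-empty word over $A_{2n+1}'$; writing $c$ for its last letter gives $\tau_n(b) \in (A_{2n}')^* \gamma_{2n}'(c) \subset (A_{2n}')^* r$. Thus every image under $\tau_n$ ends in $r$, and $\tau_n$ is right proper.

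Next I would treat left-properness, which comes from the \emph{inner} morphism $(\gamma_{2n+1}')^{(L)}$. By construction of the left conjugate, if $s$ denotes the ending letter of the right proper morphism $\gamma_{2n+1}'$, then $(\gamma_{2n+1}')^{(L)}$ is left proper with starting letter $s$, so for each $b \in A_{2n+2}'$ we may write $(\gamma_{2n+1}')^{(L)}(b) = s\, w_b$. Applying $\gamma_{2n}'$ yields $\tau_n(b) = \gamma_{2n}'(s)\, \gamma_{2n}'(w_b)$, so every image under $\tau_n$ begins with the fixed non-empty word $\gamma_{2n}'(s)$, hence with its first letter, which is independent of $b$. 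Therefore $\tau_n$ is left proper, and combining with the previous paragraph finishes the argument.

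I do not expect a genuine obstacle here: the statement is a formal consequence of the definitions, the only content being the observation that right-properness propagates through the outer factor while left-properness propagates through the inner (conjugated) factor. The single point deserving care is non-erasingness, since it is what guarantees that $(\gamma_{2n+1}')^{(L)}(b)$ has a last letter and that $\gamma_{2n}'(s)$ has a first letter; both are ensured because images of letters are built from $n$-circuits.
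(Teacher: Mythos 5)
Your proof is correct: right-properness of $\tau_n$ is inherited from the outer factor $\gamma_{2n}'$ and left-properness from the inner factor $(\gamma_{2n+1}')^{(L)}$, with non-erasingness guaranteeing that the relevant first and last letters exist. The paper states this as a Fact without proof precisely because it is this routine verification, so your argument is the intended one and nothing is missing.
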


\begin{lemma}
\label{lemma: convergence taU_n}
The directive word $(\tau_n: B_{n+1}^* \to B_n^*)_{n \in \N}$ is proper and weakly primitive and such that $(\tau_0 \cdots \tau_n(b_{n+1}))_{n \in \N}$ converges in $A^\Z$ to $\bw \in X$ for sequences $(b_n \in B_n)_{n \in \N}$.
\end{lemma}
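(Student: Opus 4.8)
The statement bundles three claims, and I would dispose of them in increasing order of difficulty, keeping the letters $\ell_n,r_n\in B_n$ that make each $\tau_n$ proper at the centre of the argument.

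Properness is immediate from Fact~\ref{fact: conjugue}: every $\tau_n=\gamma_{2n}'\gamma_{2n+1}'^{(L)}$ is simultaneously left and right proper, so there are letters $\ell_n,r_n\in B_n$ with $\tau_n(B_{n+1})\subseteq \ell_n B_n^*\cap B_n^*r_n$, which is exactly what \emph{proper} means. For weak primitivity I would first establish it for the original directive word $(\gamma_n)$ and then transport it. Since $\gamma_r\cdots\gamma_s=\theta_r^{-1}\psi_r\cdots\psi_s\theta_{s+1}$, the letters occurring in $\gamma_r\cdots\gamma_s(b)$ index precisely the allowed $r$-circuits met in the decomposition of the allowed $(s+1)$-circuit $\theta_{s+1}(b)$ (Lemma~\ref{lemma: decomposition circuit}); by Remark~\ref{remark: lien entre n-circuit et mot de retour} these are the return words to $U_r$ read along $\theta_{s+1}(b)$. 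As $X$ is minimal, hence uniformly recurrent, and there are finitely many return words $v$ to $U_r$ with each $U_rv\in\fac{}{X}$, there is a length $K$ such that every factor of length $\geq K$ contains all of them; by Lemma~\ref{lemma: pas de court} some $s$ makes every allowed $s$-circuit longer than $K$, and this $s$ witnesses weak primitivity of $(\gamma_n)$ at level $r$, uniformly in $b$. Weak primitivity survives passage to the contraction $(\gamma'_n)$ (a nonempty product of nonerasing morphisms still hits every lower letter) and survives left conjugation, because $\sigma^{(L)}(a)$ and $\sigma(a)$ carry the same multiset of letters; thus the letters occurring in $\tau_r\cdots\tau_s(b)$ coincide with those of $\gamma'_{2r}\cdots\gamma'_{2s+1}(b)$, and $(\tau_n)$ is weakly primitive.

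For convergence I would exploit the two proper letters. Left properness gives that $\tau_n(\ell_{n+1})$ begins with $\ell_n$, so the words $P_n:=\tau_0\cdots\tau_{n-1}(\ell_n)$ form a nested family of prefixes; right properness gives that $\tau_n(r_{n+1})$ ends with $r_n$, so $S_n:=\tau_0\cdots\tau_{n-1}(r_n)$ form a nested family of suffixes. Both lengths tend to infinity, since conjugation preserves the length of the image of each letter and hence $|\tau_0\cdots\tau_{n-1}(a)|=|\gamma_0\cdots\gamma_m(a)|$ with $m=m(n)\to\infty$, which grows by Corollary~\ref{cor: langage}. Therefore $P_n$ converges to a right-infinite sequence $\bx^+$ and $S_n$ to a left-infinite sequence $\bx^-$, and I set $\bw=\bx^-.\bx^+$. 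Reading the statement as in the opening of this subsection, with $b_{n+1}^\infty$ in place of the finite word, the two-sided sequence $\tau_0\cdots\tau_n(b_{n+1}^\infty)=W_n^\infty$, where $W_n:=\tau_0\cdots\tau_n(b_{n+1})$ begins with $P_n$ and ends with $S_n$, agrees with $\bw$ on the window $[-|S_n|,|P_n|)$; letting $n\to\infty$ shows $W_n^\infty\to\bw$, and the limit does not depend on the chosen sequence $(b_n)$.

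It remains to prove $\bw\in X$ by bounding its factors. Fix a factor $g$ of $\bw$; by the window convergence above, $g\in\fac{}{W_n^\infty}$ for all large $n$. Applying Lemma~\ref{lemma: conjugue de morphism} to the genuine two-sided sequence $b_{n+1}^\infty$, and using that a shift of a two-sided sequence has the same factor set (so the morphisms applied afterwards do not notice the shift), I would strip the interleaved left conjugates one by one to obtain $\fac{}{W_n^\infty}=\fac{}{\gamma'_0\cdots\gamma'_{2n+1}(b_{n+1}^\infty)}=\fac{}{\gamma_0\cdots\gamma_m(b_{n+1}^\infty)}$ with $m\to\infty$. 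Since $\lambda_R\theta_0$ is the identity on letters (Remark~\ref{remark = A_0 = A}) and the maps $\psi_k$ preserve right labels, $\gamma_0\cdots\gamma_m(b_{n+1})$ is exactly the right label $v$ of the allowed $(m+1)$-circuit $\theta_{m+1}(b_{n+1})$, so $\gamma_0\cdots\gamma_m(b_{n+1}^\infty)=v^\infty$ and hence $\fac{}{W_n^\infty}\subseteq\fac{}{\{\lambda_R(p)\mid p\ \text{allowed}\ (m+1)\text{-circuit starting from}\ U_{m+1}\}^*}$ (the star absorbs the repetitions of the single circuit $v$, sidestepping the fact that $v$ itself need not be loopable in $X$). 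These sets are nested, and by Lemma~\ref{lemma: langage de X et circuits} their intersection is $\fac{}{X}$; since $m(n)\to\infty$ this forces $g\in\fac{}{X}$. Thus $\fac{}{\bw}\subseteq\fac{}{X}$, and because $X$, being a subshift, equals the set of bi-infinite sequences all of whose factors lie in $\fac{}{X}$, we conclude $\bw\in X$. I expect the genuine obstacle to be this last paragraph: justifying that the interleaved left conjugates can be removed with no change of factor set (the precise interplay of Lemma~\ref{lemma: conjugue de morphism} with shift-invariance) and aligning the contraction indices so that the emerging periodic sequence $v^\infty$ is caught by the star of circuit labels at a level growing to infinity.
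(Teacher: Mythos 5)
Your proposal is correct and follows exactly the same route as the paper's (very terse) proof: properness from Fact~\ref{fact: conjugue}, convergence from the left- and right-proper letters together with the length growth of Corollary~\ref{cor: langage}, membership of the limit in $X$ via Lemma~\ref{lemma: conjugue de morphism} and the circuit-label languages of Lemma~\ref{lemma: langage de X et circuits}, and weak primitivity from minimality (uniform recurrence of return words to $U_r$ combined with Lemma~\ref{lemma: pas de court}). You have simply supplied the details the paper leaves implicit, and the delicate points you flag (stripping the interleaved left conjugates, identifying $\gamma_0\cdots\gamma_m(b)$ with the right label of the circuit $\theta_{m+1}(b)$) are handled correctly.
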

\begin{proof}
The convergence is ensured by the fact that all morphisms are left and right proper. The fact that the limit $\bw$ belongs to $X$ follows from Corollary~\ref{cor: langage} and Lemma~\ref{lemma: conjugue de morphism}. The weak primitivity comes from the minimality of $(X,T)$.
\end{proof}

\begin{proposition}[Durand~\cite{Durand_LR,Durand_corrigentum}]
\label{prop: durand primitif}
If $(X,T)$ is a primitive $S$-adic subshift with $S$ finite, then $X$ has linear complexity.
\end{proposition}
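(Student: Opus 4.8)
The plan is to bound the factor complexity $p_X(n)$ linearly by controlling the lengths of the images of letters under the prefixes of the directive word. Fix a directive word $(\sigma_n)_{n\in\N}\in S^\N$ with $(\sigma_n:A_{n+1}^*\to A_n^*)$, write $\Sigma_N=\sigma_0\sigma_1\cdots\sigma_{N-1}:A_N^*\to A_0^*$, and let $\bw^{(N)}$ be the level-$N$ directive sequence, so that $\bw=\Sigma_N(\bw^{(N)})$ is a concatenation of the blocks $\Sigma_N(a)$, $a\in A_N$. Since $S$ is finite, there is a bound $d$ on the sizes of all alphabets $A_n$ and a bound $M$ on $|\sigma(a)|$ for every $\sigma\in S$ and every letter $a$; moreover the morphisms are non-erasing, so every column sum of every incidence matrix is at least $1$.

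The first and main step is a bounded-distortion estimate: I claim there is a constant $R$, depending only on $S$ and the primitivity constant $k$, such that $\max_a|\Sigma_N(a)|\le R\min_a|\Sigma_N(a)|$ for all large $N$. Primitivity provides an integer $k$ for which every window $\Phi_j=\sigma_{j(k+1)}\cdots\sigma_{(j+1)(k+1)-1}$ is of the form $\sigma_r\cdots\sigma_{r+k}$ and therefore has an incidence matrix $Q_j$ with all entries at least $1$; by finiteness of $S$ there are only finitely many such windows, so all their entries are bounded above by $M'=M^{k+1}$. The vector $\ell^{(N)}$ with $\ell^{(N)}_a=|\Sigma_N(a)|$ equals $\mathbf 1$ multiplied successively by the transposed incidence matrices, and the key observation is that applying a single positive window matrix already forces the resulting coordinates to lie within a factor $M'$ of one another: for a positive vector $w$ and a matrix with entries in $[1,M']$, any two output coordinates satisfy $\sum_r Q_{rc}w_r/\sum_r Q_{rc'}w_r\le \sum_r M'w_r/\sum_r w_r=M'$. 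The leftover partial window of fewer than $k+1$ morphisms then multiplies each coordinate by a column sum lying in $[1,M^k]$, so $R=M'M^k$ works; the finitely many levels $N<k+1$ are irrelevant for the asymptotics.

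With bounded distortion in hand, the counting step is routine. Given $n$, choose $N$ minimal with $\min_a|\Sigma_N(a)|\ge n$. Because every block $\Sigma_N(a)$ then has length at least $n$, any factor of $\bw$ of length $n$ lies inside a single block or straddles exactly two consecutive blocks; hence it is a factor of $\Sigma_N(ab)$ for some length-two factor $ab$ of $\bw^{(N)}$, determined by a starting offset inside the first block. Since there are at most $d^2$ such factors $ab$ and at most $\max_a|\Sigma_N(a)|$ offsets, we get $p_X(n)\le d^2\max_a|\Sigma_N(a)|$. Using $|\Sigma_N(a)|\le M\max_b|\Sigma_{N-1}(b)|$ together with the distortion bound at level $N-1$ and the minimality of $N$, we obtain $\max_a|\Sigma_N(a)|\le M\max_b|\Sigma_{N-1}(b)|\le RM\min_b|\Sigma_{N-1}(b)|<RMn$, whence $p_X(n)\le d^2RM\,n$; the finitely many small values of $n$ are absorbed into the constant, yielding linear complexity.

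The hard part is the bounded-distortion estimate of the second paragraph: this is exactly where both hypotheses are indispensable. Without primitivity, the ratio $\max_a|\Sigma_N(a)|/\min_a|\Sigma_N(a)|$ can grow without bound, which is precisely what permits the superlinear examples mentioned in the introduction; and without finiteness of $S$ the entries $M'$ of the window incidence matrices are not uniformly controlled, so the contraction argument collapses. Everything else is elementary bookkeeping with incidence matrices and the block decomposition $\bw=\Sigma_N(\bw^{(N)})$.
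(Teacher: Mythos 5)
The paper states this proposition as a quoted result of Durand and supplies no proof of its own, so there is no internal argument to compare against; your proof has to stand on its own, and it does. The two-step structure --- a bounded-distortion estimate $\max_a|\Sigma_N(a)|\le R\min_a|\Sigma_N(a)|$ obtained by grouping the directive word into windows of $k+1$ morphisms whose incidence matrices are positive with entries in $[1,M^{k+1}]$, followed by the count $p_X(n)\le d^2\max_a|\Sigma_N(a)|$ at the first level $N$ where every block has length at least $n$ --- is the classical route for this statement (Durand's own treatment passes through linear recurrence, but the counting of a length-$n$ factor by a two-letter factor $ab$ of the level-$N$ sequence together with an offset is the same in spirit). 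The ratio estimate $\sum_r Q_{rc}w_r/\sum_r Q_{rc'}w_r\le M'$ is correct, and the closing chain $\max_a|\Sigma_N(a)|\le M\max_b|\Sigma_{N-1}(b)|\le RM\min_b|\Sigma_{N-1}(b)|<RMn$ is valid. Two small points of hygiene, neither of which affects correctness: the claim that the leftover partial window \emph{multiplies each coordinate by a column sum} is not literally true (it is a matrix product, not a coordinatewise scaling), but the bound you need --- that the max/min ratio degrades by at most $M^k$ because each column sum of the partial product lies in $[1,M^k]$ --- follows from the same min/max estimate you already used for the full windows; and the existence of the minimal $N$ with $\min_a|\Sigma_N(a)|\ge n$ rests on $\min_a|\Sigma_N(a)|\to\infty$, which you should note follows from the growth condition in the definition of an adic representation combined with your distortion bound (the degenerate alternative being an eventually periodic, hence bounded-complexity, subshift).
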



\section{$\S$-adicity of minimal subshifts satisfying $1 \leq p(n+1)-p(n) \leq 2$}
\label{section: $S$-adicity of subshifts with complexity $2n$}

In this chapter we present Theorem~\ref{thm: 2n} which is an improvement of Theorem~\ref{thm adic minimal} for the particular case of minimal subshifts with first difference of complexity bounded by 2.
For this class of complexity, Ferenczi~\cite{Ferenczi} proved that the amount of morphisms needed for the $S$-adic representations is less than $3^{27}$. Here, we significantly improve this bound by giving the set $\S$ of 5 morphisms that are actually needed.
To avoid unnecessary repetitions, we only sketch the proof of Theorem~\ref{thm: 2n} on an example. We will later prove Theorem~\ref{thm: 2n final} which is an improvement of the former.
In all this chapter, the set $\S$ is the set of morphisms $\{G,D,M,E_{01},E_{12}\}$ where
\begin{eqnarray*}
\label{eqnarray: morphisms GDME}
	G:	\begin{cases}
			0 \mapsto 10	\\
			1 \mapsto 1		\\
			2 \mapsto 2
		\end{cases}
	&
	D:	\begin{cases}
			0 \mapsto 01	\\
			1 \mapsto 1		\\
			2 \mapsto 2
		\end{cases}
	&
	M:	\begin{cases}
			0 \mapsto 0	\\
			1 \mapsto 1		\\
			2 \mapsto 1
		\end{cases}
\end{eqnarray*}
\begin{eqnarray*}
	E_{01}:	\begin{cases}
				0 \mapsto 1		\\
				1 \mapsto 0		\\
				2 \mapsto 2
			\end{cases}
	&
	E_{12}:	\begin{cases}
				0 \mapsto 0		\\
				1 \mapsto 2		\\
				2 \mapsto 1
			\end{cases}
\end{eqnarray*}

\begin{theorem}
\label{thm: 2n}
Let $\G$ be the graph represented in Figure~\ref{figure: graph of graphs}. There is a non-trivial way to label the edges of $\G$ with morphisms in $\S^*$ such that for any minimal subshift $(X,T)$ satisfying $1 \leq p_X(n+1)-p_X(n) \leq 2$ for all $n$, there is an infinite path $p$ in $\G$ whose label $(\sigma_n)_{n \in \N} \in \S^{\N}$ is a directive word of $(X,T)$. Furthermore, $(\sigma_n)_{n \in \N}$ is weakly primitive and admits a contraction that contains only proper morphisms.
\end{theorem}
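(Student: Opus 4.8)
The plan is to reduce the statement to a finite classification of the reduced Rauzy graphs $g_n(X)$ that can occur under the constraint $1 \le p_X(n+1)-p_X(n) \le 2$, and to read the morphisms $\gamma_n$ of Definition~\ref{definition: definition des morphismes} off the transitions between consecutive graphs. First I would use the special-factor formula~\eqref{eq 1} to bound the combinatorics: since the first difference is at most $2$, at each level $X$ has at most two right special and at most two left special vertices, each of out- resp. in-degree in $\{2,3\}$, and the alphabet has at most $p_X(1) \le 3$ letters, so we may always work inside $\{0,1,2\}$. Combined with strong connectivity (Remark~\ref{remark: uniformly recurrent strongly connected}), these restrictions leave only finitely many shapes for $g_n(X)$ up to a relabelling of the left and right extension letters. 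I would declare these shapes to be the vertices of the graph $\G$ of Figure~\ref{figure: graph of graphs}.

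Next I would determine the edges of $\G$. For a level $n$ carrying no bispecial factor, Lemma~\ref{lemma: decomposition circuit} shows that $\psi_n$ is a bijection and, by Remark~\ref{remark: sigme = identity}, $\gamma_n$ may be taken to be the identity or a pure relabelling, which is recorded by the exchanges $E_{01},E_{12}$. The substantive transitions occur at a bispecial vertex, and there I would argue exactly as in the worked example of Figures~\ref{figure: Sturmian graph with bispecial factor and labels} and~\ref{figure: evolution of a sturmian graph}: for each shape one enumerates the strongly connected completions of $g_{n+1}$ obtained by deciding which edges leave the new right special vertices and enter the new left special ones, distinguishing the neutral, weak and strong bispecial cases according to the bilateral order $m(U_n)$. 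This produces the finite set of outgoing edges of each vertex of $\G$, and for each such edge the coding morphism $\gamma_n$ is computed from the relation $\theta_n \gamma_n = \psi_n \theta_{n+1}$.

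The core of the argument is then to express every coding morphism thus obtained as an element of $\S^*$. The relabellings are absorbed by $E_{01}$ and $E_{12}$; a transition in which a special vertex disappears (so the number of circuits drops) is coded by the merge $M$; and the prefixes of the form $1^{k_i}$ and $0^{k_i}$ occurring in the morphisms of~\eqref{equation: morphisms 1 to 8} are produced by iterating the insertion morphisms $G$ and $D$ (up to conjugation by the exchanges). The unbounded parameters $k_i$ are handled by interpolating intermediate reduced Rauzy graphs between two bispecial levels, so that in $\G$ each $k_i$ becomes the number of times a fixed \emph{loop} of $\G$, labelled by a single such insertion morphism, is traversed. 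I expect this step to be the main obstacle: it requires checking, transition by transition throughout the classification, that no coding morphism outside $\{G,D,M,E_{01},E_{12}\}$ is ever needed, and in particular that the strong bispecial families factor cleanly through these five morphisms; this is precisely the additional bookkeeping that the finer Theorem~\ref{thm: 2n final} will make systematic.

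Finally, given any minimal $X$ with the prescribed complexity, the sequence $(g_n(X))_{n\in\N}$ traces an infinite path $p$ in $\G$, and by construction the concatenation of the edge labels along $p$ yields a directive word $(\sigma_n)_{n\in\N}\in\S^{\N}$; that it directs $(X,T)$ follows from Corollary~\ref{cor: langage}. Weak primitivity and the existence of a contraction consisting of proper morphisms are inherited from the general construction of Section~\ref{section: adicity of minimal subshift}: minimality gives weak primitivity as in Lemma~\ref{lemma: convergence taU_n}, right-properness of a contraction follows from Lemma~\ref{lemma: toutes les images finissent pareil}, and hence Theorem~\ref{thm adic minimal} applies verbatim to this refined representation.
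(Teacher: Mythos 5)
Your overall strategy — classify the finitely many shapes of reduced Rauzy graphs, read the edges of $\G$ off the possible evolutions at bispecial levels, compute each coding morphism from $\theta_n\gamma_n=\psi_n\theta_{n+1}$, decompose it over $\{G,D,M,E_{01},E_{12}\}$, and inherit weak primitivity and properness from Section~\ref{section: adicity of minimal subshift} — is the paper's strategy. But there is one genuine gap, and it is precisely the step the paper singles out as ``a critical result.''

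You never establish that the alphabets $A_n$ of the adic representation have at most $3$ letters, i.e.\ that at most $3$ allowed $n$-circuits start from $U_n$. You bound the alphabet of the \emph{subshift} by $p_X(1)\le 3$, which suffices to classify the shapes of $g_n(X)$, but the morphisms $\gamma_{i_n}:A_{n+1}^*\to A_n^*$ are indexed by circuits, not by letters of $A$, and the shape classification alone does not bound their number: in a graph of type $7$, $8$ or $10$, a circuit from one right special vertex may wind arbitrarily many times around the loop based at the other one, so \emph{a priori} there could be $4$, $5$, \dots\ circuits (and the paper notes that for general minimal subshifts of linear complexity the number of $n$-circuits is indeed unbounded). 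Since every morphism of $\S$ acts on a $3$-letter alphabet, without the bound $\card{A_n}\le 3$ no $\gamma_{i_n}$ on a larger alphabet could possibly lie in $\S^*$, and the whole decomposition step collapses. The paper closes this gap with Proposition~\ref{prop: 3 circuits par special droit}, whose proof is not a routine case check: it uses Lemma~\ref{lemma: trop de boucles donne un bispecial fort} to show that a fourth circuit would force the \emph{other} right special vertex to be the suffix of a strong bispecial factor, and then Lemma~\ref{lemma: bispecial fort donne 2 speciaux droits} together with Corollary~\ref{corollary: v_n = B} to contradict the choice of the canonical suffix-compatible sequence $(U_n)_{n\in\N}$. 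Your proposal contains no substitute for this argument, and the rest of your plan depends on it.

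A secondary, non-fatal remark: your device of realizing the unbounded exponents $k_i$ as traversal counts of a loop of $\G$ differs from the paper, which instead labels a single edge of $\G$ by the infinite family $\{MG_{21}^{k-2}D_{20}D_{12}\mid k\ge 2\}\subset\S^*$ and lets $k$ appear as a power inside the decomposition. Either bookkeeping can be made to work, since the theorem only requires edge labels in $\S^*$ and a label sequence in $\S^{\N}$ along the path.
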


This result is based on a detailed description of all possible Rauzy graphs of minimal subshifts with the considered complexity. The Rauzy graphs of such subshifts can have only 10 different shapes. These shapes correspond to vertices of $\G$. The edges of $\G$ are given by the possible evolutions of these graphs and are labelled by morphisms coding these evolutions (see Section~\ref{subsubsection: An example}). The theorem is obtained by showing that these labels belong to $\S^*$. In the next section, we will study even more the evolutions of Rauzy graphs in order to obtain an $\S$-adic characterization of the considered subshifts.

From now on, $(X,T)$ satisfies the conditions of Theorem~\ref{thm: 2n}, \textit{i.e.}, it is minimal and is such that $1 \leq p_X(n+1) - p_X(n) \leq 2$ for all $n$. Consequently,  we have $p_X(n) \leq 2n$ for all $n \geq 1$ when $\card{A}=2$ and $p_X(n) \leq 2n+1$ for all $n$ when $\card{A}=3$.

We also consider notation introduced in Definition~\ref{definition: definition des morphismes} and Remark~\ref{remark: sigme = identity}, \textit{i.e.}, for every $n$, the morphism $\gamma_n$ describes the evolution from $G_n$ to $G_{n+1}$ and $(i_n)_{n \in \N}$  is the sequence of integers such that there is a bispecial factor in $\fac{k}{X}$ if and only if $k \in \{i_n \mid n \in \N\}$.

\subsection{10 shapes of Rauzy graphs}	
\label{subsection: 10 chapes of rauzy graphs}

In this section we describe the possible shapes of Rauzy graphs for the considered class of complexity.

From Equation~\eqref{eq 1} (page~\pageref{eq 1}) the hypothesis on the complexity implies  that for all integers $n$, there are either one right special factor $u$ of length $n$ with $\delta^+(u) \in \{2,3\}$ or two right special factors $v_1$ and $v_2$ with $\delta^+(v_1) = \delta^+(v_2) = 2$. From Equation~\eqref{eq 2} we can make a similar observation for the left special factors. Hence for all integers $n$, we have the following possibilities: 
\begin{enumerate}
	\item	there is one right special factor $r$ and one left special factor $l$ of length $n$ with $\delta^+(r) = \delta^- (l) \in \{2,3\}$ (Figure~\ref{figure: Rauzy graphs with one right special factor and one left special factor});
	\item	there is one right special factor $r$ and two left special factors $l_1$ and $l_2$ of length $n$ with $\delta^+(r) = 3$ and $\delta^- (l_1) = \delta^- (l_2) =2$ (Figure~\ref{figure: graph with 1 right special factor});
	\item	there are two right special factors $r_1$ and $r_2$ and one left special factor $l$ of length $n$ with $\delta^+(r_1) = \delta^+ (r_2) =2$ and $\delta^-(l) = 3$ (Figure~\ref{figure: graph with 1 left special factor}); 
	\item	there are two right special factors $r_1$ and $r_2$ and two left special factors $l_1$ and $l_2$ of length $n$ with $\delta^+(r_1) = \delta^+ (r_2) = \delta^- (l_1) = \delta^-(l_2) =2$ (Figure~\ref{figure: Rauzy graphs with two left and right special factors}).
\end{enumerate}

From these possibilities we can deduce that for all $n$, $g_n(X)$ only has eight possible shapes: those represented from Figure~\ref{figure: Rauzy graphs with one right special factor and one left special factor} to Figure~\ref{figure: Rauzy graphs with two left and right special factors}. Reduced Rauzy graphs in Figure~\ref{figure: Rauzy graphs with one right special factor and one left special factor} are well-known: they correspond to reduced Rauzy graphs of Sturmian sequences (Figure~\ref{figure: Sturmian graph}) or of Arnoux-Rauzy sequences (Figure~\ref{figure: Arnoux-Rauzy graph}). Reduced Rauzy graphs in Figure~\ref{figure: Rauzy graphs with two left and right special factors} have also been studied by Rote~\cite{Rote}.
Observe that in these figures, the edges represented by dots may have length 0. In this case, the two vertices they link are merged to one vertex.

{
\begin{figure}[h!tbp]
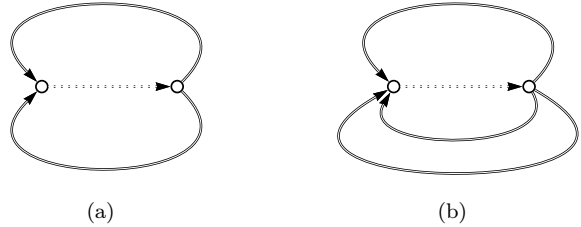

\centering
\subfigure[]{
\label{figure: Sturmian graph}
\scalebox{0.6}{
\begin{VCPicture}{(-1,0)(5,4)}
\ChgEdgeLabelScale{0.8}
\VSState{(0.5,2.5)}{L}
\VSState{(3.5,2.5)}{R}
\EdgeLineDouble
\VCurveL[]{angleA=45,angleB=135,ncurv=2}{R}{L}{}
\VCurveR[]{angleA=-45,angleB=-135,ncurv=2}{R}{L}{}
\ChgEdgeLineStyle{dotted}	\EdgeL{L}{R}{}	\RstEdgeLineStyle
\end{VCPicture}
}}
\qquad
\subfigure[]{
\label{figure: Arnoux-Rauzy graph}
\scalebox{0.6}{
\begin{VCPicture}{(-1,0)(5,4)}
\ChgEdgeLabelScale{0.8}
\VSState{(0.5,2.5)}{L}
\VSState{(3.5,2.5)}{R}
\EdgeLineDouble
\VCurveL[]{angleA=45,angleB=135,ncurv=2}{R}{L}{}
\VCurveR[]{angleA=-60,angleB=-120,ncurv=1}{R}{L}{}
\VCurveR[]{angleA=-30,angleB=-150,ncurv=3}{R}{L}{}
\ChgEdgeLineStyle{dotted}	\EdgeL{L}{R}{}	\RstEdgeLineStyle
\end{VCPicture}
}}
\caption[Reduced Rauzy graphs with 1 left and 1 right special factor.]{Reduced Rauzy graphs with one left special factor and one right special factor.}
\label{figure: Rauzy graphs with one right special factor and one left special factor}
\end{figure}

\begin{figure}[h!tbp]
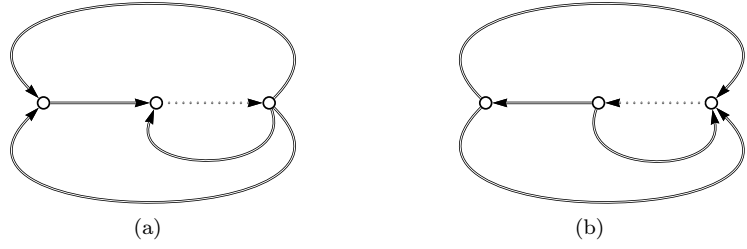

\centering
\subfigure[]{
\label{figure: graph with 1 right special factor}
\scalebox{0.6}{
\begin{VCPicture}{(-1,0)(7,4)}
\ChgEdgeLabelScale{0.8}
\VSState{(0.5,2.5)}{L_1}
\VSState{(3,2.5)}{L_2}
\VSState{(5.5,2.5)}{R}
\EdgeLineDouble
\VCurveL[]{angleA=45,angleB=135,ncurv=1.5}{R}{L_1}{}
\VCurveR[]{angleA=-45,angleB=-135,ncurv=1.5}{R}{L_1}{}
\VCurveR[]{angleA=-70,angleB=-110,ncurv=1.2}{R}{L_2}{}
\EdgeL{L_1}{L_2}{}
\ChgEdgeLineStyle{dotted}	\EdgeL{L_2}{R}{}	\RstEdgeLineStyle
\end{VCPicture}
}}
\qquad
\subfigure[]{
\label{figure: graph with 1 left special factor}
\scalebox{0.6}{
\begin{VCPicture}{(-1,0)(7,4)}
\ChgEdgeLabelScale{0.8}
\VSState{(0.5,2.5)}{L_1}
\VSState{(3,2.5)}{L_2}
\VSState{(5.5,2.5)}{R}
\EdgeLineDouble
\ReverseArrow
\VCurveL[]{angleA=45,angleB=135,ncurv=1.5}{R}{L_1}{}
\VCurveR[]{angleA=-45,angleB=-135,ncurv=1.5}{R}{L_1}{}
\VCurveR[]{angleA=-80,angleB=-110,ncurv=1.2}{R}{L_2}{}
\EdgeL{L_1}{L_2}{}
\ChgEdgeLineStyle{dotted}	\EdgeL{L_2}{R}{}	\RstEdgeLineStyle
\end{VCPicture}
}}
\caption[Reduced Rauzy graphs with different numbers of left and right.]{Reduced Rauzy graphs with different numbers of left and right special factors.}
\label{figure: Rauzy graphs with different number of left and right special factors.}
\end{figure}

\begin{figure}[h!tbp]
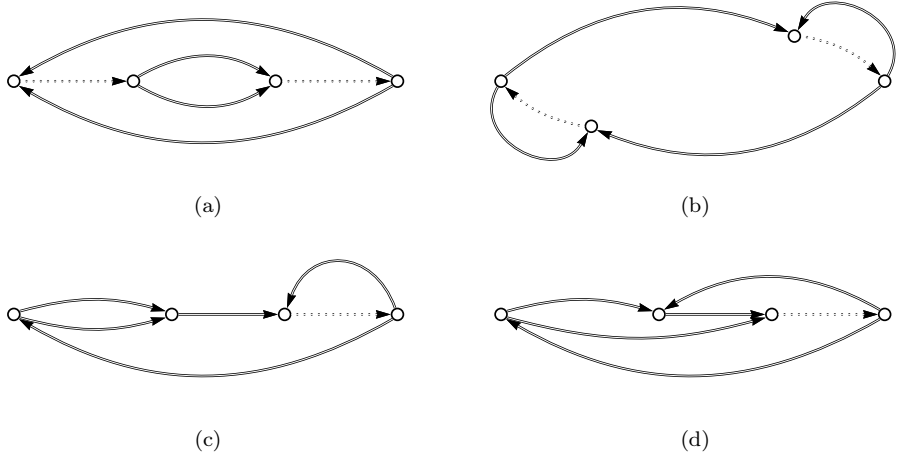

\centering
\subfigure[]{
\scalebox{0.6}{
\begin{VCPicture}{(0,0)(9,4)}
\label{figure: graph with no loop}
\ChgEdgeLabelScale{0.8}
\VSState{(0,2.5)}{L_1}
\VSState{(2.65,2.5)}{R_1}
\VSState{(5.8,2.5)}{L_2}
\VSState{(8.5,2.5)}{R_2}
\EdgeLineDouble
\LArcL{R_1}{L_2}{}
\LArcR{R_1}{L_2}{}
\LArcL{R_2}{L_1}{}
\LArcR{R_2}{L_1}{}
\ChgEdgeLineStyle{dotted}	\EdgeL{L_1}{R_1}{}	\EdgeL{L_2}{R_2}{}	\RstEdgeLineStyle
\end{VCPicture}
}}
\qquad
\subfigure[]{
\scalebox{0.6}{
\begin{VCPicture}{(0,0)(9,4)}
\label{figure: graph with 2 loops}
\ChgEdgeLabelScale{0.8}
\VSState{(0,2.5)}{R_1}
\VSState{(6.5,3.5)}{L_1}
\VSState{(8.5,2.5)}{R_2}
\VSState{(2,1.5)}{L_2}
\EdgeLineDouble
\LArcL{R_1}{L_1}{}
\VCurveL[]{angleA=-120,angleB=-120,ncurv=1.2}{R_1}{L_2}{}
\VCurveL[]{angleA=60,angleB=60,ncurv=1.2}{R_2}{L_1}{}
\LArcL{R_2}{L_2}{}
\ChgEdgeLineStyle{dotted}	\ArcL{L_1}{R_2}{}	\ArcL{L_2}{R_1}{}	\RstEdgeLineStyle
\end{VCPicture}
}}
\\
\subfigure[]{
\label{figure: graph with 1 classic loop}
\scalebox{0.6}{
\begin{VCPicture}{(0,0)(9,4)}
\ChgEdgeLabelScale{0.8}
\VSState{(0,2.5)}{R_1}
\VSState{(3.5,2.5)}{L_1}
\VSState{(6,2.5)}{L_2}
\VSState{(8.5,2.5)}{R_2}
\EdgeLineDouble
\ArcL{R_1}{L_1}{}
\ArcR{R_1}{L_1}{}
\EdgeL{L_1}{L_2}{}
\LArcL{R_2}{R_1}{}
\VCurveL[]{angleA=110,angleB=70,ncurv=1.2}{R_2}{L_2}{}
\ChgEdgeLineStyle{dotted}	\EdgeL{L_2}{R_2}{}	\RstEdgeLineStyle
\end{VCPicture}
}}
\qquad
\subfigure[]{
\label{figure: graph with 1 long loop}
\scalebox{0.6}{
\begin{VCPicture}{(0,0)(9,4)}
\ChgEdgeLabelScale{0.8}
\VSState{(0,2.5)}{R_1}
\VSState{(3.5,2.5)}{L_1}
\VSState{(6,2.5)}{L_2}
\VSState{(8.5,2.5)}{R_2}
\EdgeLineDouble
\ArcL{R_1}{L_1}{}
\ArcR{R_1}{L_2}{}
\EdgeL{L_1}{L_2}{}
\LArcL{R_2}{R_1}{}
\LArcR{R_2}{L_1}{}
\ChgEdgeLineStyle{dotted}	\EdgeL{L_2}{R_2}{}	\RstEdgeLineStyle
\end{VCPicture}
}}
\caption[Reduced Rauzy graphs with 2 left and 2 right special factors.]{Reduced Rauzy graphs with two left and two right special factors.}
\label{figure: Rauzy graphs with two left and right special factors}
\end{figure}
}

From Remark~\ref{remark: sigme = identity}, it is enough to consider Rauzy graphs of order $i_n$, $n \in \N$. To this aim, we have to merge the vertices that are linked by dots in Figure~\ref{figure: Rauzy graphs with one right special factor and one left special factor} to Figure~\ref{figure: Rauzy graphs with two left and right special factors}. Observe that both Figures~\ref{figure: graph with no loop} and~\ref{figure: graph with 2 loops} give rise to two different graphs: one with one bispecial vertex and one right special vertex and one with two bispecial vertices. This gives rise to 10 different types of graphs. They are represented in Figure~\ref{figure: Rauzy graphs with at least 1 bispecial vertex}.


\begin{figure}[h!tbp]
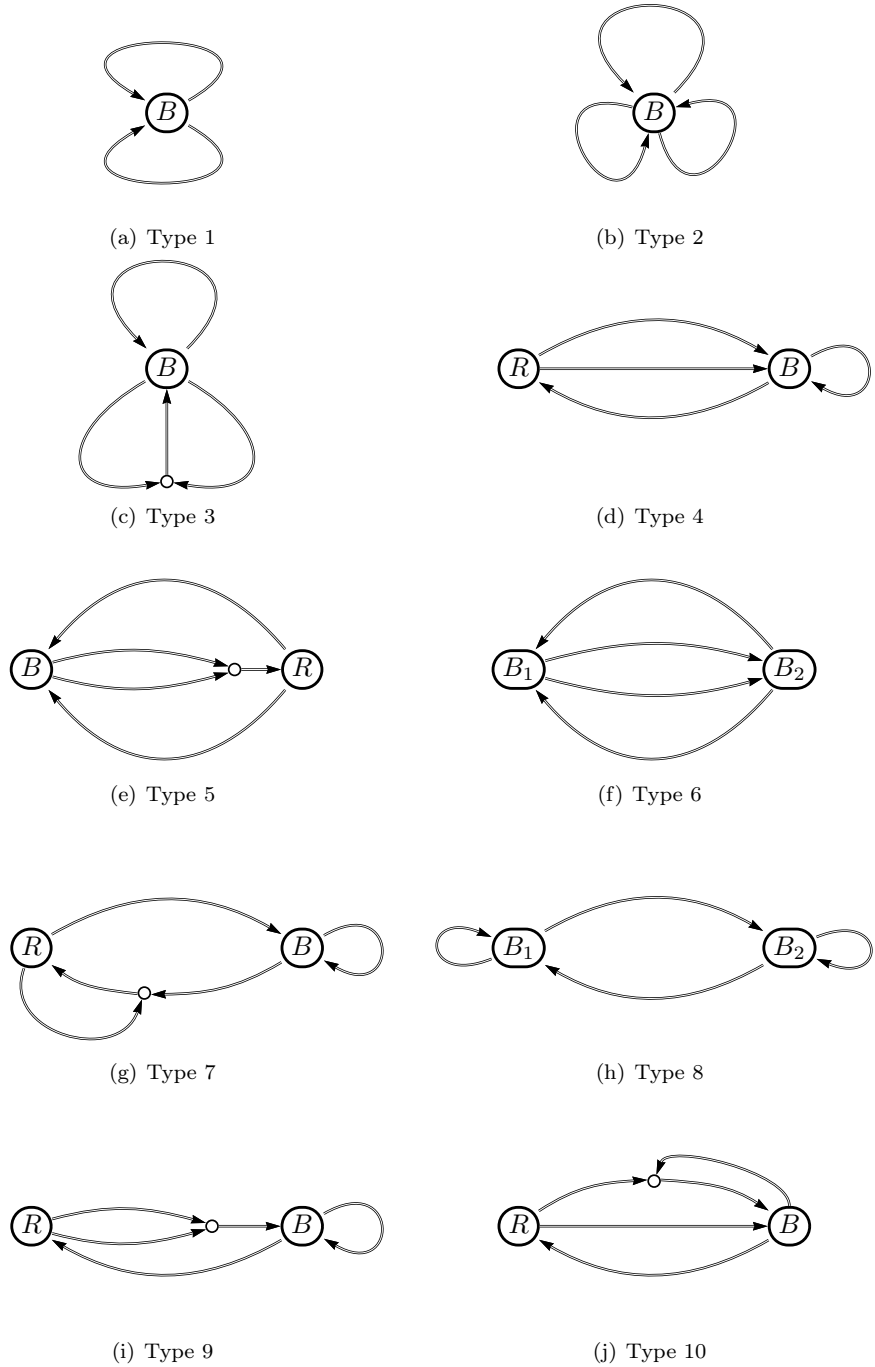

\centering
\subfigure[Type 1]{
\label{figure: type 1}
\scalebox{0.6}{
\begin{VCPicture}{(-0.5,0)(8.5,5)}
\ChgEdgeLabelScale{0.5}
\StateVar[B]{(4,2.5)}{B}
\EdgeLineDouble
\VCurveR[]{angleA=30,angleB=150,ncurv=7}{B}{B}{}
\VCurveL[]{angleA=-30,angleB=-150,ncurv=7}{B}{B}{}
\end{VCPicture}
}}
\qquad
\subfigure[Type 2]{
\label{figure: type 2}
\scalebox{0.6}{
\begin{VCPicture}{(-0.5,0)(8.5,5)}
\ChgEdgeLabelScale{0.5}
\StateVar[B]{(4,2.5)}{B}
\EdgeLineDouble
\VCurveR[]{angleA=45,angleB=135,ncurv=8}{B}{B}{}
\VCurveR[]{angleA=165,angleB=255,ncurv=8}{B}{B}{}
\VCurveR[]{angleA=285,angleB=15,ncurv=8}{B}{B}{}
\end{VCPicture}
}}
\\
\subfigure[Type 3]{
\label{figure: type 3}
\scalebox{0.6}{
\begin{VCPicture}{(-0.5,0)(8.5,5)}
\ChgEdgeLabelScale{0.5}
\StateVar[B]{(4,3)}{B}
\VSState{(4,0.5)}{L}
\EdgeLineDouble
\VCurveL[]{angleA=45,angleB=135,ncurv=8}{B}{B}{}
\VCurveR[]{angleA=-30,angleB=-10,ncurv=2}{B}{L}{}
\VCurveR[]{angleA=-150,angleB=190,ncurv=2}{B}{L}{}
\EdgeL{L}{B}{}
\end{VCPicture}
}}
\qquad
\subfigure[Type 4]{
\label{figure: type 4}
\scalebox{0.6}{
\begin{VCPicture}{(-0.5,0)(8.5,5)}
\ChgEdgeLabelScale{0.5}
\StateVar[R]{(1,3)}{R}
\StateVar[B]{(7,3)}{B}
\EdgeLineDouble
\LArcL{R}{B}{}
\EdgeL{R}{B}{}
\LoopE{B}{}
\LArcL{B}{R}{}
\end{VCPicture}
}}
\\
\subfigure[Type 5]{
\scalebox{0.6}{
\begin{VCPicture}{(-0.5,0)(8.5,5)}
\label{figure: type 5}
\ChgEdgeLabelScale{0.5}
\StateVar[B]{(1,2.5)}{R}
\StateVar[R]{(7,2.5)}{B}
\VSState{(5.5,2.5)}{L}
\EdgeLineDouble
\ArcL{R}{L}{}
\ArcR{R}{L}{}
\Edge{L}{B}
\VCurveL[.05]{angleA=-130,angleB=-50,ncurv=1}{B}{R}{}
\VCurveR[.05]{angleA=130,angleB=50,ncurv=1}{B}{R}{}		
\end{VCPicture}
}}
\qquad
\subfigure[Type 6]{
\scalebox{0.6}{
\begin{VCPicture}{(-0.5,0)(8.5,5)}
\label{figure: type 6}
\ChgEdgeLabelScale{0.5}
\StateVar[B_1]{(1,2.5)}{B_1}
\StateVar[B_2]{(7,2.5)}{B_2}
\EdgeLineDouble
\ArcL{B_1}{B_2}{}
\ArcR{B_1}{B_2}{}
\VCurveL[.05]{angleA=-130,angleB=-50,ncurv=1}{B_2}{B_1}{}
\VCurveR[.05]{angleA=130,angleB=50,ncurv=1}{B_2}{B_1}{}
\end{VCPicture}
}}
\\
\subfigure[Type 7]{
\scalebox{0.6}{
\begin{VCPicture}{(-0.5,0)(8.5,5)}
\label{figure: type 7}
\ChgEdgeLabelScale{0.5}
\StateVar[R]{(1,2.5)}{R}
\StateVar[B]{(7,2.5)}{B}
\VSState{(3.5,1.5)}{L}
\EdgeLineDouble
\LArcL{R}{B}{}
\ArcL{B}{L}{}
\LoopE{B}{}
\ArcL{L}{R}{}
\VCurveR{angleA=-110,angleB=-110,ncurv=1.15}{R}{L}{}
\end{VCPicture}
}}
\qquad
\subfigure[Type 8]{
\scalebox{0.6}{
\begin{VCPicture}{(-0.5,0)(8.5,5)}
\label{figure: type 8}
\ChgEdgeLabelScale{0.5}
\StateVar[B_1]{(1,2.5)}{B_1}
\StateVar[B_2]{(7,2.5)}{B_2}
\EdgeLineDouble
\LArcL{B_1}{B_2}{}
\LArcL{B_2}{B_1}{}
\CLoopL{0}{B_2}{}
\CLoopL{180}{B_1}{}
\end{VCPicture}
}}
\\
\subfigure[Type 9]{
\label{figure: type 9}
\scalebox{0.6}{
\begin{VCPicture}{(-0.5,0)(8.5,5)}
\ChgEdgeLabelScale{0.5}
\StateVar[R]{(1,2.5)}{R}
\VSState{(5,2.5)}{L}
\StateVar[B]{(7,2.5)}{B}
\EdgeLineDouble
\ArcL{R}{L}{}
\ArcR{R}{L}{}
\EdgeL{L}{B}{}
\LArcL{B}{R}{}
\LoopL{0}{B}{}
\end{VCPicture}
}}
\qquad
\subfigure[Type 10]{
\label{figure: type 10}
\scalebox{0.6}{
\begin{VCPicture}{(-0.5,0)(8.5,5)}
\ChgEdgeLabelScale{0.5}
\StateVar[R]{(1,2.5)}{R}
\VSState{(4,3.5)}{L}
\StateVar[B]{(7,2.5)}{B}
\EdgeLineDouble
\ArcL{R}{L}{}
\ArcL{L}{B}{}
\Edge{R}{B}{}
\LArcL{B}{R}{}
\VCurveL[]{angleA=90,angleB=70,ncurv=0.5}{B}{L}{}
\end{VCPicture}
}}
\caption{Reduced Rauzy graphs with at least one bispecial vertex.}
\label{figure: Rauzy graphs with at least 1 bispecial vertex}
\end{figure}

\begin{remark}
\label{remark: type sans bispecial}
In the sequel, we sometimes talk about the type of a reduced Rauzy graph $g_k$ with $k \notin \{i_n \mid n \in \N\}$. In that case, the type of that graph is simply the type of $g_{\min \{i_n \mid i_n \geq k\}}$. This makes no confusions since if $R$ is a right special vertex in a Rauzy graph, the circuits starting from it have the same right labels (and full labels) of those starting from the smallest bispecial vertex (in a Rauzy graph of larger order) containing $R$ as a suffix. We also sometimes talk about the type of a Rauzy graph (and not reduced Rauzy graph). This simply corresponds to the type of the corresponding reduced Rauzy graph.
\end{remark}

\subsubsection{Graph of graphs}
\label{subsubsection: graph of graphs}

Now that we have defined all types of graphs, we can check which evolutions are available, \textit{i.e.}, which type of graphs can evolve to which type of graphs. It is clear that a given Rauzy graph cannot evolve to any type of Rauzy graphs. For example, if $G_n$ is a graph of type 4, both right special vertices can be extended by only two letters. Since for any word $u$ and for any suffix $v$ of $u$, we have $\delta^+(v) \geq  \delta^+(u)$, the graph $G_n$ will never evolve to a graph of type 2 or 3. Section~\ref{subsubsection: An example} shows that a graph of type 1 can evolve to graphs of type 1, 7 or 8.

By computing all available evolutions, we can define the \definir{graph of graphs} as the directed graph with 10 vertices (one for each type of Rauzy graph) such that there is an edge from $i$ to $j$ if a Rauzy graph of type $i$ can evolve to a Rauzy graph of type $j$. This graph is represented in Figure~\ref{figure: graph of graphs}. 
A detailed computation of evolutions is available in Section~\ref{appendix: evolutions}. 

\begin{figure}[h!tbp]
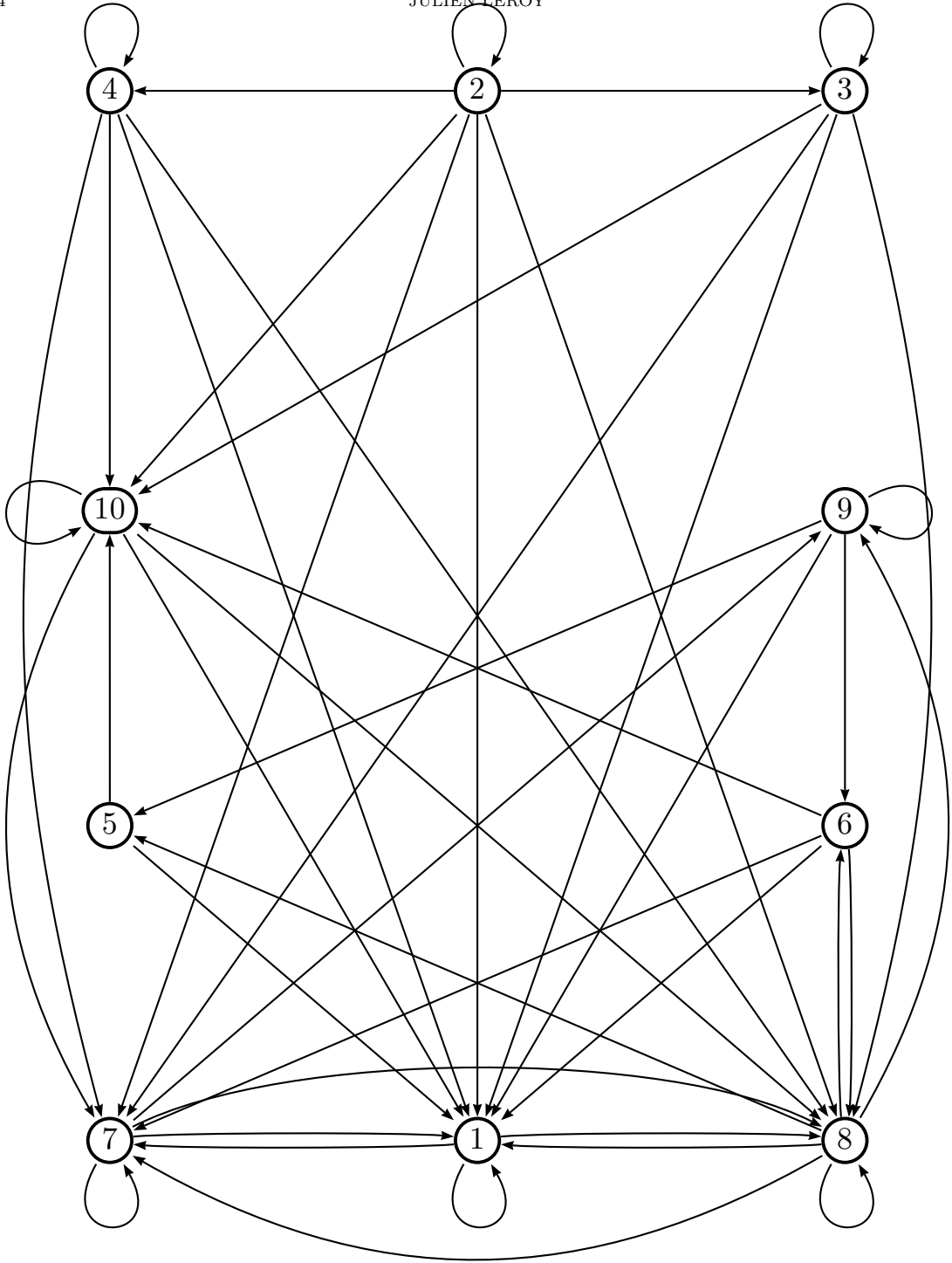

\centering
\scalebox{0.8}{
\begin{VCPicture}{(0,-3)(14,21)}
\StateVar[4]{(0,20)}{10}				\StateVar[2]{(7,20)}{1} 			\StateVar[3]{(14,20)}{2}	
\StateVar[10]{(0,12)}{8}												\StateVar[9]{(14,12)}{9}
\StateVar[5]{(0,6)}{4}													\StateVar[6]{(14,6)}{5}
\StateVar[7]{(0,0)}{6}					\StateVar[1]{(7,0)}{3}				\StateVar[8]{(14,0)}{7}
\LoopN{10}{}	\LoopN{1}{}		\LoopN{2}{}		
\LoopS{6}{}		\LoopS{7}{}		\LoopW{8}{}		\LoopS{3}{}		\LoopE{9}{}
\EdgeL{1}{2}{}		\EdgeL{1}{10}{}		\EdgeL{1}{6}{}		\EdgeL{1}{7}{}		\EdgeL{1}{3}{}		\EdgeL{1}{8}{}
\EdgeL{2}{3}{}		\EdgeL{2}{6}{}		\ArcL{2}{7}{}			\Edge{2}{8}{}
\VArcL{arcangle=5,ncurv=.3}{3}{6}{}		\VArcL{arcangle=5,ncurv=.3}{3}{7}{}
\EdgeL{4}{3}{}		\EdgeL{4}{8}{}
\EdgeL{5}{3}{}		\EdgeL{5}{6}{}		\VArcL{arcangle=5,ncurv=.3}{5}{7}{}	\EdgeL{5}{8}{}
\VArcL{arcangle=5,ncurv=.3}{6}{3}{}		\VArcL{arcangle=27,ncurv=.5}{6}{7}{}		\EdgeL{6}{9}{}
\VArcL{arcangle=5,ncurv=.3}{7}{3}{}		\EdgeL{7}{4}{}		\VArcL{arcangle=5,ncurv=.3}{7}{5}{}	\LArcL{7}{6}{}		\LArcR{7}{9}{}
\EdgeL{8}{3}{}		\LArcR{8}{6}{}		\EdgeL{8}{7}{}
\EdgeL{9}{3}{}		\EdgeL{9}{4}{}		\EdgeL{9}{5}{}			
\EdgeL{10}{3}{}		\ArcR{10}{6}{}		\EdgeL{10}{7}{}		\EdgeL{10}{8}{}
\end{VCPicture}
}
\caption{Graph of graphs.}
\label{figure: graph of graphs}
\end{figure}

\subsection{A critical result}	
\label{subsection: a critical result}

Now that we know all possible Rauzy graphs we have to deal with, we can define the bijections $\theta_n$ of Definition~\ref{definition: definition des morphismes}.
A first necessary condition to need only a finite set of morphisms is that the alphabets $A_n$ are bounded.
In this section we prove that when the first difference of complexity is bounded by 2, they always contain 2 or 3 letters. 
This result seems to be inherent to that class of complexity~\cite{Durand-Leroy-Richomme}. 

We need two technical lemmas to simplify the proof that $\card{A_n} \in \{2,3\}$ for all $n$. 

\begin{lemma}
\label{lemma: bispecial fort donne 2 speciaux droits}
Let $A$ be an alphabet. If $(X,T)$ is a minimal subshift over $A$ satisfying $p(n+1)-p(n) \leq 2$ for all $n$ and if $B$ is a strong bispecial factor of $X$, then any right special factor of length $\ell > |B|$ admits $B$ as a suffix.
\end{lemma}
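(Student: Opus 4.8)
The plan is to reduce the statement to a claim about one single length and then propagate it downward through suffixes. Write $n = |B|$ and $s(m) = p(m+1) - p(m)$. First I would record two elementary facts: (i) every suffix of a right special factor is again right special, since a right extension of the longer word restricts to one of the suffix, so that $\delta^+$ can only increase when passing to a suffix; and (ii) in a minimal subshift every factor is two-sided extendable. Granting the \emph{base case} that every right special factor of length $n+1$ admits $B$ as its length-$n$ suffix, the general statement follows at once: if $w$ is right special with $|w| = \ell > n$, then its suffix $w''$ of length $n+1$ is right special by (i), hence ends with $B$, and therefore so does $w$.

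It remains to prove the base case. Since $B$ is right special of length $n$, Equation~\eqref{eq 1} gives $s(n) \geq \delta^+B - 1 \geq 1$, while the hypothesis gives $s(n) \leq 2$; thus $s(n) \in \{1,2\}$. I distinguish two situations according to Equation~\eqref{eq 1} at length $n$. If $B$ is the \emph{only} right special factor of length $n$ -- which happens when $s(n) = 1$, and also when $s(n) = 2$ with $\delta^+B = 3$, since then the single term $\delta^+B-1=2$ already exhausts the budget -- then the length-$n$ suffix of any right special factor of length $n+1$ is right special of length $n$ by (i), hence equals $B$, and we are done.

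The remaining, and genuinely delicate, case is $s(n) = 2$ with $\delta^+B = 2$: here Equation~\eqref{eq 1} forces exactly one further right special factor $r \neq B$ of length $n$, again with $\delta^+r = 2$. The idea is to use strongness of $B$ together with the bound on $s(n+1)$ to show that $r$ does not branch. Since any bispecial factor is right special, the only bispecial factors of length $n$ lie in $\{B,r\}$, so the second-difference formula reads $s(n+1) - s(n) = m(B) + m(r)$. As $B$ is strong, $m(B) \geq 1$, and $s(n+1) \leq 2 = s(n)$ gives $m(B) + m(r) \leq 0$; hence $m(r) \leq -1$, so $r$ is a weak bispecial factor. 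Writing $N = \card{\fac{}{X} \cap ArA}$ and $d = \delta^-r$, the definition of the bilateral order gives $m(r) = N - d - 1$, whereas two-sided extendability gives $N = \sum_{\alpha} \delta^+(\alpha r) \geq d$. Combining $N - d - 1 = m(r) \leq -1$ with $N \geq d$ yields $N = d$ and $m(r) = -1$, so $\delta^+(\alpha r) = 1$ for every left extension $\alpha$ of $r$. Thus no word $\alpha r$ is right special, i.e.\ $r$ is not the length-$n$ suffix of any right special factor of length $n+1$; every such factor must then have suffix $B$, which establishes the base case.

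I expect the main obstacle to be precisely this last case: one has to realise that the budget $s(n+1) \leq 2$, once one unit is consumed by the strong bispecial $B$ through $m(B) \geq 1$, forces the second right special factor $r$ to be weak with bilateral order exactly $-1$, and then to translate $m(r) = -1$ into the combinatorial statement that none of the left extensions of $r$ is right special. Everything else is bookkeeping with Equations~\eqref{eq 1}, \eqref{eq 2} and the second-difference formula.
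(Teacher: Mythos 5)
Your proof is correct, but it takes a genuinely different route from the paper's. The paper argues directly at length $|B|+1$: writing $m(B)>0$ as $\sum_{aB\in\fac{}{X}}(\delta^+(aB)-1)>\delta^+(B)-1$ and using $\delta^+(aB)\le\delta^+(B)$, it deduces that at least two of the left extensions $aB$ are themselves right special; since \eqref{eq 1} allows at most two right special factors of each length, these exhaust all right special factors of length $|B|+1$ and both end in $B$, after which suffix-closedness of right speciality finishes the argument exactly as in your reduction step. You instead run the complementary count: you use the second-difference identity $s(n+1)-s(n)=\sum_u m(u)$ together with $m(B)\ge 1$ and $s(n+1)\le 2=s(n)$ to force the companion right special factor $r$ (when it exists) to satisfy $m(r)=-1$, hence $\delta^+(\alpha r)=1$ for every left extension $\alpha$, so that no right special factor of length $|B|+1$ can end in $r$. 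Both arguments are sound and rest on the same two pillars (the at-most-two bound from \eqref{eq 1} and the fact that suffixes of right special factors are right special); the paper's is shorter and needs no case analysis on $s(n)$ and $\delta^+(B)$, while yours isolates the structural fact --- used implicitly elsewhere in the paper, for instance when ruling out two simultaneous strong bispecial factors in graphs of type 8 --- that a strong bispecial factor forces any coexisting right special factor of the same length to be weak with bilateral order exactly $-1$.
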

\begin{proof}
Indeed, $B$ being supposed to be strong bispecial, its bilateral order $m(B)$ is positive. Observe that, by definition, $m(B)>0$ is equivalent to the inequality
\[
	\sum_{aB \in \fac{}{X}} (\delta^+(aB) - 1) > \delta^+(B) - 1,
\]
which is true only if there are at least two letters $a$ and $b$ in $A$ such that $aB$ and $bB$ are right special (since $\delta^+(aB) \leq \delta^+(B)$). As there can exist at most 2 right special factors of each length (because $p(n+1)-p(n) \leq 2$) and as any suffix of a right special factor is still a right special factor, the result holds. 
\end{proof}

The following result is a direct consequence of Lemma~\ref{lemma: bispecial fort donne 2 speciaux droits}.

\begin{corollary}
\label{corollary: v_n = B}
Let $(X,T)$ be a minimal subshift satisfying $1 \leq p(n+1)-p(n) \leq 2$ for all $n$ and let $(U_n)_{n \in \N}$ be a sequence of right special factors of $X$ fulfilling the conditions of Lemma~\ref{lemma: suite de speciaux droits}. For any strong bispecial factor $B$ of length $n$ of $X$, we have $B = U_n$. In particular, if there are infinitely many strong bispecial factors in $\fac{}{X}$, the sequence $(U_n)_{n \in \N}$ of Lemma~\ref{lemma: suite de speciaux droits} is unique.
\end{corollary}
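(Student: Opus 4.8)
The plan is to derive everything from Lemma~\ref{lemma: bispecial fort donne 2 speciaux droits}, which already carries the substantive combinatorial content; the corollary is then essentially a bookkeeping argument about suffixes.

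First I would prove the main equality $B = U_n$. Let $B$ be a strong bispecial factor of length $n$. Being bispecial, $B$ is in particular right special, and since the hypothesis $1 \leq p(n+1)-p(n)$ forces the subshift to be aperiodic (so that right special factors of every length exist, by Morse--Hedlund), I may consider $U_{n+1} \in \fac{n+1}{X}$, which is a right special factor of length $n+1 > |B|$. By Lemma~\ref{lemma: bispecial fort donne 2 speciaux droits}, $U_{n+1}$ admits $B$ as a suffix. On the other hand, by the defining property of the sequence $(U_k)_{k \in \N}$ coming from Lemma~\ref{lemma: suite de speciaux droits}, $U_n$ is a suffix of $U_{n+1}$, and $|U_n| = n$. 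Since a word has exactly one suffix of each admissible length, and both $B$ and $U_n$ are suffixes of $U_{n+1}$ of the same length $n$, they coincide: $B = U_n$.

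Next I would treat the uniqueness statement. Note that the equality just established already shows that for each fixed length there is at most one strong bispecial factor: if $B$ and $B'$ are both strong bispecial of length $n$, then $B = U_n = B'$. Now assume there are infinitely many strong bispecial factors and let $N$ be the (then infinite) set of lengths $n$ for which a strong bispecial factor exists. For each $n \in N$ the value $U_n$ is forced to equal the unique strong bispecial factor of length $n$, independently of the choices allowed in Lemma~\ref{lemma: suite de speciaux droits}. To see that the entire sequence is determined, fix an arbitrary index $m$ and pick $n \in N$ with $n > m$; since $U_m$ is a suffix of $U_{m+1}$, which is a suffix of $U_{m+2}$, and so on up to $U_n$, the word $U_m$ is the unique suffix of length $m$ of $U_n$, which is itself already determined. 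Hence $U_m$ is uniquely determined for every $m$, and the sequence $(U_n)_{n \in \N}$ is unique.

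I do not expect a genuine obstacle here: the only point requiring mild care is ensuring that $B$ and $U_n$ really sit as suffixes of the same length inside $U_{n+1}$, for which one must confirm that aperiodicity (immediate from $1 \leq p(n+1)-p(n)$) guarantees the existence of the right special factor $U_{n+1}$. The hard work, namely that a strong bispecial factor is a suffix of every longer right special factor, has already been carried out in Lemma~\ref{lemma: bispecial fort donne 2 speciaux droits}.
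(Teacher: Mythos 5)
Your proof is correct and follows exactly the route the paper intends: the paper states the corollary as a direct consequence of Lemma~\ref{lemma: bispecial fort donne 2 speciaux droits}, and your argument (both $B$ and $U_n$ are length-$n$ suffixes of $U_{n+1}$, hence equal; the forced values at strong bispecial lengths then determine the whole sequence by the suffix condition) is precisely that deduction spelled out.
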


\begin{lemma}
\label{lemma: trop de boucles donne un bispecial fort}
Let $G_n$ be a Rauzy graph. If there is a right special vertex $R$ in $G_n$ with $\delta^+ (R)= 2$, an $n$-circuit $q$ starting from $R$, two paths $p$ and $s$ in $G_n$ and two integers $k_1$ and $k_2$, $k_1 < k_2 -1$, such that \begin{enumerate}
	\item $i(p) = o(s) = R$;	
	\item $p$ is not a suffix of $q$;
	\item $q$ is not a suffix of $p$;
	\item the first edge of $s$ is not the first edge of $q$;
	\item both paths $pq^{k_1}s$ and $pq^{k_2}s$ are allowed;
\end{enumerate}
then there is a strong bispecial factor $B$ that admits $R$ as a suffix.
\end{lemma}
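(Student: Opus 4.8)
The plan is to exhibit an explicit factor $B$ that ends in $R$ and to prove that its bilateral order is positive by producing two distinct left extensions of $B$ that are \emph{both} right special. The starting observation is that, exactly as in the proof of Lemma~\ref{lemma: bispecial fort donne 2 speciaux droits}, the definition of the bilateral order can be rewritten as
\[
	m(B) = \sum_{cB \in \fac{}{X}} \bigl(\delta^+(cB) - 1\bigr) - \bigl(\delta^+(B) - 1\bigr).
\]
Hence, if $B$ admits $R$ as a suffix (so that $\delta^+(B) = \delta^+(R) = 2$, using that every extension of $B$ restricts to one of $R$ and that $B$ will be right special) and if there are two distinct letters $c_1 \neq c_2$ with $c_1 B$ and $c_2 B$ both right special, then $m(B) \geq 1 + 1 - (2-1) = 1 > 0$, so $B$ is strong. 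Everything therefore reduces to building such a $B$.

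I would set $u = \lambda_R(q)$, $v = \lambda_R(s)$, $P = \lambda(p)$, and write $a,b$ for the first letters of $u,v$; these are the right labels of the first edges of $q$ and $s$, are distinct by hypothesis~(4), and are the only two right extensions of $R$ since $\delta^+(R)=2$. From hypothesis~(5) and $k_1 < k_2 - 1$ one reads off, as sub- or prefix-paths of $pq^{k_2}s$, that $Ru^{j}v = \lambda(q^{j}s) \in \fac{}{X}$ and $Pu^{j} = \lambda(pq^{j}) \in \fac{}{X}$ for $0 \le j \le k_2$, that $Pu^{k_1}v, Pu^{k_2}v \in \fac{}{X}$, and that $Ru^{k_2} = \lambda(q^{k_2}) \in \fac{}{X}$. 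Letting $\ell_q,\ell_p$ denote the numbers of edges of $q,p$, the identity $\lambda(q)=Ru=YR$ with $|Y|=\ell_q$ gives the periodic relation $Ru^{m} = Y^{m-j}Ru^{j}$, so that the factor $Ru^{k_1}$ sits inside the loop word $Ru^{k_2}$ preceded by long prefixes of the periodic word $\cdots YYY$.

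Next I would compare the two natural left contexts of $Ru^{k_1}$: the \emph{loop context} $Y^{k_2-k_1}$ arising from $Ru^{k_2} = Y^{k_2-k_1}Ru^{k_1}$, and the \emph{$p$-context} $L_p$ (the prefix of $P$ of length $\ell_p$) arising from $Pu^{k_1} = L_p\,Ru^{k_1}$. Let $Z$ be their longest common suffix and put $B := Z\,Ru^{k_1}$; since $B$ ends at the vertex $R$, it admits $R$ as a suffix. The crucial quantitative input of hypotheses~(2) and~(3) is that $|Z| < \ell_q$: if $|Z|\ge \ell_q$ then the last $\ell_q$ edges read into $B$ along $p$ would be exactly $Y$, i.e. $q$ would be a suffix of $p$, contradicting~(3). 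The analogous argument with the roles exchanged gives $|Z|<\ell_p$, for otherwise $p$ is a suffix of $q$, contradicting~(2). Consequently $Z$ is a proper suffix of both contexts, so there are letters $c_1$ (just left of $Z$ in the loop context) and $c_2$ (just left of $Z$ in the $p$-context), and $c_1 \neq c_2$ by maximality of $Z$.

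It then remains to verify that $c_1 B$ and $c_2 B$ are each extendable by both $a$ and $b$. For $c_2 B$ the extension by $b$ comes from $Pu^{k_1}v$ and the extension by $a$ from $Pu^{k_1+1}=\lambda(pq^{k_1+1})$, allowed because $k_1+1\le k_2$. For $c_1 B$ the extension by $b$ comes from $Ru^{k_2}v=\lambda(q^{k_2}s)$, since $|c_1 Z| = |Z|+1 \le \ell_q$ leaves room for $c_1 Z$ before the terminal $Ru^{k_1}v$; the extension by $a$ must be found as an \emph{internal} occurrence of $c_1 B$ inside $Ru^{k_2}$ that is followed by a further $u$, which needs $|c_1 Z| \le (k_2-k_1-1)\ell_q$. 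This last inequality is precisely where the gap $k_1 < k_2 - 1$ is indispensable: together with $|c_1 Z|\le \ell_q$ and $k_2-k_1-1\ge 1$ it yields $|c_1 Z|\le \ell_q \le (k_2-k_1-1)\ell_q$. I expect this simultaneous balancing to be the main obstacle: conditions~(2)--(3) are used to keep $Z$ short (guaranteeing divergence and hence the two distinct left letters), while the gap condition is used to keep enough loop available for the interior occurrence that makes the loop-side extension $c_1 B$ right special. Once both are right special, the reformulated bilateral order gives $m(B)>0$, so $B$ is a strong bispecial factor admitting $R$ as a suffix.
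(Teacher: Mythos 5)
Your proof is correct and follows essentially the same route as the paper: your $B = Z\,Ru^{k_1}$ is exactly the paper's $\lambda(q'q^{k_1})$ (the longest common suffix $Z$ of the two left contexts being the word-level counterpart of the common suffix path $q'$ of $p$ and $q$ ending at the divergence vertex $L$), your letters $c_1,c_2$ are the paper's $\beta,\alpha$, and the four factors $c_iBa$, $c_iBb$ extracted from $pq^{k_1}s$ and $pq^{k_2}s$ — with the gap $k_2\ge k_1+2$ needed precisely for the loop-side extension by $a$ — are the same four words the paper uses to get $m(B)>0$. The only difference is that you spell out the word-combinatorial bookkeeping (why $|Z|<\min(\ell_p,\ell_q)$ and where each occurrence sits) that the paper leaves implicit.
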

\begin{proof}
Since $i(p) = o(q) = R$ but $p$ and $q$ are not suffix of each other, there is a left special vertex $L$ in $G_n$ and two edges $e_1$ in $p$ and $e_2$ in $q$ such that $p$ and $q$ agree on a path $q'$ from $L$ to $R$ and $i(e_1) = i(e_2) = L$. Let $\alpha$ and $\beta$ be the respective left labels of $e_1$ and $e_2$. Let also $a$ and $b$ respectively denote the right labels of the first edge of $q$ and of $s$. By hypothesis we have $a \neq b$.

Now let us prove that the word $\lambda(q'q^{k_1})$ is strong bispecial. As the paths $pq^{k_1}s$ and $pq^{k_2}s$ are allowed, the four words $\alpha \lambda(q'q^{k_1}) a$, $\alpha \lambda(q'q^{k_1}) b$, $\beta \lambda(q'q^{k_1}) a$ and $\beta \lambda(q'q^{k_1}) b$ belong to $\fac{}{X}$. Consequently we have
\[
	\delta^+(\alpha \lambda(q'q^{k_1})) + \delta^+(\beta \lambda(q'q^{k_1}))  = 4.
\]
Moreover, as the word $\lambda(q'q^{k_1})$ admits $R$ as a suffix, we have $\delta^+(\lambda(q'q^{k_1})) \leq \delta^+(R) = 2$ which implies that $m(\lambda(q'q^{k_1})) >0$.
\end{proof}

\begin{proposition}
\label{prop: 3 circuits par special droit}
Let $(X,T)$ be a minimal subshift satisfying $1 \leq p(n+1)-p(n) \leq 2$ for all $n$ and let $(U_n)_{n \geq N}$ be a sequence of right special factors fulfilling the conditions of Lemma~\ref{lemma: suite de speciaux droits}.
Then for all right special factors $U_n$, there are at most 3 allowed $n$-circuits starting from $U_n$.
\end{proposition}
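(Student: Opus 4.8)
The plan is to bound the number of allowed $n$-circuits starting from a right special factor $U_n$ by relating circuits to the out-degree structure of $G_n$ and to the right-special structure via Equation~\eqref{eq 1}. First I would recall from Remark~\ref{remark: lien entre n-circuit et mot de retour} that the right labels of allowed $n$-circuits starting from $U_n$ are exactly the return words to $U_n$, and that the number of such circuits equals the number of distinct first edges leaving $U_n$ times the number of distinct ways to return. The key observation is that an allowed $n$-circuit leaving $U_n$ is determined by the edge it takes out of $U_n$ and then follows a forced path until it meets a right special vertex again; since $U_n$ is right special with $\delta^+(U_n) \in \{2,3\}$, there are at most $3$ initial edges, and in a minimal subshift with $p(n+1)-p(n)\le 2$ there are at most two right special factors of each length, so the branching is severely constrained.

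The main steps I would carry out are as follows. First, using the ten shapes of reduced Rauzy graphs classified in Section~\ref{subsection: 10 chapes of rauzy graphs}, I would argue that it suffices to check the claim on each of the finitely many graph types, since $U_n$ is a right special (indeed bispecial, by Corollary~\ref{corollary: v_n = B} when $U_n$ is strong bispecial) vertex. For each type I would count the simple circuits through $U_n$: a circuit leaves $U_n$, traverses non-right-special interior vertices, and returns; the constraint that no interior vertex equals $U_n$ together with $\delta^+(U_n)\le 3$ and the bounded number of special vertices forces the count. Second, and this is where the real content lies, I would rule out the possibility of a fourth circuit by a contradiction argument invoking Lemma~\ref{lemma: trop de boucles donne un bispecial fort}: if there were four or more allowed $n$-circuits, then in particular $U_n$ would have $\delta^+(U_n)=2$ and there would be two distinct return loops plus additional path structure, yielding paths $p$, $q$, $s$ and exponents $k_1<k_2-1$ satisfying the five hypotheses of that lemma, hence a strong bispecial factor $B$ admitting $U_n$ as a suffix.

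The hard part will be the bookkeeping that translates "more than three circuits" into the precise combinatorial configuration $(p,q,s,k_1,k_2)$ required by Lemma~\ref{lemma: trop de boucles donne un bispecial fort}, in particular verifying conditions (2)--(4) (that $p$ and $q$ are not suffixes of one another and that $s$ and $q$ start with different edges). I expect the cleanest route is to separate cases according to $\delta^+(U_n)$: if $\delta^+(U_n)=3$ then Equation~\eqref{eq 1} forces $U_n$ to be the unique right special factor of its length and the graph is of type 2 or a descendant, where a direct inspection of the at most three out-edges and the forced return paths gives exactly three circuits; if $\delta^+(U_n)=2$ then there is potentially a second right special factor, and the circuits correspond to choosing which of the two out-edges is taken and how many times a loop is repeated, and it is here that a fourth essentially distinct circuit would produce the repeated-loop pattern $pq^{k_1}s$, $pq^{k_2}s$ contradicting the assumption that the subshift has $p(n+1)-p(n)\le 2$ (since a strong bispecial factor with two right-special suffixes would violate the bound via Lemma~\ref{lemma: bispecial fort donne 2 speciaux droits}). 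The conclusion follows by contradiction, giving at most $3$ allowed $n$-circuits from $U_n$.
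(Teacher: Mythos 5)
Your overall strategy (case analysis on the ten graph shapes, plus Lemma~\ref{lemma: trop de boucles donne un bispecial fort} to rule out a fourth circuit) matches the paper's, and your treatment of the case $\delta^+(U_n)=3$ is fine. But the final contradiction is wrong, in two ways. First, you conclude that the four-circuit configuration produces a strong bispecial factor $B$ \emph{admitting $U_n$ as a suffix}. That is not what the configuration yields: the repeated loop $q$ in the pattern $pq^{k_1}s$, $pq^{k_2}s$ is based at the \emph{other} right special vertex $R\neq U_n$ of length $n$ (a circuit from $U_n$ cannot revisit $U_n$ in its interior, so any loop it traverses several times sits at $R$), and Lemma~\ref{lemma: trop de boucles donne un bispecial fort} therefore produces a strong bispecial factor admitting $R$, not $U_n$, as a suffix. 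Second, and more importantly, ``a strong bispecial factor admitting $U_n$ as a suffix'' is not a contradiction at all: by Corollary~\ref{corollary: v_n = B} that is exactly the generic, allowed situation ($U_{|B|}=B$, with $U_n$ a suffix of it). Your fallback justification --- that a strong bispecial factor with two right-special extensions would violate $p(n+1)-p(n)\le 2$ --- also fails, since two right special factors of a given length are precisely what the bound permits; that observation is used \emph{inside} the proof of Lemma~\ref{lemma: bispecial fort donne 2 speciaux droits}, it does not contradict the hypotheses.

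The contradiction your write-up is missing is with the \emph{nesting} of the sequence $(U_n)$: once the other right special factor $R$ of length $n$ is shown to be a suffix of a strong bispecial factor $B$ of length $m\ge n$, Corollary~\ref{corollary: v_n = B} forces $U_m=B$; but $U_n$ must be a suffix of $U_m$, hence equal to the length-$n$ suffix of $B$, which is $R\neq U_n$. You would also need the preliminary observation (made in the paper by inspecting the reduced graphs) that four allowed circuits from $U_n$ are only possible when a second right special factor of length $n$ exists, together with the case-by-case bookkeeping for the graph of Figure~\ref{figure: graph with 1 classic loop}, where Lemma~\ref{lemma: trop de boucles donne un bispecial fort} does not always apply directly and the strong bispecial factor must be exhibited by hand.
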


\begin{proof}
Suppose that there exist 4 allowed $n$-circuits starting from the vertex $U_n$ in the graph $G_n(X)$ and let us have a look at all possible reduced Rauzy graphs. We see that this is possible only if there exist two right special factors of length $n$. More precisely, this is only possible if $U_n$ corresponds to the leftmost right special vertex in Figures~\ref{figure: graph with 1 left special factor},~\ref{figure: graph with 1 classic loop} and~\ref{figure: graph with 1 long loop} or to any right special vertex in Figures~\ref{figure: graph with no loop} and~\ref{figure: graph with 2 loops} (as these two graphs present a kind of \enquote{symmetry}). We will show that for each of these graphs, the existence of 4 $n$-circuits starting from the described vertices implies that the other right special factor $R$ of length $n$ is a suffix of a strong bispecial factor $B$ of length $m \geq n$ in $\fac{}{X}$. Then, due to Corollary~\ref{corollary: v_n = B}, $U_m = B$ so $U_n$ is not a suffix of $U_m$ which contradicts the hypothesis.

The result clearly holds for graphs as represented in Figure~\ref{figure: graph with no loop} and it is a direct consequence of Lemma~\ref{lemma: trop de boucles donne un bispecial fort} for graphs as represented at Figure~\ref{figure: graph with 2 loops} (since the existence of 4 $n$-circuits implies that 3 of them goes through the loop respectively $k_1$, $k_2$ and $k_3$ times, $k_1 < k_2 < k_3$).

For graphs as represented in Figure~\ref{figure: graph with 1 classic loop}, we have to consider several cases. To be clearer, Figure~\ref{figure: graph with 1 classic loop with some labels} represents the same graph with some labels. The letters $\alpha$ and $\beta$ are the left extending letters of $L_1$ in $\fac{}{X}$ and the letters $a$ and $b$ are the right extending letters of $R_2$ in $\fac{}{X}$. If there are three $n$-circuits starting from $R_1$, going through a same simple path from $R_1$ to $L_1$ and passing through the loop $p = L_2 \rightarrow R_2 \rightarrow L_2$ respectively $k_1$, $k_2$, and $k_3$ times, $k_1 < k_2 < k_3$, then we can conclude using Lemma~\ref{lemma: trop de boucles donne un bispecial fort}. Otherwise, for both simple paths from $R_1$ to $L_1$, there are two $n$-circuits passing through it. Let $k_{\alpha,1}$ and $k_{\alpha,2}$, $k_{\alpha,1} < k_{\alpha,2}$ (resp. $k_{\beta,1}$ and $k_{\beta,2}$, $k_{\beta,1}< k_{\beta,2}$) be the number of times that the two circuits passing through the edge with left label $\alpha$ (resp. $\beta$) can pass through the loop $p$. If $k_{\alpha,1} < k_{\alpha,2} - 1$ or if $k_{\beta,1} < k_{\beta,2} - 1$ or if $k_{\alpha,1} \neq k_{\beta,1}$, we conclude using Lemma~\ref{lemma: trop de boucles donne un bispecial fort}. Otherwise, we have $k_{\alpha,1} = k_{\beta,1}$ and $k_{\alpha,2} = k_{\beta,2} = k_{\alpha,1} + 1$ and we can easily check that the full label of the path $q = L_1 \left( \rightarrow L_2 \rightarrow R_2 \right)^{k_{\alpha,1}}$ is a strong bispecial factor.

\begin{figure}[h!tbp]
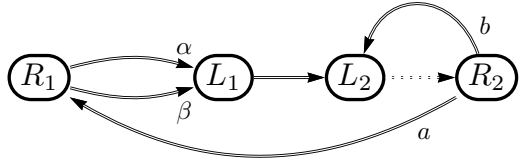

\centering
\scalebox{0.7}{
\begin{VCPicture}{(0,0)(9,4)}
\ChgEdgeLabelScale{0.8}
\StateVar[R_1]{(0,2.5)}{R_1}
\StateVar[L_1]{(3.5,2.5)}{L_1}
\StateVar[L_2]{(6,2.5)}{L_2}
\StateVar[R_2]{(8.5,2.5)}{R_2}
\EdgeLineDouble
\ArcL[.9]{R_1}{L_1}{\alpha}
\ArcR[.9]{R_1}{L_1}{\beta}
\EdgeL{L_1}{L_2}{}
\LArcL[.1]{R_2}{R_1}{a}
\VCurveR[.1]{angleA=110,angleB=70,ncurv=1.2}{R_2}{L_2}{b}
\ChgEdgeLineStyle{dotted}	\EdgeL{L_2}{R_2}{}	\RstEdgeLineStyle
\end{VCPicture}
}
\caption{Graph as in Figure~\ref{figure: graph with 1 classic loop} with some labels.}
\label{figure: graph with 1 classic loop with some labels}
\end{figure}

The cases of graphs as represented at Figures~\ref{figure: graph with 1 left special factor} and~\ref{figure: graph with 1 long loop} can be treated in a similar way.
\end{proof}

Proposition~\ref{prop: 3 circuits par special droit} cannot be extended to the general case. Indeed, there exist~\cite{Durand-Leroy-Richomme} minimal subshift with linear complexity and such that the number of $n$-circuits to any factor of length $n$ increases with $n$.

\subsection{A procedure to assign letters to circuits}	
\label{subsection: a procedure to assign letters to circuits}

Now let us explicitly determine the bijections $\theta_{i_n}$. 
We would like to define them for each graph represented at Figure~\ref{figure: Rauzy graphs with at least 1 bispecial vertex} in such a way that two Rauzy graphs of same type provide the same bijection $\theta_n$. 
In that case, a given evolution (from $G_{i_n}$ to $G_{i_{n+1}}$) would always provide the same morphism $\gamma_{i_n}$ (which is equal to $\gamma_{i_n} \cdots \gamma_{i_{n+1}-1}$) of Definition~\ref{definition: definition des morphismes}. However, we will see that it is sometimes impossible to give enough details about $\theta_{i_n}$ so that the morphisms are sometimes defined up to permutations of the letters.

From Lemma~\ref{prop: 3 circuits par special droit} we know that $\card{A_{i_n}} \in \{2,3\}$ for all $n$ (1 is not enough since the number of $i_n$-circuits is at least $\delta^+(U_{i_n}) \geq 2$). 
From Definition~\ref{definition: definition des morphismes} we then have $A_{i_n} \in \left\{ \{ 0,1 \}, \{0,1,2\} \right\}$ depending on $n$.

Observe that, in the description of the bijections $\theta_{i_n}$ below, we sometimes express some restrictions on the number of times that some circuits can pass through a loop in the consider type of Rauzy graph. The reason for this is that if the circuits do not satisfy those restrictions, the right special factor that is not $U_{i_n}$ is a suffix of a strong bispecial factor (by Lemma~\ref{lemma: trop de boucles donne un bispecial fort}) which contradicts Corollary~\ref{corollary: v_n = B}. 

If $G_n$ is a Rauzy graph, then an \definir{$n$-segment} is a path that starts in a right special vertex and ends in a right special vertex and that does not go through any other right special vertex.

\begin{enumerate}
\label{enumerate: definition des circuits}

	\item \textbf{Type 1}:
	\label{item 1} there exists only one right special vertex and the two possible circuits are the two loops. One is $\theta_{i_n}(0)$ and the other is $\theta_{i_n}(1)$ and we cannot be more precise (like we are for graphs of type 2 or 3 below).
	
	\item \textbf{Type 2 and 3}:
	\label{item 2} also here there exists only one right special vertex and the three possible circuits are the three possible loops $\theta_{i_n}(0)$, $\theta_{i_n}(1)$ and $\theta_{i_n}(2)$. However, as shown by Figure~\ref{figure: graph of graphs}, the only graphs that can evolve to a graph of type 2 (resp. of type 3) are the graphs of type 2 (resp. of type 2 and 3). Moreover after such an evolution, the right labels of the three loops start with the same letter as before the evolution. Consequently we suppose that for all $i \in \{0,1,2\}$, $i$ is prefix of $\lambda_R \circ \theta_{i_n}(i)$.
	
	\item \textbf{Type 4}:
	\label{item 4} first consider $U_{i_n} = R$. There exist two segments from $R$ to $B$. Consequently, there exist at least two circuits $\theta_{i_n}(0)$ and $\theta_{i_n}(1)$, each of them passing through one of the two segments and looping respectively $k$ and $\ell$ times, $k + \ell \geq 1$, in the loop $B \rightarrow B$ before coming back to $R$. If there exists a third circuit, then we suppose it starts with the same segment as the circuit $\theta_{i_n}(0)$ does, and then goes through the loop exactly $k-1$ times. In this case, we must have $\ell \leq k$. If the third circuit does not exist, then we suppose that $k \geq \ell$ so we have $k \geq \ell \geq 0$ and $k+ \ell \geq 1$. 
	
	Now consider $U_{i_n} = B$. There exist exactly three circuits: the circuit that does not pass through the vertex $R$ is denoted by $\theta_{i_n}(0)$ and the two others, $\theta_{i_n}(1)$ and $\theta_{i_n}(2)$, are going to the vertex $R$ and then are coming back to $B$ with one of the two segments from $R$ to $B$.
	
	\item \textbf{Type 5 and 6}:
	\label{item 56} as a consequence of Remark~\ref{remark: type sans bispecial}, the circuits are the same whatever the type of graphs is. Moreover, from the symmetry of theses graphs, it is useless to make a distinction between the two right special vertices. Suppose $U_{i_n} = R$ for a graph of type 5. There exist four possible circuits (but Proposition~\ref{prop: 3 circuits par special droit} implies that only three among them are allowed) and we only impose some restrictions to their labels: the circuits $\theta_{i_n}(0)$ and $\theta_{i_n}(1)$ must pass through two different segments from $R$ to $B$ and through two different segments from $B$ to $R$. If the third circuit $\theta_{i_n}(2)$ exists, then it pass through the same segment from $R$ to $B$ as $\theta_{i_n}(0)$ does and through the same segment from $B$ to $R$ as $\theta_{i_n}(1)$ does.
	
	\item \textbf{Type 7 and 8}:
	\label{item 78} like for graphs of type 5 or 6, the starting vertex and the type of the graph does not change anything to the definition of the circuits. Suppose $U_{i_n} = R$ for a graph of type 7. We consider that $\theta_{i_n}(0)$ is the circuit that does not pass through the vertex $B$. The circuit $\theta_{i_n}(1)$ goes to $B$, passes through the loop $B \rightarrow B$ $k$ times, $k \geq 1$, and then comes back to $R$. The circuit $\theta_{i_n}(2)$, if it exists, is the same as $\theta_{i_n}(1)$ but passes through the loop $B \rightarrow B$ $k-1$ times instead of $k$ times.
	
	\item \textbf{Type 9}:
	\label{item 9} suppose $U_{i_n} = R$. Like for graphs of type 4, we consider the two circuits $\theta_{i_n}(0)$ and $\theta_{i_n}(1)$, each of them going through different segments from $R$ to $B$ and looping respectively $k$ and $\ell$ times in the loop $B \rightarrow B$, $k + \ell \geq 1$, before coming back to $R$. However for these graphs, $k$ and $\ell$ must satisfy $k-\ell \leq 1$ otherwise the vertex $B$ would become strong bispecial (see Lemma~\ref{lemma: trop de boucles donne un bispecial fort}). Moreover, if the third circuit $\theta_{i_n}(2)$ exists, we suppose it starts like $\theta_{i_n}(0)$ does and passes through the loop exactly $k-1$ times. In this case, the circuit $\theta_{i_n}(1)$ cannot go through the loop $k+1$ times otherwise $B$ would again become strong bispecial. Hence we always suppose $k \geq \ell$. Consequently, $\ell$ can only take the values $k-1$ and $k$ even if the circuit $\theta_{i_n}(2)$ does not exist.

	Now suppose $U_{i_n} = B$. There exist exactly three circuits: the circuit that does not pass through the vertex $R$ is $\theta_{i_n}(0)$ and the two other circuits, $\theta_{i_n}(1)$ and $\theta_{i_n}(2)$, are going to the vertex $R$ and then are coming back to $B$ with one of the two segments from $R$ to $B$.	
	
	\item \textbf{Type 10}:
	\label{item 10} suppose $U_{i_n} = R$. Let $x$ denote the segment from $R$ to $B$ that passes only through non-left-special vertices; $y$ is the other segment from $R$ to $B$. We consider that $\theta_{i_n}(0)$ (resp. by $\theta_{i_n}(1)$) is the circuit that starts with $y$ (resp. with $x$), passes $k$ times (resp. $\ell$ times) through the loop $B \rightarrow B$, $k + \ell \geq 1$, and then comes back to $R$. If the third circuit $\theta_{i_n}(2)$ exists, then it starts with $x$ or $y$ and loops respectively $k-1$ or $\ell -1$ times before coming back to $R$. Moreover, if $\theta_{i_n}(2)$ starts with $x$, then we must have $k \leq \ell -1$ and if $\theta_{i_n}(2)$ starts with $y$, then we must have $\ell \leq k$ (because of Lemma~\ref{lemma: trop de boucles donne un bispecial fort}). 

Now suppose $U_{i_n} = B$. There are exactly three circuits. The loop $B \rightarrow B$ is $\theta_{i_n}(0)$, the circuit passing through the segment $y$ is $\theta_{i_n}(1)$ and the circuit passing through $x$ is $\theta_{i_n}(2)$.

\end{enumerate}

\subsection{Computation of the morphisms $\gamma_n$}	
\label{subsection: computation of the morphisms}

Now that we know the bijections $\theta_{i_n}$, we can compute the morphisms $\gamma_{i_n}$ of Definition~\ref{definition: definition des morphismes} (knowing $\gamma_{i_n}$ is enough since we have supposed that for all $k \notin \{i_n \mid n \in \N\}$, $\gamma_k = id$). 
As announced at the beginning of the section, we only present the method on the example of Section~\ref{subsubsection: An example}. 
A detailed computation of all evolutions and all corresponding morphisms is available in Appendix~\ref{appendix: evolutions}.
However, not all morphisms in that list will be needed to get the $\S$-adic characterization of Section~\ref{section: caraterisation}. At each step, we will provide the concerned morphisms.

Suppose $G_{i_n}$ is a graph of type 1 as in Figure~\ref{figure: Sturmian graph with bispecial factor and labels} (on page~\pageref{figure: Sturmian graph with bispecial factor and labels}). By definition of $\theta_{i_n}$ for this type of graphs, $\theta_{i_n}(0)$ and $\theta_{i_n}(1)$ are the two loops of the graph. Suppose that $\theta_{i_n}$ maps $0$ to the $i_n$-circuit starting with the letter $a$ and $1$ to the $i_n$-circuit starting with the letter $b$. For the two first evolutions (Figure~\ref{figure: Evolution of a sturmian graph with ordinary bispecial factor} and~\ref{figure: Evolution of a sturmian graph with ordinary bispecial factor'}), $G_{i_{n+1}}$ is again of type 1. By definition of $\theta_{i_{n+1}}$ for this type of graphs, we therefore have two possibilities for each evolution. Indeed, in Figure~\ref{figure: Evolution of a sturmian graph with ordinary bispecial factor} we have either 
\[
	(\psi_{i_n} \circ \theta_{i_n+1}(0), \psi_{i_n} \circ \theta_{i_n+1}(1)) = (\theta_{i_n}(0), \theta_{i_n}(10))
\] 
or 
\[
	(\psi_{i_n} \circ \theta_{i_n+1}(0), \psi_{i_n} \circ \theta_{i_n+1}(1)) = (\theta_{i_n}(10), \theta_{i_n}(0))
\] 
and in Figure~\ref{figure: Evolution of a sturmian graph with ordinary bispecial factor'} we have either 
\[
	(\psi_{i_n} \circ \theta_{i_n+1}(0), \psi_{i_n} \circ \theta_{i_n+1}(1)) = (\theta_{i_n}(01), \theta_{i_n}(1))
\]
or 
\[
	(\psi_{i_n} \circ \theta_{i_n+1}(0), \psi_{i_n} \circ \theta_{i_n+1}(1)) = (\theta_{i_n}(1), \theta_{i_n}(01)).
\]
The four morphisms labelling the edge from 1 to 1 in the graph of graphs are therefore
\begin{eqnarray}
\label{eq: 1 to 1}
	\begin{cases}
		0 \mapsto 0	\\
		1 \mapsto 10
	\end{cases}
	&	&
	\begin{cases}
		0 \mapsto 10	\\
		1 \mapsto 0
	\end{cases}	\nonumber
	\\
	\\
	\begin{cases}
		0 \mapsto 01	\\
		1 \mapsto 1
	\end{cases}
	&	&
	\begin{cases}
		0 \mapsto 1	\\
		1 \mapsto 01
	\end{cases}	\nonumber
\end{eqnarray}

For the third evolution (Figure~\ref{figure: Evolution of a sturmian graph with strong bispecial factor}), the bijection $\theta_{i_n+1}$ (hence $\theta_{i_{n+1}}$) depends on $U_{i_n+1}$. If $U_{i_n+1} = \alpha B$ we have 
\[
	(\psi_{i_n}  \theta_{i_n+1}(0),\psi_{i_n}  \theta_{i_n+1}(1),\psi_{i_n}  \theta_{i_n+1}(2)) 
	= (\theta_{i_n}(0), \theta_{i_n}(1^k0), \theta_{i_n}(1^{k-1}0)) 
\]
for an integer $k \geq 2$ (remember that the circuit $\theta_{i_n+1}(2)$ might not exist). Similarly, if $U_{i_n+1} = \beta B$ we have
\[
	(\psi_{i_n}  \theta_{i_n+1}(0),\psi_{i_n}  \theta_{i_n+1}(1),\psi_{i_n}  \theta_{i_n+1}(2)) 
	= (\theta_{i_n}(1), \theta_{i_n}(0^k1), \theta_{i_n}(0^{k-1}1)) 
\]
for an integer $k \geq 2$. Consequently, there are infinitely many morphisms labelling the edges from 1 to 7 and from 1 to 8 (one for each $k \geq 2$) but they all have one of the following two shapes:

\begin{eqnarray}
\label{equation: morphisms 1 to 8 sur 3 lettres}
	\begin{cases}
		0 \mapsto 0		\\
		1 \mapsto 1^{k} 0 	\\
		2 \mapsto 1^{k-1} 0	
	\end{cases}
	&
	\text{and }
	&
	\begin{cases}
		0 \mapsto 1		\\
		1 \mapsto 0^{k} 1 	\\
		2 \mapsto 0^{k-1} 1
	\end{cases}.
\end{eqnarray}

\subsection{Sketch of proof of Theorem~\ref{thm: 2n}}	
\label{subsection: sketch of proof}

Let us briefly recall the way the proof can be obtained. Section~\ref{subsection: 10 chapes of rauzy graphs} describes how to build the graph of graphs $\G$ (Figure~\ref{figure: graph of graphs}). Then, Section~\ref{subsection: a critical result} states that the morphisms $\gamma_{i_n}$ of Definition~\ref{definition: definition des morphismes} are defined over alphabets of 2 of 3 letters. Section~\ref{subsection: a procedure to assign letters to circuits} and Section~\ref{subsection: computation of the morphisms} explicitly compute the morphisms. 

Due to Lemma~\ref{lemma: toutes les images finissent pareil}, Lemma~\ref{lemma: conjugue de morphism}, Fact~\ref{fact: conjugue} and Lemma~\ref{lemma: convergence taU_n}, the sequence $(\gamma_{i_n})_{n \in \N}$ can be slightly modified into a weakly primitive and proper directive word of $(X,T)$ (by contracting it and considering some left conjugates of the obtained morphisms). Therefore, what remains to show is that the morphisms $\gamma_{i_n}$ are compositions of morphisms in $\S$ as well as the left conjugates of the contracted morphisms.

Let us keep on considering the example of Section~\ref{subsubsection: An example}. The morphisms in Equation~\eqref{eq: 1 to 1} clearly belong to $\S^*$ as well as their respective left conjugates. Those in Equation~\eqref{equation: morphisms 1 to 8 sur 3 lettres} and their respective left conjugates also admit a decomposition: we define the morphisms of $\S^*$
\begin{eqnarray*}
	D_{12}:  	\begin{cases}
					0 \mapsto 0		\\
					1 \mapsto 12	\\
					2 \mapsto 2
				\end{cases}
	&
	D_{20}:  	\begin{cases}
					0 \mapsto 0		\\
					1 \mapsto 1		\\
					2 \mapsto 20
				\end{cases}
	&
	G_{21}:  	\begin{cases}
					0 \mapsto 0		\\
					1 \mapsto 1		\\
					2 \mapsto 12
				\end{cases}
\end{eqnarray*}
and obtain
\begin{eqnarray*}
	M G_{21}^{k-2} D_{20} D_{12} = 	
	\begin{cases}
		0 \mapsto 0		\\
		1 \mapsto 1^{k} 0 	\\
		2 \mapsto 1^{k-1} 0	
	\end{cases}
	&
	E_{01} M G_{21}^{k-2} D_{20} D_{12} = 	
	\begin{cases}
		0 \mapsto 1		\\
		1 \mapsto 0^{k} 1 	\\
		2 \mapsto 0^{k-1} 1
	\end{cases}.
\end{eqnarray*}
For the left conjugates, we simply have to replace $D_{12}$ and $D_{20}$ respectively by
\begin{eqnarray*}
	G_{12}:  	\begin{cases}
					0 \mapsto 0		\\
					1 \mapsto 21	\\
					2 \mapsto 2
				\end{cases}
	&
	\text{and}
	&
	G_{20}:  	\begin{cases}
					0 \mapsto 0		\\
					1 \mapsto 1		\\
					2 \mapsto 02
				\end{cases}.
\end{eqnarray*}
 
On that example, we see that the result holds, \textit{i.e.}, both $\gamma_{i_n}$ and $\gamma_{i_n}^{(L)}$ belong to $\S^*$. It is actually always true that $\gamma_{i_n}$ belongs to $\S^*$. But, not all morphisms $\gamma_{i_n}$ are right proper, making $\gamma_{i_n}^{(L)}$ undefined. However, one can always find a composition $\Gamma = \gamma_{i_n} \cdots \gamma_{i_{n+m}}$ such that $\Gamma$ is right proper and $\Gamma^{(L)}$ belongs to $\S^*$. This will be explained with more details in Theorem~\ref{thm: 2n final}.


\begin{remark}
\label{rem: different morphismes peuvent etiquetter}
Observe that a given edge in $\G$ may be labelled by several morphisms. This is due not only to a lack of precision in the definition of the bijections $\theta_{i_n}$ but also to the number of possibilities that exist for a given Rauzy graph to evolve to a given type of Rauzy graph. For example, consider a graph of type 8 as in Figure~\ref{figure: graph of type 8 with labels}. 

\begin{figure}[h!tbp]
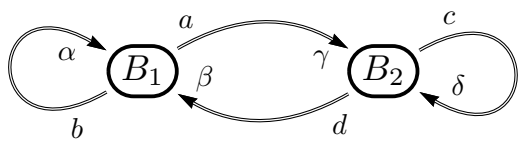

\centering
\scalebox{0.8}{
\begin{VCPicture}{(1,1.5)(9,5)}
\ChgEdgeLabelScale{0.8}
\StateVar[B_1]{(3,3)}{B_1}
\StateVar[B_2]{(7,3)}{B_2}
\EdgeLineDouble
\LArcL[.1]{B_1}{B_2}{a}		\LabelR[.9]{\gamma}
\LArcL[.1]{B_2}{B_1}{d}		\LabelR[.9]{\beta}
\LoopL[.1]{0}{B_2}{c}			\LabelR[.9]{\delta}
\LoopL[.1]{180}{B_1}{b}		\LabelR[.9]{\alpha}
\end{VCPicture}
}
\caption{Rauzy graph of type 8 with some labels.}
\label{figure: graph of type 8 with labels}
\end{figure}

This graph can evolve to a graph of type 7 or 8 (depending on the length of some paths) in two different ways:
\begin{list}{-}{}
	\item	either one of the bispecial factors $B_1$ and $B_2$ is a strong bispecial factor and the other one is a weak bispecial factor;
	\item or both of them are neutral bispecial factors and the two new right special factors are $\alpha B_1$ and $\delta B_2$.
\end{list}
Indeed, the two other cases do not satisfy the hypothesis on the subshift: two weak bispecial factors delete all right special factors so the subshift is either not minimal (when the graph is not strongly connected anymore) or periodic (when the graph keeps being strongly connected) and two strong bispecial factors provide 4 right special factors so we do not have $p(n+1)-p(n) \leq 2$ anymore.

The Rauzy graphs obtained in both available cases are represented at Figure~\ref{figure: evolutions of a type 8}. They are of type 7 or 8 depending on the respective length of the paths $B_1b \rightarrow \alpha B_1$ and $B_1a \rightarrow \beta B_1$ for Figure~\ref{figure: first evolution of a type 8} and on the respective length of the paths $B_1b \rightarrow \alpha B_1$ and $B_2c \rightarrow \delta B_2$ for Figure~\ref{figure: second evolution of a type 8}. These two possibilities of evolution to a same type of graphs imply that the edges $8 \rightarrow 7$ and $8 \rightarrow 8$ in $\G$ are labelled by several morphisms.

\begin{figure}[h!tbp]
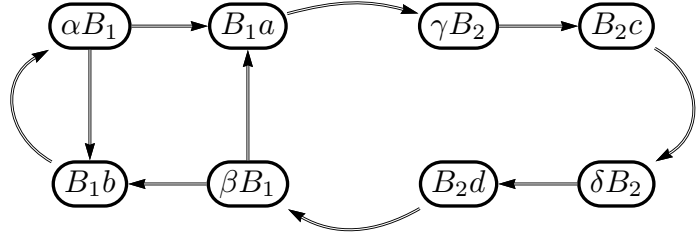
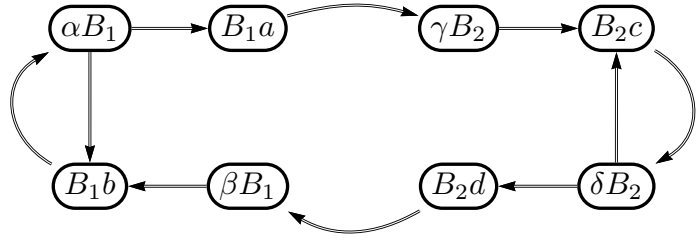

\begin{center}
\subfigure[$B_1$ is strong and $B_2$ is weak]{
\label{Subfigure: $B_1$ strong and $B_2$ weak}
\scalebox{0.7}{
\begin{VCPicture}{(1,-0.5)(12,5)}
\label{figure: first evolution of a type 8}
\ChgEdgeLabelScale{0.8}
\StateVar[\alpha B_1]{(1,4)}{alphaB}
\StateVar[B_1b]{(1,1)}{Bb}
\StateVar[B_1a]{(4,4)}{Ba}
\StateVar[\beta B_1]{(4,1)}{betaB}
\StateVar[\gamma B_2]{(8,4)}{gammaB}
\StateVar[B_2d]{(8,1)}{Bd}
\StateVar[B_2c]{(11,4)}{Bc}
\StateVar[\delta B_2]{(11,1)}{deltaB}
\EdgeLineDouble
\Edge{alphaB}{Ba}
\Edge{alphaB}{Bb}
\Edge{betaB}{Bb}
\Edge{gammaB}{Bc}
\Edge{deltaB}{Bd}
\Edge{betaB}{Ba}
\VCurveR[]{angleA=150,angleB=-150,ncurv=1}{Bb}{alphaB}{}
\ArcL{Ba}{gammaB}{}
\VCurveR[]{angleA=-30,angleB=30,ncurv=1}{Bc}{deltaB}{}
\LArcL{Bd}{betaB}{}
\end{VCPicture}
}}
\\
\subfigure[Both $B_1$ and $B_2$ are ordinary]{
\label{Subfigure: $B_1$ and $B_2$ ordinary}
\scalebox{0.7}{
\begin{VCPicture}{(1,-0.5)(12,5)}
\label{figure: second evolution of a type 8}
\ChgEdgeLabelScale{0.8}
\StateVar[\alpha B_1]{(1,4)}{alphaB}
\StateVar[B_1b]{(1,1)}{Bb}
\StateVar[B_1a]{(4,4)}{Ba}
\StateVar[\beta B_1]{(4,1)}{betaB}
\StateVar[\gamma B_2]{(8,4)}{gammaB}
\StateVar[B_2d]{(8,1)}{Bd}
\StateVar[B_2c]{(11,4)}{Bc}
\StateVar[\delta B_2]{(11,1)}{deltaB}
\EdgeLineDouble
\Edge{alphaB}{Ba}
\Edge{alphaB}{Bb}
\Edge{betaB}{Bb}
\Edge{gammaB}{Bc}
\Edge{deltaB}{Bd}
\Edge{deltaB}{Bc}
\VCurveR[]{angleA=150,angleB=-150,ncurv=1}{Bb}{alphaB}{}
\ArcL{Ba}{gammaB}{}
\VCurveR[]{angleA=-30,angleB=30,ncurv=1}{Bc}{deltaB}{}
\LArcL{Bd}{betaB}{}
\end{VCPicture}
}}
\end{center}
\caption{Evolutions from 8 to 7 or 8.}
\label{figure: evolutions of a type 8}
\end{figure}
\end{remark}


\section{$\S$-adic characterization}	
\label{section: caraterisation}

Theorem~\ref{thm: 2n} states that any minimal and aperiodic subshift $(X,T)$ with first difference of complexity bounded by two admits a directive word $(\sigma_n)_{n \in \N} \in \S^\N$ which is linked to a path in $\G$. However, the converse is false (see Section~\ref{subsection: valid paths}). A possible way to get an $\S$-adic characterization of the considered subshifts would be to describe exactly all infinite paths in $\G$ that really correspond to the sequences of evolutions of Rauzy graphs of such subshifts. By achieving this, we would determine the condition $C$ of the $S$-adic conjecture for this particular case. This is the aim of this section and this will lead to Theorem~\ref{thm: 2n final}.

In the sequel, to alleviate notations we let $[u,v,w]$ denote the morphism
\[
	\begin{cases}
		0 \mapsto u \\
		1 \mapsto v \\
		2 \mapsto w
	\end{cases}
\]
and when some letters are not completely determined (that is if some circuits can play the same role), we use the letters $x,y$ and $z$. 

For example, the morphisms in Equation~\eqref{equation: morphisms 1 to 8 sur 3 lettres} will be denoted by one morphism: $[x,y^{k_1}x,y^{k_1-1}x]$ and it is understood that $\{x,y\} = \{0,1\}$. Observe that $x$ and $y$ depend on the type of graphs we come from. Indeed, when coding the evolution of a graph of type 1, we cannot have $\{x,y\} = \{0,2\}$ by definition of $\theta_{i_n}$ for such graphs. Moreover, if for example the letters $0$, $x$ and $y$ occur in an image, it is understood that $0$, $x$, and $y$ are pairwise distinct. 

We also need to introduce the following notation. For $x,y \in \{0,1,2\}$, $x \neq y$, the morphisms $D_{x,y}$ and $E_{x,y}$ are respectively defined by
\[
	D_{x,y}:	\begin{cases}
					x \mapsto xy		\\
					y \mapsto y		\\
					z \mapsto z
				\end{cases}
	\quad 	\text{and}	\quad
	E_{x,y}:	\begin{cases}
					x \mapsto y		\\
					y \mapsto x		\\
					z \mapsto z
				\end{cases}	
\]

\subsection{Valid paths}	
\label{subsection: valid paths}

The first step to get the $\S$-adic characterization is to understand how we can describe the \enquote{good labelled paths} in $\G$, hence the good sequences of evolutions. To this aim, we introduce the notions of \textit{valid directive word} and of \textit{valid path}.


\begin{definition}
\label{def: valid path}
An infinite and labelled path $p$ in $\G$ is \definir{valid} if there is a minimal subshift with first difference of complexity bounded by 2 for which the sequence $(\gamma_{i_n})_{n \in \N}$ of Definition~\ref{definition: definition des morphismes} (and Remark~\ref{remark: sigme = identity}) labels $p$. 

We extend the notion of \textit{validity} to prefix and suffixes of $p$, \textit{i.e.}, a path is a \definir{valid prefix} (resp. \definir{valid suffix}) if it is a prefix (resp. suffix) of a valid path. We also extend it to sequences of morphisms in $\S^*$, \textit{i.e.}, a sequence of morphisms is \definir{valid} if it is the label of a valid path (or valid prefix or valid suffix).

\end{definition}

There exist several reasons for which a given labelled path in $\G$ is not valid: two conditions (due to Theorem~\ref{thm: 2n}) are that its label has to be weakly primitive and must admit a contraction that contains only 
right\footnote{In the definition of valid directive word, we did not consider left conjugates of morphisms so the property of being proper becomes being right proper.}
proper morphisms. Example~\ref{ex: not valid path 2} and Example~\ref{ex: not valid path 1} below show two sequences of evolutions which are forbidden because their respective directive words do not satisfy the weak primitivity.

\begin{example}
\label{ex: not valid path 2}
Sturmian subshifts have Rauzy graphs of type 1 for all $n$. Thus, for all $n$, $\gamma_{i_n}$ is one of the morphisms given in Equation~\eqref{eq: 1 to 1}.
However if, for instance, we consider that for all $n$, the morphism $\gamma_{i_n}$  is $[0,10]$, the directive word is not weakly primitive and the sequence of Rauzy graphs $(G_{i_n})_{n \in \N}$ is such that for all $n$, $i_n = n$ and $\lambda_R (\theta_n(0)) = 0$ and $\lambda_R (\theta_n(1)) = 10^n$ (the reduced Rauzy graph $g_n$ is represented in Figure~\ref{figure: 0001000}).
It actually corresponds to the subshift generated by the sequence $\bw = \cdots 000.1000\cdots$ which has complexity $p(n)=n+1$ for all $n$ but which is not minimal.

\begin{figure}[h!tbp]
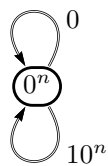

\centering
\scalebox{0.6}{
\begin{VCPicture}{(0,0)(5,5)}
\StateVar[0^n]{(2.5,2.5)}{B}
\EdgeLineDouble
\LoopR{90}{B}{0}
\LoopL{-90}{B}{10^n}
\end{VCPicture}
}
\caption{Reduced Rauzy graph $g_n$ of $\cdots 000.1000 \cdots$.}
\label{figure: 0001000}
\end{figure}
\end{example}

\begin{example}
\label{ex: not valid path 1}
Let us consider a path in $\G$ that ultimately stays in the vertex 9. Figure~\ref{figure: evolution of a type 9} represents the only way for a Rauzy graph $G_{i_n}$ of type 9 to evolve to a Rauzy graph of type 9. We can see that in this evolution, the $i_n$-circuit $\theta_{i_n}(0)$ starting from the vertex $B$ (i.e., the loop that does not pass through the vertex $R$) \enquote{stays unchanged} in $G_{i_n+1}$, i.e., $\psi_{i_n}(\theta_{i_n+1}(0)) = \theta_{i_n}(0)$. Consequently, we have $\lim_{n \to +\infty} |\theta_{i_n}(0)| < +\infty$: a contradiction with Lemma~\ref{lemma: pas de court} (the circuit is trivially allowed). One can also check  that for all morphisms $\gamma_{i_n}$ coding such an evolution, we have $\gamma_{i_n}(0) = 0$. As there is no other evolution from a Rauzy graph of type 9 to a Rauzy graph of type 9, the directive word cannot be weakly primitive. 

\begin{figure}[h!tbp]
\centering
\subfigure[Before evolution]{
\label{figure: graph of type 9 before evolution}
\scalebox{0.6}{
\begin{VCPicture}{(-0.5,0)(8.5,5)}
\ChgEdgeLabelScale{0.5}
\StateVar[R]{(1,2.5)}{R}
\VSState{(5,2.5)}{L}
\StateVar[B]{(7,2.5)}{B}
\EdgeLineDouble
\ArcL{R}{L}{}
\ArcR{R}{L}{}
\EdgeL{L}{B}{}
\LArcL{B}{R}{}
\LoopL{0}{B}{}
\end{VCPicture}
}}
\subfigure[After evolution]{
\label{figure: graph of type 9 after evolution}
\scalebox{0.6}{
\begin{VCPicture}{(-0.5,0)(8.5,5)}
\ChgEdgeLabelScale{0.5}
\StateVar[R]{(1,2.5)}{R}
\VSState{(5,2.5)}{L}
\VSState{(6,3)}{B1}
\VSState{(6,2)}{B2}
\VSState{(7,3)}{B3}
\VSState{(7,2)}{B4}
\EdgeLineDouble
\ArcL{R}{L}{}
\ArcR{R}{L}{}
\EdgeL{L}{B1}{}
\EdgeL{B1}{B3}{}
\EdgeL{B4}{B2}{}
\EdgeL{B4}{B3}{}
\LArcL{B2}{R}{}
\VCurveR[]{angleA=10,angleB=-10,ncurv=3}{B3}{B4}{}
\end{VCPicture}
}}
\caption{Evolution of a graph of type 9 to a graph of type 9.}
\label{figure: evolution of a type 9}
\end{figure}
\end{example}

The two previously given conditions (being weak primitivity and proper) are not sufficient to be a valid directive word: there is also a \enquote{local condition} that has to be satisfied. Indeed, Example~\ref{ex: not valid path 3} below shows that for some prefixes $\gamma_{i_0} \cdots \gamma_{i_k}$ labelling a finite path $p$ in $\G$, not every edge starting from $i(p)$ is allowed.

\begin{example}
\label{ex: not valid path 3}
Consider a graph $G_{i_n}$ of type 1 that evolves to a graph as in Figure~\ref{figure: Evolution of a sturmian graph with strong bispecial factor} (Page~\pageref{figure: Evolution of a sturmian graph with strong bispecial factor}), hence to a graph of type 7 or 8. We write $R_1 = \alpha B$ and $R_2 = \beta B$ and suppose that $v_{i_n+1} = R_1$. The morphism coding this evolution is $[x,y^kx,y^{k-1}x]$ for some integer $k \geq 2$. If we suppose $k \geq 3$, this means that the circuits $\theta_{i_n+1}(1)$ and $\theta_{i_n+1}(2)$ respectively go through $k-1$ and $k-2$ times in the loop $R_2 \rightarrow R_2$. By construction of the Rauzy graphs, this means that the shortest bispecial factor $B'$ admitting $R_2$ as a suffix is a neutral bispecial factor. Let $m>n$ be an integer such that $B'$ is a bispecial vertex in $G_{i_m}$. Since $B'$ is neutral bispecial, there is a right special factor $R'$ of length $i_m+1$ that admits $B'$ as a suffix. Moreover, since $v_{i_m}$ is not $B'$ (as $R_1$ has to be a suffix of $v_{i_m}$), the right special factor $v_{i_m+1}$ is not $R'$. Consequently there are two right special factors in $G_{i_m+1}$ so $G_{i_m+1}$ is not of type 1.
\end{example}

To be a valid labelled path in $\G$ the three previous examples show that a given path $p$ must necessary satisfy at least two conditions: a local one about its prefixes (Example~\ref{ex: not valid path 3}) and a global one about weak primitivity (Example~\ref{ex: not valid path 2} and Example~\ref{ex: not valid path 3}). The next result states that the converse is true.

\begin{proposition}
\label{prop: valid path}
An infinite and labelled path $p$ in $\G$ 
is valid if and only if both following conditions are satisfied.
\begin{enumerate}
	\item All prefixes of $p$ are valid\footnote{a local condition};
	\item its label is weakly primitive and a contraction of it contains only right proper morphisms\footnote{a global condition}.
\end{enumerate}
\end{proposition}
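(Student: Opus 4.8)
The plan is to prove both implications, with necessity being essentially definitional and sufficiency carrying the content.

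For necessity, suppose $p$ is valid, so there is a minimal subshift $(X,T)$ with $1 \le p_X(n+1)-p_X(n) \le 2$ whose associated sequence $(\gamma_{i_n})_{n\in\N}$ labels $p$. Every prefix of $p$ is then a prefix of this same valid path, hence a valid prefix by definition; this gives condition~1. Condition~2 is exactly the last assertion of Theorem~\ref{thm: 2n} applied to $X$: its directive word is weakly primitive and admits a contraction consisting only of (right) proper morphisms. Thus the forward direction is immediate.

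For sufficiency, assume conditions~1 and~2, and write $(\sigma_n)$ for the label of $p$. First I would manufacture the subshift from condition~2. A right proper contraction $(\gamma'_n)$ exists by hypothesis, so forming $\tau_n = \gamma'_{2n}\,\gamma'^{(L)}_{2n+1}$ as in Fact~\ref{fact: conjugue} yields morphisms that are both left and right proper; by the convergence mechanism of Lemma~\ref{lemma: convergence taU_n} the sequences $\tau_0 \cdots \tau_n(b_{n+1}^{\infty})$ then converge in $A^\Z$ to a point $\bw$, and I set $X = X_{\bw}$. Weak primitivity forces bounded gaps, so $\bw$ is uniformly recurrent and $(X,T)$ is minimal; finiteness of $\S$ with Proposition~\ref{prop: durand primitif} already gives linear complexity.

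The heart of the argument is to show that the Rauzy-graph evolution of this particular $X$ reproduces $p$, i.e.\ that the sequence $(\gamma_{i_n})$ attached to $X$ by Definition~\ref{definition: definition des morphismes} equals $(\sigma_n)$. I would prove this by induction on $n$, maintaining that $G_{i_n}(X)$ has the type of the $n$-th vertex of $p$, with a right special vertex $U_{i_n}$ and bijection $\theta_{i_n}$ obeying the conventions of Section~\ref{subsection: a procedure to assign letters to circuits}; the base case is pinned down by Remark~\ref{remark = A_0 = A}. For the inductive step, the language of $X$ up to any finite length is determined by a prefix of the directive word (Corollary~\ref{cor: langage} and Lemma~\ref{lemma: langage de X et circuits}), so the $i_n$-circuits of $X$ are precisely those coded by $\sigma_0 \cdots \sigma_{n-1}$, and $\sigma_n$ prescribes how they group into $(i_{n+1})$-circuits, which is exactly the datum computed by $\gamma_{i_n}$. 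Validity of the prefix $\sigma_0 \cdots \sigma_n$ (condition~1) guarantees that this prescribed evolution is genuinely realizable, so that $G_{i_{n+1}}(X)$ has the prescribed type. Establishing this type-sequence simultaneously yields $1 \le p_X(n+1)-p_X(n) \le 2$, since each of the ten types is the reduced Rauzy graph of a subshift with this complexity.

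The main obstacle is exactly this inductive identification: a priori the morphism $\gamma_{i_n}$ read off from $X$ is dictated by the \emph{actual} factor structure of $X$, and one must rule out that the local structure degenerates onto a different edge of $\G$ than $\sigma_n$. The two global failure modes are excluded by weak primitivity (Examples~\ref{ex: not valid path 2} and~\ref{ex: not valid path 1} show that without it one either falls on a non-minimal system or stalls a circuit, contradicting Lemma~\ref{lemma: pas de court}), while the local failure (Example~\ref{ex: not valid path 3}, where a forced neutral bispecial factor later creates a second right special factor) is excluded by condition~1. Making precise that \emph{valid prefix} supplies exactly the combinatorial compatibility needed to keep the induction going, and that weak primitivity together with properness promote the consistent finite data into a single minimal subshift, is the delicate point; that at most three circuits issue from each right special factor, matching the $2$- or $3$-letter alphabets, is controlled by Proposition~\ref{prop: 3 circuits par special droit}.
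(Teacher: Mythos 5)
Your necessity argument coincides with the paper's: condition~1 is definitional and condition~2 is read off from Theorem~\ref{thm: 2n}. For sufficiency, however, you take a genuinely different route from the paper, and the route you choose leaves a real gap at its centre. The paper never constructs the subshift as an $S$-adic limit of the label of $p$. Instead it uses condition~1 directly to build a nested sequence of Rauzy graphs $(G_n)_{n\in\N}$ realizing the evolutions prescribed by $p$, defines $L(G_n)$ as the words labelling paths in $G_n$, and takes $L=\bigcap_n L(G_n)$; this language is factorial, prolongable and satisfies $1\le p_L(n+1)-p_L(n)\le 2$, so it defines the required subshift, and the fact that $(\gamma_{i_n})$ labels $p$ is then tautological, because the Rauzy graphs of the resulting subshift \emph{are} the graphs one started from. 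Your construction instead produces $X=X_{\bw}$ from the contracted, properized directive word and then must prove, a posteriori, that the Rauzy-graph evolution of this particular $X$ follows $p$.

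That a posteriori identification is exactly where your proof is incomplete. Condition~1 only tells you that each finite prefix of $p$ is realized by \emph{some} minimal subshift $X^{(N)}$; it does not by itself transfer to the subshift $X_{\bw}$ you built. To close the induction you would need to show that the factors of $X_{\bw}$ up to the relevant length, hence its Rauzy graphs up to order $i_{n+1}$, are determined by $\sigma_0\cdots\sigma_n$ alone (via an analogue of Lemma~\ref{lemma: langage de X et circuits} run in reverse, applied to the limit word rather than to a subshift already known to have the right complexity), and therefore agree with those of the witness $X^{(n+1)}$; moreover you must check that the choices of $U_{i_n}$ and $\theta_{i_n}$ in Definition~\ref{definition: definition des morphismes} can be made so that the morphisms read off from $X_{\bw}$ are literally the $\sigma_n$, not merely conjugate to them. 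You name this as ``the delicate point'' but do not supply the argument, and nothing earlier in the paper provides it off the shelf (Corollary~\ref{cor: langage} is stated for a sequence of morphisms already produced from a minimal subshift, not for an arbitrary weakly primitive label). By contrast, the intersection-of-languages construction is designed precisely so that this identification requires no work. One genuine merit of your write-up is that you make the minimality of the constructed subshift explicit (uniform recurrence from weak primitivity via a primitive contraction), a point the paper's proof leaves implicit; but as it stands your sufficiency direction is a programme rather than a proof.
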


\begin{proof}
The first condition is obviously necessary and the second condition comes from Theorem~\ref{thm: 2n}. 
For the sufficient part, if all prefixes of $p$ are valid, it implies that we can build a sequence of Rauzy graphs $(G_n)_{n \in \N}$ such that for all $n$, $G_n$ is as represented in Figure~\ref{figure: Rauzy graphs with one right special factor and one left special factor} to Figure~\ref{figure: Rauzy graphs with two left and right special factors} and evolves to $G_{n+1}$. To these Rauzy graphs we can associate a sequence of languages $(L(G_n))_{n \in \N}$ defined as the set of finite words labelling paths in $G_n$. By construction we obviously have $L(G_{n+1}) \subset L(G_n)$ and the language 
\[
	L = \bigcap_{n \in \N} L(G_n)
\]
is factorial\footnote{For every word $u$ in $L$, $\fac{}{u} \subset L$.}, prolongable\footnote{For every word $u$ in $L$, there are some letters $a$ and $b$ such that $au$ and $ub$ are in $L$.} and such that $1 \leq p_L(n+1)-p_L(n) \leq 2$ for all $n$ (where $p_L$ is the complexity function of the language). Thus, it defines a subshift $(X,T)$ whose language is $L$ and which, by construction, is such that the sequence $(\gamma_{i_n})_{n \in \N}$ of Definition~\ref{definition: definition des morphismes} labels $p$.
%
\end{proof}

\subsection{Decomposition of the problem}	
\label{subsection: decomposition}

Our aim is now to describe exactly the set of all valid paths in $\G$. The idea is to modify the graph of graphs $\G$ in such a way that the \enquote{local condition} to be a valid path (the first point of Proposition~\ref{prop: valid path}) is treated by the graph\footnote{In other words, we would like to modify $\G$ in such a way that all finite paths are valid.}. We also would like that for any minimal subshift with $p(n+1)-p(n) \leq 2$, a contraction of $(\gamma_{i_n})_{n \in \N}$ that contains infinitely many right proper morphisms\footnote{to be able to consider left conjugates} labels a path in $\G$. In that case, we will only have to take care at the weak primitivity, which is rather easy to check. But, we actually will see that modifying the graph $\G$ as wanted will not be possible. There will still remain some vertices $v$ such that for some finite paths arriving in $v$, some edges $e$ starting from $v$ make the path $pe$ not valid. However, we will manage to describe the local condition for these vertices so this will still provides an $\S$-adic characterization. 
The computations in the next section are sometimes a little bit heavy to check. The reader can find some help (figures with evolutions of graphs, list of morphisms coding these evolutions, decomposition of them into $\S^*$, etc.) in the appendices.


The graph of graphs $\G$ contains 4 strongly connected components: 
\[
	C_1 = \{2\}, \ C_2 = \{3\}, \ C_3 = \{4\}, \ C_4 = \{1,5,6,7,8,9,10\}. 
\]
Any infinite path in $\G$ ends in one component $C_i$ and is valid if and only if the prefix leading to $C_i$ is valid and if the infinite suffix staying in $C_i$ is valid and fit with the prefix. Thus, to describe all valid paths in $\G$, we can separately describe the valid suffixes in each component and then study how the components are linked together.

\begin{remark}
\label{rem: validity of prefix}
By hypothesis on $p(1)-p(0)$, a valid path $p$ in $\G$ always starts from the vertex $1$ or from the vertex $2$ (depending on the size of the alphabet: 2 or 3). Therefore, when studying the validity of a path in the component $C_2$, $C_3$ or $C_4$, we only study the validity of its suffix that always stays in that component (even for $C_4$ since a path ultimately staying in the component $C_4$ might start in the vertex $2$). By contrary, studying the validity of the suffix of a path ultimately staying in $C_1$ is the same as studying the validity of the entire path. 
\end{remark}

\subsection{Valid paths in $C_1$}
\label{subsection: valid paths in C_1}

This component corresponds to the class of Arnoux-Rauzy subshifts which is well known~\cite{Arnoux-Rauzy}. The morphisms $\gamma_{i_n}$ that code an evolution in that component are right proper and are easily seen to belong to $\S^*$, as well as their respective left conjugates.
\begin{eqnarray*}
	\forall n, &	\gamma_{i_n} 		\in \{[0,10,20],[01,1,21],[02,12,2]\}	\\
	\forall n, &	\gamma_{i_n}^{(L)} 	\in \{[0,01,02],[10,1,12],[20,21,2]\}		
\end{eqnarray*}

Arnoux and Rauzy~\cite{Arnoux-Rauzy} gave an $S$-adic description of the so-called Arnoux-Rauzy subshifts by considering the morphisms $[0,10,20]$, $[01,1,21]$ and $[02,12,2]$. They proved the following result.

\begin{proposition}[Arnoux and Rauzy~\cite{Arnoux-Rauzy}]
\label{lemma: allowed path for type 2}
A labelled path $p$ in $\G$ is valid and corresponds to an Arnoux-Rauzy subshift if and only if it goes only through  vertex $2$ and the three morphisms $[0,10,20]$, $[01,1,21]$ and $[02,12,2]$ occur infinitely often in the label of $p$.
%
\end{proposition}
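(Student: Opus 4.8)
The plan is to read the equivalence through Proposition~\ref{prop: valid path}, which reduces validity to two checks: a local condition (all prefixes are valid) and a global condition (weak primitivity together with a right proper contraction). Throughout I write $\rho_0 = [0,10,20]$, $\rho_1 = [01,1,21]$ and $\rho_2 = [02,12,2]$ for the three morphisms appearing in the statement, and $M_0, M_1, M_2$ for their incidence matrices, where the $(a,b)$ entry of $M_i$ counts the occurrences of the letter $a$ in $\rho_i(b)$. Each $M_i$ is nonnegative and has $1$'s on its diagonal.

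\textbf{Necessity.} First I would argue that an Arnoux--Rauzy subshift only meets vertex $2$. By definition such a subshift has, for every length $n$, exactly one right special and one left special factor, both extendable by the three letters of the alphabet; at every index $i_n$ where a bispecial factor occurs these two factors coincide, so the reduced Rauzy graph is precisely the single bispecial vertex carrying three loops of Figure~\ref{figure: type 2}, that is, a graph of type $2$. By Remark~\ref{remark: sigme = identity} the path in $\G$ records exactly these graphs, hence it stays in vertex $2$, and the only morphisms labelling its edges are $\rho_0,\rho_1,\rho_2$. It remains to see that each of them occurs infinitely often, and I would prove the contrapositive. If, say, $\rho_2$ occurs only finitely often, then past some rank the label uses only $\rho_0$ and $\rho_1$; since $\rho_0(b)$ and $\rho_1(b)$ contain the letter $2$ only when $b=2$, the image $\sigma_r\cdots\sigma_s(0)$ stays in $\{0,1\}^*$ for all $s$, so the letter $2$ never appears in it and weak primitivity fails. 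The other two cases are symmetric (omitting $\rho_0$ keeps the image of $1$ inside $\{1,2\}^*$, omitting $\rho_1$ keeps the image of $0$ inside $\{0,2\}^*$). As validity forces weak primitivity by Proposition~\ref{prop: valid path}, this is the required contradiction.

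\textbf{Sufficiency.} Assume $p$ stays in vertex $2$ and each morphism occurs infinitely often. The three morphisms are right proper, with respective ending letters $0$, $1$ and $2$, and a composition of right proper morphisms is again right proper; hence every contraction of the label is right proper and the proper part of the global condition holds automatically. For weak primitivity I would use the following persistence remark: a product of nonnegative matrices with positive diagonal is again nonnegative with positive diagonal, and if such a product contains a factor $M$ with $M[a][b]>0$, then its $(a,b)$ entry is positive as well (bound it below by the product of the relevant diagonal factors). The positive positions of $M_0$, $M_1$ and $M_2$ are
\begin{gather*}
	S_0 = \{(0,0),(0,1),(0,2),(1,1),(2,2)\}, \quad S_1 = \{(0,0),(1,0),(1,1),(1,2),(2,2)\}, \\
	S_2 = \{(0,0),(1,1),(2,0),(2,1),(2,2)\},
\end{gather*}
whose union is the whole $3\times 3$ grid. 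Consequently, for every $r$ I choose $s$ so large that $\sigma_r\cdots\sigma_s$ contains at least one occurrence of each $\rho_i$; then $M_r\cdots M_s$ is strictly positive, i.e. every letter occurs in every image, which is exactly weak primitivity. Finally, every finite prefix of $p$ is valid because a type-$2$ graph can always evolve to a type-$2$ graph under each of the three moves (the unique bispecial vertex is neutral with $\delta^+ = \delta^- = 3$, and each choice of the surviving special direction is realizable), so the local condition holds. Proposition~\ref{prop: valid path} then yields that $p$ is valid, and since it stays in vertex $2$ the associated subshift has only type-$2$ graphs, hence is Arnoux--Rauzy.

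\textbf{Main obstacle.} The delicate point is the weak primitivity claim: I must guarantee positivity of $M_r\cdots M_s$ for an \emph{arbitrary} ordering and arbitrary multiplicities of the factors, not merely for the single product $M_0 M_1 M_2$. The persistence remark is precisely what makes the argument order-free, reducing it to the purely combinatorial fact that $S_0 \cup S_1 \cup S_2$ already covers all nine entries.
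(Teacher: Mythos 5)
The paper does not actually prove this proposition: it is imported verbatim from Arnoux and Rauzy's work (the text says ``They proved the following result'') and used as a black box, so there is no in-paper argument to measure yours against. Taken on its own, your proof is correct and fits the framework of Section~\ref{section: caraterisation}: routing everything through Proposition~\ref{prop: valid path} is the right reduction, the incidence-matrix ``persistence'' argument (positive diagonals propagate any positive off-diagonal entry through an arbitrary product) cleanly gives weak primitivity for any ordering and multiplicity of the three morphisms once $S_0\cup S_1\cup S_2$ covers all nine entries, and the necessity direction via the stable sub-alphabets $\{0,1\}$, $\{1,2\}$, $\{0,2\}$ is the standard obstruction. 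Your matrices and the three evolutions labelling the loop on vertex $2$ agree with the paper's own tables (Appendix~\ref{appendix: evolutions}, evolutions from type 2 to type 2). Two points deserve one more sentence each if you wrote this out in full: first, the local condition of Proposition~\ref{prop: valid path} asks that every finite prefix extend to a valid \emph{infinite} path, not merely that each single step be realizable -- this is immediate here since any finite word over the three morphisms can be continued by, say, the periodic word $([0,10,20][01,1,21][02,12,2])^{\omega}$, but it should be said; second, the definition of weak primitivity in Section~\ref{section: Backgrounds} literally reads ``for all $s>r$'' where the intended (and, by Remark~\ref{remark: weak prim and prim}, necessary) reading is ``there exists $s>r$'', and your argument correctly uses the existential version.
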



\subsection{Valid paths in $C_2$}
\label{subsection: component C_2}

This component contains only the vertex $3$ of $\G$ and the morphisms $\gamma_{i_n}$ that code an evolution in this component are one of the following
\[
	[0,10,20],[01,1,21],[02,12,2], [0,10,2], [01,1,2], [02,1,2], [0,1,20], [0,1,21], [0,12,2] ;
\]
they belong to $\S^*$.

Observe that not all these morphisms are right proper and we could even find an infinite sequence of them that would not admit a contraction with only right proper morphisms (for instance, $[0,10,2]^{\omega}$). The reason is that not all finite composition of these morphisms correspond to a valid finite sequence of evolution of Rauzy graphs. The next lemma describes this fact.  


\begin{lemma}
\label{lemma: component C_2}
Let $(X,T)$ be a minimal and aperiodic subshift with first difference of complexity bounded by 2. Let $(\gamma_{i_n})_{n \in \N}$ be the directive word of Definition~\ref{definition: definition des morphismes}. Suppose that both $\gamma_{i_n}$ and $\gamma_{i_{n+1}}$ are coding an evolution from a graph of type 3 to a graph of type 3. Then if $\gamma_{i_n}$ is equal to
\[
	D_{y,x}D_{z,x} \quad (\text{resp. }D_{x,y})
\]
for $\{x,y,z\} = \{0,1,2\}$, then $\gamma_{i_{n+1}}$ can only be one of the three following morphisms
\[
	D_{y,x}D_{z,x}, D_{x,y}, D_{x,z} \quad (\text{resp. } D_{y,z}D_{x,z}, D_{z,y}, D_{z,x})
\]
\end{lemma}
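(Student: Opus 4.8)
The plan is to reduce the statement to a single combinatorial invariant attached to each type~3 graph --- the letter labelling its \emph{self-loop} circuit --- and to show that the admissible transitions are exactly those that carry this invariant consistently from one graph to the next.

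First I would fix notation for a type~3 reduced Rauzy graph (Figure~\ref{figure: type 3}): it has one bispecial vertex $B=U_{i_n}$ with $\delta^+(B)=3$ and one left-special vertex $L$ with $\delta^-(L)=2$. Among the three $i_n$-circuits based at $B$, exactly one --- the self-loop --- does not pass through $L$, while the other two traverse $L$ and are separated by the two left-extension letters of $L$. I call the letter $c_n\in A_{i_n}$ for which $\theta_{i_n}(c_n)$ is this self-loop the \emph{central letter} of $G_{i_n}$. I would then recall that the nine morphisms listed split into the three ``Arnoux--Rauzy'' morphisms $D_{y,x}D_{z,x}$ (which fix $x$ and append it to the other two images) and the six ``single'' morphisms $D_{a,b}$ (which append $b$ to $a$ and fix the other two letters).

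The core computation is to read off, from the type~3 $\to$ type~3 evolutions worked out in Appendix~\ref{appendix: evolutions}, the central letter of both the source and the target graph of each morphism. For an Arnoux--Rauzy morphism $D_{y,x}D_{z,x}$ the image of $x$ is a single source-circuit and is the shortest, so $x$ is the self-loop at \emph{both} ends: writing $S$ and $T$ for the source and target central letters, $S=T=x$. For a single morphism $D_{a,b}$ the growing circuit $a\mapsto ab$ is the (extended) source self-loop, so $S=a$; after the evolution the appended circuit $b$ has been pushed onto a path through the new left-special vertex, while the untouched spectator circuit $c\notin\{a,b\}$ remains the self-loop, so $T=c$. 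The decisive point is that the self-loop circuit of $G_{i_{n+1}}$ is literally the same object whether it is regarded as the target of $\gamma_{i_n}$ or as the source of $\gamma_{i_{n+1}}$; hence any valid pair must satisfy $S(\gamma_{i_{n+1}})=T(\gamma_{i_n})$.

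It then only remains to solve this matching equation. If $\gamma_{i_n}=D_{y,x}D_{z,x}$ then $T=x$, and the morphisms whose source central letter is $x$ are precisely the Arnoux--Rauzy morphism fixing $x$ together with the two single morphisms growing $x$, namely $D_{y,x}D_{z,x}$, $D_{x,y}$ and $D_{x,z}$; if $\gamma_{i_n}=D_{x,y}$ then $T$ is the spectator $z$, and the morphisms whose source central letter is $z$ are $D_{y,z}D_{x,z}$, $D_{z,y}$ and $D_{z,x}$. These are exactly the two lists in the statement, and since the remaining instances differ only by a permutation of $\{0,1,2\}$ (that is, by pre- and post-composition with the transpositions $E_{x,y}$), the lemma follows. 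I expect the main obstacle to be the bookkeeping in this core computation: through each type~3 evolution one must track not only the self-loop but also the two left-extension letters of $L$, respecting the convention from item~\ref{item 2} of Section~\ref{subsection: a procedure to assign letters to circuits} that $i$ is a prefix of $\lambda_R(\theta_{i_n}(i))$, in order to be certain that it is the appended circuit (and not the spectator) that is sent through the new $L$ --- this geometric distinction is precisely what forces $T(D_{a,b})=c$ rather than $b$, and hence what makes the two lists come out as stated.
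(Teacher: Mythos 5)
Your proof is correct and takes essentially the same route as the paper: the paper's argument likewise tracks which letter labels the loop avoiding the left special vertex (your ``central letter''), notes that the three admissible morphisms out of a type-3 graph are determined by this letter, and computes that $D_{y,x}D_{z,x}$ preserves it while $D_{x,y}$ sends it to the spectator letter $z$. The matching condition between the target invariant of $\gamma_{i_n}$ and the source invariant of $\gamma_{i_{n+1}}$ that you formulate is exactly the paper's conclusion.
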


\begin{proof}
We only have to look at the behaviour of the Rauzy graph when it evolves. Figure~\ref{figure: evolution d'une graphe de type 3} shows the two possibilities for a graph of type 3 to evolve to a graph of type 3. When computing the morphisms coding these evolutions, we see that what is important to know is which letter corresponds to the top loop in Figure~\ref{fig: type 3 avant evolution}. Indeed, if $\theta_{i_n}(x)$ corresponds to the top loop in Figure~\ref{fig: type 3 avant evolution}, the three available morphisms are (the second must be counted twice since $y$ can be replaced by $z$)
\begin{eqnarray*}
	\begin{cases}
		x \mapsto x		\\
		y \mapsto yx	\\
		z \mapsto zx
	\end{cases}	 
	&
	\text{and}
	&
	\begin{cases}
		x \mapsto xy	\\
		y \mapsto y		\\
		z \mapsto z
	\end{cases}.
\end{eqnarray*}
The evolution represented in Figure~\ref{figure: evolution 1 of type 3} is coded by the first morphism and the evolution represented in Figure~\ref{figure: evolution 2 of type 3} is coded by the second one (where $\theta_{i_n}(y)$ is the leftmost loop in Figure~\ref{fig: type 3 avant evolution}). 

After the first evolution, the graph becomes again a graph as in Figure~\ref{fig: type 3 avant evolution} where the circuit $\theta_{i_{n+1}}(x)$ still corresponds to the top loop. The available morphisms are therefore the same as before the evolution.

After the second evolution, the graph becomes again a graph as in Figure~\ref{fig: type 3 avant evolution} but the top loop is the circuit $\theta_{i_{n+1}}(z)$. The available morphisms are therefore the same as before the evolution but with $x$ and $z$ exchanged.

\begin{figure}[h!tbp]
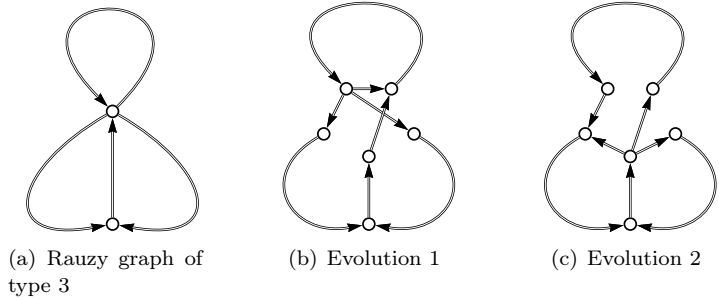

\centering
\subfigure[Rauzy graph of type 3]{
\label{fig: type 3 avant evolution}
\scalebox{0.6}{
\begin{VCPicture}{(0,0)(4,5)}
\ChgEdgeLabelScale{0.5}
\VSState{(2,3)}{B}
\VSState{(2,0.5)}{L}
\EdgeLineDouble
\VCurveL[]{angleA=45,angleB=135,ncurv=38}{B}{B}{}
\VCurveR[]{angleA=-30,angleB=-10,ncurv=2}{B}{L}{}
\VCurveR[]{angleA=-150,angleB=190,ncurv=2}{B}{L}{}
\EdgeL{L}{B}{}
\end{VCPicture}}}
\qquad
\subfigure[Evolution 1]{
\label{figure: evolution 1 of type 3}
\scalebox{0.6}{
\begin{VCPicture}{(0,0)(4,5)}
\ChgEdgeLabelScale{0.8}
\VSState{(1.5,3.5)}{HG}
\VSState{(2.5,3.5)}{HD}
\VSState{(1,2.5)}{MG}
\VSState{(3,2.5)}{MD}
\VSState{(2,2)}{MM}
\VSState{(2,0.5)}{BM}
\EdgeLineDouble
\Edge{BM}{MM}
\VCurveL[]{angleA=45,angleB=135,ncurv=5.5}{HD}{HG}{}
\VCurveR[]{angleA=-30,angleB=-10,ncurv=1.5}{MD}{BM}{}
\VCurveR[]{angleA=-150,angleB=190,ncurv=1.5}{MG}{BM}{}
\Edge{HG}{HD}
\Edge{HG}{MG}
\Edge{HG}{MD}
\Edge{MM}{HD}
\end{VCPicture}
}}
\qquad
\subfigure[Evolution 2]{
\label{figure: evolution 2 of type 3}
\scalebox{0.6}{
\begin{VCPicture}{(0,0)(4,5)}
\ChgEdgeLabelScale{0.8}
\VSState{(1.5,3.5)}{HG}
\VSState{(2.5,3.5)}{HD}
\VSState{(1,2.5)}{MG}
\VSState{(3,2.5)}{MD}
\VSState{(2,2)}{MM}
\VSState{(2,0.5)}{BM}
\EdgeLineDouble
\Edge{BM}{MM}
\VCurveL[]{angleA=45,angleB=135,ncurv=5.5}{HD}{HG}{}
\VCurveR[]{angleA=-30,angleB=-10,ncurv=1.5}{MD}{BM}{}
\VCurveR[]{angleA=-150,angleB=190,ncurv=1.5}{MG}{BM}{}
\Edge{MM}{HD}
\Edge{MM}{MG}
\Edge{MM}{MD}
\Edge{HG}{MG}
\end{VCPicture}
}}
\caption{Evolutions of a graph of type 3 to a graph of type 3.}
\label{figure: evolution d'une graphe de type 3}
\end{figure}
\end{proof}

Thanks to the previous lemma, if $\gamma_{i_k} \gamma_{i_{k+1}} \gamma_{i_{k+2}} \cdots$ labels a valid suffix that stays in component $C_2$, then for all $n \geq k$, $\gamma_{i_n} \gamma_{i_{n+1}}$ is a right proper morphism. Consequently, we obtain the following result. We let the reader that all involved morphisms (as well as their respective left conjugates when they exist) belongs to $\S^*$.

\begin{proposition}
\label{proposition: allowed path for type 3}
An infinite path $p$ in $\G$ labelled by $(\gamma_{i_n})_{n \geq N}$ is a valid suffix that always stays in vertex $3$ if and only if there is a contraction $(\alpha_n)_{n \geq N}$ of $(\gamma_{i_n})_{n \geq N}$ such that 
\begin{enumerate}

	\item 
	$(\alpha_n)_{n \geq N}$labels an infinite path in the graph represented in Figure~\ref{figure: zoom du vertex 3} with 
	\begin{enumerate}

		\item 	for all $x \in \{0,1,2\}$, the loop on $V_x$ is labelled by morphisms in 
		\[
			F_x = \left\{ D_{y,x} D_{z,x},\ D_{x,y} D_{z,y} \mid \{x,y,z\} = \{0,1,2\} \right\};
		\]
	
		\item 	for all $x,y \in \{0,1,2\}$, $x \neq y$, the edge from $V_x$ to $V_y$ is labelled by morphisms in 
		\[
			F_{x \to y} = \left\{ D_{x,z},\ D_{x,y} D_{z,x} \mid z \notin \{x,y\} \right\};
		\]
	\end{enumerate}
	
	\item
	$(\alpha_n)_{n \geq N}$	contains infinitely many right proper morphisms;
	
	\item
	for all $x \in \{0,1,2\}$, there are infinitely many integers $n \geq N$ such that $D_{y,x}$ is a factor of $\alpha_n$ for some $y \in \{0,1,2\}$.

\end{enumerate}
%
%
%
%
%
%
%
%
\end{proposition}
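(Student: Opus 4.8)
The plan is to show that Conditions~(1)--(3) are together equivalent, for a contraction of $(\gamma_{i_n})_{n \geq N}$, to the two requirements of Proposition~\ref{prop: valid path} (the local condition that every prefix be valid, and the global condition of weak primitivity together with a right proper contraction), specialised to paths that stay in vertex~$3$. The starting point is to read the vertices $V_0,V_1,V_2$ of the graph in Figure~\ref{figure: zoom du vertex 3} as recording which letter labels the isolated top loop of the type-$3$ Rauzy graph (the loop drawn at the top of Figure~\ref{fig: type 3 avant evolution}). Under this dictionary, Lemma~\ref{lemma: component C_2} says exactly that the sequence of elementary evolutions $(\gamma_{i_n})_{n \geq N}$ is a walk on $\{V_0,V_1,V_2\}$ whose loops are the right proper morphisms $D_{y,x}D_{z,x}$ and whose edges $V_x \to V_w$ are the non-proper morphisms of the form $D_{x,\cdot}$; conversely, every such walk arises from a genuine sequence of type-$3$ evolutions. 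This already handles the local condition: a finite label is a valid prefix inside $C_2$ if and only if it reads a finite walk in this graph, since Lemma~\ref{lemma: component C_2} enumerates precisely the admissible successors of each elementary morphism.

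First I would upgrade this elementary walk to the contracted graph of Figure~\ref{figure: zoom du vertex 3}. The loops $F_x$ and edges $F_{x \to y}$ are obtained by grouping elementary morphisms into blocks of length one or two: a single loop $D_{y,x}D_{z,x}$ and a \enquote{there-and-back} pair $D_{x,y}D_{z,y}$ (the walk $V_x \to V_z \to V_x$) both realise right proper loops at $V_x$, one ending with $x$ and the other with $y$, which gives $F_x$; likewise $D_{x,z}$ is a bare edge $V_x \to V_y$ and the pair $D_{x,y}D_{z,x}$ (the walk $V_x \to V_z \to V_y$) is a right proper edge $V_x \to V_y$, which gives $F_{x \to y}$. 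A direct computation (appearing in the appendices) verifies that each listed block lies in $\S^{*}$ and, when right proper, has a left conjugate in $\S^{*}$ as well. Condition~(2) then matches the right proper part of the global condition: if infinitely many $\alpha_n$ are right proper, one refines the contraction by absorbing each maximal run of non-proper blocks into the following proper block, producing a contraction all of whose morphisms are right proper, as Proposition~\ref{prop: valid path} demands; conversely a right proper contraction forces infinitely many right proper $\alpha_n$.

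Next I would identify Condition~(3) with weak primitivity. Since the alphabet remains $\{0,1,2\}$ throughout $C_2$, and since every morphism $D_{a,b}$ fixes $b$ and introduces the letter $x$ into the image of some $a \neq x$ only when it has the form $D_{a,x}$, a letter $x$ propagates into the images of the other two letters under a product of these morphisms if and only if some factor $D_{y,x}$ occurs in it. Hence, using that weak primitivity is invariant under contraction (Remark~\ref{remark: weak prim and prim}), the label is weakly primitive if and only if, for every $x \in \{0,1,2\}$, a morphism of the form $D_{y,x}$ occurs infinitely often among the $\alpha_n$, which is exactly Condition~(3). Combining the three paragraphs, Conditions~(1)--(3) hold for some contraction of $(\gamma_{i_n})_{n \geq N}$ if and only if its label satisfies both clauses of Proposition~\ref{prop: valid path}, i.e. if and only if the suffix is valid.

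The main obstacle I expect is bookkeeping rather than conceptual. One must check that the contraction into the blocks of $F_x$ and $F_{x \to y}$ can \emph{always} be carried out on a walk coming from a genuine sequence of evolutions, so that Condition~(1) is truly equivalent to local validity and not merely implied by it, and that no admissible elementary transition is lost or double-counted when passing from the letter-tracking walk of Lemma~\ref{lemma: component C_2} to the contracted graph. The second delicate point is the \enquote{only if} half of the weak primitivity equivalence: one has to rule out that a letter $x$ could be spread indirectly, through a chain of substitutions none of which is literally a $D_{y,x}$, and it is precisely the explicit form of the nine admissible morphisms listed before the lemma that makes this routine.
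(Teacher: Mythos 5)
Your treatment of Conditions (1) and (2) follows the paper's route: split vertex $3$ into $V_0,V_1,V_2$ according to which circuit is the top loop, invoke Lemma~\ref{lemma: component C_2} for the local condition, and restore right properness by grouping the non-proper edge morphisms into length-two blocks. That part is fine. The gap is in your treatment of Condition (3). Your central claim --- that a letter $x$ propagates into the images of the other two letters under a product of these morphisms if and only if some factor $D_{y,x}$ occurs --- is false in the ``if'' direction. Take the four morphisms $D_{x,y}, D_{y,z}, D_{y,x}, D_{x,z}$: every letter of $\{x,y,z\}$ appears as a second index (so Condition (3) is met by any infinite word over this set), yet all four morphisms fix $z$, so $z \mapsto z$ under every finite product and no other letter ever enters the image of $z$. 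Hence Condition (3) alone does \emph{not} imply weak primitivity; the implication only holds because such a sequence cannot label an infinite path in the graph of Figure~\ref{figure: zoom du vertex 3}. That exclusion --- showing that any set of infinitely occurring factors which covers all second indices but avoids a cyclic triple $D_{x,y}, D_{y,z}, D_{z,x}$ is incompatible with the composition rules $F_x$, $F_{x\to y}$ --- is the crux of the sufficiency argument, and it is absent from your proposal.

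You also flag the wrong direction as delicate. Weak primitivity $\Rightarrow$ Condition (3) is the trivial direction: if no $D_{\cdot,x}$ occurs as a factor of $\alpha_n$ for $n \geq m$, then $x$ never appears in $\alpha_m \cdots \alpha_{m+k}(z)$ for $z \neq x$, since a letter can only be introduced into an image by a morphism carrying it as second index (there is no ``indirect spreading'' to rule out, precisely because every morphism in play is a composition of the $D_{a,b}$). The direction needing the graph structure is the converse, as the counterexample above shows. To repair the proof you need to add the combinatorial step: assume the label is not weakly primitive, deduce that the infinitely occurring factors lie in a set of the form $\{D_{x,y}, D_{y,z}, D_{y,x}, D_{x,z}\}$, and derive a contradiction with the admissible transitions between $V_0$, $V_1$ and $V_2$.
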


\begin{figure}[h!tbp]
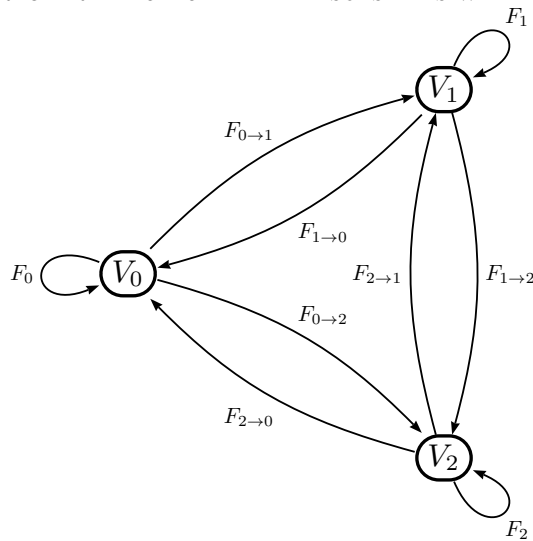

\centering
\scalebox{0.7}{
\begin{VCPicture}{(0,-2.3)(5,7.5)}
\ChgEdgeLabelScale{0.7}
\StateVar[V_0]{(-0.5,3)}{V0}
\StateVar[V_1]{(5.5,6.5)}{V1}
\StateVar[V_2]{(5.5,-0.5)}{V2}
\CLoopW[.5]{V0}{F_0}
\CLoopNE[.5]{V1}{F_1}
\CLoopSE[.5]{V2}{F_2}
\ArcL[.5]{V0}{V1}{F_{0 \to 1}}
\ArcL[.5]{V1}{V2}{F_{1 \to 2}}
\ArcL[.5]{V2}{V0}{F_{2 \to 0}}
\ArcL[.5]{V0}{V2}{F_{0 \to 2}}
\ArcL[.5]{V2}{V1}{F_{2 \to 1}}
\ArcL[.5]{V1}{V0}{F_{1 \to 0}}
\end{VCPicture}
}
\caption{Graph corresponding to component $C_2$ in $G$.}
\label{figure: zoom du vertex 3}
\end{figure}

\begin{proof}


Our aim is to describe valid suffix in $\G$ that stay in vertex 3, accordingly to Proposition~\ref{prop: valid path}. 

Let us start with Condition 1 (\textit{i.e.}, the local one). The morphisms that code an evolution from a graph of type 3 to a graph of type 3 (and their decomposition into $\S^*$) are listed at the beginning of Section~\ref{subsection: component C_2}. However, Lemma~\ref{lemma: component C_2} shows that they cannot be composed in every way. When computing the morphisms coding the different evolutions (see Figure~\ref{figure: evolution d'une graphe de type 3}), we see that what is important is which letter corresponds to the top loop in Figure~\ref{fig: type 3 avant evolution}. Consequently, we can  \enquote{split} the vertex 3 in $\G$ into 3 vertices $V_0$, $V_1$ and $V_2$, each $V_x$ corresponding to the fact that the circuit $\theta_{i_n}(x)$ only goes through non-left special vertices (\textit{i.e.}, corresponds to the top loop in Figure~\ref{fig: type 3 avant evolution}) and we put some edges between these vertices if the corresponding evolution is available. Then we label the graph as follows: for all $x,y \in \{0,1,2\}$ such that $x \neq y$, we let $F_x$ denote the set of morphisms labelling the loop on $V_x$ and we let $F_{x \to y}$ denote the set of morphisms labelling the edge from $V_x$ to $V_y$. Of course, $F_x$ and $F_{x \to y}$ contain the morphism corresponding to the evolution, \textit{i.e.}, $F_x$ contains the morphism $D_{y,x} D_{z,x}$
and $F_{x \to y}$ contains the morphism $D_{x,z}$. 
Defining $F_x$ and $F_{x \to y}$ this way ensures that the local condition is satisfied.

Before considering the second condition of Proposition~\ref{prop: valid path}, let us modify the sets $F_x$ and $F_{x\to y}$ accordingly to what we explained in Section~\ref{subsection: decomposition}, \textit{i.e.}, in such a way that a contraction $(\alpha_n)_{n \geq N}$ of $(\gamma_{i_n})_{n \geq N}$ contains infinitely many right proper morphisms and labels a path in Figure~\ref{figure: zoom du vertex 3}.
As all non-right proper morphisms belong to some set $F_{x \to y}$, this can easily be done as follows: for all $x,y,z \in \{0,1,2\}$, $x \neq y$, $y \neq z$, one can check that the morphism $D_{x,z} D_{y,x} \in F_{x \to y} F_{y \to z}$ is right proper and labels a finite path from $V_x$ to $V_z$. Consequently, for all $x$ and all $y,z$ such that $\{x,y,z\} = \{0,1,2\}$ we can add in $F_x$ the morphism $D_{x,z} D_{y,z}$ and we add in $F_{x \to z}$ the morphism $D_{x,z} D_{y,x}$. By doing this, the existence of $(\alpha_n)_{n \geq N}$ is ensured.

Now let us describe all labelled paths in Figure~\ref{figure: zoom du vertex 3} with weakly primitive label (Condition 2 of Proposition~\ref{prop: valid path}). The morphisms in $F_x$ and in $F_{x \to y}$, $x,y \in \{0,1,2\}$, are composed of morphisms $D_{u,v}$ for some $u,v \in \{0,1,2\}$. Let us prove that the label $(\alpha_n)_{n \geq N}$ of a path in Figure~\ref{figure: zoom du vertex 3} is weakly primitive if and only if for all $x \in \{0,1,2\}$, there are infinitely many integers such that $D_{y,x}$ is a factor of $\alpha_n$ for some $y \in \{0,1,2\}$, $y \neq x$. The condition is trivially necessary since if for all $y$, $D_{y,x}$ is not a factor of $\alpha_n$ for $n$ not smaller than some integer $m \geq N$, then $x$ does not belong to $\alpha_m \cdots \alpha_{m+k}(z)$ for all $z \neq x$ and all integers $k \geq 0$. 
It is also sufficient. Indeed, it is clear that if, for $\{x,y,z\} = \{0,1,2\}$, the three morphisms $D_{x,y}$, $D_{y,z}$ and $D_{z,x}$ occur infinitely often as factors of $(\alpha_n)_{n \geq m}$, then the directive word is weakly primitive. Thus, to satisfy the condition without inducing the weak primitivity, the set of morphisms that occur infinitely often as factors of $(\alpha_n)_{n \geq m}$ has to be included in $\{D_{x,y}, D_{y,z}, D_{y,x}, D_{x,z}\}$. This is in contradiction with the way the morphisms have to be composed (governed by Figure~\ref{figure: zoom du vertex 3}).
\end{proof}

\subsection{Preliminary lemmas for $C_3$ and $C_4$}
\label{subsection: preliminary lemmas for components}

In both types of graphs of component $C_1$ and $C_2$, there is only one right special vertex. This makes the computation of valid paths easier to compute than when there are two right special factors. Indeed, if $R_1$ and $R_2$ are two bispecial factors in a Rauzy graph $G_{i_n}$, the circuits starting from $R_1$ impose some restrictions on the behaviour of $R_2$, \textit{i.e.}, on the way it will make the graph evolve when it will become bispecial (see Example~\ref{ex: not valid path 3} where the explosion of the bispecial vertex $B'$ is governed by $\theta_{i_n}(1)$ and $\theta_{i_n}(2)$). Such a thing cannot happen for graphs of type 2 and 3, \textit{i.e.}, the local condition of Proposition~\ref{prop: valid path} can be easily expressed. In this section, we introduce some notations and we give some lemmas that will be helpful to study valid paths in components $C_3$ and $C_4$.

First, let us briefly explain what we will mean when talking about the \textit{explosion} of a bispecial factor. Roughly speaking, \enquote{explosion} describes the behaviour of a bispecial vertex when the Rauzy graph evolves. These vertices are of a particular interest since those are the only ones that can change the shape of a graph (hence they are the only ones that determine the morphisms $\gamma_{i_n}$ since they depend on the shape of the graphs). 

Indeed, let us consider a non-special vertex $V$ in a Rauzy graph $G_{i_n}$. Since $V$ is not special, there are exactly two vertices $V_p$ and $V_s$ in $G_{i_n+1}$ such that $V$ is prefix of $V_p$ and suffix of $V_s$ and there is always an edge from $V_p$ to $V_s$. Consequently, the behaviour of $V$ when $G_{i_n}$ evolves does not change the shape of $G_{i_n}$. One can make similar observation for left (but not right) special vertices and for right (but not left) special vertices. The difference is that, for left special vertices (resp. for right special vertices), there are several vertices $V_p^{(1)}, \dots, V_p^{(k)}$ with $k = \delta^-V>1$ (resp. $V_s^{(1)}, \dots, V_s^{(k)}$ with $k = \delta^+V>1$) that admit $V$ as a prefix (resp. as a suffix) and for all $i$, $1 \leq i \leq k$, there is an edge from $V_p^{(i)}$ to $V_s$ (resp. from $V_p$ to $V_s^{(i)}$). Consequently, the behaviour of $V$ when $G_{i_n}$ evolves does not change the shape of $G_{i_n}$ either.

For bispecial vertices $V$, this is not true anymore. Indeed, in $G_{i_n+1}$ there are several vertices $V_p^{(1)}, \dots, V_p^{(k)}$ with $k = \delta^-V>1$ and several vertices $V_s^{(1)}, \dots, V_s^{(\ell)}$ with $\ell = \delta^+V>1$ that respectively admit $V$ as a prefix and as a suffix.  Moreover, the number of edges between $\left\{ V_p^{(1)}, \dots, V_p^{(k)} \right\}$ and $\left\{ V_s^{(1)}, \dots, V_s^{(\ell)} \right\}$ depends on the bilateral order of $V$. Therefore the behaviour of $V$ when $G_{i_n}$ evolves can strongly change the shape of $G_{i_n}$ (by increasing or decreasing the number of special vertices for example).

The next lemma gives a method to build a sequence $(\eta_{j_n})_{n \in \N}$ of morphisms which is a little bit different from $(\gamma_{i_n})_{n \in \N}$ and that will help us to describe the valid paths in $C_3$ and $C_4$.

\begin{lemma}
\label{lemma: decomposition du mot directeur}
Let $(X,T)$ be a minimal subshift with first difference of complexity satisfying $1 \leq p(n+1)-p(n) \leq 2$ for all $n$ and let $(i_n)_{n \in \N}$ be the increasing sequence of integers such that $\fac{k}{X}$ contains a bispecial factor of $X$ if and only if $k \in \{i_n \mid n \in \N\}$. There is a non-decreasing sequence $(j_n)_{n \in \N}$ of integers such that $j_n \leq i_n$ for all $n$ and a sequence $(\eta_n)_{n \in \N}$ of morphisms in $\S^*$ such that or all $n$, $\eta_n$ codes the explosion of a unique bispecial factor of length $j_n$ in $G_{j_n}(X)$.
\end{lemma}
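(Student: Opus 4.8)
The plan is to refine the directive word $(\gamma_{i_n})_{n\in\N}$ by breaking up precisely those morphisms that code the \emph{simultaneous} explosion of two bispecial factors. First I would recall, from the classification of the ten shapes in Figure~\ref{figure: Rauzy graphs with at least 1 bispecial vertex}, that a Rauzy graph $G_{i_n}$ contains at most two bispecial vertices (bispecial factors are in particular right special, and the complexity bound forces at most two right special factors of each length), and that exactly two occur only for the graphs of type $6$ and type $8$. Consequently, for every level $i_n$ at which $G_{i_n}$ has a single bispecial factor, the morphism $\gamma_{i_n}$ already codes the explosion of a unique bispecial factor, of length $i_n$, in the genuine Rauzy graph $G_{i_n}(X)$; for such levels I would set the corresponding $\eta$ equal to $\gamma_{i_n}$ and the corresponding $j$ equal to $i_n$, noting that $\gamma_{i_n}\in\S^*$ by the computations underlying Theorem~\ref{thm: 2n}.

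The core of the argument is the two-bispecial case. When $G_{i_n}$ is of type $6$ or $8$, with bispecial vertices $B_1$ and $B_2$, I would resolve the two explosions one after the other: explode $B_1$ while leaving $B_2$ untouched, obtaining an intermediate graph that still faithfully encodes $\fac{}{X}$, and then explode $B_2$. Reading these two partial evolutions through the bijections $\theta$ of Definition~\ref{definition: definition des morphismes} yields a factorization $\gamma_{i_n}=\eta'\eta''$ in which $\eta''$ codes the explosion of the single bispecial $B_1$ and $\eta'$ codes the explosion of the single bispecial $B_2$. To check that $\eta',\eta''\in\S^*$, I would run through the explicit evolutions of the type-$6$ and type-$8$ graphs listed in Appendix~\ref{appendix: evolutions} (as illustrated for type $8$ in Remark~\ref{rem: different morphismes peuvent etiquetter} and Figure~\ref{figure: evolutions of a type 8}) and exhibit, case by case, the $\S^*$-decomposition of each of the two factors, exactly in the spirit of the example worked out in Section~\ref{subsection: sketch of proof}.

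Finally I would assemble $(\eta_n)_{n\in\N}$ and $(j_n)_{n\in\N}$ by listing, in increasing order of the level $i_m$, the morphism $\gamma_{i_m}$ once (recording $j=i_m$) when $G_{i_m}$ has one bispecial factor, and the pair $\eta',\eta''$ with $j=i_m$ recorded twice when it has two. The sequence $(j_n)$ is then non-decreasing by construction, and the bound $j_n\le i_n$ holds because each two-bispecial level only delays the $j$-indexing by one while the genuine bispecial levels $i_m$ are strictly increasing: after any number $t\ge 0$ of such doublings one has $j_n=i_{n-t}\le i_n$.

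I expect the main obstacle to be making the ``explode one bispecial at a time'' operation precise for the types $6$ and $8$ --- that is, defining the intermediate graph, verifying that it still faithfully encodes $\fac{}{X}$ and that the single-bispecial explosion it undergoes is well defined and compatible with the maps $\psi$ and $\theta$ --- and then confirming that both resulting factor morphisms genuinely lie in $\S^*$ across every sub-case of the type-$6$ and type-$8$ evolutions. The pacing and the non-decrease of $(j_n)$, by contrast, should be routine once the factorization is in place.
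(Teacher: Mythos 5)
Your overall strategy is the same as the paper's: keep $\gamma_{i_n}$ intact at the levels with a single bispecial factor, and at the type-$6$/type-$8$ levels split $\gamma_{i_n}$ into a product of two morphisms, each coding the explosion of one of the two bispecial vertices, then interleave these into $(\eta_n)$ with the level $i_n$ repeated in $(j_n)$. The bookkeeping for $j_n\le i_n$ is handled correctly. The one point on which your write-up falls short of a proof is exactly the step you flag as ``the main obstacle'', and it is not merely a matter of making the intermediate graph precise: the \emph{order} in which the two bispecial vertices are exploded is constrained. As discussed in Remark~\ref{rem: different morphismes peuvent etiquetter}, in a type-$6$ or type-$8$ graph either both bispecial vertices are neutral, or one is strong and the other weak. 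In the latter case, if you explode the strong one first (as your phrasing ``explode $B_1$ ... then explode $B_2$'' permits), the intermediate graph acquires three right special vertices of the same length, which is incompatible with $p(n+1)-p(n)\le 2$ and lies outside the ten types of Figure~\ref{figure: Rauzy graphs with at least 1 bispecial vertex}; the corresponding partial morphism is then not among the computed ones and your $\S^*$-verification has nothing to latch onto. The paper resolves this by decreeing that the non-strong bispecial vertex explodes first.

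Once that ordering is imposed, the paper also obtains membership in $\S^*$ more cheaply than your proposed case-by-case pass through the appendix: exploding a single bispecial vertex of a type-$6$ (resp.\ type-$8$) graph is literally an evolution of a type-$5$ (resp.\ type-$7$) graph, whose coding morphisms have already been computed and decomposed over $\S$. So each factor of $\gamma_{i_n}$ is one of the already-listed morphisms and no new decompositions are needed. With the ordering constraint added and this identification made explicit, your argument coincides with the paper's.
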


\begin{proof}
First it is obvious that if a Rauzy graph $G_{i_n}$ contains two bispecial vertices, making them explode at the same time or separately produces the same graph $G_{i_n+1}$ (hence $G_{i_{n+1}}$). Consequently, since $\gamma_{i_n}$ describes how a graph evolves to the next one, we can decompose it into two morphisms $\gamma_{i_n}^{(1)}$ and $\gamma_{i_n}^{(2)}$ such that $\gamma_{i_n} = \gamma_{i_n}^{(1)} \gamma_{i_n}^{(2)}$, each one describing the explosion of one of the two bispecial vertices. Then it suffices to show that we can decompose $\gamma_{i_n}^{(1)}$ and $\gamma_{i_n}^{(2)}$ into morphisms of $\S$. This is actually obvious. Indeed, if there are two bispecial vertices, the graph can only be of type 6 or of type 8. Then, making only one bispecial vertex explode corresponds to considering that it is actually respectively of type 5 or 7 and we know that these morphisms belong to $\S^*$. However, we have to make it carefully: if $B_1$ and $B_2$ are the two bispecial vertices in $G_{i_n}$ and if, for instance, $B_1$ is strong, we have to make $B_2$ explode before $B_1$ otherwise the explosion of $B_1$ would yield a graph with 3 right special vertices and this does not correspond to any type of graphs as considered in Figure~\ref{figure: Rauzy graphs with at least 1 bispecial vertex}. In other words, $\gamma_{i_n}^{(1)}$ has to correspond to the explosion of $B_2$ and $\gamma_{i_n}^{(2)}$ has to correspond to the explosion of $B_1$.

To conclude the proof, it suffices to build the sequences $(j_n)_{n \in \N}$ and $(\eta_n)_{n \in \N}$. From what precedes, the first one is simply the sequence $(i_n)_{n \in \N}$ but such that when $G_{i_n}$ contains two bispecial factors, then $i_n$ occurs twice in a row in $(j_n)_{n \in \N}$. The second one is the sequence $(\gamma_{i_n})_{n \in \N}$ but such that when $G_{i_n}$ contains two bispecial vertices, we split $\gamma_{i_n}$ into $\gamma_{i_n}^{(1)}$ and $\gamma_{i_n}^{(2)}$.
\end{proof}

\begin{example}
Let us consider a path $p$ in $\G$ that ultimately stays in the set of vertices $\{7,8\}$. When the Rauzy graph $G_{i_n}$ is of type 7, there is a unique bispecial factor so the morphism $\gamma_{i_n}$ satisfies the conditions of the lemma, \textit{i.e.}, it corresponds to a morphism in $(\eta_m)_{m \in \N}$. On the other hand, when $G_{i_n}$ is of type 8, its two possible evolutions are represented at Figures~\ref{Subfigure: $B_1$ strong and $B_2$ weak} and~\ref{Subfigure: $B_1$ and $B_2$ ordinary} on page~\pageref{Subfigure: $B_1$ strong and $B_2$ weak}. Suppose that the starting vertex $U_{i_n}$ corresponds to the vertex $B_1$ in Figure~\ref{figure: graph of type 8 with labels} (page~\pageref{figure: graph of type 8 with labels}) and suppose that $G_{i_n}$ evolves as in Figure~\ref{Subfigure: $B_1$ strong and $B_2$ weak} with $U_{i_n+1}$ equals to $\alpha B_1$; the others cases are analogous. We have $\gamma_{i_n} = [0, 1^k0, (1^{k-1}0)]$. To decompose it as announced in Lemma~\ref{lemma: decomposition du mot directeur}, it suffices to consider that $G_{i_n}$ is of type 7 with $B_2$ as bispecial vertex. We make this bispecial vertex explode like it is supposed to do (\textit{i.e.} like a weak bispecial factor). This makes the graph evolve to a graph $G_{i_n}'$ of type 1 (whose bispecial vertex is $B_1$) and we consider that the morphism coding this evolution is $\eta_m = [0,1]$. Now it suffices to make this new graph $G_{i_n}'$ evolve to a graph of type 7 or 8 with the morphism $\eta_{m+1} = [0,1^k0,(1^{k-1}0)]$. We then have $\gamma_{i_n} = \eta_m \eta_{m+1}$ and these new morphisms satisfy the condition 2 in Lemma~\ref{lemma: decomposition du mot directeur}. They can easily be decomposed by morphisms in $\S$ since $\eta_m = id$ and $\eta_{m+1} = \gamma_{i_n}$.
\end{example}

\begin{definition}
\label{remark: definition of j_n}
Let $(j_n)_{n \in \N}$ and $(\eta_n)_{n \in \N}$ be as in Lemma~\ref{lemma: decomposition du mot directeur}. For all $n$ we let $B_{j_n}$ denote the bispecial factor of length $j_n$ whose explosion is coded by $\eta_n$. 
\end{definition}

The following result directly follows from the definition of the morphisms $\eta_n$.

\begin{lemma}
\label{lemme: eta = identite ssi}
Let $(j_n)_{n \in \N}$ and $(\eta_n)_{n \in \N}$ be as in Lemma~\ref{lemma: decomposition du mot directeur}. The morphism $\eta_n$ is a letter-to-letter morphism if and only if $B_{j_n} \neq U_{j_n}$ (where $(U_n)_{n \in \N}$ is the sequence of starting vertices of the circuits).
\end{lemma}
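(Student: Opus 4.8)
The plan is to read off the length of the images $\eta_n(a)$ directly from the circuit decomposition of Lemma~\ref{lemma: decomposition circuit}, and to show that a nontrivial concatenation occurs if and only if the exploding bispecial factor $B_{j_n}$ coincides with the starting vertex $U_{j_n}$. First I would recall that, by construction (Lemma~\ref{lemma: decomposition du mot directeur} and Definition~\ref{remark: definition of j_n}), $\eta_n$ codes the explosion of the single bispecial factor $B_{j_n}$: each letter $a$ of the source alphabet names a circuit $p_a$ issued from the new starting vertex, and $\eta_n(a)$ is obtained by applying $\psi$ to $p_a$ and reading the resulting decomposition $\psi(p_a) = q_1 \cdots q_k$ into circuits issued from $U_{j_n}$, as provided by Lemma~\ref{lemma: decomposition circuit}. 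Hence $|\eta_n(a)| = k$, so $\eta_n$ is letter-to-letter if and only if $k = 1$ for every $a$, i.e.\ if and only if no projected path $\psi(p_a)$ revisits $U_{j_n}$ in its interior.

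Next I would analyse when such a revisit can occur. A path issued from the new starting vertex passes through $U_{j_n}$ in its interior exactly when it meets a left extension $\beta U_{j_n}$ of $U_{j_n}$ other than the one chosen as the starting vertex, and this presupposes that the left extensions of $U_{j_n}$ have been split into distinct vertices, which is precisely what the explosion of $U_{j_n}$ does. Concretely, if $B_{j_n} \neq U_{j_n}$ then $U_{j_n}$ is left unchanged by the explosion of $B_{j_n}$ (it remains a single vertex, with the same resolution of its incoming edges), so that explosion only modifies the interior of the circuits issued from $U_{j_n}$ without ever making them return to $U_{j_n}$; thus every $k$ equals $1$ and $\eta_n$ is letter-to-letter. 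Conversely, if $B_{j_n} = U_{j_n}$ then $U_{j_n}$ is bispecial, hence left special with at least two left extensions; the new starting vertex is one of them, and by strong connectivity of the Rauzy graphs (Remark~\ref{remark: uniformly recurrent strongly connected}) at least one circuit issued from it must pass through another left extension before returning, producing a decomposition with $k \geq 2$. Therefore $\eta_n$ is not letter-to-letter.

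The main obstacle I anticipate is the two-bispecial case, where $G_{j_n}$ carries two bispecial vertices and the index $j_n$ occurs twice in $(j_n)_{n \in \N}$. There one must check that, when $\eta_n$ codes the explosion of the bispecial vertex that is \emph{not} $U_{j_n}$, the starting vertex $U_{j_n}$ is genuinely still a single, unresolved vertex, so that even though $U_{j_n}$ may itself be left special, no winding around it is yet possible and $k = 1$ still holds. The resolution of $U_{j_n}$, and hence the winding that forces $k \geq 2$, is deferred to the next morphism, which is exactly the one coding the explosion of $U_{j_n}$ itself. This is what makes the statement consistent in that case: the first of the two explosions, at a vertex other than the starting one, is letter-to-letter, while the second, at the starting vertex, is not.
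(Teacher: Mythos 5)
Your proof is correct and is precisely the unpacking the paper has in mind: the paper states this lemma without proof (asserting it \enquote{directly follows from the definition of the morphisms $\eta_n$}), and your argument via Lemma~\ref{lemma: decomposition circuit} --- a projected circuit decomposes nontrivially exactly when it revisits $U_{j_n}$ in its interior, which requires a second left extension of $U_{j_n}$ and hence that $U_{j_n}$ itself be the exploding bispecial vertex, with the revisit then guaranteed by minimality (every vertex of $G_{j_n+1}$ lies on some allowed return path to $U_{j_n+1}$) --- is the intended reasoning. Your treatment of the two-bispecial case, where the explosion of the vertex other than $U_{j_n}$ merely re-routes the interiors of the circuits and the winding around $U_{j_n}$ is deferred to the second factor of $\gamma_{i_n}$, is likewise the correct reading of Lemma~\ref{lemma: decomposition du mot directeur}.
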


\begin{remark}
\label{remark: choix des circuits et consequences}
Observe that, as illustrated by Example~\ref{ex: not valid path 3}, when $B_{j_n} \neq U_{j_n}$, the evolution of $G_{j_n}$ is influenced by the last morphism $\eta_k$, $k<n$, such that $B_{j_k} = U_{j_k}$. Indeed, as we have seen in Section~\ref{subsection: a procedure to assign letters to circuits}, the circuits starting from $U_{j_k}$ may depend on some parameters (the number of loops they contain for instance) and there exist some restrictions to these parameters\footnote{For instance, when there are two parameters $k$ and $\ell$, one of them can sometimes not be greater than the other one.}. Actually, considering a particular morphism $\eta_k$ corresponds to determining these parameters. Since some of these circuits go through the other right special vertex in $G_{j_k}$ (if it exists), these parameters influence the behaviour of this right special vertex. 

On the other hand, when $B_{j_n} = U_{j_n}$, there are no restrictions on the possibilities for $\eta_n$ since we do not have any information on the circuits starting from the right special vertex that is not $U_{j_n}$. Also, for graphs in components $C_3$ and $C_4$ there are no restrictions on the labels of the circuits like there are for Rauzy graphs of type\footnote{For those graphs, the right label of $\theta_{i_n}(x)$ always starts with $x$.} 2 or 3. Consequently, all possible morphisms are allowed. However, some of these morphisms are only \textit{locally} allowed, \textit{i.e.}, even if a morphism is allowed, some \enquote{infinite choices} containing it may be forbidden. Indeed, Example~\ref{ex: not valid path 1} shows that a graph of type $9$ can evolve to a graph of type 9 (so there is an allowed evolution) but it cannot ultimately keep being a graph of type 9 otherwise $(\gamma_{i_n})_{n \in \N}$ would not be everywhere growing. To be clearer, the circuits starting in the right special vertex that is not $U_{j_n}$ also depend on some parameters and, as for the circuits starting from $U_{j_n}$, there are some restrictions on them. Those parameters are \textit{partially} determined by the morphism $\eta_n$. For instance let us consider the evolution of a graph of type 9 as in Figure~\ref{figure: evolution of a type 9} (Page~\pageref{figure: evolution of a type 9}) such that $U_{j_n}$ corresponds to the vertex $B$ in Figure~\ref{figure: graph of type 9 before evolution}. This evolution implies that all circuits starting from the vertex $R$ in Figure~\ref{figure: graph of type 9 before evolution} go through the loop $B \rightarrow B$ at least once.
\end{remark}

\subsection{Valid paths in $C_3$}
\label{subsection: component C_3}

This component only contains the vertex $4$ in $\G$ and this type of graphs contains two right special vertices. Moreover, these two right special vertices cannot be bispecial at the same time since there is only one left special factor of each length. Consequently, we have $j_n = i_n$ and $\eta_n = \gamma_{i_n}$ for all $n$ and, as explained in Remark~\ref{remark: choix des circuits et consequences}, we can \textit{locally} choose any morphism we want when $U_{i_n} = B_{i_n}$ and we have to be careful when $U_{i_n} \neq B_{i_n}$. In other words, when $U_{i_n}$ is the vertex $R$ in Figure~\ref{figure: graph of type 4}, the choice of the morphism $\gamma_{i_n}$ is restrained by the latest morphism $\gamma_{i_m}$, $m <n$, such that $U_{i_m}$ is the vertex $B$. We let the reader check that this morphism $\gamma_{i_m}$ is either
\begin{eqnarray*}
	[0 x^ky,x^{\ell}y,(0 x^{k-1}y)] 	& 	\text{or}	& 	[x^ky,0 x^{\ell}y,(x^{k-1}y)]	
\end{eqnarray*}
with $\{x,y\} = \{1,2\}$, $k \geq 1$ and  $k \geq \ell \geq 0$.
\begin{figure}[h!tbp]
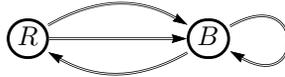

\centering
\scalebox{0.6}{
\begin{VCPicture}{(0,0)(6,2)}
\ChgEdgeLabelScale{0.5}
\StateVar[R]{(1,1)}{R}
\StateVar[B]{(5,1)}{B}
\EdgeLineDouble
\LArcL{R}{B}{}
\EdgeL{R}{B}{}
\LoopE{B}{}
\LArcL{B}{R}{}
\end{VCPicture}
}
\caption{Rauzy graph of type 4.}
\label{figure: graph of type 4}
\end{figure}

Lemma~\ref{lemma: allowed path for type 4} below expresses the consequences of this morphism $\gamma_{i_m}$. 

\begin{lemma}
\label{lemma: allowed path for type 4}
Let $m \in \N$ and $G_{i_m}$ be a Rauzy graph of type 4.
\newline 
Suppose that $U_{i_m} = R$ and that the two $i_m$-circuits $\theta_{i_m}(0)$ and $\theta_{i_m}(1)$ go through the loop $k$ and $\ell$ times respectively, with $k \geq 1$ and $k \geq \ell \geq 0$.
\newline
If the circuit $\theta_{i_m}(2)$ exists:
\begin{enumerate}[i.]
	\item	if $\ell=k$, the Rauzy graph will evolve to a graph $G_{i_n}$, $n >m$ of type 10 such that $U_{i_n}$ corresponds to the vertex $B$ in Figure~\ref{figure: type 10} (page~\pageref{figure: type 10}) and the evolution from $G_{i_m}$ to $G_{i_n}$ is coded by the morphism $[1,0,2]$;
	
	\item	if $\ell=k-1$, the Rauzy graph will evolve to a graph $G_{i_n}$, $n >m$ of type 4 such that $U_{i_n}$ corresponds to the vertex $B$ in Figure~\ref{figure: graph of type 4} just above and the evolution from $G_{i_m}$ to $G_{i_n}$ is coded by a morphism in $\{[1,0,2],[1,2,0]\}$;
	
	\item	if $\ell<k-1$, the Rauzy graph will evolve to a graph $G_{i_n}$, $n >m$ of type 7 or 8 such that $U_{i_n}$ corresponds to one of the vertices $R$ and $B$ in Figure~\ref{figure: type 7} and to one of the vertices $B_1$ and $B_2$ in Figure~\ref{figure: type 8}. The evolution from $G_{i_n}$ to $G_{i_m}$ is coded by the morphism $[1,0,2]$ and we refer to Lemma~\ref{Lemma: graph of type 7 or 8} with $k := k - \ell - 1$ to know what will next happen.
\end{enumerate}

If the circuit $\theta_{i_m}(2)$ does not exist:
\begin{enumerate}[i.]
	\item	if $\ell = k$ or $\ell=k-1$, the graph will evolve to a graph $G_{i_n}$, $n >m$ of type 1 such that $U_{i_n}$ corresponds to the vertex $B$ in Figure~\ref{figure: type 1} and the evolution from $G_{i_m}$ to $G_{i_n}$ is coded by in morphism in $\{ [0,1],[1,0] \}$;
	\item	if $\ell<k-1$, the graph will evolve to a graph $G_{i_n}$, $n >m$ of type 7 or 8 such that $U_{i_n}$ corresponds to one of the vertices $R$ and $B$ in Figure~\ref{figure: type 7} and to one of the vertices $B_1$ and $B_2$ in Figure~\ref{figure: type 8}. The evolution from $G_{i_m}$ to $G_{i_n}$ is coded by the morphism $[1,0]$ and we refer to Lemma~\ref{Lemma: graph of type 7 or 8} with $k := k - \ell - 1$ to know what happens next.
\end{enumerate}
\end{lemma}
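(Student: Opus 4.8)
The plan is to read everything off the evolution of the type~4 graph $G_{i_m}$, tracking the unique bispecial vertex $B$ together with the paths running through the loop $B\to B$. First I would put labels on the reduced graph of Figure~\ref{figure: graph of type 4}: write $\alpha,\beta$ for the left labels of the two segments $R\to B$, $\gamma$ for the left label of the loop, and $c,d$ for the right labels of the loop and of the exit edge $B\to R$. Since $R$ is right special but not left special and $B$ is the unique left special vertex, we have $\delta^+(B)=2$ and $\delta^-(B)=3$, so the bilateral order of $B$ equals $\#\{aBb\in\fac{}{X}\}-4$, and by minimality every factor $aBb$ corresponds to a passage (incoming edge, outgoing edge) through $B$ realised by one of the allowed circuits starting from $U_{i_m}=R$. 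The hypothesis that $\theta_{i_m}(0)$, $\theta_{i_m}(1)$ and possibly $\theta_{i_m}(2)$ loop $k$, $\ell$ and $k-1$ times then translates into an explicit list of the pairs $aBb$ that occur: I would obtain them exactly as in the decomposition of $n$-circuits (Lemma~\ref{lemma: decomposition circuit}), distinguishing the first passage (entered by a segment), the interior passages (loop to loop) and the last passage (loop to exit).

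With this list in hand, the next step is to record that, because $1\le p(n+1)-p(n)\le 2$, the vertex $B$ can only be neutral or weak, and that the arrangement of loop counts forbids $B$ (and every factor obtained by reading extra turns of the loop) from being strong: this is precisely Lemma~\ref{lemma: trop de boucles donne un bispecial fort} combined with Corollary~\ref{corollary: v_n = B}, the conditions $k\ge\ell$ and ``third circuit loops $k-1$ times'' in the definition of $\theta_{i_m}$ being designed exactly to exclude two loop counts differing by at least $2$ along a common segment. I would then trace the evolution one order at a time: as long as the exploding bispecial factor is neutral, the shape is preserved, the starting vertex stays on the $R$-side, the number of circuits from $R$ is unchanged, and therefore the coding morphism is letter-to-letter, a mere relabelling of the circuits. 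Each such step lowers the loop counts by one, so the whole process is controlled by the difference $k-\ell$ and, when $\theta_{i_m}(2)$ is present, by the fact that one segment carries the two consecutive counts $k,k-1$.

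Finally I would run the case analysis along the five alternatives of the statement, in each case following the loop counts down to the first order $i_n$ at which the shape changes and reading off the target graph and the composed morphism. When the two segments reach the same maximal power of the loop, the divergence occurs at the very top and the structure either collapses through a weak explosion to a single loop (type~1, morphism in $\{[0,1],[1,0]\}$) or reorganises with the starting vertex landing on the former bispecial, giving $U_{i_n}=B$ in a graph of type~10 or type~4 and a morphism $[1,0,2]$ or $[1,2,0]$; when the powers differ by more than one ($\ell<k-1$) the two segments separate strictly below the top, producing two right and two left special factors, hence a graph of type~7 or~8 (Figure~\ref{figure: Rauzy graphs with at least 1 bispecial vertex}) with a residual loop parameter equal to $k-\ell-1$, which is why the analysis is handed over to Lemma~\ref{Lemma: graph of type 7 or 8} with $k:=k-\ell-1$. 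The main obstacle is the bookkeeping: one must check in every branch that the intermediate explosions are genuinely neutral, so that the composed morphism stays letter-to-letter, identify correctly which of the admissible shapes is reached, and follow which circuit becomes which letter in order to pin down the morphism up to the stated permutations. These verifications are routine but delicate, which is why the full computation is deferred to the appendix.
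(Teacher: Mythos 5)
Your proposal follows essentially the same route as the paper: the paper's (very terse) proof simply points to the eight possible explosions of the bispecial vertex $B$ drawn in Figures~\ref{figure: evolutions of a type 4 with C} and~\ref{figure: evolutions of a type 4 without C}, observes that while both loop counts are large enough the type-4 shape is preserved with both counts decremented by one, and leaves the terminal case analysis and the computation of the morphisms to the reader --- exactly the iteration-on-loop-counts argument you describe. One small caveat: your invariant \enquote{neutral explosion implies the shape is preserved} is not literally correct (the terminal explosions leading to types 1, 4 with $U_{i_n}=B$, and 10 are also neutral or weak), but since your case analysis is in fact driven by the decremented loop counts rather than by neutrality, this does not affect the argument.
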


\begin{proof}
It suffices to see how the graph evolves. Indeed, when the vertex $B$ explodes, we have eight possibilities represented at Figure~\ref{figure: evolutions of a type 4 with C} and Figure~\ref{figure: evolutions of a type 4 without C}. The main thing to notice is that if both circuits\footnote{The reader is invited to check the definition of $\theta_{i_m}$ for such graphs on page~\pageref{item 4}.} $\theta_{i_m}(0)$ and $\theta_{i_m}(1)$ can go through the loop $B \rightarrow B$ respectively $k$ and $\ell$ times with $k$ and $\ell$ greater than 1 (observe that in this case, the circuit $\theta_{i_m}(2)$ goes through that loop $k-1$ times), the graph will evolve as in Figure~\ref{figure: evolution 1 of type 4 with C} and the new circuits $\theta_{i_m+1}(0)$ and $\theta_{i_m+1}(1)$ will go through the loop respectively $k-1$ and $\ell -1$ times (so $k-2$ times for $\theta_{i_m+1}(2)$). The computation of the morphisms is left to the reader.
\end{proof}

\begin{figure}[h!tbp]
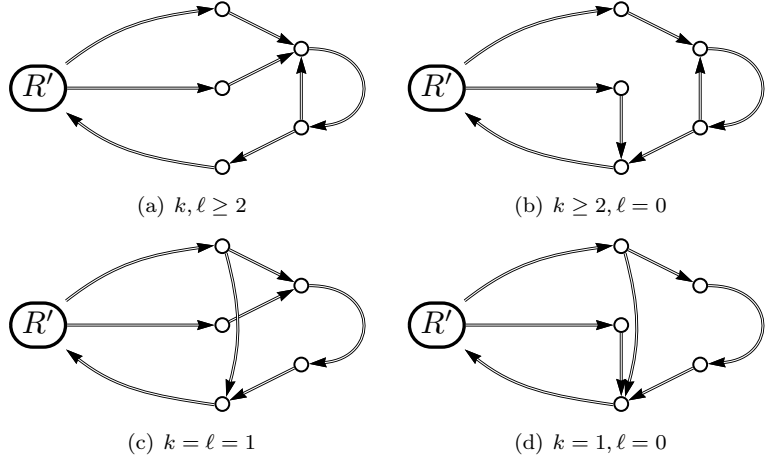

\begin{center}
\subfigure[$k, \ell \geq 2$]{
\label{figure: evolution 1 of type 4 with C}
\scalebox{0.7}{
\begin{VCPicture}{(0,0.5)(6,4)}
\ChgEdgeLabelScale{0.8}
\StateVar[R']{(0,2.5)}{R'}
\VSState{(3.5,4)}{HG}
\VSState{(3.5,2.5)}{MG}
\VSState{(3.5,1)}{BG}
\VSState{(5,3.25)}{HD}
\VSState{(5,1.75)}{BD}
\EdgeLineDouble
\Edge{HG}{HD}
\Edge{MG}{HD}
\Edge{BD}{BG}
\Edge{BD}{HD}
\EdgeLineDouble
\ArcL{R'}{HG}{}
\Edge{R'}{MG}
\ArcL{BG}{R'}{}
\VCurveR[]{angleA=0,angleB=5,ncurv=1.8}{HD}{BD}{}
\end{VCPicture}
}}
\qquad
\subfigure[$k \geq 2, \ell = 0$]{
\label{figure: evolution 2 of type 4 with C}
\scalebox{0.7}{
\begin{VCPicture}{(0,0.5)(6,4)}
\ChgEdgeLabelScale{0.8}
\StateVar[R']{(0,2.5)}{R'}
\VSState{(3.5,4)}{HG}
\VSState{(3.5,2.5)}{MG}
\VSState{(3.5,1)}{BG}
\VSState{(5,3.25)}{HD}
\VSState{(5,1.75)}{BD}
\EdgeLineDouble
\Edge{HG}{HD}
\Edge{MG}{BG}
\Edge{BD}{BG}
\Edge{BD}{HD}
\EdgeLineDouble
\ArcL{R'}{HG}{}
\Edge{R'}{MG}
\ArcL{BG}{R'}{}
\VCurveR[]{angleA=0,angleB=5,ncurv=1.8}{HD}{BD}{}
\end{VCPicture}
}}
\\
\subfigure[$k = \ell = 1$]{
\label{figure: evolution 3 of type 4 with C}
\scalebox{0.7}{
\begin{VCPicture}{(0,0.5)(6,4)}
\ChgEdgeLabelScale{0.8}
\StateVar[R']{(0,2.5)}{R'}
\VSState{(3.5,4)}{HG}
\VSState{(3.5,2.5)}{MG}
\VSState{(3.5,1)}{BG}
\VSState{(5,3.25)}{HD}
\VSState{(5,1.75)}{BD}
\EdgeLineDouble
\Edge{HG}{HD}
\Edge{MG}{HD}
\Edge{BD}{BG}
\ArcL{HG}{BG}{}
\EdgeLineDouble
\ArcL{R'}{HG}{}
\Edge{R'}{MG}
\ArcL{BG}{R'}{}
\VCurveR[]{angleA=0,angleB=5,ncurv=1.8}{HD}{BD}{}
\end{VCPicture}
}}
\qquad
\subfigure[$k = 1, \ell = 0$]{
\label{figure: evolution 4 of type 4 with C}
\scalebox{0.7}{
\begin{VCPicture}{(0,0.5)(6,4)}
\ChgEdgeLabelScale{0.8}
\StateVar[R']{(0,2.5)}{R'}
\VSState{(3.5,4)}{HG}
\VSState{(3.5,2.5)}{MG}
\VSState{(3.5,1)}{BG}
\VSState{(5,3.25)}{HD}
\VSState{(5,1.75)}{BD}
\EdgeLineDouble
\Edge{HG}{HD}
\Edge{MG}{BG}
\Edge{BD}{BG}
\ArcL{HG}{BG}{}
\EdgeLineDouble
\ArcL{R'}{HG}{}
\Edge{R'}{MG}
\ArcL{BG}{R'}{}
\VCurveR[]{angleA=0,angleB=5,ncurv=1.8}{HD}{BD}{}
\end{VCPicture}
}}
\end{center}
\caption{Evolutions of a graph of type 4 with 3 circuits starting from $R$.}
\label{figure: evolutions of a type 4 with C}
\end{figure}

\begin{figure}[h!tbp]
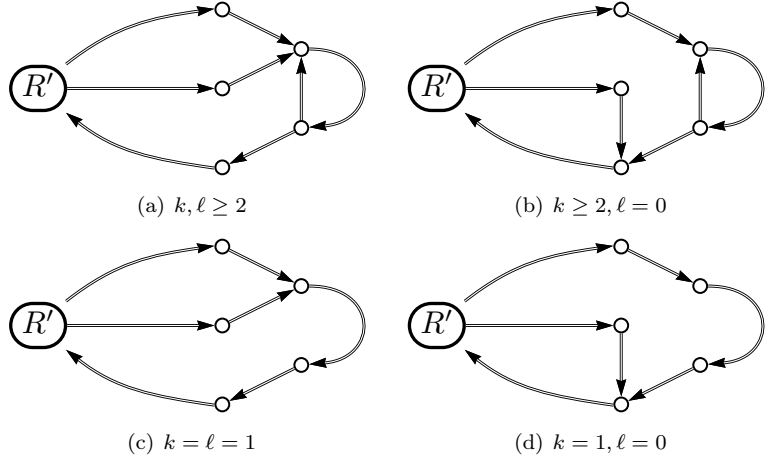

\begin{center}
\subfigure[$k, \ell \geq 2$]{
\label{figure: evolution 1 of type 4 without C}
\scalebox{0.7}{
\begin{VCPicture}{(0,0.5)(6,4)}
\ChgEdgeLabelScale{0.8}
\StateVar[R']{(0,2.5)}{R'}
\VSState{(3.5,4)}{HG}
\VSState{(3.5,2.5)}{MG}
\VSState{(3.5,1)}{BG}
\VSState{(5,3.25)}{HD}
\VSState{(5,1.75)}{BD}
\EdgeLineDouble
\Edge{HG}{HD}
\Edge{MG}{HD}
\Edge{BD}{BG}
\Edge{BD}{HD}
\EdgeLineDouble
\ArcL{R'}{HG}{}
\Edge{R'}{MG}
\ArcL{BG}{R'}{}
\VCurveR[]{angleA=0,angleB=5,ncurv=1.8}{HD}{BD}{}
\end{VCPicture}
}}
\qquad
\subfigure[$k \geq 2, \ell = 0$]{
\label{figure: evolution 2 of type 4 without C}
\scalebox{0.7}{
\begin{VCPicture}{(0,0.5)(6,4)}
\ChgEdgeLabelScale{0.8}
\StateVar[R']{(0,2.5)}{R'}
\VSState{(3.5,4)}{HG}
\VSState{(3.5,2.5)}{MG}
\VSState{(3.5,1)}{BG}
\VSState{(5,3.25)}{HD}
\VSState{(5,1.75)}{BD}
\EdgeLineDouble
\Edge{HG}{HD}
\Edge{MG}{BG}
\Edge{BD}{BG}
\Edge{BD}{HD}
\EdgeLineDouble
\ArcL{R'}{HG}{}
\Edge{R'}{MG}
\ArcL{BG}{R'}{}
\VCurveR[]{angleA=0,angleB=5,ncurv=1.8}{HD}{BD}{}
\end{VCPicture}
}}
\\
\subfigure[$k = \ell = 1$]{
\label{figure: evolution 3 of type 4 without C}
\scalebox{0.7}{
\begin{VCPicture}{(0,0.5)(6,4)}
\ChgEdgeLabelScale{0.8}
\StateVar[R']{(0,2.5)}{R'}
\VSState{(3.5,4)}{HG}
\VSState{(3.5,2.5)}{MG}
\VSState{(3.5,1)}{BG}
\VSState{(5,3.25)}{HD}
\VSState{(5,1.75)}{BD}
\EdgeLineDouble
\Edge{HG}{HD}
\Edge{MG}{HD}
\Edge{BD}{BG}
\EdgeLineDouble
\ArcL{R'}{HG}{}
\Edge{R'}{MG}
\ArcL{BG}{R'}{}
\VCurveR[]{angleA=0,angleB=5,ncurv=1.8}{HD}{BD}{}
\end{VCPicture}
}}
\qquad
\subfigure[$k = 1, \ell = 0$]{
\label{figure: evolution 4 of type 4 without C}
\scalebox{0.7}{
\begin{VCPicture}{(0,0.5)(6,4)}
\ChgEdgeLabelScale{0.8}
\StateVar[R']{(0,2.5)}{R'}
\VSState{(3.5,4)}{HG}
\VSState{(3.5,2.5)}{MG}
\VSState{(3.5,1)}{BG}
\VSState{(5,3.25)}{HD}
\VSState{(5,1.75)}{BD}
\EdgeLineDouble
\Edge{HG}{HD}
\Edge{MG}{BG}
\Edge{BD}{BG}
\EdgeLineDouble
\ArcL{R'}{HG}{}
\Edge{R'}{MG}
\ArcL{BG}{R'}{}
\VCurveR[]{angleA=0,angleB=5,ncurv=1.8}{HD}{BD}{}
\end{VCPicture}
}}
\end{center}
\caption{Evolutions of a graph of type 4 with 2 circuits starting from $R$.}
\label{figure: evolutions of a type 4 without C}
\end{figure}

Now we can determine the valid suffixes in component $C_3$. Moreover, in $\G$ we can rename the vertex $4$ by $4B$, meaning that we always have $U_{i_n} = B$.

\begin{proposition}
\label{proposition: valid path in C3}
An infinite path $p$ in $\G$ labelled by $(\gamma_{i_n})_{n \geq N}$ is a valid suffix that always stays in vertex $4$ and that is such that $U_{i_N}$ is bispecial if and only if there is a contraction $(\alpha_n)_{n \geq N}$ of $(\gamma_{i_n})_{n \geq N}$ such that 


\begin{enumerate}

	\item 	for all $n \geq N$, 
\begin{multline*}
\alpha_n \in \left\{ [0,10,20], [0,20,10], 
		[x^{k-1}y,0x^ky,0x^{k-1}y] , [x^{k-1}y,0x^{k-1}y,0x^ky],	\right.\\
 			\left.	 [0x^{k-1}y,x^ky,x^{k-1}y], [0x^{k-1}y,x^{k-1}y,x^ky] \mid k \geq 1 \right\}
\end{multline*}
with $\{x,y\} = \{1,2\}$; 

	\item 	for all $r \geq N$,  
\[
	(\alpha_n)_{n \geq r} \notin  {\{ [0,10,20], [0,20,10] \}}^{\omega}
\]
and
\[		
	(\alpha_n)_{n \geq r} \notin  \left\{ [0x^{k-1}y,x^ky,x^{k-1}y], [0x^{k-1}y,x^{k-1}y,x^ky] \mid k \geq 1 \right\}^{\omega} 
\]				


\end{enumerate}
\end{proposition}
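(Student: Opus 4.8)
The plan is to apply Proposition~\ref{prop: valid path}, which reduces the validity of $p$ to two requirements: the local condition that every prefix of $p$ be valid, and the global condition that the label of $p$ be weakly primitive and admit a contraction consisting only of right proper morphisms. I would first dispose of right-properness: a direct inspection shows that every morphism listed in Condition~1 is right proper (the images of $0,1,2$ all end in $0$ for the first two, and in the letter $y$ for the remaining four), so once Condition~1 is established the contraction $(\alpha_n)_{n\geq N}$ itself witnesses the right-proper half of the global condition. Thus the argument splits into (a) identifying which composite morphisms can arise, and (b) translating weak primitivity into Condition~2.

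For (a), I would start from the description of $\theta_{i_n}$ for type~4 graphs (item~4 of Section~\ref{subsection: a procedure to assign letters to circuits}) together with the morphisms $[0x^ky,x^\ell y,0x^{k-1}y]$ and $[x^ky,0x^\ell y,x^{k-1}y]$ recorded just before Lemma~\ref{lemma: allowed path for type 4}, which code the explosion of the bispecial vertex $B=U_{i_m}$. Since $p$ stays in vertex~$4$ and $U_{i_N}$ is bispecial, the ensuing explosion of the \emph{other} right special vertex $R$ must return the graph to type~4; by Lemma~\ref{lemma: allowed path for type 4} this occurs exactly in the case $\ell=k-1$ with the third circuit present, and the corresponding transition morphism is $[1,0,2]$ or $[1,2,0]$. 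Composing the $B$-explosion morphism with this transition (and absorbing the intervening identity morphisms coding the steps where $U_{i_n}\neq B_{i_n}$) produces a contraction $(\alpha_n)_{n\geq N}$, and a routine computation shows each $\alpha_n$ is exactly one of the six families of Condition~1: the neutral-bispecial sub-case yields $[0,10,20]$ and $[0,20,10]$, while the two strong-bispecial sub-cases yield the type-$B$ families $[x^{k-1}y,0x^ky,0x^{k-1}y]$, $[x^{k-1}y,0x^{k-1}y,0x^ky]$ and the type-$C$ families $[0x^{k-1}y,x^ky,x^{k-1}y]$, $[0x^{k-1}y,x^{k-1}y,x^ky]$. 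Conversely each such $\alpha_n$ is realized by an admissible evolution, which supplies the local condition in the backward direction.

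For (b), the key is the distinguished role of the letter $0$, which always codes the circuit $\theta_{i_n}(0)$ avoiding $R$. One checks that in the families $[0,10,20],[0,20,10]$ the image of $0$ is the single letter $0$, so a tail in $\{[0,10,20],[0,20,10]\}^{\omega}$ never introduces $1$ or $2$ into the image of $0$; dually, in the families $[0x^{k-1}y,x^ky,x^{k-1}y]$ and $[0x^{k-1}y,x^{k-1}y,x^ky]$ the letter $0$ occurs only in the image of $0$, so a tail of this form never introduces $0$ into the images of $1$ and $2$. In either case the incidence matrices keep a proper coordinate subspace invariant and weak primitivity fails, which proves that Condition~2 is necessary. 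For sufficiency I would argue that, for sequences genuinely arising from evolutions of type~4 graphs, the chaining of the parameters forced by Lemma~\ref{lemma: allowed path for type 4} together with the two exclusions of Condition~2 guarantees that a suitable product of incidence matrices becomes strictly positive, i.e.\ that every letter is eventually spread into the image of every letter.

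I expect step (b), and specifically the sufficiency half of the weak-primitivity criterion, to be the main obstacle. One must show that no admissible product avoiding the two forbidden tails can leave a proper coordinate subspace invariant; the naive attempt fails, since freely alternating, say, $[0,10,20]$ with a $k=1$ type-$C$ morphism at fixed $\{x,y\}$ preserves a two-letter subalphabet while still satisfying Condition~2. The point is therefore to exploit the constraints that the graph dynamics impose on how the parameters $k,\ell$ and the roles of $x,y$ may evolve along an actual sequence of Rauzy graphs, rather than treating the $\alpha_n$ as freely chosen from the six families. Verifying that these constraints, combined with Condition~2, force strict positivity is the delicate point of the proof.
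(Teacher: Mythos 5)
Your architecture is the same as the paper's: you invoke Proposition~\ref{prop: valid path} to split validity into a local and a global condition, obtain Condition~1 by composing the two $B$-explosion morphisms $[0x^ky,x^{\ell}y,0x^{k-1}y]$ and $[x^ky,0x^{\ell}y,x^{k-1}y]$ with the transition morphisms $[1,0,2]$, $[1,2,0]$ forced by Lemma~\ref{lemma: allowed path for type 4} (which pins $\ell=k-1$ for a return to type~4), contract along the indices where $U_{i_n}$ is bispecial, and observe that all six resulting families are right proper so that only weak primitivity remains. Your necessity argument for Condition~2 (invariance of $\{0\}$ under $\{[0,10,20],[0,20,10]\}$, invariance of $\{1,2\}$ under the type-$C$ families) is also exactly the paper's.

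The divergence is at the sufficiency of Condition~2 for weak primitivity, which the paper dispatches with the phrase \enquote{it is easily seen} and which you rightly refuse to accept. Your obstruction is genuine: taking $k=1$, $(x,y)=(1,2)$ in the first type-$C$ family gives $[02,12,2]$, and both $[0,10,20]$ and $[02,12,2]$ map the letters $0$ and $2$ into $\{0,2\}^*$, so the alternating word $\left([0,10,20]\,[02,12,2]\right)^{\omega}$ satisfies Conditions~1 and~2 while the letter $1$ never occurs in the image of $2$ under any partial product; weak primitivity fails. (The example does require the first type-$C$ family: $[02,2,12]$ sends $2$ to $12$ and does not preserve $\{0,2\}$; one can similarly check that $[2,012,02]$ does preserve it, so the family of bad tails is larger than the two listed.) Where I part company with you is in your proposed repair. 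You hope that constraints from the Rauzy-graph dynamics on how $k$, $\ell$ and the roles of $x,y$ evolve will exclude such alternations, but no such constraints are available: each factor of the alternation is individually realizable as a type-$4$-to-type-$4$ evolution with $k=1$, $\ell=0$, Condition~1 (as formulated both in the statement and in the paper's proof) imposes nothing linking consecutive $\alpha_n$, and the sufficiency half of Proposition~\ref{prop: valid path} manufactures a subshift from any locally consistent sequence of evolutions. So the gap you have located is not one you failed to close; it is a gap in the paper's own proof, and indeed in the statement, whose Condition~2 would have to be strengthened to forbid every tail that leaves a proper subalphabet invariant (for instance tails drawn from $\{[0,10,20],[02,12,2],[2,012,02]\}$ and their symmetric counterparts), not merely the two $\omega$-powers listed.
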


\begin{proof}
Our aim is to describe valid suffix in $\G$ that stay in vertex 4, accordingly to Proposition~\ref{prop: valid path}. 

Let us start with Condition 1. Given a graph $G_{i_n}$ of type 4 with $U_{i_n} = B$, the morphism $\gamma_{i_n}$ coding the evolution to a graph of type 4 and such that
\begin{itemize}
	\item[(a)] $U_{i_{n+1}} = B$ are $[0,10,20]$ and $[0,20,10]$;
	
	\item[(b)] $U_{i_{n+1}} = R$ are $[0x^ky,x^{\ell}y,0x^{k-1}y]$ and $[x^ky,0x^{\ell}y,x^{k-1}y]$. 
\end{itemize}
%
Let $(k_n)_{n \geq N}$ be the subsequence of $(i_n)_{n \geq N}$ such that $U_{i_n}$ is bispecial if and only if $i_n \in \{k_n \mid n \geq N\}$ and let $(\alpha_n)_{n \geq N}$ be the contraction of $(\gamma_{i_n})_{n \geq N}$ defined by $\alpha_n = \gamma_{k_n} \gamma_{k_n+1} \cdots \gamma_{k_{n+1}-1}$. Using Lemma~\ref{lemma: allowed path for type 4}, we obtain that $(\gamma_{i_n})_{n \geq N}$ has valid prefixes if and only if all morphisms $\alpha_n$ belong to 
\begin{multline*}
\left\{ [0,10,20], [0,20,10], 
		[x^{k-1}y,0x^ky,0x^{k-1}y] , [x^{k-1}y,0x^{k-1}y,0x^ky],	\right.\\
 			\left.	 [0x^{k-1}y,x^ky,x^{k-1}y], [0x^{k-1}y,x^{k-1}y,x^ky] \mid k \geq 1 \right\}.
\end{multline*}
Indeed, the exponent $k$ (resp. $\ell$) in the morphisms given above (in (b)) corresponds to the number of times the circuit $\theta_{i_{n+1}}(0)$ (resp. $\theta_{i_{n+1}}(0)$) goes through the loop $B \to B$. Consequently, we must have $\ell = k-1$.


Now let us consider Condition 2. All morphisms $\alpha_n$ are right proper so we only have to take care of the weak primitivity and it is easily seen that $(\alpha_n)_{n \geq N}$ is weakly primitive if and only if for all $r \geq N$,  
\[
	(\alpha_n)_{n \geq r} \notin  {\{ [0,10,20], [0,20,10] \}}^{\omega}
\]
and
\[		
	(\alpha_n)_{n \geq r} \notin  \left\{ [0x^{k-1}y,x^ky,x^{k-1}y], [0x^{k-1}y,x^{k-1}y,x^ky] \mid k \geq 1 \right\}^{\omega} 
\]				
with $\{x,y\} = \{1,2\}$.
\end{proof}

\subsection{Valid paths in $C_4$}
\label{subsection: component C_4}

This component of $\G$ contains the vertices 1, 5, 6, 7, 8, 9 and 10. As for component $C_3$, we need some lemmas to determine the consequences of some morphisms $\gamma_{i_n}$ on the sequence $(\gamma_{i_k})_{k \geq n+1}$. The difficulty in determining the valid paths in this component lies in the fact that we have to take care of the length of some paths in the Rauzy graphs to know which morphism is allowed. Indeed, the morphisms that code the evolutions to Rauzy graphs of type 5 or 6 (and 7 or 8) are the same and the precise type depends on the lengths of the path $p_1$ and $p_2$ in Figure~\ref{figure: graph with no loop'} (and of the lengths of the paths $u_1$, $u_2$, $v_1$ and $v_2$ in Figure~\ref{figure: graph with 2 loops'}. When the Rauzy graph $G_{i_n}$ is of type 6 or 8 (\textit{i.e.}, when $|p_1|=|p_2|$ or when $|u_1| = |u_2|$), we know from Lemma~\ref{lemma: decomposition du mot directeur} that we can decompose the morphism $\gamma_{i_n}$ into two morphisms, each one corresponding to the explosion of one bispecial vertex. On the other hand, if for example $|u_1| >> |u_2|+|v_2|$ in Figure~\ref{figure: graph with 2 loops'} and if we denote by $B_1(1), B_1(2), \dots$ (resp. $B_2(1), B_2(2), \dots$) the bispecial vertices (ordered by increasing length) in the Rauzy graphs of larger order that admit $R_1$ (resp. $R_2$) as a suffix, we will see that many vertices $B_1(i)$ will explode before that $B_2(1)$ explodes. Consequently not all morphisms are allowed.

\begin{figure}[h!tbp]
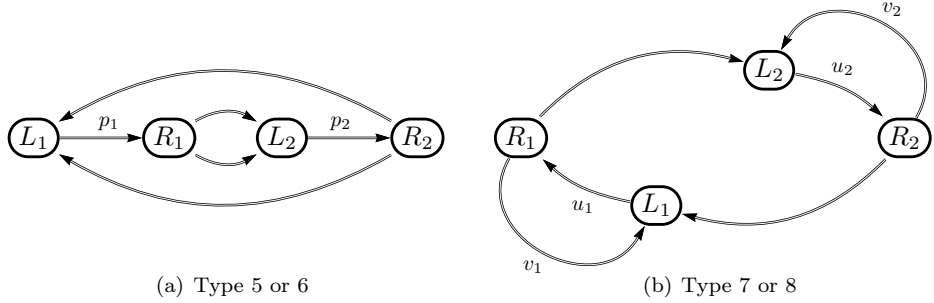

\centering
\subfigure[Type 5 or 6]{
\scalebox{0.6}{
\begin{VCPicture}{(0,-0.5)(9,5)}
\label{figure: graph with no loop'}
\ChgEdgeLabelScale{0.8}
\StateVar[L_1]{(0,2.5)}{L_1}
\StateVar[R_1]{(3,2.5)}{R_1}
\StateVar[L_2]{(5.5,2.5)}{L_2}
\StateVar[R_2]{(8.5,2.5)}{R_2}
\EdgeLineDouble
\LArcL{R_1}{L_2}{}
\LArcR{R_1}{L_2}{}
\LArcL{R_2}{L_1}{}
\LArcR{R_2}{L_1}{}
\EdgeL[.6]{L_1}{R_1}{p_1}
\EdgeL[.4]{L_2}{R_2}{p_2}
\end{VCPicture}
}}
\qquad
\subfigure[Type 7 or 8]{
\scalebox{0.6}{
\begin{VCPicture}{(0,-0.5)(9,5)}
\label{figure: graph with 2 loops'}
\ChgEdgeLabelScale{0.8}
\StateVar[R_1]{(0,2.5)}{R_1}
\StateVar[L_2]{(5.5,4)}{L_2}
\StateVar[R_2]{(8.5,2.5)}{R_2}
\StateVar[L_1]{(3,1)}{L_1}
\EdgeLineDouble
\LArcL{R_1}{L_2}{}
\VCurveR[]{angleA=-120,angleB=-120,ncurv=1.2}{R_1}{L_1}{v_1}
\VCurveR []{angleA=60,angleB=60,ncurv=1.2}{R_2}{L_2}{v_2}
\LArcL{R_2}{L_1}{}
\ArcL{L_2}{R_2}{u_2}
\ArcL{L_1}{R_1}{u_1}
\end{VCPicture}
}}
\caption{The next evolutions of these graphs depend on the length of the pats $u_i$, $v_i$ and $p_i$.}
\label{figure: rauzy graph of type 5,6,7,8}
\end{figure}

First, the following result will be helpful to characterize valid paths that goes infinitely often through the vertex $1$ in the graph of graphs.

\begin{fact}
\label{fact: graph of type 1}
We can suppose without loss of generality that the evolution of a Rauzy graph of type 1 to a Rauzy graph of type 1 is coded by $[0,10]$ or by $[01,1]$.
\end{fact}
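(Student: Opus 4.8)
The plan is to read the four candidate morphisms off the computation already made for the edge $1\to 1$ of $\G$, and then to spend the single remaining degree of freedom in the choice of $\theta_{i_n}$ for type~$1$ graphs to normalise two of them away. Recall from Equation~\eqref{eq: 1 to 1} that whenever $G_{i_n}$ and $G_{i_{n+1}}$ are both of type~$1$, the morphism $\gamma_{i_n}$ coding the evolution is one of $[0,10]$, $[10,0]$, $[01,1]$, $[1,01]$. The two ``bad'' ones are obtained from the two ``good'' ones by precomposition with the letter exchange $E_{01}$ on $\{0,1\}$: a direct check gives $[0,10]\circ E_{01}=[10,0]$ and $[01,1]\circ E_{01}=[1,01]$. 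So the four morphisms split into the two $E_{01}$-orbits $\{[0,10],[10,0]\}$ and $\{[01,1],[1,01]\}$, and each orbit contains exactly one element of the target set $\{[0,10],[01,1]\}$.

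Next I would translate this precomposition into a change of the bijections $\theta$. By Definition~\ref{definition: definition des morphismes}, $\gamma_{i_n}$ is the unique morphism determined by the defining relation $\theta_{i_n}\gamma_{i_n}=\psi_{i_n}\theta_{i_{n+1}}$; replacing the source labelling $\theta_{i_{n+1}}$ by $\theta_{i_{n+1}}\circ E_{01}$ therefore turns $\gamma_{i_n}$ into $\gamma_{i_n}\circ E_{01}$, while leaving $\theta_{i_n}$ (hence the morphisms leaving the previous vertex) untouched. Since $G_{i_{n+1}}$ is of type~$1$, item~\ref{item 1} of the list on page~\pageref{enumerate: definition des circuits} says the two loops may be named $0$ and $1$ in either order, so both $\theta_{i_{n+1}}$ and $\theta_{i_{n+1}}\circ E_{01}$ are admissible. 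Thus the orbit to which $\gamma_{i_n}$ belongs is fixed by $\theta_{i_n}$ and by the intrinsic sub-evolution, but within that orbit we are free to pick whichever representative we want; and since both orbits meet $\{[0,10],[01,1]\}$, we can always choose the good one.

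Finally I would fix the labellings greedily along the path so that the choices are globally consistent, which is the only real point to watch. The subtlety is that each $\theta_{i_{n+1}}$ plays two roles at once — source alphabet of $\gamma_{i_n}$ and target alphabet of $\gamma_{i_{n+1}}$ — so a choice made to normalise $\gamma_{i_n}$ could a priori spoil $\gamma_{i_{n+1}}$. This is precisely where the orbit observation pays off: once $\theta_{i_{n+1}}$ is fixed, it determines the orbit of $\gamma_{i_{n+1}}$, but that orbit still contains a good representative, which we then select via the remaining free choice of $\theta_{i_{n+2}}$. Processing the maximal runs of type~$1$ vertices from left to right and fixing one bijection at each step therefore yields labellings for which every type~$1\to$ type~$1$ morphism lies in $\{[0,10],[01,1]\}$; the morphisms entering or leaving such a run come from (or go to) a graph of another type and so are not constrained by the Fact. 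The main obstacle is thus purely this shared-labelling bookkeeping, and it dissolves once one notes that both $E_{01}$-orbits meet $\{[0,10],[01,1]\}$.
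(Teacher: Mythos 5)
Your proof is correct, and while it rests on the same algebraic observation as the paper's own argument --- namely that the two ``bad'' morphisms $[10,0]$ and $[1,01]$ differ from the ``good'' ones $[0,10]$ and $[01,1]$ by a composition with the exchange $E_{0,1}$, and that composing with $E_{0,1}$ on either side permutes this set of four morphisms --- you organise the absorption of the exchanges quite differently. The paper works directly on the product of morphisms: it writes each bad morphism as $D\,E_{0,1}$, uses the conjugation identity $E_{0,1}\{D_{0,1},D_{1,0}\}^{n}E_{0,1}=\{D_{0,1},D_{1,0}\}^{n}$ to push the exchanges rightward until they cancel in pairs or are swallowed by the exit morphism $[1,0^k1,0^{k-1}1]=E_{0,1}[0,1^k0,1^{k-1}0]$, and therefore needs a case analysis on whether the number of bad morphisms is infinite, finite and even, or finite and odd --- the last case being handled by the slightly awkward device of inserting $id=E_{0,1}^{2}$ infinitely often. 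You instead locate the degree of freedom where it actually lives, in the choice of the bijections $\theta_{i_{n+1}}$ for type~1 graphs (which the paper itself declares unconstrained), observe that replacing $\theta_{i_{n+1}}$ by $\theta_{i_{n+1}}\circ E_{0,1}$ right-composes $\gamma_{i_n}$ and left-composes $\gamma_{i_{n+1}}$ with $E_{0,1}$, and then normalise greedily from left to right; since each right-$E_{0,1}$-orbit meets $\{[0,10],[01,1]\}$ and left-composition keeps you inside the set of four, the induction closes with no parity bookkeeping and the infinite-run (Sturmian) case requires no special treatment. Your handling of the boundary of a run is also sound: you leave the first bijection of the run alone so the entering morphism is unchanged, and the exiting morphism to type~$7/8$ is at worst replaced by the other admissible label for that edge. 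In short, same mechanism, but your relabelling-plus-greedy-induction presentation is uniform where the paper's rewriting argument splits into cases.
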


{
\begin{proof}
The morphisms coding an evolution from a graph of type 1 to a graph of type 1 are $[0,10] = D_{1,0}$, $[10,0] = D_{1,0} E_{0,1}$, $[01,1] = D_{0,1}$ and $[1,01] = D_{0,1} E_{0,1}$ and that the morphisms coding an evolution from a graph of type 1 to a graph of type 7 or 8 are $[0,1^k0,1^{k-1}0]$ and $[1,0^k1,0^{k-1}1] = E_{0,1} [0,1^k0,1^{k-1}0]$. 

By induction, it is easily seen that for all integers $n \geq 0$, we have 
\[
	E_{0,1} \left\{ D_{0,1}, D_{1,0} \right\}^n E_{0,1} = \left\{ D_{0,1}, D_{1,0} \right\}^n.
\]
To conclude the proof of the result, we have to consider several possibilities.
\begin{enumerate}

	\item 	If for all $n$, $\gamma_{i_n}$ codes an evolution from a graph of type 1 to a graph of type 1 and if $(\gamma_{i_n})_{n \in \N}$ contains infinitely many occurrences of $D_{1,0} E_{0,1}$ and/or of $D_{0,1} E_{0,1}$, then the result trivially holds.
	
	\item 	If for all $n$, $\gamma_{i_n}$ codes an evolution from a graph of type 1 to a graph of type 1 and if $(\gamma_{i_n})_{n \in \N}$ contains a finite and even number of occurrences of $D_{1,0} E_{0,1}$ and/or of $D_{0,1} E_{0,1}$, then the result trivially holds too.
	
	\item 	If for all $n$, $\gamma_{i_n}$ codes an evolution from a graph of type 1 to a graph of type 1 and if $(\gamma_{i_n})_{n \in \N}$ contains a finite and odd number of occurrences of $D_{1,0} E_{0,1}$ and/or of $D_{0,1} E_{0,1}$, then it suffices to insert in $(\gamma_{i_n})_{n \in \N}$ infinitely many occurrences of the morphism $id = E_{0,1}^2$ and the result holds.
	
	\item 	Finally, if $\gamma_{i_r} \cdots \gamma_{i_s} \in \{ D_{1,0}, D_{1,0} E_{0,1}, D_{0,1}, D_{0,1} E_{0,1} \}^*$ codes a finite sequence of evolutions from graphs of type 1 to graphs of type 1 and if $\gamma_{i_{s+1}} \in \{[0,1^k0,1^{k-1}0], [1,0^k1,0^{k-1}1] \}$ that codes an evolution to a graph of type 7 or 8, then $\gamma_{i_r} \cdots \gamma_{i_s} \gamma_{i_{s+1}}$ can be replaced by $\gamma_{i_r}' \cdots \gamma_{i_s}' \gamma_{i_{s+1}}'$ with $\gamma_{i_r}' \cdots \gamma_{i_s}' \in \{D_{0,1}, D_{1,0}\}^*$ and $\gamma_{i_{s+1}}'\in \{[0,1^k0,1^{k-1}0], [1,0^k1,0^{k-1}1] \}$, depending on the number of occurrences of $D_{1,0} E_{0,1}$ and of $D_{1,0} E_{0,1}$ in $\gamma_{i_r} \cdots \gamma_{i_s}$.
\end{enumerate}
\end{proof}
}

Next, Lemma~\ref{Lemma: graph of type 5 or 6} implies that we can merge the vertices 5 and 6 to one vertex denoted by $5/6$ in $\G$ and that the outgoing edges of that vertex are the same as the outgoing edges of the vertex 6 in $\G$. However, we have to take care of the lengths of $p_1$ and $p_2$ in Figure~\ref{figure: graph with no loop'} to know which morphism in the labels of the edges can be applied.

\begin{lemma}
\label{Lemma: graph of type 5 or 6}
Let $G_k$ be a Rauzy graph as in Figure~\ref{figure: graph with no loop'} and let $i_n$ be the smallest integer in $(i_n)_{n \in \N}$ such that $i_n \geq k$. We have
\begin{multline*}
	\{ \text{Type of } G_{i_{n+1}} \mid G_{i_n} \text{ is of type } 6 \} = \\
	\{ \text{Type of } G_{i_{n+2}} \mid G_{i_n} \text{ is of type } 5 \text{ and }  U_{i_n} \text{ is not strong bispecial}\}
\end{multline*}
and
\[	
	\{ \gamma_{i_n} \mid G_{i_n} \text{ is of type } 6 \} = 
	\{ \gamma_{i_n} \circ \gamma_{i_{n+1}} \mid G_{i_n} \text{ is of type } 5 \text{ and } U_{i_n} \text{ is not strong bispecial} \}.
\]
\end{lemma}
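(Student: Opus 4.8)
The plan is to exhibit an explicit correspondence between the one-step evolutions of a type-6 graph and the two-step evolutions of a type-5 graph, and then to read off both the equality of resulting types and the factorization of the morphisms. First I would record the geometric observation underlying the statement: the graph of Figure~\ref{figure: graph with no loop'} is of type 6 exactly when $|p_1| = |p_2| = 0$ (so that $R_1 = L_1 = B_1$ and $R_2 = L_2 = B_2$ are both bispecial) and of type 5 when exactly one of the two paths, say $p_1$, has length $0$ while $|p_2| \geq 1$ (so that $B = R_1 = L_1$ is the unique bispecial vertex, $R_2$ is right special but not left special, and $L_2$ is left special but not right special). Thus type 6 is the degenerate limit of type 5 in which the segment $p_2$ collapses, and in both graphs the only vertices whose explosion can change the shape are the two vertices $\{B_1, B_2\}$, respectively $\{B, R_2\}$.

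Next I would invoke Lemma~\ref{lemma: decomposition du mot directeur}. In the type-6 graph the two bispecial vertices $B_1$ and $B_2$ sit at the same order $i_n$ and explode simultaneously, the resulting morphism being $\gamma_{i_n}$; the lemma lets me factor this as the product of the two morphisms coding the two explosions carried out one after the other, and it fixes the order in which this must be done whenever one of the two vertices is strong (the non-strong one first). I would then match this with the type-5 scenario: there only $B = B_1$ is bispecial at order $i_n$, so it explodes first (step $i_n \to i_{n+1}$), after which $R_2$ becomes the unique new bispecial vertex and explodes next (step $i_{n+1} \to i_{n+2}$). The hypothesis that $U_{i_n}$ is not strong bispecial guarantees that the first explosion in the type-5 picture is that of a neutral or weak vertex: were $B$ strong, Lemma~\ref{lemma: bispecial fort donne 2 speciaux droits} together with Corollary~\ref{corollary: v_n = B} would force the two new right special factors to admit $B$ as a suffix, giving three right special factors of the same length together with $R_2$ and contradicting $p(n+1) - p(n) \leq 2$. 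Hence the type-5 evolution always explodes the non-strong vertex first, which is exactly the order imposed by Lemma~\ref{lemma: decomposition du mot directeur} in the type-6 case.

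With the two explosions thus put in bijection, I would conclude by observing that the net effect on the graph of exploding $B_1$ and $B_2$ simultaneously (type 6) is identical to that of exploding them successively (type 5), again by Lemma~\ref{lemma: decomposition du mot directeur}; therefore the type of $G_{i_{n+1}}$ obtained from type 6 coincides with the type of $G_{i_{n+2}}$ obtained from type 5, which yields the claimed equality of sets of types. For the morphisms, the single morphism $\gamma_{i_n}$ coding the simultaneous explosion equals the composition of the two morphisms coding the separate explosions, i.e. $\gamma_{i_n} \circ \gamma_{i_{n+1}}$ in the type-5 labelling, giving the second displayed equality.

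The main obstacle I anticipate is the bookkeeping needed to certify that the intermediate graph produced after the first type-5 explosion has $R_2$ as its unique next bispecial vertex with no extra right special factor appearing, and that the bijections $\theta_{i_n}$ attached to types 5 and 6 (as fixed in Section~\ref{subsection: a procedure to assign letters to circuits}) assign the circuits in compatible ways, so that the factorization of $\gamma_{i_n}$ is literally $\gamma_{i_n} \circ \gamma_{i_{n+1}}$ and not merely equal up to a relabelling of the letters. This requires checking the explicit evolutions of these graphs case by case, but each case is routine once the ordering of the two explosions has been fixed by the non-strong hypothesis.
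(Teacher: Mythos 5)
Your proposal identifies the right ingredients (type 6 as the degenerate case of Figure~\ref{figure: graph with no loop'} with $|p_1|=|p_2|=0$, the role of the non-strong hypothesis, the need to track the bijections $\theta_{i_n}$), and your explanation of why $U_{i_n}$ cannot be strong bispecial in the type-5 case matches the paper's: a strong explosion of $B$ would produce two new right special factors which, together with the extension of $R_2$, give three right special factors of length $i_n+1$, contradicting $p(n+1)-p(n)\leq 2$. But the central step of your argument rests on an invalid appeal to Lemma~\ref{lemma: decomposition du mot directeur}. That lemma concerns a single Rauzy graph $G_{i_n}$ (necessarily of type 6 or 8) containing \emph{two bispecial vertices of the same length} $i_n$, and it splits the single morphism $\gamma_{i_n}$ into a formal product of two factors coding the two explosions \emph{at that same order}. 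It says nothing about a type-5 graph, where $R_2$ is right special but not bispecial at order $i_n$ and only becomes bispecial at the strictly larger order $i_{n+1}$. The assertion that exploding the two vertices successively at different orders produces the same types and the same composite morphisms as exploding them simultaneously is precisely the content of Lemma~\ref{Lemma: graph of type 5 or 6} itself, so deriving it from Lemma~\ref{lemma: decomposition du mot directeur} is circular.

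The gap can be closed, but not by that lemma: one either computes the morphisms explicitly for both scenarios and compares (this is what the paper does, checking the first equality on the graph of graphs and the second against Table~\ref{morphismes a partir de 5 ou 6}, with care to match $(U_{i_n},U_{i_{n+1}})$ against $(U_{i_{n+1}},U_{i_{n+2}})$), or one argues via Remark~\ref{remark: type sans bispecial} and the circuit-assignment rules of Section~\ref{subsection: a procedure to assign letters to circuits} that the decomposition of $(n+1)$-circuits into $n$-circuits depends only on the combinatorial shape of the reduced graph and not on the lengths $|p_1|,|p_2|$. Your closing paragraph concedes that a case-by-case check of the explicit evolutions is still required; in fact that check is not a residual formality but is where the entire proof lives, so as written the proposal defers the substance of the lemma rather than establishing it.
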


\begin{proof}
The first equality can be easily checked on the graph of graphs (Figure~\ref{figure: graph of graphs} on page~\pageref{figure: graph of graphs}) and the second one is deduced from the computation of morphisms coding the needed evolutions. 
{
Those are given in Table~\ref{morphismes a partir de 5 ou 6} (take care to match $(U_{i_n},U_{i_{n+1}})$ and $(U_{i_{n+1}},U_{i_{n+2}})$).
\begin{table}[h!tbp]
\centering
\begin{tabular}{|l|l|c|l|l|}
\hline
From		& 	To 	& 	$(U_{i_n},U_{i_{n+1}})$		&	Morphisms 	& 	Conditions		\\
\hline
6	&	1	&	$(\star,B)$		&	$[x,yx], [yx,x]$		&	 	\\
\hhline{~----}
	&	7 or 8	&	$(\star,\star)$	&	$[1,0^k2,(0^{k-1}2)]$	&	$k \geq 1$									\\
\hhline{~~~--}
	&		&					&	$[x,y^kx,(y^{k-1}x)]$	&	$k \geq 2$ 	\\
\hhline{~----}
	&	10	&	$(\star,B)$		&	$[1,0 1,2]$						&										\\
\hhline{~~---}
	&		&	$(\star,R)$		&	$[1 2^k 0,2^{\ell}0]$				&	$k,\ell \geq 0$, $k + \ell \geq 1$	\\
\hhline{~~~--}
	&		&					&	$[1 2^k 0,2^{\ell}0,1 2^{k-1}0]$	&	$k \geq \ell \geq 0$, $k \geq 1$	\\
\hhline{~~~--}
	&		&					&	$[1 2^k 0,2^{\ell}0, 2^{\ell-1}0]$	&	$\ell > k \geq 0$					\\
\hline	
\hline
5  	& 	1	&	$(R,B)$			&	$[x,y]$		&	 	\\
\hhline{~----}
	&	10	&	$(R,B)$		&	$[1,2,0]$							&										\\
\hhline{~~---}
	&		&	$(B,R)$		&	$[1,0 1,2]$							&										\\
\hhline{~~~--}
	&		&				&	$[0^k2,1,(0^{k-1}2)]$				&	$k \geq 1$							\\	
\hhline{~~~--}
	&		&				&	$[2^k 0,1 2^{\ell}0]$				&	$k,\ell \geq 0$, $k + \ell \geq 1$	\\
\hhline{~~~--}
	&		&				&	$[2^k 0,1 2^{\ell}0,2^{k-1}0]$		&	$k \geq \ell \geq 0$, $k \geq 1$	\\
\hhline{~~~--}
	&		&				&	$[2^k 0,1 2^{\ell}0,1 2^{\ell-1}0]$	&	$\ell > k \geq 0$					\\
\hline	
\hline
1 	&	1	&	$(B,B)$	&	$[x,yx]$, $[yx,x]$	&	\\
\hhline{~----}
	&	7 or 8	&	$(B,\star)$	&	$[x,y^kx,(y^{k-1}x)]$	&	$k \geq 2$ \\
\hline
\hline
10 	& 	1	&	$(R,B)$		&	$[x,y]$		&		\\
\hhline{~----}
	&	7 or 8	&	$(R,\star)$		&	$[1,0,(2)]$					&	\\
\hhline{~~---}
	&			&	$(B,\star)$		&	$[0,2^k 1,(2^{k-1}1)]$		&	$k \geq 1$	\\
\hhline{~----}
	&		10	&	$(R,R)$		&	$[1,0,(2)]$		&	\\
\hhline{~~---}
	&			&	$(B,B)$		&	$[0,2 0,1]$		&	\\
\hhline{~~---}
	&			&	$(R,B)$		&	$[0,1,2]$		&	\\
\hhline{~~---}
	&			&	$(B,R)$		&	$[0 1^k 2,1^{\ell}2]$				&	$k,\ell \geq 0$, $k+\ell \geq 1$	\\
\hhline{~~~--}
	&			&				&	$[0 1^k 2,1^{\ell}2,0 1^{k-1}2]$	&	$k \geq 1$, $k \geq \ell \geq 0$	\\
\hhline{~~~--}
	&			&				&	$[0 1^k 2,1^{\ell}2,1^{\ell-1}2]$	&	$\ell > k \geq 0$					\\	
\hline	
\end{tabular}
\caption{The morphisms coding an evolution from a Rauzy graph of type 6 are exactly the morphisms $\gamma_{i_n}\gamma_{i_{n+1}}$ where $\gamma_{i_n}$ codes an evolution from a graph of type 5.}
\label{morphismes a partir de 5 ou 6}
\end{table}
}
The only thing to observe is that when a graph $G_{i_n}$ is of type 5 and if $U_{i_n}$ corresponds to the vertex $B$ in Figure~\ref{figure: type 5} (page~\pageref{figure: type 5}), then $U_{i_n}$ cannot be a strong bispecial factors, otherwise there would be 3 right special vertices in $G_{i_n+1}$ and this does not correspond to any considered type of graphs.
\end{proof}

\begin{remark}
\label{remark: vertex 5/6}
In order to describe all valid paths in the component $C_4$, we sometimes have to know the precise type of a graph corresponding to the vertex $5/6$. Indeed, when going to that vertex in the modified component (suppose the label of the edge is $\gamma_{i_n}$ and that $U_{i_{n+1}}$ corresponds to the vertex $R_1$ in Figure~\ref{figure: graph with no loop'}), we may want to leave it using the morphism $\gamma_{i_{n+1}} = [x,y^kx,(y^{k-1}x)]$ 
{
(see Appendix~\ref{appendix 6 to}).
}
However, the evolution corresponding to that morphism is such that the smallest bispecial factor that admits $U_{i_n+1}$ as a suffix is strong (the other right special vertex is therefore suffix of a weak bispecial factor). Consequently, we can leave the vertex $5/6$ with that morphism only if $U_{i_{n+1}}$ is not bispecial, \textit{i.e.}, the other right special vertex becomes bispecial before $U_{i_n+1}$. In other words, we must have $|p_1| \geq |p_2|$ in Figure~\ref{figure: graph with no loop'}.
\end{remark}

Next lemma deals with the same kind of stuffs as in Lemma~\ref{Lemma: graph of type 5 or 6} but for Rauzy graphs of type 7 and 8. As for graphs of type 5 and 6, it allows us to merge the vertices 7 and 8 to one vertex denoted $7/8$ in $\G$. 


{
\begin{lemma}
\label{Lemma: graph of type 7 or 8}
Let $G_t$ be a Rauzy graph as in Figure~\ref{figure: graph with 2 loops'} and let $i_n$ be the smallest integer in $(i_m)_{m \in \N}$ such that $i_n \geq t$. Suppose that $U_t$ is the vertex $R_1$ and that $\theta_t(1)$ goes $k$ times through the loop $v_2 u_2$. Let $\ell \in \Z$ such that
\begin{equation}
\label{eq: inequ 78}
	|u_1| + (\ell-1) (|u_1| + |v_1|) < |u_2| + (k-1) (|u_2| + |v_2|) \leq |u_1| + \ell (|u_1| + |v_1|).
\end{equation}
Then, the graph can evolve to a graph of type
\begin{enumerate}[i.]
	\item	1 and the composition of morphisms coding this evolution is in 
	\begin{multline*}
		\left\{ [0,10]^h \left\{ [01,1], [1,01] \right\} \mid 0 \leq h < \max\{1, \ell\} \right\}	\\
		\cup \left\{ [0,10]^h [x,y] \mid \{x,y\} = \{0,1\}, h = \max\{0, \ell\} \right\}
	\end{multline*}	
		
	\item 5 or 6 as in Figure~\ref{figure: graph with no loop'} and the composition of morphisms coding this evolution is in
	\[
		\left\{ [0,10,20]^h \{ [0x,y,(0y)],[x,0y,(y)] \} \mid  \{x,y\} = \{1,2\}, 0 \leq h < \max\{1,\ell\} \right\};
	\]

	\item 9 with the starting vertex $U_m$, $m > i_n$, corresponding to the vertex $B$ in Figure~\ref{figure: type 9} and the composition of morphisms coding this evolution is in 
	\[
		\left\{ [0,10,20]^h [0,x,y] \mid \{x,y\} = \{1,2\} , h = \max\{0,\ell\} \right\}.
	\]
\end{enumerate}
\end{lemma}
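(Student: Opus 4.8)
The plan is to follow the explosions of bispecial factors as the order grows from $t$, reading off one morphism from $\S^{*}$ per explosion. Since $U_t=R_1$, Lemma~\ref{lemme: eta = identite ssi} tells us that the only explosions contributing a non-letter-to-letter morphism are those of bispecial factors admitting $R_1$ as a suffix. By Proposition~\ref{prop: path in Rauzy graphs} applied to the local loop $R_1\to L_1\to R_1$ of length $|u_1|+|v_1|$, these factors $B_1(1),B_1(2),\dots$ occur at the offsets $|u_1|+(j-1)(|u_1|+|v_1|)$ above $t$: each extra turn around the local loop lengthens the bispecial factor by $|u_1|+|v_1|$. As long as the other right special vertex $R_2$ has not yet interfered, every such explosion is a neutral explosion of a type-$1$ sub-picture seated at $R_1$, hence is coded by the Sturmian morphism $[0,10]$, or by $[0,10,20]$ when the third circuit $\theta_t(2)$ is present. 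This is exactly the factor $[0,10]^{h}$ (resp.\ $[0,10,20]^{h}$) in front of each announced composition.

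First I would locate the event that ends this Sturmian regime. The circuit $\theta_t(1)$ leaves $R_1$, runs through $R_2$ and turns $k$ times around the loop $v_2u_2$ of length $|u_2|+|v_2|$; tracking the word $\lambda_R(\theta_t(1))$ shows that the first bispecial factor forcing $\theta_t(1)$ to shorten, hence forcing a change of shape, sits at offset $|u_2|+(k-1)(|u_2|+|v_2|)$. Inequality~\eqref{eq: inequ 78} is precisely the statement that this offset lies strictly above that of $B_1(\ell)$ and no higher than that of $B_1(\ell+1)$. Consequently the Sturmian regime at $R_1$ can last at most $\ell$ steps; the various admissible subshifts extending $G_t$ realise all the intermediate values, giving $0\le h<\max\{1,\ell\}$ when the regime is broken early by a neutral step and $h=\max\{0,\ell\}$ when it runs up to the forced cap (the $\max$ simply covers the degenerate cases $\ell\le 0$, in which the $R_2$-event precedes even $B_1(1)$).

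Next I would analyse the terminating explosion itself. Its bilateral order is constrained, through Lemma~\ref{lemma: bispecial fort donne 2 speciaux droits} and the case distinctions already carried out for types $7$ and $8$, by the relative lengths of $u_1,v_1,u_2,v_2$: it is either a neutral bispecial (giving a final Sturmian-type step and a graph of type~$1$ or of type~$5/6$), or the forced cap (giving the permutation-type morphism $[x,y]$ and a graph of type~$1$, or the morphism $[0,x,y]$ and a graph of type~$9$ whose basepoint $U_m$ is the vertex $B$ of Figure~\ref{figure: type 9}). Composing the Sturmian prefix with this last morphism produces exactly the three families (i), (ii), (iii); a direct check, using the decompositions of Section~\ref{section: caraterisation}, shows that each composite lies in $\S^{*}$.

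The hard part will be the bookkeeping of the second step: justifying that the loop count $k$ translates into the offset $|u_2|+(k-1)(|u_2|+|v_2|)$, and that the neutral $R_2$-side explosions preceding the critical one leave the circuits based at $R_1$ combinatorially untouched, so that they contribute only letter-to-letter morphisms and do not perturb the count $h$. One must also match carefully the two boundary regimes of~\eqref{eq: inequ 78} with the correct final morphism, i.e.\ check that the strict range $0\le h<\max\{1,\ell\}$ goes with the neutral terminations while the endpoint $h=\max\{0,\ell\}$ goes with the forced cap.
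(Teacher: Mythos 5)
Your plan follows the paper's proof essentially verbatim: you track the two families of bispecial factors $B_1(i)$ and $B_2(j)$ through their offsets $|u_1|+i(|u_1|+|v_1|)$ and $|u_2|+j(|u_2|+|v_2|)$, use inequality~\eqref{eq: inequ 78} to cap at $\ell$ the number of straight-through explosions on the $R_1$ side (each coded by $[0,10]$ or $[0,10,20]$), and split on whether some $B_1(i)$ diverts the graph before $B_2(k-1)$ explodes (giving $h<\max\{1,\ell\}$ and types $1$ or $5/6$) or all of them pass straight through (giving $h=\max\{0,\ell\}$ and types $1$ or $9$ based at $B$), exactly as in the paper. The one loose point is your description of the terminating event: the dichotomy is not governed by bilateral order or by Lemma~\ref{lemma: bispecial fort donne 2 speciaux droits} (both terminations involve neutral or weak explosions), but by which explosion pattern the relevant $B_1(i)$ realizes, and in the early-break case the graph first passes through a type-$9$ interlude based at $R$ before the letter-to-letter explosion of $B_2(k-1)$ delivers type $1$ or $5/6$ --- which is precisely the bookkeeping the paper carries out and which you correctly flag as the remaining work.
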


\begin{proof}
First let us study which are the bispecial vertices we have to deal with. It is a direct consequence of the definition of Rauzy graphs that for $i$ and $j$ in $\N$, the words $B_1(i) = \lambda \left( u_1 (v_1 u_1)^i \right)$ and $B_2(j) = \lambda \left( u_2 (v_2 u_2)^j \right)$ respectively admit $L_1$ and $L_2$ as prefixes and $R_1$ and $R_2$ as suffixes. For all $i,j$, we write $e_1(i) = |B_1(i)| = t + |u_1| + i (|u_1|+|v_1|)$ and $e_2(j) = |B_2(j)| = t + |u_2| + j (|u_2|+|v_2|)$. Inequality~\eqref{eq: inequ 78} therefore provides some information on the order the bispecial vertices $B_1(\ell-1)$, $B_2(k-1)$ and $B_1(\ell)$ (if they exist) explode.

By hypothesis, the path $u_2(v_2 u_2)^k$ is allowed in $G_t$ (since it is a subpath of a $t$-circuit). This implies that $B_2(j)$ is a bispecial factor in $\fac{}{X}$ for all $j \in \{ 0,1, \dots, k-1 \}$ and this also gives us some information on the way they explode in their respective Rauzy graphs. Indeed, if there are 2 (resp. 3) $t$-circuits starting from $R_1$ in $G_t$, then in the Rauzy graph $G_{e_2(j)}$, the vertex $B_2(j)$ explodes as in Figure~\ref{figure: $j < k-1$} if $j < k-1$ and as in Figure~\ref{figure: $j = k-1$ and 2 circuits} (resp. in Figure~\ref{figure: $j = k-1$ and 3 circuits}) if $j = k-1$. 

\begin{figure}[h!tbp]
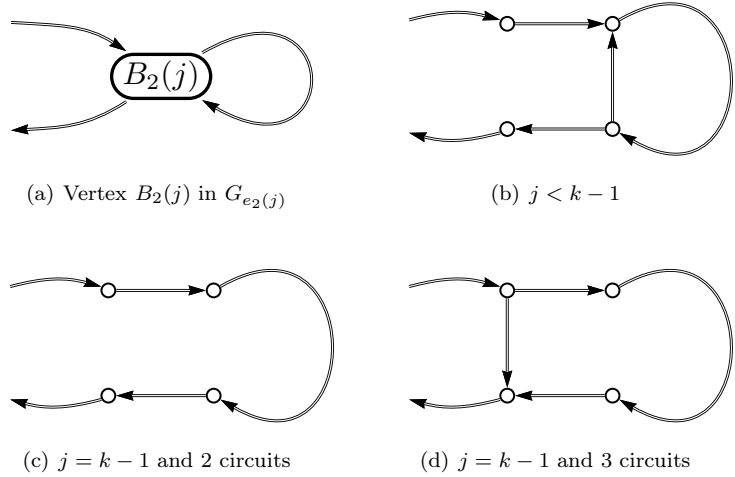

\centering
\subfigure[Vertex $B_2(j)$ in $G_{e_2(j)}$]{
\label{figure: Vertex $B_2(j)$ in $G_{e_2(j)}$}
\scalebox{0.7}{
\begin{VCPicture}{(0,0)(6,4)}
\ChgEdgeLabelScale{0.8}
\StateVar[B_2(j)]{(3,2)}{B2}
\ChgStateLineColor{white}
\EdgeLineDouble
\VSState{(0,3)}{HG}
\VSState{(0,1)}{BG}
\RstStateLineColor
\RstEdgeLineStyle
\ArcL{HG}{B2}{}
\ArcL{B2}{BG}{}
\LoopE{B2}{}
\end{VCPicture}
}}
\qquad
\subfigure[$j < k-1$]{
\label{figure: $j < k-1$}
\scalebox{0.7}{
\begin{VCPicture}{(0,0)(6,4)}
\ChgEdgeLabelScale{0.8}
\VSState{(2,3)}{aB}
\VSState{(2,1)}{Bb}
\VSState{(4,3)}{Ba}
\VSState{(4,1)}{bB}
\ChgStateLineColor{white}
\VSState{(0,3)}{HG}
\VSState{(0,1)}{BG}
\RstStateLineColor
\EdgeLineDouble
\ArcL{HG}{aB}{}
\ArcL{Bb}{BG}{}
\Edge{aB}{Ba}
\Edge{bB}{Ba}
\Edge{bB}{Bb}
\VCurveL[]{angleA=30,angleB=-30,ncurv=3}{Ba}{bB}{}
\RstEdgeLineStyle
\end{VCPicture}
}}
\\
\subfigure[$j = k-1$ and 2 circuits]{
\label{figure: $j = k-1$ and 2 circuits}
\scalebox{0.7}{
\begin{VCPicture}{(0,0)(6,4)}
\ChgEdgeLabelScale{0.8}
\VSState{(2,3)}{aB}
\VSState{(2,1)}{Bb}
\VSState{(4,3)}{Ba}
\VSState{(4,1)}{bB}
\ChgStateLineColor{white}
\VSState{(0,3)}{HG}
\VSState{(0,1)}{BG}
\RstStateLineColor
\EdgeLineDouble
\ArcL{HG}{aB}{}
\ArcL{Bb}{BG}{}
\Edge{aB}{Ba}
\Edge{bB}{Bb}
\VCurveL[]{angleA=30,angleB=-30,ncurv=3}{Ba}{bB}{}
\RstEdgeLineStyle
\end{VCPicture}
}}
\qquad
\subfigure[$j = k-1$ and 3 circuits]{
\label{figure: $j = k-1$ and 3 circuits}
\scalebox{0.7}{
\begin{VCPicture}{(0,0)(6,4)}
\ChgEdgeLabelScale{0.8}
\VSState{(2,3)}{aB}
\VSState{(2,1)}{Bb}
\VSState{(4,3)}{Ba}
\VSState{(4,1)}{bB}
\ChgStateLineColor{white}
\VSState{(0,3)}{HG}
\VSState{(0,1)}{BG}
\RstStateLineColor
\EdgeLineDouble
\ArcL{HG}{aB}{}
\ArcL{Bb}{BG}{}
\Edge{aB}{Ba}
\Edge{aB}{Bb}
\Edge{bB}{Bb}
\VCurveL[]{angleA=30,angleB=-30,ncurv=3}{Ba}{bB}{}
\RstEdgeLineStyle
\end{VCPicture}
}}
\caption{Explosion of the vertex $B_2(j)$ in $G_{e_2(j)}$.}
\end{figure}

As $U_t = R_1$, we know from Lemma~\ref{lemme: eta = identite ssi} and from Section~\ref{subsection: a procedure to assign letters to circuits} that the explosion of the vertices $B_2(j)$ are coded by the identity morphism for $j \in \{0,\dots, k-2\}$ and by a letter-to-letter morphism for $j = k-1$.

Now let us study the behaviour of the vertex $R_1$. As we do not have any information about the circuits starting from $R_2$, there are several possibilities for the explosion of the vertices $B_1(i)$. First, we can observe that, if for some integer $i < \ell$, the word $B_1(i)$ belongs to $\fac{}{X}$, then for all $h < i$, the word $B_1(h)$ is a bispecial factor in $\fac{}{X}$ and it explodes like $B_2(j)$ in Figure~\ref{figure: $j < k-1$}. Each of these evolutions is coded by $[0,10,20]$ (or by $[0,10]$ if there are only 2 circuits). On the other hand, if $B_1(i)$ is a bispecial factor of length $l < e_2(k-1)$ in $\fac{}{X}$ and if it explodes in $G_l$ similarly to $B_2(j)$ in Figure~\ref{figure: $j = k-1$ and 3 circuits}, then $G_l$ evolves to a graph of type 9 such that the starting vertex of the circuits corresponds to the vertex $R$ in Figure~\ref{figure: type 9}. Consequently, the right special vertex in $G_{l+1}$ that arises from $B_1(i)$ will not become bispecial until $B_2(k-1)$ has exploded. The evolution from $G_l$ to $G_{l+1}$ is coded by the morphism $[01,1]$ or $[1,01]$ if there are only 2 $l$-circuits and by one of the four following morphisms if there are three $l$-circuits: $[01,1,02]$, $[1,01,2]$, $[01,2,(02)]$ and $[1,02,(2)]$. Observe that $B_1(i)$ cannot explode similarly to $B_2(j)$ in Figure~\ref{figure: $j = k-1$ and 2 circuits} as that would imply that the sequence of right special vertices $(U_n)_{n \in \N}$ is finite.

To conclude the proof, it suffices to list all the possibilities for the explosions of the vertices $B_1(i)$. By hypothesis, $\ell$ is an integer such that
\[
	|u_1| + (\ell-1) (|u_1| + |v_1|) < |u_2| + (k-1) (|u_2| + |v_2|) \leq |u_1| + \ell (|u_1| + |v_1|)
\]
and we know that the vertices $B_1(i)$ and $B_2(j)$ respectively have length $e_1(i) = t+ |u_1| + i (|u_1| + |v_1|)$ and $e_2(j) = t + |u_2| + j (|u_2| + |v_2|)$ for all non-negative integers $i$ and $j$. Consequently, while $B_2(k-1)$ has not exploded yet, the vertex $B_1(i)$ (if it exists) has two possibilities: either it makes the graph evolve to a graph of type 7 or 8 with the morphism $[0,10,(20)]$, or it makes it evolve to a graph of type 9 with one of the morphisms $[01,1,(02)]$, $[1,01,(2)]$, $[01,2,(02)]$ and $[1,02,(2)]$.

First suppose that the graph is not of type 7 or 8 anymore when the vertex $B_2(k -1)$ explodes. The only possibility is that $\ell \geq 1$ and that a vertex $B_1(i)$, $0 \leq i \leq \ell-1$, has exploded as in Figure~\ref{figure: $j = k-1$ and 3 circuits}, making the graph evolve to a graph of type 9 with one of the morphisms $[01,1,(02)]$, $[1,01,(2)]$, $[01,2,(02)]$ and $[1,02,(2)]$. Observe that each of the explosions of $B_1(0), B_1(1), \dots, B_1(i-1)$ is coded by $[0,10,20]$. Then, the only bispecial vertices that occur in the next Rauzy graphs are vertices $B_2(j)$ for $j \in \{l',\dots, k-1\}$ and $l'$ the smallest integer such that $e_2(l') \geq e_1(i)$. They imply the following behaviours: for $j<k-1$, the explosions of $B_2(j)$ are coded by the identity morphism. For $j=k-1$, if there are three circuits starting from $B_1(i)$ and if its explosion is coded by the morphism $[01,1,02]$ or $[1,01,2]$ (resp. $[01,2,(02)]$ or $[1,02,(2)]$), then the explosion of $B_2(k)$ is coded by $[2,1,0]$ (resp. $[0,1,2]$). Consequently, the graph eventually evolves to a graph of type 5 or 6 and the composition of the morphisms is in
\begin{equation}
\label{eq: morphism lemma 78 1}
	\left\{	[0,10,20]^h \left\{ [0x,y,(0y)],[x,0y,(y)] \right\} \mid \{x,y\} = \{1,2\}, 0 \leq h < \max\{1,\ell\} \right\}.
\end{equation}
Still for $j = k-1$, if there are 2 circuits starting from $B_1(i)$, then the morphism coding its explosion is $[01,1]$ or $[1,01]$ and then the graph will evolve to a graph of type 1 with the morphism $[0,1]$ or $[1,0]$ (by exploding vertices $B_2(j)$). Consequently, the composition of morphisms coding this sequence of evolutions is in
\begin{eqnarray}
\label{eq: morphism lemma 78 2}
	\left\{	[0,10]^h \left\{ [01,1], [1,01] \right\} \mid 0 \leq i < \max\{1,\ell\} \right\}.
\end{eqnarray}

Now suppose that the graph is still of type 7 or 8 when the vertex $B_2(k -1)$ has exploded. If $\ell \geq 1$, this implies that the vertices $B_1(i)$ have exploded with the morphism $[0,10,(20)]$ for $i = 0,\dots,\ell-1$ (so we have $[0,10,(20)]^{\ell}$). Then, when the vertex $B_2(k-1)$ explodes, it makes the graph evolve to a graph $G_{i_m}$ of type 1 or 9 depending on the number of circuits (2 or 3 respectively). If the vertex $B_1(\ell)$ has the same length, we can suppose from Lemma~\ref{lemma: decomposition du mot directeur} that it does not explode at the same time so we can suppose that the graph does not evolve to a graph of type 7 or 8 (like it actually could with the morphism $[x,y^mx,(y^{m-1}x)]$). Consequently, we only have to consider the evolutions to graphs of type 1 or 9. They are respectively coded by $[0,1]$ or $[1,0]$ and by $[0,1,2]$ or $[0,2,1]$ and once this evolution is done, the next bispecial vertex is in $(U_n)_{n \in \N}$. 
\end{proof}

The next lemma will allow us to delete the vertex 9 in $\G$. Indeed, we can see in Figure~\ref{figure: graph of graphs} (page~\pageref{figure: graph of graphs}) that the only types of graphs that can evolve to a graph of type 9 are types 9 and 7 or 8. The next lemma states that we can modify the outgoing edges of the vertex $7/8$ such that the vertex 9 is isolated in $\G$.

\begin{lemma}
\label{lemma: C4 type 9}
In Lemma~\ref{Lemma: graph of type 7 or 8}, we can delete the third case of all possible evolutions (the one to graphs of type 9) by replacing the set of morphisms coding the evolutions to graphs of type 5 or 6 (the second case) by 
\[
	\left\{ [0,10,20]^h \{ [0x,y,(0y)],[x,0y,(y)] \} \mid \{x,y\} = \{1,2\}, h \in \N \right\}.
\]
We can also replace the morphisms coding the evolution to graphs of type 1 (the first case) by
\[
	\left\{ [0,10]^h \left\{ [01,1], [1,01] \right\} \mid h \in \N \right\}	
	\cup \left\{ [0,10]^h [x,y] \mid \{x,y\} = \{0,1\}, h \geq \max\{0, \ell\} \right\}
\]
\end{lemma}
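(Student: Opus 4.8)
The plan is to show that every valid continuation of a directive word that enters vertex $9$ (necessarily via the third case of Lemma~\ref{Lemma: graph of type 7 or 8}) can be rewritten as a single edge from $7/8$ directly to $5/6$ or to $1$, labelled by a morphism in the enlarged families announced. Once this is done, vertex $9$ carries no incoming edge that a relevant path must traverse, and it becomes isolated. First I would fix the situation at the instant a type-$9$ graph is reached: by Lemma~\ref{Lemma: graph of type 7 or 8}(iii) the composition coding this first step lies in $\{[0,10,20]^h[0,x,y] : \{x,y\}=\{1,2\},\ h=\max\{0,\ell\}\}$ and the starting vertex $U_m$ is the vertex $B$ of Figure~\ref{figure: type 9}, so that $\theta_{i_m}(0)$ is the loop $B\to B$.

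The core of the argument is an analysis of the evolutions leaving a type-$9$ graph with $U=B$. Reading the graph of graphs (Figure~\ref{figure: graph of graphs}), the only types reachable from $9$ are $3$, $4$, $5$ and $9$ itself; since $3$ and $4$ leave the component $C_4$ and are handled elsewhere, I only have to follow the self-loop $9\to 9$ and the edge $9\to 5$. As in Example~\ref{ex: not valid path 1} (see also Remark~\ref{remark: choix des circuits et consequences}), in every $9\to 9$ evolution the loop circuit is frozen, i.e.\ $\psi_{i_n}(\theta_{i_n+1}(0))=\theta_{i_n}(0)$, so the coding morphism fixes $0$; computing it shows it equals $[0,10,20]$ (or $[0,10]$ when only two circuits survive). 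Because $\theta_{i_n}(0)$ then keeps a bounded length, Lemma~\ref{lemma: pas de court} forbids infinitely many consecutive $9\to 9$ steps: after finitely many the graph must take the edge $9\to 5$, whose coding morphism is one of $\{[0x,y,(0y)],[x,0y,(y)]\}$ (respectively $\{[01,1],[1,01]\}$ or $[x,y]$ for the collapse to type $1$).

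It then remains to compose the morphisms accumulated along the excursion. I would use that $[0,1,2]=\mathrm{id}$ and that $E_{12}$ commutes with $[0,10,20]$ — a direct check gives $E_{12}[0,10,20]E_{12}=[0,10,20]$ — so the factor $[0,x,y]\in\{\mathrm{id},E_{12}\}$ produced on entry can be pushed to the right past the accumulated powers of $[0,10,20]$, at worst exchanging $x$ and $y$ in the final exit morphism. Since the target families allow both choices $\{x,y\}=\{1,2\}$, the whole excursion $7/8\to 9\to\cdots\to 5/6$ collapses to $[0,10,20]^{h}\{[0x,y,(0y)],[x,0y,(y)]\}$ where $h=\ell+(\text{number of }9\to 9\text{ steps})$; as this number ranges over $\N$ and the range in Lemma~\ref{Lemma: graph of type 7 or 8}(ii) was $0\le h<\max\{1,\ell\}$, adjoining the two yields all of $\N$. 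The same bookkeeping for the collapse to type $1$ (where a third circuit eventually disappears, forcing a letter-to-letter ending) produces the extra family $\{[0,10]^h[x,y] : \{x,y\}=\{0,1\},\ h\ge\max\{0,\ell\}\}$ and extends $\{[0,10]^h\{[01,1],[1,01]\}\}$ to all $h\in\N$. This is exactly the replacement claimed.

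The step I expect to be the main obstacle is the last one: verifying that the morphisms accumulated along a type-$9$ excursion recombine \emph{exactly} into the stated families — in particular getting the bound $h\ge\max\{0,\ell\}$ right, correctly tracking when the third circuit vanishes (so that the excursion terminates at type $1$ with a letter-to-letter morphism rather than at type $5/6$), and confirming that weak primitivity together with the existence of a right-proper contraction is preserved under the rewriting. These are precisely the points where the two- versus three-circuit distinction and the conjugation identity above have to be handled with care.
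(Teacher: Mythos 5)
Your treatment of the first replacement (the exit to $5/6$) follows the paper's own route: you compose the entry morphism $[0,10,20]^{h}[0,x,y]$ with $j$ copies of the $9\to 9$ morphism and a $9\to 5/6$ exit morphism, let $j$ range over $\N$, and take the union with the range $0\le h<\max\{1,\ell\}$ already present in the second case of Lemma~\ref{Lemma: graph of type 7 or 8}; the $E_{1,2}$-commutation you supply is a detail the paper leaves implicit (``Since $j$ can be arbitrarily large, this set is equal to\dots''). Two slips there, neither fatal: the out-edges of vertex $9$ in Figure~\ref{figure: graph of graphs} go to types $1$, $5$, $6$ and $9$ (you appear to have read the internal node names of the figure rather than the displayed types, so your claim that $9$ reaches types $3$ and $4$ is false, though harmless since you end up analysing the correct cases); and the $9\to 9$ morphism with starting vertex $B$ is $[0,x0,y0]$ with $\{x,y\}=\{1,2\}$, i.e.\ $[0,10,20]$ \emph{or} $[0,20,10]$, which your permutation bookkeeping does absorb. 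The appeal to Lemma~\ref{lemma: pas de court} to rule out an infinite stay in vertex $9$ is true but not needed here; that is a weak-primitivity issue handled separately.

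The genuine gap is in the second replacement (the type-$1$ families). You derive them from excursions $7/8\to 9\to\cdots\to 9\to 1$ in which ``a third circuit eventually disappears, forcing a letter-to-letter ending''. No such excursion exists: when type $9$ is entered from $7/8$ via case (iii) of Lemma~\ref{Lemma: graph of type 7 or 8}, the starting vertex $U_m$ is the vertex $B$ of Figure~\ref{figure: type 9}; the loop $9\to 9$ preserves the configuration $(B,B)$; and the evolution $9\to 1$ is only available from the configuration $(R,B)$ (see Table~\ref{List of morphisms coding the evolutions of a graph of type 9}). Hence from inside a type-$9$ excursion one can only exit to $5/6$, and the type-$1$ enlargement cannot come from vertex $9$ at all. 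The correct justification is independent of vertex $9$: the morphism $[0,10]$ (and, after the normalization of Fact~\ref{fact: graph of type 1}, $[01,1]$) also codes an evolution from a graph of type $1$ to a graph of type $1$, so a word $[0,10]^{h}[x,y]$ with $h>\max\{0,\ell\}$ is the original exit morphism of case (i) followed by $h-\max\{0,\ell\}$ such evolutions, and similarly for the other family. Without this observation your argument does not establish the second displayed set.
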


\begin{proof}
Indeed, in Lemma~\ref{Lemma: graph of type 7 or 8} the morphisms coding the evolution to a graph of type 9 are in
\[
	\left\{ [0,10,20]^h [0,x,y] \mid \{x,y\} = \{1,2\} , h = \max\{0,\ell\} \right\}.
\]
But, once the graph is of type 9 with $U_{i_n} = B$, it can only evolve either to a graph of type 9 with $U_{i_{n+1}} = B$, or to a graph of type 5 or 6 with a morphism in $\{ [0x,y,(0y)], [x,0y,(y)] \mid \{x,y\} = \{1,2\} \}$. Consequently, the composition of evolution
\[
	7/8 (\to 9)^j \to 5/6 
\]
is coded by a morphism in
\[
	\left\{ [0,10,20]^h [0,x,y] [0,x0,y0]^j \{[0x,y,(0y)], [x,0y,(y)] \} \mid 
	\{x,y\} = \{1,2\} , h = \max\{0,\ell\} \right\}.
\]
Since $j$ can be arbitrarily large, this set is equal to
\[
	\left\{ [0,10,20]^h \{ [0x,y,(0y)],[x,0y,(y)] \} \mid \{x,y\} = \{1,2\}, h \in \N \right\}.
\]
For the second part (evolution to graphs of type 1), it suffices to observe that all considered morphisms also code evolutions from a graph of type 1 to a graph of type 1. Consequently, if $h$ is chosen greater than $\max\{0,\ell\}$, the morphism $[0,10]^{h - \max\{0,\ell\}}$ is simply coding $h - \max\{0,\ell\}$ evolutions from 1 to 1.
\end{proof}
}

{
}

The last type of graph that has not been treated yet is the type 10. The next lemma does it.

\begin{lemma}
\label{Lemma: graph of type 10}
Let $G_{i_n}$ be a Rauzy graph of type 10. Suppose that $U_{i_n}$ corresponds to the vertex $R$ in Figure~\ref{figure: type 10} and that the two $i_n$-circuits $\theta_{i_n}(0)$ and $\theta_{i_n}(1)$ respectively go through the loop $k$ and $\ell$ times with $k, \ell \geq 0$ and $k + \ell \geq 1$.
\newline
If the circuit $\theta_{i_n}(2)$ exists and starts like $\theta_{i_n}(0)$ does (recall that $\ell \leq k$ in this case), then
\begin{enumerate}[i.]
	\item	if $\ell = k$, $G_{i_n}$ will evolve to a Rauzy graph $G_{i_m}$, $m>n$, of type 10 such that $U_{i_m}$ corresponds to the vertex $B$ in Figure~\ref{figure: type 10}. This evolution is coded by the morphism $[1,0,2]$;
	\item	if $\ell < k$, $G_{i_n}$ will evolve to a Rauzy graph $G_{i_m}$, $m>n$, of type 7 or 8 such that the $i_m$-circuit $\theta_{i_n}(1)$ starting from $U_{i_m}$ goes through the loop $k' = k-\ell$ times. This evolution is also coded by the morphism $[1,0,2]$.
\end{enumerate}
If the circuit $\theta_{i_n}(2)$ exists and starts like $\theta_{i_n}(1)$ do (recall that $k \leq \ell -1$ in this case), then
\begin{enumerate}[i.]
	\item	if $k = \ell-1$, $G_{i_n}$ will evolve to a Rauzy graph $G_{i_m}$, $m>n$, of type 10 such that $U_{i_m}$ corresponds to the vertex $B$ in Figure~\ref{figure: type 10}. This evolution is coded by the morphism $[0,1,2]$;
	\item	if $k < \ell -1$, $G_{i_n}$ will evolve to a Rauzy graph $G_{i_m}$, $m>n$, of type 7 or 8 such that the $i_m$-circuit $\theta_{i_n}(1)$ starting from $U_{i_m}$ goes through the loop $k' = \ell-k-1$ times. This evolution is again coded by the morphism $[0,1,2]$.	
\end{enumerate}
If the circuit $\theta_{i_n}(2)$ does not exist, then
\begin{enumerate}[i.]
	\item	if $\ell \in \{k,k+1\}$ , $G_{i_n}$ will evolve to a Rauzy graph $G_{i_m}$, $m>n$, of type 1. This evolution is coded by a morphism in $\{[0,1],[1,0]\}$;
	\item	if $\ell<k$, $G_{i_n}$ will evolve to a Rauzy graph $G_{i_m}$, $m>n$, of type 7 or 8 such that the $i_m$-circuit $\theta_{i_n}(1)$ starting from $U_{i_m}$ goes through the loop $k' = k-\ell$ times. This evolution is coded by the morphism $[1,0]$.
	\item	if $\ell>k+1$, $G_{i_n}$ will evolve to a Rauzy graph $G_{i_m}$, $m>n$, of type 7 or 8 such that the $i_m$-circuit $\theta_{i_n}(1)$ starting from $U_{i_m}$ goes through the loop $k' = \ell-k-1$ times. This evolution is coded by the morphism $[0,1]$.
\end{enumerate}
\end{lemma}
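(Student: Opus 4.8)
The plan is to argue exactly as in the proof of Lemma~\ref{lemma: allowed path for type 4}, reducing everything to an analysis of how the bispecial vertex $B$ and its successors explode as the order of the Rauzy graph increases, until either the type stabilizes or a new element of the sequence $(U_n)_{n\in\N}$ is reached. First I would fix the geometry of a type~10 graph (Figure~\ref{figure: type 10}): the unique left-special vertex $L$ lies both on the segment $y$ (the route $R\to L\to B$) and on the loop at $B$ (which is the cycle $B\to L\to B$), whereas the segment $x$ is the direct edge $R\to B$ that avoids $L$. With this reading the circuit $\theta_{i_n}(0)$ reaches $B$ through $L$ and then winds $k$ times around the loop, while $\theta_{i_n}(1)$ reaches $B$ directly and winds $\ell$ times; when it exists, $\theta_{i_n}(2)$ is the shortened copy (one loop fewer) of whichever of these two it imitates.

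The core computation is the explosion of $B$. Since $\delta^+(B)=\delta^-(B)=2$, the vertex splits into two prefixed copies and two suffixed copies, and the number of arcs joining them is controlled by the bilateral order $m(B)$, itself dictated by the circuits through $B$. As in Figures~\ref{figure: evolutions of a type 4 with C} and~\ref{figure: evolutions of a type 4 without C}, only a few local pictures are admissible: those that would make $m(B)>0$ are ruled out by Lemma~\ref{lemma: trop de boucles donne un bispecial fort} together with the standing hypothesis $p(n+1)-p(n)\le 2$ (this is precisely why the constraints $\ell\le k$, resp.\ $k\le\ell-1$, are imposed on $\theta_{i_n}$ for these graphs). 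The effect of one admissible explosion step is to peel off exactly one turn of the loop from each surviving circuit and to relabel them, which produces the swap morphism $[1,0,2]=E_{01}$ when the $y$-circuit remains primary and the identity $[0,1,2]$ when the $x$-circuit does.

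I would then carry out the stated case distinction. Iterating the explosion decreases the loop counts of the circuits in lockstep, so the evolution is governed by comparing $k$ and $\ell$. The key point, and the source of the apparent asymmetry in the statement, is that $L$ is shared between the $y$-segment and the loop: this gives the $y$-circuit a one-edge head start around the loop relative to the direct $x$-circuit, shifting the balance to the interval $\ell\in\{k,k+1\}$. Consequently, in the two-circuit case the loop is exhausted---and the graph collapses to type~1 (Figure~\ref{figure: type 1}, morphism $[0,1]$ or $[1,0]$)---exactly when $\ell\in\{k,k+1\}$, while an imbalance leaves a type~7 or~8 graph (Figures~\ref{figure: type 7},~\ref{figure: type 8}) carrying the remainder $k'=k-\ell$ (coded by $[1,0]$, $y$ dominant) or $k'=\ell-k-1$ (coded by $[0,1]$, $x$ dominant). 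With the third circuit present the same arithmetic applies, but now the surviving circuit keeps a bispecial descendant, so the balanced cases $\ell=k$ (resp.\ $k=\ell-1$) land on type~10 with $U_{i_m}=B$ rather than on type~1, and the unbalanced cases again give type~7 or~8 with the same remainders, all coded by $[1,0,2]$ (resp.\ $[0,1,2]$).

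The main obstacle I expect is purely bookkeeping: reading off the bilateral order correctly at each explosion and, above all, keeping the loop arithmetic consistent through the iteration---in particular justifying the off-by-one that distinguishes the remainder $k-\ell$ on the $y$-side from $\ell-k-1$ on the $x$-side. This off-by-one is geometric (it records whether the dominating circuit's entry segment shares $L$ with the loop), and once it is pinned down the morphism identities follow mechanically from the defining relation $\theta_n\gamma_n=\psi_n\theta_{n+1}$ of Definition~\ref{definition: definition des morphismes}.
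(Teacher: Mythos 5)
Your overall strategy is the same as the paper's: iterate the explosion of the bispecial vertex $B$, observe that a type~10 graph keeps evolving to a type~10 graph (the configuration of Figure~\ref{figure: evolution 1 of type 10}), track the loop counts of the circuits through the iteration, and read off the terminal type and morphism from a comparison of $k$ and $\ell$. Your geometric reading of Figure~\ref{figure: type 10} (the segment $y$ and the loop at $B$ sharing the left-special vertex $L$, $x$ being the direct segment) is also the correct source of the asymmetry.

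The one concrete problem is the per-step bookkeeping, which is exactly the point you flag as the "main obstacle" but then misstate. You claim that one explosion step "peels off exactly one turn of the loop from each surviving circuit," and that a single step is coded either by $[1,0,2]$ or by the identity. In fact each type~10 $\to$ type~10 step is always coded by the swap $[1,0,(2)]$, and the loop counts transform by
\[
	\bigl(k_{i_n+1}(0),\,k_{i_n+1}(1)\bigr) \;=\; \bigl(k_{i_n}(1)-1,\; k_{i_n}(0)\bigr),
\]
with $k_{i_n+1}(2)=k_{i_n}(2)$ or $k_{i_n}(2)-1$ according to which circuit $\theta_{i_n}(2)$ imitates: the two circuits exchange roles and only the one that becomes the new $\theta(0)$ loses a turn. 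It is this alternating decrement that produces the stopping window $\ell\in\{k,k+1\}$, the asymmetric remainders $k-\ell$ versus $\ell-k-1$, and (via the parity of the number of swaps composed with the final evolution) the distinction between the total morphisms $[1,0,2]$ and $[0,1,2]$. If one instead decrements both counts simultaneously, as your text literally says, the stopping analysis gives symmetric remainders $|k-\ell|$ and never yields the $-1$; the appeal to the "one-edge head start" at $L$ is the right intuition for why the recursion is asymmetric, but it is not a substitute for writing the recursion down. Once the recursion is corrected, the rest of your plan goes through and coincides with the paper's argument.
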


\begin{proof}
Indeed, if the vertex $B$ in Figure~\ref{figure: type 10} explodes as in Figure~\ref{figure: evolution 1 of type 10}, the new graph is still of type 10. This evolution is coded by the morphism $[1,0,(2)]$. Moreover, if we denote by $k_{i_n}(0)$ (resp. $k_{i_n}(1)$, $k_{i_n}(2)$) the number of times that the $i_n$-circuit $\theta_{i_n}(0)$ (resp. $\theta_{i_n}(1)$, $\theta_{i_n}(2)$) goes through the loop, then we have $k_{i_n+1}(0) = k_{i_n}(1)-1$ and $k_{i_n+1}(1) = k_{i_n}(0)$. We also have $k_{i_n+1}(2) = k_{i_n}(2)$ if the $i_n$-circuit $\theta_{i_n}(2)$ starts like $\theta_{i_n}(0)$ does and $k_{i_n+1}(2) = k_{i_n}(2)-1$ if the $i_n$-circuit $\theta_{i_n}(2)$ starts like $\theta_{i_n}(1)$ does. Consequently, this evolution is repeated until either $k_{i_{n'}}(1) =0$ or $k_{i_{n'}}(0) = 0$ and $k_{i_{n'}}(1) = 1$ for some $n' \geq n$. Then the graph $G_{i_{n'}}$ evolves to a Rauzy graph of type 1, 7, 8 or 9 depending on $k_{i_{n'}}(0)$, $k_{i_{n'}}(1)$ and $k_{i_{n'}}(2)$ (if the circuit $\theta_{i_n}(2)$ exists). The computation of the morphism coding this last evolution is left to the reader.
\end{proof}

\begin{figure}[h!tbp]
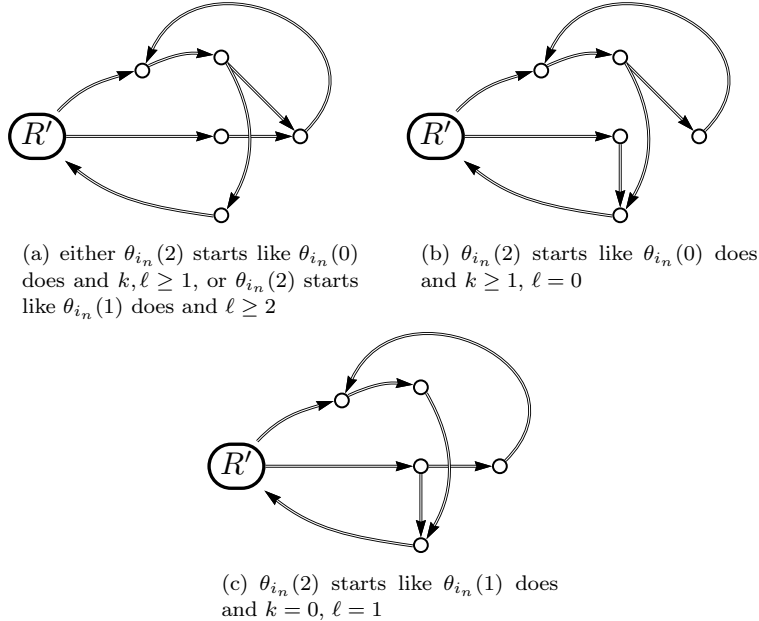

\begin{center}
\subfigure[either $\theta_{i_n}(2)$ starts like $\theta_{i_n}(0)$ does and $k, \ell \geq 1$, or $\theta_{i_n}(2)$ starts like $\theta_{i_n}(1)$ does and $\ell \geq 2$]{
\label{figure: evolution 1 of type 10}
\scalebox{0.7}{
\begin{VCPicture}{(0,0.5)(6,5)}
\ChgEdgeLabelScale{0.8}
\StateVar[R']{(0,2.5)}{R'}
\VSState{(2,3.75)}{HGG}
\VSState{(3.5,4)}{HG}
\VSState{(3.5,2.5)}{MG}
\VSState{(3.5,1)}{BG}
\VSState{(5,2.5)}{MD}
\EdgeLineDouble
\Edge{HG}{MD}
\Edge{MG}{MD}
\LArcL{HG}{BG}{}
\EdgeLineDouble
\ArcL{R'}{HGG}{}
\ArcL{HGG}{HG}{}
\Edge{R'}{MG}
\ArcL{BG}{R'}{}
\VCurveR[]{angleA=45,angleB=65,ncurv=1.5}{MD}{HGG}{}
\end{VCPicture}
}}
\qquad
\subfigure[$\theta_{i_n}(2)$ starts like $\theta_{i_n}(0)$ does and $k \geq 1$, $\ell =0$]{
\label{figure: evolution 2 of type 10}
\scalebox{0.7}{
\begin{VCPicture}{(0,0.5)(6,5)}
\ChgEdgeLabelScale{0.8}
\StateVar[R']{(0,2.5)}{R'}
\VSState{(2,3.75)}{HGG}
\VSState{(3.5,4)}{HG}
\VSState{(3.5,2.5)}{MG}
\VSState{(3.5,1)}{BG}
\VSState{(5,2.5)}{MD}
\EdgeLineDouble
\Edge{HG}{MD}
\Edge{MG}{BG}
\LArcL{HG}{BG}{}
\EdgeLineDouble
\ArcL{R'}{HGG}{}
\ArcL{HGG}{HG}{}
\Edge{R'}{MG}
\ArcL{BG}{R'}{}
\VCurveR[]{angleA=45,angleB=65,ncurv=1.5}{MD}{HGG}{}
\end{VCPicture}
}}
\qquad
\subfigure[$\theta_{i_n}(2)$ starts like $\theta_{i_n}(1)$ does and $k=0$, $\ell =1$]{
\label{figure: evolution 3 of type 10}
\scalebox{0.7}{
\begin{VCPicture}{(0,0.5)(6,5)}
\ChgEdgeLabelScale{0.8}
\StateVar[R']{(0,2.5)}{R'}
\VSState{(2,3.75)}{HGG}
\VSState{(3.5,4)}{HG}
\VSState{(3.5,2.5)}{MG}
\VSState{(3.5,1)}{BG}
\VSState{(5,2.5)}{MD}
\EdgeLineDouble
\Edge{MG}{MD}
\Edge{MG}{BG}
\LArcL{HG}{BG}{}
\EdgeLineDouble
\ArcL{R'}{HGG}{}
\ArcL{HGG}{HG}{}
\Edge{R'}{MG}
\ArcL{BG}{R'}{}
\VCurveR[]{angleA=45,angleB=65,ncurv=1.5}{MD}{HGG}{}
\end{VCPicture}
}}
\end{center}
\caption[Evolutions of a graph of type 10 with 3 circuits from $R$.]{Evolutions of a graph of type 10 with 3 circuits starting from $R$.}
\label{figure: evolutions of a type 10 with C}
\end{figure}

\begin{figure}[h!tbp]
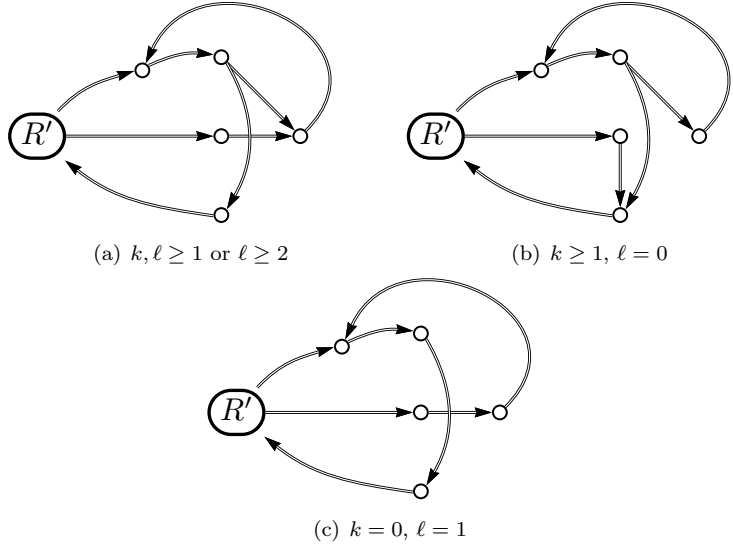

\begin{center}
\subfigure[$k,\ell \geq 1$ or $\ell \geq 2$]{
\label{figure: evolution 4 of type 10}
\scalebox{0.7}{
\begin{VCPicture}{(0,0.5)(6,5)}
\ChgEdgeLabelScale{0.8}
\StateVar[R']{(0,2.5)}{R'}
\VSState{(2,3.75)}{HGG}
\VSState{(3.5,4)}{HG}
\VSState{(3.5,2.5)}{MG}
\VSState{(3.5,1)}{BG}
\VSState{(5,2.5)}{MD}
\EdgeLineDouble
\Edge{HG}{MD}
\Edge{MG}{MD}
\LArcL{HG}{BG}{}
\EdgeLineDouble
\ArcL{R'}{HGG}{}
\ArcL{HGG}{HG}{}
\Edge{R'}{MG}
\ArcL{BG}{R'}{}
\VCurveR[]{angleA=45,angleB=65,ncurv=1.5}{MD}{HGG}{}
\end{VCPicture}
}}
\qquad
\subfigure[$k \geq 1$, $\ell =0$]{
\label{figure: evolution 5 of type 10}
\scalebox{0.7}{
\begin{VCPicture}{(0,0.5)(6,5)}
\ChgEdgeLabelScale{0.8}
\StateVar[R']{(0,2.5)}{R'}
\VSState{(2,3.75)}{HGG}
\VSState{(3.5,4)}{HG}
\VSState{(3.5,2.5)}{MG}
\VSState{(3.5,1)}{BG}
\VSState{(5,2.5)}{MD}
\EdgeLineDouble
\Edge{HG}{MD}
\Edge{MG}{BG}
\LArcL{HG}{BG}{}
\EdgeLineDouble
\ArcL{R'}{HGG}{}
\ArcL{HGG}{HG}{}
\Edge{R'}{MG}
\ArcL{BG}{R'}{}
\VCurveR[]{angleA=45,angleB=65,ncurv=1.5}{MD}{HGG}{}
\end{VCPicture}
}}
\qquad
\subfigure[$k=0$, $\ell =1$]{
\label{figure: evolution 6 of type 10}
\scalebox{0.7}{
\begin{VCPicture}{(0,0.5)(6,5)}
\ChgEdgeLabelScale{0.8}
\StateVar[R']{(0,2.5)}{R'}
\VSState{(2,3.75)}{HGG}
\VSState{(3.5,4)}{HG}
\VSState{(3.5,2.5)}{MG}
\VSState{(3.5,1)}{BG}
\VSState{(5,2.5)}{MD}
\EdgeLineDouble
\Edge{MG}{MD}
\LArcL{HG}{BG}{}
\ArcL{R'}{HGG}{}
\ArcL{HGG}{HG}{}
\Edge{R'}{MG}
\ArcL{BG}{R'}{}
\VCurveR[]{angleA=45,angleB=65,ncurv=1.5}{MD}{HGG}{}
\end{VCPicture}
}}
\end{center}
\caption[Evolutions of a graph of type 10 with 2 circuits from $R$.]{Evolutions of a graph of type 10 with 2 circuits starting from $R$.}
\label{figure: evolutions of a type 10 without C}
\end{figure}

\subsubsection{Modification of Component $C_4$}
\label{subsubsection: modification of component C_4}

Now we can modify the component $C_4$ of $\G$.

First let us modify the vertices. Lemmas~\ref{Lemma: graph of type 5 or 6} and~\ref{Lemma: graph of type 7 or 8} allow to merge the vertices 5 and 6 to one vertex $5/6$ and the vertices 7 and 8 to one vertex $7/8$. As already mentioned, the vertex 9 can also be deleted (thanks to Lemma~\ref{lemma: C4 type 9}). Finally, Lemma~\ref{Lemma: graph of type 10} describes the sequence of evolutions while $U_{i_n}$ corresponds to the vertex $R$ in a graph of type 10. Consequently, if a graph evolves to a graph of type 10 such that $U_{i_n} = R$, there is only one possible finite sequence of evolutions, the one given by Lemma~\ref{Lemma: graph of type 10}. Consequently, we can simply treat these evolutions by modifying the edges in $C_4$ as explained just below and we rename vertex $10$ by $10B$, meaning that the vertex $U_{i_n}$ always corresponds to the vertex $B$ in Figure~\ref{figure: type 10}.

Now let us modify the edges and/or their labels. All modifications are direct consequences of Fact~\ref{fact: graph of type 1}, Lemma~\ref{Lemma: graph of type 5 or 6}, Lemma~\ref{Lemma: graph of type 7 or 8}, Lemma~\ref{lemma: C4 type 9} and Lemma~\ref{Lemma: graph of type 10}: 
\begin{itemize}
	\item Fact~\ref{fact: graph of type 1} implies that we can consider only two morphisms to label the loop on vertex 1.

	\item Lemma~\ref{Lemma: graph of type 5 or 6} implies that the edges starting from $5/6$ are the same as those starting from $6$ in $\G$.

	\item By Lemma~\ref{Lemma: graph of type 10}, we can replace each morphism $\gamma_{i_n}$ labelling an edge coming to $10$ in $\G$ such that $U_{i_{n+1}}=R$ by the corresponding behaviour given in that lemma. For instance, in $\G$, the morphism $\gamma_{i_n} = [12^k0,2^{\ell}0,12^{k-1}0]$ labels an edge from 6 to 10. By Lemma~\ref{Lemma: graph of type 10}, this morphisms makes the graph of type 10 evolve to a graph of type 7 or 8 or 10 depending on $k$ and $\ell$. Consequently, we delete this morphism and add two morphisms: the morphism $\gamma_{i_n} \circ [1,0,2]$ from $5/6$ to $10B$ with $k = \ell$ (case \textit{i.}) and the morphism $\gamma_{i_n} \circ [1,0,2]$ from $5/6$ to $7/8$ with $\ell < k$.
	
	\item In Lemma~\ref{Lemma: graph of type 7 or 8} (so also in Lemma~\ref{lemma: C4 type 9}), as the behaviours depend on some lengths in Rauzy graphs, we simply consider the needed outgoing edges of the vertex $7/8$ to be able to follow all described behaviours and put some restrictions on the choices in Proposition~\ref{proposition: valid path in C4}. 
\end{itemize}

We then obtain the modified component $C_4$ represented in Figure~\ref{Figure: pre-graph C_4} with labels as given below; those are trivially compositions of morphisms of $\S$. We will also see that it is more convenient to modify a bit more that component.

{
\begin{figure}[h!tbp]
\centering
\begin{minipage}[c]{0.95\textwidth}
\centering
\scalebox{0.7}{
\begin{VCPicture}{(0,0.3)(8,7.5)}
\StateVar[5/6]{(4,1)}{56}
\StateVar[7/8]{(7,3.5)}{78}
\StateVar[1]{(4,6)}{1}
\StateVar[10B]{(1,3.5)}{10}
\LoopN{1}{}
\CLoopW{10}{}
\CLoopE{78}{}
\ForthBackOffset	
\Edge{78}{56}	\Edge{56}{78}
\Edge{1}{78}	\Edge{78}{1}
\RstEdgeOffset
\Edge{56}{10}
\Edge{10}{78}
\Edge{10}{1}
\Edge{56}{1}
\end{VCPicture}
}
\caption{First attempt to modify the component $C_4$ in $\G$.}
\label{Figure: pre-graph C_4}
\end{minipage}
\\
\vspace{1cm}
\begin{minipage}[c]{0.95\textwidth}
\centering
\begin{tabular}{|l|l|l|l|}
	\hline
	From 	&	 to 	& 	Labels				& 	Conditions	\\
	\hline
	1		&	1		&	$[0,10]$, $[01,1]$	&	\\
	\hhline{~---}
			&	7/8		&	$[x,y^kx,(y^{k-1}x)]$	&	$k \geq 2$	\\
	\hline
	5/6		&	1	&	$[x,yx], [yx,x]$	&	 	\\
	\hhline{~~--}
			&		&	$[1 2^k 0,2^k0]$, $[2^k 0,1 2^k0]$	& 	$k \geq 1$	\\
	\hhline{~~--}
			&		&	$[1 2^k 0,2^{k+1}0]$, $[2^{k+1} 0,1 2^k0]$	& 	$k \geq 0$	\\
	\hhline{~---}
			&	7/8	&	$[1,0^k2,(0^{k-1}2)]$	&	$k \geq 1$	\\
	\hhline{~~--}
			&		&	$[x,y^kx,(y^{k-1}x)]$	&	$k \geq 2$ 	\\
	\hhline{~~--}	
			&		&	$[2^{\ell}0,1 2^k 0,(1 2^{k-1}0)]$	&	$k > \ell \geq 0$	\\
	\hhline{~~--}	
			&		&	$[1 2^k 0,2^{\ell}0, (2^{\ell-1}0)]$	&	$\ell > k + 1 \geq 1$	\\
	\hhline{~---}
			&	10B	&	$[1,01,2]$	&	\\
	\hhline{~~--}
			&		&	$[2^k 0,12^k 0,1 2^{k-1}0]$		&	$k \geq 1$	\\
	\hhline{~~--}
			&		&	$[1 2^k 0,2^{k+1}0, 2^k 0]$		&	$k \geq 0$	\\
	\hline
	7/8 	&	1	&	$[01,1]$, $[1,01]$, $[x,y]$				&	 \\
	\hhline{~---}
			&	5/6	&	$[0 x,y,(0 y)], [x,0 y,(y)]$	&		\\
	\hhline{~---}
			&	7/8	&	$[0,1 0,(2 0)]$		&	\\
	\hline
	10B  	&	1	&	$[0 1^k 2,1^k2]$, $[1^k 2,0 1^k2]$	&	$k \geq 1$	\\
	\hhline{~~--}
			&		&	$[0 1^k 2,1^{k+1}2]$, $[1^{k+1} 2,0 1^k2]$	&	$k \geq 0$		\\
	\hhline{~---}
			&	7/8	&	$[0,2^k 1,2^{k-1}1]$		&	$k \geq 1$		\\
	\hhline{~~--}
			&		&	$[1^{\ell}2,0 1^k 2,(0 1^{k-1}2)]$	&	$k > \ell \geq 0$	\\	
	\hhline{~~--}
			&		&	$[0 1^k 2,1^{\ell}2,(1^{\ell-1}2)]$	&	$\ell > k + 1 \geq 1$	\\		
	\hhline{~---}
			&	10B	&	$[0,2 0,1]$		&	\\
	\hhline{~~--}
			&		&	$[1^k2,0 1^k 2,0 1^{k-1}2]$	&	$k \geq 1$	\\		
	\hhline{~~--}		
			&		&	$[0 1^k 2,1^{k+1}2,1^k2]$	&	$k \geq 0$	\\
	\hline	
\end{tabular}
\captionof{table}{Labels of edges in Figure~\ref{Figure: pre-graph C_4}}
\label{table: label C_4_1}
\end{minipage}
\end{figure}
}

The next lemma describes paths in Figure~\ref{Figure: pre-graph C_4} whose label is weakly primitive.

\begin{lemma}
\label{lemma: weakly primitive C_4}
An infinite path $p$ in Figure~\ref{Figure: pre-graph C_4} has a weakly primitive label if and only if one of the following conditions is satisfied:

\begin{enumerate}

	\item $p$ ultimately stays in vertex 1 and both morphisms $[0,10]$ and $[01,1]$ occur infinitely often in its label;

	\item $p$ ultimately stays in the subgraph $\{1,7/8\}$, goes through both vertices infinitely often and for all suffixes $p'$ of $p$ starting in vertex $7/8$, the label of $p'$ is not only composed of finite sub-sequences of morphisms in
	\[
		\left( [0,10]^* [0,1] [0,10]^* \{[0,1^k0] \mid k \geq 2\} \right)
		\cup \left( [0,10]^* [1,0] [01,1]^* \{ [1,0^k1] \mid k \geq 2\} \right);
	\]

	\item $p$ contains infinitely many occurrences of sub-paths $q$ that start in vertex $1$ and end in vertex $5/6$.

	\item $p$ ultimately stays in the subgraph $\{5/6,7/8,10B\}$ and does not ultimately correspond to one of the two following configurations:
	
	\begin{enumerate}
		\item 	the path ultimately stays in vertex $7/8$;

		\item 	\begin{itemize}
										
					\item the edge from $7/8$ to $5/6$ is labelled by $[1,02,2]$ or by $[01,2,02]$;
					
					\item the edge from $5/6$ to $7/8$ is labelled by $[1,02,2]$;
					
					\item the edge from $5/6$ to $10B$ is labelled by $[1,01,2]$;
					
					\item for all sub-paths $q$ uniquely composed of loops over $10B$, the label of $q$ contains only 							occurrences of morphisms in
					\[
						\left\{ {[0,20,1]}^{2n}, [02,12,2] \mid n \in \N \right\};
					\]
					
					\item for all finite sub-paths $q$ composed of loops over $10B$ and followed by the edge from $10B$ to $7/8$, the label of $q$ is in
					\[
						\left\{ {[0,20,1]}^{2n}, [02,12,2] \mid n \in \N \right\}^*  
							\{[2,012,02],[0,20,1][0,21,1] \} ;
					\]
				\end{itemize}	
		
		\item 	\begin{itemize}
					\item the paths does not go through the loop over vertex $7/8$;
					
					\item the loop over vertex $10B$ is labelled by $[12^k0,2^{k+1}0,2^k0]$ for some integer $k \geq 0$;
					
					\item the edge from $5/6$ to $7/8$ is labelled either by $[1,0^k2,0^{k-1}2]$ for some integer $k \geq 1$ or by $[12^k0,2^{\ell}0,2^{\ell-1}0]$ for some integers $k$ and $\ell$ such that $\ell > k+1 \geq 1$;
					
					\item the edge from $7/8$ to $5/6$ is labelled by $[1,02,2]$ or by $[2,01,1]$;
					
					\item the edge from $10B$ to $7/8$ is labelled by $[0,2^k1,2^{k-1}1]$ for some integer $k \geq 1$.
					
				\end{itemize}

	\end{enumerate}
\end{enumerate}
\end{lemma}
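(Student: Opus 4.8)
The plan is to convert weak primitivity into a finite reachability condition and then to read it off the structure of Figure~\ref{Figure: pre-graph C_4}. I would first recall (Remark~\ref{remark: weak prim and prim}) that weak primitivity is equivalent to primitivity of some contraction, so it is a tail property that depends only on the vertices and edges of Figure~\ref{Figure: pre-graph C_4} crossed infinitely often. Writing $\sigma_n\colon A_{n+1}^*\to A_n^*$ with $A_n\subseteq\{0,1,2\}$, I would attach to each label its letter-incidence (``$d$ occurs in $\sigma_n(c)$''); then $a$ occurs in $\sigma_r\cdots\sigma_s(b)$ iff $b$ reaches $a$ in the layered incidence graph, so the sequence fails to be weakly primitive iff there exist an index $m$ and proper nonempty subsets $P_n\subsetneq A_n$ ($n\ge m$) with $\sigma_n(P_{n+1})\subseteq P_n^*$ for all $n\ge m$ --- a \emph{trapped family}. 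The whole argument then amounts to deciding, for each admissible ultimate behaviour, whether a trapped family can be sustained.

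Next I would determine the possible sets $S$ of infinitely often visited vertices. Since in Figure~\ref{Figure: pre-graph C_4} the vertex $5/6$ is entered only from $7/8$ and $10B$ only from $5/6$ or through its own loop, $S$ is strongly connected for the infinitely crossed edges and satisfies $5/6\in S\Rightarrow 7/8\in S$; a short check then leaves only $\{1\}$, $\{7/8\}$, $\{10B\}$, $\{1,7/8\}$, $\{5/6,7/8\}$, $\{5/6,7/8,10B\}$, $\{1,5/6,7/8\}$ and $\{1,5/6,7/8,10B\}$. These distribute over the four stated conditions: $\{1\}$ gives Condition~1, $\{1,7/8\}$ gives Condition~2, the two sets containing both $1$ and $5/6$ fall under Condition~3 (visiting $1$ and $5/6$ infinitely often forces infinitely many sub-paths from $1$ to $5/6$), and the three sets inside $\{5/6,7/8,10B\}$ fall under Condition~4.

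For $S=\{1\}$ the alphabet is eventually $\{0,1\}$ and the only loop morphisms are $[0,10]$ and $[01,1]$; since $\{0\}$ is forward-closed under the first and $\{1\}$ under the second, a trapped family persists in a tail exactly when one of the two is eventually absent, which is Condition~1 (cf.\ Example~\ref{ex: not valid path 2}). For $S=\{1,7/8\}$ I would compose the loop $[0,10]$, the edges $1\leftrightarrow 7/8$ and the $7/8$-loop, and show that the only surviving trapped families come from the two families of finite blocks in Condition~2 (those keeping $\{0\}$, resp.\ a swapped two-letter set, forward-closed). When $S$ contains both $1$ and $5/6$, every passage $1\to 7/8\to 5/6\to 1$ is labelled by a composition of Table~\ref{table: label C_4_1} morphisms that one checks to be primitive --- each letter occurs in the image of every letter --- which destroys any trapped family; hence weak primitivity always holds, and Condition~3 is then both necessary and sufficient.

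The genuinely delicate case is $S\subseteq\{5/6,7/8,10B\}$ with $1\notin S$, i.e.\ Condition~4, and I expect it to be the main obstacle. Here the $10B$-loop is the source of subtlety: morphisms such as $[0,20,1]$ satisfy $[0,20,1]^2=[0,10,20]$, so every even power keeps $\{0\}$ (indeed $\{0,2\}$) forward-closed, and the trapped set is forced to \emph{move} from level to level (for instance $[1,02,2]$ carries a trapped $\{1,2\}$ to a trapped $\{0,2\}$). The plan is to enumerate, separately for the sub-behaviours $\{7/8\}$, the $10B$-loops, $\{5/6,7/8\}$ and $\{5/6,7/8,10B\}$, all families of proper subsets propagated by the admissible products of consecutive labels, and to match them with the forbidden configurations (a)--(c): (a) is the pure $7/8$-loop trapping $\{0\}$, while (b) and (c) are the two ways a moving trapped family survives the excursions through $10B$, according to whether the $7/8$-loop is used (the latter being the disguised ``stay in type~$9$'' phenomenon of Example~\ref{ex: not valid path 1}). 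The hard part will be the \emph{completeness} of this list: one must verify that no other combination of the Table~\ref{table: label C_4_1} labels can sustain a trapped family, which is a finite but careful check of the forward images of all proper subsets of $\{0,1,2\}$ under every admissible block, keeping track of the parity introduced by the $10B$-loop. Once (a)--(c) are shown to be exactly the trapped-family configurations, Condition~4 --- and with it the whole equivalence --- follows.
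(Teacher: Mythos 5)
Your overall strategy is sound and is essentially the one the paper follows: you reduce the question to the set $S$ of vertices of Figure~\ref{Figure: pre-graph C_4} visited infinitely often, you obtain the same list of admissible $S$, and you treat each case separately. Your ``trapped family'' formulation (proper subsets $P_n\subsetneq A_n$ with $\sigma_n(P_{n+1})\subseteq P_n^*$) is a correct and clean dual of what the paper actually tracks, namely the triple $\alp(q)=(\alp(\sigma(0)),\alp(\sigma(1)),\alp(\sigma(2)))$ recording which letters occur in each image of the label of a return block $q$ to $7/8$; the two bookkeeping devices carry exactly the same information, and your observation that the trapped set must \emph{move} under blocks like $[1,02,2]$, together with the parity phenomenon $[0,20,1]^2=[0,10,20]$, is precisely the subtlety the paper isolates. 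Your dispatch of the cases where $1$ and $5/6$ are both visited infinitely often (Condition~3) is also correct, since the composed label of any block $1\to 7/8\to 5/6$ already contains every letter of the source alphabet in every image.

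The genuine gap is that the decisive part of the argument is announced but never carried out. For Condition~2 you assert, without verification, that the only blocks sustaining a trapped family are those listed; and for Condition~4 you explicitly defer ``the completeness of this list'' to ``a finite but careful check.'' That check \emph{is} the proof: in the paper it occupies the bulk of Appendix~\ref{appendix: weakly primitive}, namely the exhaustive computation of $\alp(q)$ for all length-two blocks $7/8\to 5/6\to 7/8$ (thirteen possible triples, of which six are problematic) and for all blocks $7/8\to 5/6\to 10B(\to 10B)^*\to 7/8$ (ten possible triples, of which three are problematic), followed by the explicit identification of which label sequences realize each problematic triple --- this is exactly where the sets such as $\{[0,20,1]^{2n},[02,12,2]\mid n\in\N\}^*\{[2,012,02],[0,20,1][0,21,1]\}$ in configuration (b) come from. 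Without this enumeration one cannot conclude that configurations (a)--(c) are the \emph{only} obstructions, so the ``if'' direction of the equivalence for Condition~4 (and likewise the exact shape of the excluded blocks in Condition~2) remains unproved. To complete your argument you would need to tabulate, for every admissible consecutive pair of labels from Table~\ref{table: label C_4_1}, the induced map on proper subsets of $\{0,1,2\}$ and verify that every infinite product avoiding the listed configurations eventually has no invariant proper subset.
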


\begin{proof}
The proof of this lemma is not really hard, but rather long so it is given in 
Appendix~\ref{appendix: weakly primitive} page~\pageref{appendix: weakly primitive}.
\end{proof}

\label{valid path pas propre}
As in the previous cases, we would like to ensure that any valid path in Figure~\ref{Figure: pre-graph C_4} can be chosen in such a way that its label contains infinitely many right proper morphisms, which is currently not the case.
For instance, any path oscillating between $5/6$ and $7/8$ such that the edge from $5/6$ to $7/8$ is labelled by $[1,0^k2,0^{k-1}2]$ does not contain any right proper morphism but can be a suffix of a valid path (Lemma~\ref{Lemma: graph of type 5 or 6} and Lemma~\ref{Lemma: graph of type 7 or 8} ensure that the local condition of Proposition~\ref{prop: valid path} is satisfied). 
Thus, we have to modify Figure~\ref{Figure: pre-graph C_4} in such a way that a contraction of such a sequence of morphisms labels another path and contains infinitely many right proper morphisms. 
 
As proved in Proposition~\ref{proposition: valid path in C4}, this kind of problem can be solved by adding two edges in Figure~\ref{Figure: pre-graph C_4} labelled by the following additional morphisms. We then obtain the modified component as represented in Figure~\ref{Figure: graph C_4}.

{
\begin{figure}[h!tbp]
\centering
\begin{minipage}[c]{0.95\textwidth}
\centering
\scalebox{0.7}{
\begin{VCPicture}{(0,-0.7)(8,7)}
\StateVar[5/6]{(4,1)}{56}
\StateVar[7/8]{(7,3.5)}{78}
\StateVar[1]{(4,6)}{1}
\StateVar[10B]{(1,3.5)}{10}
\LoopN{1}{}
\CLoopW{10}{}
\CLoopE{78}{}
\LoopS{56}{}
\ForthBackOffset	
\Edge{78}{56}	\Edge{56}{78}
\Edge{10}{56}	\Edge{56}{10}
\Edge{1}{78}	\Edge{78}{1}
\RstEdgeOffset
\Edge{10}{78}
\Edge{10}{1}
\Edge{56}{1}
\end{VCPicture}
}
\caption{Graph corresponding to the component $C_4$ in $\G$.}
\label{Figure: graph C_4}
\end{minipage}
\\
\vspace{1cm}
\begin{minipage}[c]{0.95\textwidth}
\centering
\begin{tabular}{|l|l|l|l|}
	\hline
	From	&	To	&	Labels	& 	Conditions	\\
	\hline
	5/6		&	5/6	&	$[10^k2,0^{k-1}2,10^{k-1}2]$		&	 $k \geq 1$	\\
			&		&	$[10^{k-1}2,0^k2,10^k2]$			& 	\\
			&		&	$[0^k2,10^{k-1}2,0^{k-1}2]$			& 	\\
			&		&	$[0^{k-1}2,10^k2,0^k2]$				& 	\\	
	\hline
	10B		&	5/6	& 	$[02^k1,2^{k-1}1,02^{k-1}1]$		&  $k \geq 1$	\\
	 		&		&	$[02^{k-1}1,2^k1,02^k1]$			&  	\\
	 		&		&	$[2^k1,02^{k-1}1,2^{k-1}1]$			&  	\\
	 		&		&	$[2^{k-1}1,02^k1,2^k1]$				&  	\\
	\hline
\end{tabular}
\captionof{table}{Labels of the two additional edges in Figure~\ref{Figure: graph C_4}}
\label{table: label C_4_2}
\end{minipage}
\end{figure}
}

%
%
%
%
%
%
%
%

\begin{proposition}
\label{proposition: valid path in C4}
An infinite path $p$ in $\G$ labelled by $(\gamma_{i_n})_{n \geq N}$ is a valid suffix that always stays in component $C_4$ and that is such that $U_{i_N}$ is bispecial if and only if there is a contraction $(\alpha_n)_{n \geq N}$ of $(\gamma_{i_n})_{n \geq N}$ such that 

\begin{enumerate}
	\item 	\label{cond proper} there are infinitely many right proper morphisms in $(\alpha_n)_{n \geq N}$;

	\item 	$(\alpha_n)_{n \geq N}$ labels an infinite path $p$ in the graph represented in Figure~\ref{Figure: graph C_4} (whose labels are given in Table~\ref{table: label C_4_1} and Table~\ref{table: label C_4_2}) such that
	
	\begin{enumerate}[(A)]
		
	\item \label{cond 5/6} if for some integer $n \geq N$, $\alpha_n$ labels an edge to $5/6$, then $\alpha_{n+1}$ can belong to $\{[x,y^kx,(y^{k-1}x)] \mid \{x,y\} = \{0,1\}, k \geq 2 \}$ only if $|p_1| \geq |p_2|$;

	\item \label{cond 7/8} if for some integer $n \geq N$, $\alpha_n$ labels an edge to $7/8$ but not from $7/8$ (so it is equal to $[w_1,w_2 w_3^\mathfrak{k} w_4,w_2 w_3^{\mathfrak{k}-1} w_4]$ for some words $w_1$, $w_2$, $w_3$ and $w_4$ and for an integer $\mathfrak{k} \geq 1$ which corresponds to the greatest number of times that a circuit goes through the loop $v_2 u_2$ in Figure~\ref{figure: graph with 2 loops'}), if $h$ is the greatest integer such that $\alpha_{n+i} = [0,10,20]$ for all $i = 1, \dots, h$, then $h$ is finite and $\alpha_{n+h+1}$ can be in $\{[0,1],[1,0]\}$ if and only if $|u_1| + h (|u_1|+|v_1|) \geq |u_2| + (\mathfrak{k}-1) (|u_2|+|v_2|)$;

	\end{enumerate}
	
and such that one of the following conditions is satisfied

	\begin{enumerate}[(i)]

	\item \label{cond AP1} $p$ ultimately stays in vertex 1 and both morphisms $[0,10]$ and $[01,1]$ occur infinitely often in $(\alpha_n)_{n \geq N}$;

	\item $p$ ultimately stays in the subgraph $\{1,7/8\}$, goes through both vertices infinitely often and for all suffixes $p'$ of $p$ starting in vertex $7/8$, the label of $p'$ is not only composed of finite sub-sequences of morphisms in
	\[
		\left( [0,10]^* [0,1] [0,10]^* \{[0,1^k0] \mid k \geq 2\} \right)	
		\cup \left( [0,10]^* [1,0] [01,1]^* \{ [1,0^k1] \mid k \geq 2\} \right);
	\]

	\item $p$ contains infinitely many occurrences of sub-paths $q$ that start in vertex $1$ and end in vertex $5/6$.

	\item \label{cond AP4} $p$ ultimately stays in the subgraph $\{5/6,7/8,10B\}$ and does not ultimately correspond to one of the two following configurations:
	
	\begin{enumerate}[(a)]
		\item 	the path ultimately stays in vertex $7/8$;

		\item 	\begin{itemize}
					\item the the loop over $5/6$ is always labelled by in $[02,12,2]$ or $[102,2,12]$;					
														
					\item the edge from $5/6$ to $7/8$ is always labelled by $[1,02,2]$;
				
					\item the edge from $5/6$ to $10B$ is always labelled by $[1,01,2]$;

					\item the edge from $7/8$ to $5/6$ is always labelled by $[1,02,2]$ or by $[01,2,02]$;
									
					\item for all sub-paths $q$ uniquely composed of loops over $10B$, the label of $q$ contains 							only occurrences of morphisms in
					\[
						\left\{ {[0,20,1]}^{2n}, [02,12,2] \mid n \in \N \right\};
					\]
					
					\item for all finite sub-paths $q$ composed of loops over $10B$ and followed by the edge from $10B$ to $5/6$, the label of $q$ is in
					\[
						\left\{ {[0,20,1]}^{2n}, [02,12,2] \mid n \in \N \right\}^*
							[0,20,1] \left\{[21,01,1],[021,1,01] \right\} ;
					\]
				\end{itemize}	
		
		\item 	\begin{itemize}
					\item the paths does not go through the loop over the vertex $7/8$;
					
					\item the loop over the vertex $5/6$ is always labelled by $[0^k2,10^{k-1}2,0^{k-1}2]$ or by $[0^{k-1}2,10^k2,0^k2]$ for some integer $k \geq 1$;
					
					\item the loop over the vertex $10B$ is always labelled by $[12^k0,2^{k+1}0,2^k0]$ for some integer $k \geq 0$;
					
					\item the edge from $5/6$ to $7/8$ is always labelled either by $[1,0^k2,0^{k-1}2]$ for some integer $k \geq 1$ or by $[12^k0,2^{\ell}0,2^{\ell-1}0]$ for some integers $k$ and $\ell$ such that $\ell > k+1 \geq 1$;
					
					\item the edge from $7/8$ to $5/6$ is always labelled by $[1,02,2]$ or by $[2,01,1]$;

					\item the edge from $10B$ to $5/6$ is always labelled by $[2^k1,02^{k-1}1,2^{k-1}1]$ or by $[2^{k-1}1,02^k1,2^k1]$ for some integer $k \geq 1$;
					
					\item the edge from $10B$ to $7/8$ is always labelled by $[0,2^k1,2^{k-1}1]$ for some integer $k \geq 1$.
					
				\end{itemize}
	\end{enumerate}
\end{enumerate}
\end{enumerate}
\end{proposition}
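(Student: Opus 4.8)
The plan is to verify the two conditions of Proposition~\ref{prop: valid path} for a path $p$ that stays in the component $C_4$ with $U_{i_N}$ bispecial, and to show that each of them translates exactly into the list given in the statement. Throughout I would work with the decomposition $(\eta_{j_n})_{n \in \N}$ of Lemma~\ref{lemma: decomposition du mot directeur}, so that each elementary step codes the explosion of a single bispecial vertex, and I would use the modified component drawn in Figure~\ref{Figure: graph C_4} and built in Section~\ref{subsubsection: modification of component C_4}. The contraction $(\alpha_n)_{n \geq N}$ of the statement is the one obtained from this modified graph.

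First I would treat the local condition (all prefixes valid). The modified graph was obtained precisely by absorbing the evolutions described in Fact~\ref{fact: graph of type 1} and in Lemmas~\ref{Lemma: graph of type 5 or 6}, \ref{Lemma: graph of type 7 or 8}, \ref{lemma: C4 type 9} and~\ref{Lemma: graph of type 10} into its vertices and edges, so that the mere existence of an edge already guarantees that the corresponding single evolution is available. Hence every finite path in Figure~\ref{Figure: graph C_4} is valid \emph{except} for the two residual restrictions that depend on the lengths of certain paths inside the Rauzy graphs and therefore cannot be read off the combinatorics of the graph alone. These are exactly conditions~(A) and~(B): condition~(A) is Remark~\ref{remark: vertex 5/6} (one may leave $5/6$ by $[x,y^kx,(y^{k-1}x)]$ only when $|p_1| \geq |p_2|$ in Figure~\ref{figure: graph with no loop'}), while condition~(B) is the inequality~\eqref{eq: inequ 78} of Lemma~\ref{Lemma: graph of type 7 or 8} rewritten in terms of the number $h$ of consecutive occurrences of $[0,10,20]$ following an edge into $7/8$. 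Confirming that these two are the only residual constraints amounts to checking, case by case against the labels of Table~\ref{table: label C_4_1} and Table~\ref{table: label C_4_2}, that every other edge corresponds to an evolution whose availability does not depend on such lengths.

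Next I would handle the global condition. For weak primitivity I would invoke Lemma~\ref{lemma: weakly primitive C_4}, which describes the weakly primitive labels on the preliminary graph of Figure~\ref{Figure: pre-graph C_4}; conditions~(i)--(iv) are its transcription onto the graph of Figure~\ref{Figure: graph C_4}, the only substantive change being in the sub-cases~(b) and~(c) of~(iv), which must account for the two additional edges of Table~\ref{table: label C_4_2}. Since each additional morphism is a contraction of a pair of consecutive morphisms already present in Figure~\ref{Figure: pre-graph C_4}, passing from one graph to the other changes neither the underlying sequence of evolutions nor the generated subshift, which is what lets me transfer Lemma~\ref{lemma: weakly primitive C_4} verbatim up to relabelling. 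For the right-proper contraction I would use condition~\ref{cond proper}: the two edges of Table~\ref{table: label C_4_2} were introduced exactly so that the problematic length-two blocks carrying no right proper morphism (for instance the oscillation between $5/6$ and $7/8$ through $[1,0^k2,0^{k-1}2]$, as noted on page~\pageref{valid path pas propre}) can be re-contracted into a single right proper morphism; thus a contraction with infinitely many right proper morphisms always exists once the path is genuinely everywhere growing.

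The main obstacle is the bookkeeping that couples the local and global conditions. On the one hand I must check that re-contracting so as to satisfy condition~\ref{cond proper} destroys neither the weak-primitivity pattern of~(i)--(iv) nor the residual constraints~(A) and~(B); on the other hand the delicate sub-cases~(iv)(b) and~(iv)(c) single out precisely those infinite label patterns inside $\{5/6,7/8,10B\}$ for which weak primitivity fails, and I would have to confirm that these patterns survive unchanged after the two edges are added and after the passage to the everywhere-growing contraction $(\alpha_n)_{n \geq N}$. Establishing that the forbidden configurations of Lemma~\ref{lemma: weakly primitive C_4} are in bijection with those of condition~(iv), so that no new weakly primitive path is created and none is lost, is the part requiring the careful, though elementary, verification relegated to the appendices.
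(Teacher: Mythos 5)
Your proposal follows essentially the same route as the paper: it reduces the statement to the two conditions of Proposition~\ref{prop: valid path}, treats the local condition via the graph modifications of Section~\ref{subsubsection: modification of component C_4} plus the two residual length-dependent constraints~(A) (from Remark~\ref{remark: vertex 5/6} and Lemma~\ref{Lemma: graph of type 5 or 6}) and~(B) (from Lemma~\ref{Lemma: graph of type 7 or 8} and Lemma~\ref{lemma: C4 type 9}), and handles the global condition by transcribing Lemma~\ref{lemma: weakly primitive C_4} onto Figure~\ref{Figure: graph C_4} after adding the two right proper edges of Table~\ref{table: label C_4_2}. This matches the paper's argument, including its identification of the non-right-proper cycles that motivate the added edges, so the plan is sound.
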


\begin{proof}
Our aim is to describe valid suffix in $\G$ that stay in component $C_4$, accordingly to Proposition~\ref{prop: valid path}. The first step is to ensure that to any valid path $p$ in $\G$, there is a contraction $(\alpha_n)_{n \geq N}$ of its label that labels a path in Figure~\ref{Figure: graph C_4} and that contains infinitely many right proper morphisms. Up to know, the results in Section~\ref{subsection: component C_4} state that such a contraction labels a path in Figure~\ref{Figure: pre-graph C_4}, but some of them can contain only finitely many right proper morphisms. One can check that all of them label paths in Figure~\ref{Figure: non proper C_4} where


\begin{enumerate}
	\item 	the edge from $5/6$ to $10B$ is labelled by $[1,01,2]$;
	\item 	the edge from $5/6$ to $7/8$ is labelled by $[1,0^k2,0^{k-1}2]$;
	\item 	the edge from $7/8$ to $5/6$ is labelled by $[0x,y,0y]$ and $[x,0y,x]$;
	\item 	the edge from $10B$ to $7/8$ is labelled by $[0,2^k1,2^{k-1}1]$;
	\item 	the loop on $10B$ is labelled by $[0,20,1]$.
\end{enumerate}

\begin{figure}[h!tbp]
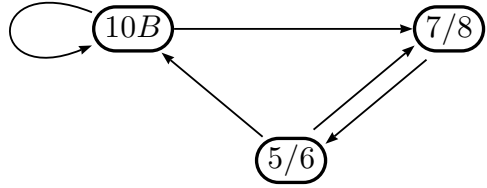

\centering
\scalebox{0.7}{
\begin{VCPicture}{(0,-0.7)(8,4)}
\StateVar[5/6]{(4,1)}{56}
\StateVar[7/8]{(7,3.5)}{78}
\StateVar[10B]{(1,3.5)}{10}
\LoopN{1}{}
\CLoopW{10}{}
\ForthBackOffset	
\Edge{78}{56}	\Edge{56}{78}
\RstEdgeOffset
\Edge{56}{10}
\Edge{10}{78}
\end{VCPicture}
}
\caption[Part of Figure~\ref{Figure: pre-graph C_4} with non-right proper morphisms.]{Part of Figure~\ref{Figure: pre-graph C_4} where there might be some valid labelled path with only non-right proper morphisms as labels.}
\label{Figure: non proper C_4}
\end{figure}

It is easily seen that labelled path in Figure~\ref{Figure: non proper C_4} that ultimately stay in vertex $10B$ are not valid. Moreover, the labels of the path of length 2 from $5/6$ to $5/6$ (passing through $7/8$) are right proper and equal to
\begin{align*}
	[1, 0^k 2, 0^{k-1} 2] \circ [01, 2, 02] 	&  	=  	[10^k 2, 0^{k-1} 2, 10^{k-1} 2]		\\
	[1, 0^k 2, 0^{k-1} 2] \circ [02, 1, 01] 	&  	=   [10^{k-1} 2, 0^k 2, 10^k 2]			\\
	[1, 0^k 2, 0^{k-1} 2] \circ [1, 02, 2]  	& 	= 	[0^k 2, 10^{k-1} 2, 0^{k-1} 2]		\\
	[1, 0^k 2, 0^{k-1} 2] \circ [2, 01, 1]  	& 	= 	[0^{k-1} 2, 10^k2, 0^k 2]				
\end{align*}
Similarly, the labels of the path of length 2 from $10B$ to $5/6$ (passing through $7/8$) are right proper and equal to
\begin{align*}
	[0, 2^k 1, 2^{k-1} 1] \circ [01, 2, 02] 	& 	= 	[02^k 1, 2^{k-1} 1, 02^{k-1} 1]		\\
	[0, 2^k 1, 2^{k-1} 1] \circ [02, 1, 01] 	& 	=   [02^{k-1} 1, 2^k 1, 02^k 1]			\\
	[0, 2^k 1, 2^{k-1} 1] \circ [1, 02, 2]  	&	= 	[2^k 1, 02^{k-1} 1, 2^{k-1} 1]		\\
	[0, 2^k 1, 2^{k-1} 1] \circ [2, 01, 1]  	&	= 	[2^{k-1} 1, 02^k1, 2^k 1]				
\end{align*}
To our aim, it suffices therefore to add two edges in Figure~\ref{Figure: pre-graph C_4}: one loop on $5/6$ labelled by the first four morphisms above and one edge from $10B$ to $5/6$ labelled by the last four morphisms above, which corresponds to Table~\ref{table: label C_4_2}. 

%

With that modification of Figure~\ref{Figure: pre-graph C_4}, the proper condition of Proposition~\ref{prop: valid path} is equivalent to the condition~\ref{cond proper} of the result. For the first condition of Proposition~\ref{prop: valid path} (the local one), it is a direct consequence of all previous lemmas and modifications of $C_4$:
\begin{enumerate}
	\item 	any finite path passing only through the vertex 1 is trivially valid;
	
	\item 	the condition~\ref{cond 5/6} of the result summarizes what is allowed according to Lemma~\ref{Lemma: graph of type 5 or 6} for vertex $5/6$;
	
	\item 	the condition~\ref{cond 7/8} summarizes what is allowed with vertex $7/8$ according to Lemma~\ref{Lemma: graph of type 7 or 8} and Lemma~\ref{lemma: C4 type 9};	

	\item 	the edges going to the vertex 10 in Figure~\ref{figure: graph of graphs} (page~\pageref{figure: graph of graphs}) have been modified according to Lemma~\ref{Lemma: graph of type 10}.
\end{enumerate}

It remains therefore to check the weakly primitive property. It is easily seen that conditions~\ref{cond AP1} to~\ref{cond AP4} are exactly those obtained in Lemma~\ref{lemma: weakly primitive C_4}, but modified according to the added edges.
\end{proof}

\subsection{Links between components}
\label{subsection: links between components}

Now that we know how the suffixes of valid paths in each component must behave, it remains to describe all links between them. To this aim, it suffices to look at the graph of graphs $\G$ (Figure~\ref{figure: graph of graphs} page~\pageref{figure: graph of graphs}) and, like we did in each component, to study the consequences of a given morphism $\gamma_{i_n}$ on the sequel in the directive word. For instance, in $\G$ there is an edge from 2 to 4 which is labelled by morphisms $\gamma_{i_n}$ depending on some exponents $k$ and $\ell$ and that are such that $U_{i_{n+1}}$ corresponds to the vertex $R$ in Figure~\ref{figure: type 4}. Then, Lemma~\ref{lemma: allowed path for type 4} (page~\pageref{lemma: allowed path for type 4}) states that, depending on $k$ and $\ell$, the graph will evolve to a graph of type 1, 4, 7 or 8 and 10 (with $U_{i_m} = B$) and it provides the morphism $\tau$ coding this evolution. Consequently, we add edges (if necessary) from 2 to $\{1,4B,7/8,10B\}$ labelled by $\gamma_{i_n} \circ \tau$. This yields to the \textit{modified graph of graphs} $\G'$ represented in Figure~\ref{figure: modified graph of graphs} (gray edges are simply those inner components). Labels of black edges are given below. In Table~\ref{table: morphismes a partir de 2}, Table~\ref{table: morphismes a partir de V_i} and Table~\ref{table: morphismes a partir de 4B}, we express in the column \enquote{Through} if the morphism is the result of a contraction like just explained. In the previous example, we would write $4R$ in the column \enquote{Through}, meaning that the morphisms is a composition of $\gamma_{i_n}$ and $\tau$ and that $\gamma_{i_n}$ codes an evolution to a Rauzy graph of type 4 such that $U_{i_n+1}$ corresponds to the vertex $R$ in Figure~\ref{figure: type 4}. 

Observe that, since black edges can only occur in a finite prefix of any valid path in $\G'$, we do not need to compute left conjugates of morphisms.

\begin{figure}[h!tbp]
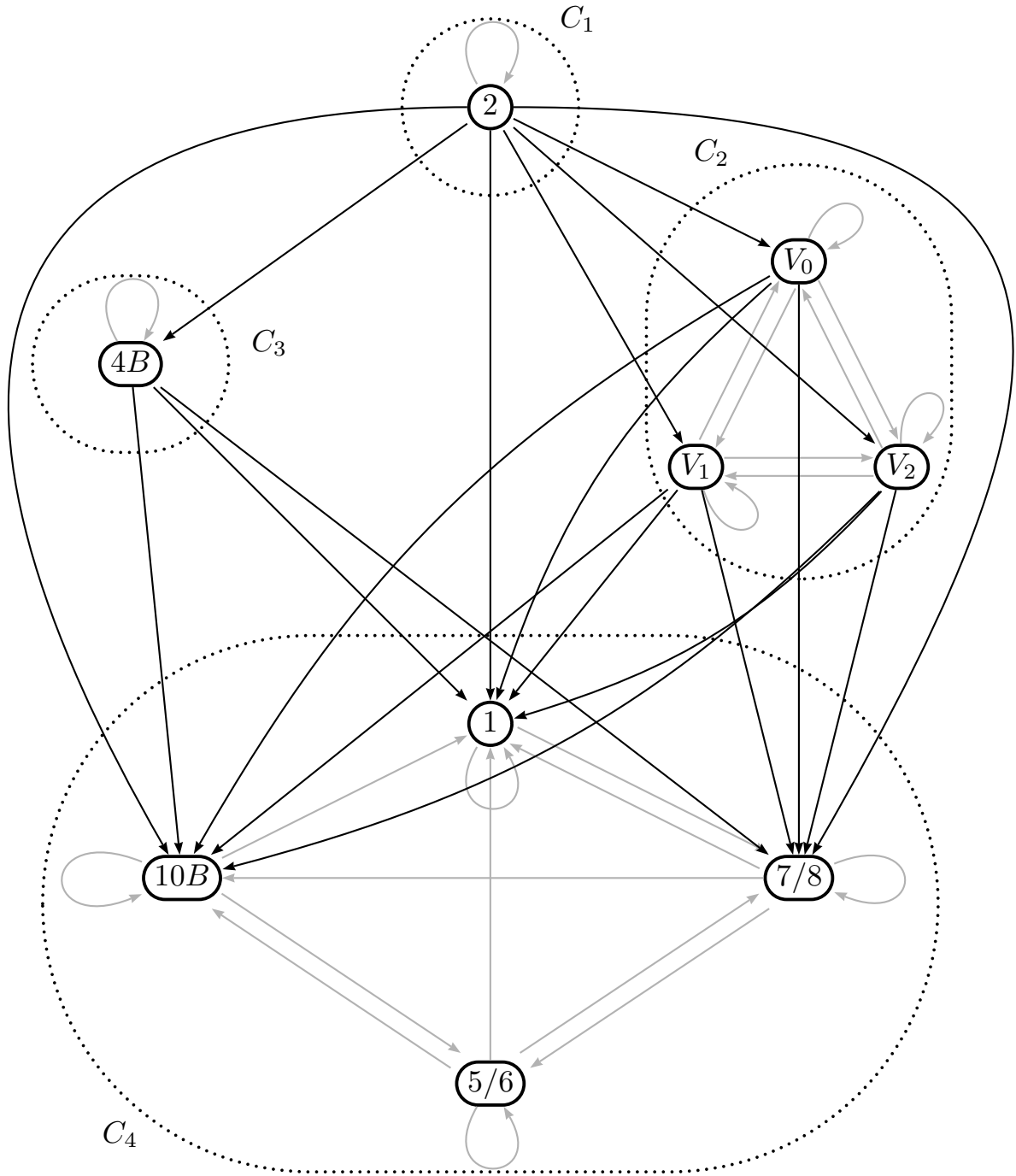

\centering
\scalebox{0.8}{
\begin{VCPicture}{(0,-2)(15,23)}
\ChgStateLineStyle{dotted}
\FixStateDiameter{3.5cm}
\StateVar[]{(7,20)}{C1}

\StateVar[\hspace{2cm}]{(0,15)}{C3}

\FixStateDiameter{6cm}
\VCPut[90]{(13,0)}{\StateVar[\hspace{4.5cm}]{(14.85,0)}{C2'}}

\FixStateDiameter{10.5cm}
\StateVar[\hspace{10cm}]{(7,4.5)}{C2}

\RstStateLineStyle
\MediumState

\ChgStateLineColor{white}
\StateVar[C_1]{(8.7,21.7)}{1}
\StateVar[C_2]{(11.3,19.1)}{2}
\StateVar[C_3]{(2.7,15.4)}{3}
\StateVar[C_4]{(-0.2,0)}{4}
\RstStateLineColor

\StateVar[2]{(7,20)}{2}
\StateVar[4B]{(0,15)}{4}
\StateVar[V_0]{(13,17)}{v0}
\StateVar[V_1]{(11,13)}{v1}
\StateVar[V_2]{(15,13)}{v2}
\StateVar[1]{(7,8)}{1}
\StateVar[10B]{(1,5)}{10}
\StateVar[7/8]{(13,5)}{78}
\StateVar[5/6]{(7,1)}{56}

\ChgEdgeLineColor{light-gray}
\LoopN{2}{}
\CLoopNE{v0}{}
\LoopN{4}{}
\LoopS{1}{}
\LoopS{56}{}
\CLoopR{-50}{v1}{}
\CLoopL{70}{v2}{}
\CLoopW{10}{}
\CLoopE{78}{}

\ForthBackOffset	
\Edge{v0}{v1}	\Edge{v1}{v0}
\Edge{v0}{v2}	\Edge{v2}{v0}
\Edge{v1}{v2}	\Edge{v2}{v1}
\Edge{78}{56}	\Edge{56}{78}
\Edge{10}{56}	\Edge{56}{10}
\Edge{1}{78}	\Edge{78}{1}
\RstEdgeOffset
\Edge{78}{10}
\Edge{10}{1}
\Edge{56}{1}

\RstEdgeLineColor
\Edge{2}{4}
\Edge{2}{v0}
\Edge{2}{v1}
\Edge{2}{v2}
\Edge{2}{1}
\VCurveL{angleA=0,angleB=60,ncurv=1.5}{2}{78}{}
\VCurveR{angleA=180,angleB=120,ncurv=1.3}{2}{10}{}

\ArcR{v0}{10}{}
\ArcR{v0}{1}{}
\Edge{v0}{78}

\Edge{v1}{10}
\Edge{v1}{1}
\Edge{v1}{78}

\ArcL{v2}{10}{}
\ArcL{v2}{1}{}
\Edge{v2}{78}

\Edge{4}{1}
\Edge{4}{10}
\Edge{4}{78}
\end{VCPicture}
}
\caption{Modified graph of graphs.}
\label{figure: modified graph of graphs}
\end{figure}

\begin{remark}
\label{remark: k et l differents}
It is important to notice that the exponents $k$ and $\ell$ in morphisms $\gamma_{i_n}$ do not always correspond to the integers $k$ and $\ell$ in Lemma~\ref{lemma: allowed path for type 4}, Lemma~\ref{Lemma: graph of type 7 or 8} and Lemma~\ref{Lemma: graph of type 10}. Indeed, if for instance we consider the evolution of a Rauzy graph of type 2 to a Rauzy graph of type 4 as represented in Figure~\ref{Figure: 2 to 4}. The morphism coding this evolution is either $[yz^kx,z^{\ell}x,yz^{k-1}x]$ or $[z^kx,yz^{\ell}x,z^{k-1}x]$ for some integers $k$ and $\ell$. But, the circuits $\theta_{i_n+1}(0)$ and $\theta_{i_n+1}(1)$ go respectively $k-1$ and $\ell-1$ times through the loop.
\end{remark}

\begin{figure}[h!tbp]
\centering
\subfigure[Before evolution]{
\scalebox{0.7}{
\begin{VCPicture}{(-1,-1.5)(6,6)}
\StateVar[U_{i_n}]{(2.5,3)}{B}
\VCurveR{angleA=30,angleB=150,ncurv=6}{B}{B}{\theta_{i_n}(x)}
\VCurveL{angleA=-30,angleB=-150,ncurv=12}{B}{B}{\theta_{i_n}(y)}
\VCurveL{angleA=-40,angleB=-140,ncurv=5}{B}{B}{\theta_{i_n}(z)}
\end{VCPicture}
}}
\subfigure[After evolution]{
\scalebox{0.7}{
\begin{VCPicture}{(-1,-1.5)(6,6)}
\StateVar[U_{i_n+1}]{(1.5,4)}{l1}
\VSState{(1.5,2.5)}{l2}
\VSState{(1.5,1)}{l3}
\VSState{(3.5,4)}{r1}
\VSState{(3.5,2.5)}{r2}
\VSState{(3.5,1)}{r3}
\VCurveR{angleA=30,angleB=150,ncurv=2}{r1}{l1}{}
\VCurveL{angleA=-40,angleB=-140,ncurv=5.5}{r2}{l2}{}
\VCurveL{angleA=-30,angleB=-150,ncurv=2}{r3}{l3}{}
\Edge{l1}{r2}
\Edge{l1}{r3}
\Edge{l2}{r3}
\Edge{l3}{r3}
\Edge{l3}{r1}
\end{VCPicture}
}}
\caption{Evolution of a graph of type 2 to a graph of type 4.}
\label{Figure: 2 to 4}
\end{figure}


\begin{table}[h!tbp]
\centering
\begin{tabular}{|l|c|l|l|}
\hline
	To		&	Through 	&	Labels 	& 	Conditions		\\
\hline
1			&		/	&	$[x,yzx]$, $[yzx,x]$, $[xy,zy]$	&	\\
			&			&	$[xy,zxy]$, $[zxy,xy]$			&	\\
\hhline{~---}
			&	$4R$		&	$[yz^kx,z^kx]$, $[z^kx,yz^kx]$	&	$k \geq 2$	\\
			&			&	$[yz^kx,z^{k-1}x]$, $[z^{k-1}x,yz^kx]$	&	\\
			&			&	$[yz^{k-1}x,z^kx]$, $[z^kx,yz^{k-1}x]$	&	\\
\hhline{~---}
			&	$10R$	&	$[(xy)^kz,y(xy)^kz]$, $[y(xy)^kz,(xy)^kz]$ 	&	$k \geq 1$	\\
\hhline{~~--}
			&			&	$[(xy)^kz,y(xy)^{k-1}z]$, $[y(xy)^{k-1}z,(xy)^kz]$ 		&	$k \geq 2$	\\
\hline
	$4B$		&		/	&	$[x,yx,yzx]$, $[y,yzx,yx]$		&	\\
\hhline{~---}
			&	$4R$		&	$[y^{k-1}z,xy^kz,xy^{k-1}z]$		&	$k \geq 2$	\\
			&			&	$[y^{k-1}z,xy^{k-1}z,xy^kz]$		&	\\
			&			&	$[xy^{k-1}z,y^kz,y^{k-1}z]$		&	\\
			&			&	$[xy^{k-1}z,y^{k-1}z,y^kz]$		&	\\
\hline
	$V_0$	&		/	&	$[0,120,20]$, $[0,10,210]$		&	\\
\hline
	$V_1$	&		/	&	$[01,1,201]$, $[021,1,21]$		&	\\
\hline
	$V_2$	&		/	&	$[02,102,2]$, $[012,12,2]$		&	\\
\hline
7/8			&		/	&	$[x,y^kzx,(y^{k-1}zx)]$				&	$k \geq 2$	\\
			&			&	$[x,zy^kx,(zy^{k-1}x)]$				&				\\
			&			&	$[x,{(yz)}^kx,({(yz)}^{k-1}x)]$		&				\\
			&			&	$[xy,z^kxy,(z^{k-1}xy)]$				&				\\
			&			&	$[xy,z^ky,(z^{k-1}y)]$				&				\\
\hhline{~~--}
			&			&	$[x,{(yz)}^kyx,({(yz)}^{k-1}yx)]$	&	$k \geq 1$	\\
\hhline{~---}
			&	$4R$		&	$[z^{\ell}x,yz^kx,yz^{k-1}x]$	&	$k-1 > \ell \geq 1$	\\
			&			&	$[yz^{\ell}x,z^kx,z^{k-1}x]$	&	\\
\hhline{~---}
			&	$10R$	&	$[y(xy)^{\ell}z,(xy)^kz,(xy)^{k-1}z]$		&	$k-1 > \ell \geq 0$	\\
\hhline{~~--}
			&			&	$[(xy)^kz,y(xy)^{\ell}z,y(xy)^{\ell-1}z]$	&	$\ell > k \geq 1$	\\
\hline
			&	/		&	$[xy,zxy,zy]$	&	\\
\hhline{~---}
			&	$4R$		&	$[z^kx,yz^kx,yz^{k-1}x]$	&	$k \geq 2$	\\
			&			&	$[yz^kx,z^kx,z^{k-1}x]$		&		\\
\hhline{~---}
			&	$10R$	&	$[y(xy)^{k-1}z,(xy)^kz,(xy)^{k-1}z]$	&	$k \geq 2$	\\
\hhline{~~--}
			&			&	$[(xy)^kz,y(xy)^kz,y(xy)^{k-1}z]$		&	\\
\hline
\end{tabular}
\caption{Morphisms labelling the black edges starting from 2 in $\G'$}
\label{table: morphismes a partir de 2}
\end{table}

\begin{table}[h!tbp]
\centering
\begin{tabular}{|l|c|l|l|}
\hline
	To		&	Through 	&	Labels 	& 	Conditions		\\
\hline
	1		&	/		&	$[x,iy]$, $[iy,x]$, $[xi,yi]$				&			\\
\hhline{~---}
			&	$10R$	&	$[xy^ki,y^ki]$, $[y^ki,xy^ki]$				&	$k \geq 1$	\\
\hhline{~~--}
			&			&	$[xy^ki,y^{k-1}i]$, $[y^{k-1}i,xy^ki]$		&	$k \geq 2$	\\
\hline
	7/8		&	/		&	$[i,xy^ki,xy^{k-1}i]$	&	$k \geq 1$	\\
\hhline{~~--}
			&			&	$[x,i^ky,i^{k-1}y]$		&	$k \geq 2$	\\
\hline{~---}
			&	$10R$	&	$[xy^{\ell}i,y^ki,y^{k-1}i]$	&	$k-1 > \ell \geq 0$		\\
\hhline{~~--}
			&			&	$[y^ki,xy^{\ell}i,xy^{\ell-1}i]$	&	$\ell > k \geq 1$		\\
\hline
	$10B$	&	/		&	$[x,ix,iy]$	&	\\
\hhline{~---}
			&	$10R$	&	$[xy^{k-1}i,y^ki,y^{k-1}i]$		&	$k \geq 2$	\\
\hhline{~~--}
			&			&	$[y^ki,xy^ki,xy^{k-1}i]$		&	$k \geq 1$	\\
\hline
\end{tabular}
\caption{Morphisms labelling the black edges starting from $V_i$ in $\G'$}
\label{table: morphismes a partir de V_i}
\end{table}

\begin{table}[h!tbp]
\centering
\begin{tabular}{|l|c|l|l|}
\hline
	To		&	Through 	&	Labels 	& 	Conditions		\\
\hline
1			&	$4R$		&	$[x^ky,0x^ky]$, $[0x^ky,x^ky]$			&	$k \geq 1$	\\
			&			&	$[x^{k-1}y,0x^ky]$, $[0x^ky,x^{k-1}y]$	&		\\
			&			&	$[x^ky,0x^{k-1}y]$, $[0x^{k-1}y,x^ky]$	&		\\
\hhline{~---}
			&	$10R$	&	$[0(x0)^ky,(x0)^ky]$, $[(x0)^ky,0(x0)^ky]$			&	$k \geq 1$	\\
			&			&	$[0(x0)^{k-1}y,(x0)^ky]$, $[(x0)^ky,0(x0)^{k-1}y]$	&	\\
\hline
7/8			&	/		&	$[0,x^ky0,x^{k-1}y0]$	&	$k \geq 1$	\\
\hhline{~---}
			&	$4R$		&	$[x^{\ell}y,0x^ky,0x^{k-1}y]$	&	$k-1 > \ell \geq 0$	\\
			&			&	$[0x^{\ell}y,x^ky,x^{k-1}y]$	&	\\
\hhline{~---}
			&	$10R$	&	$[(x0)^{\ell}y,0(x0)^ky,0(x0)^{k-1}y]$		&	$k > \ell \geq 0$	\\
\hhline{~~--}	
			&			&	$[0(x0)^ky,(x0)^{\ell}y,(x0)^{\ell-1}y]$	&	$\ell-1 > k \geq 0$	\\
\hline
$10B$		&	$4R$		&	$[x^ky,0x^ky,0x^{k-1}y]$	&	$k \geq 1$	\\
			&			&	$[0x^ky,x^ky,x^{k-1}y]$		&		\\
\hhline{~---}
			&	$10R$	&	$[(x0)^ky,0(x0)^ky,0(x0)^{k-1}y]$		&	$k \geq 1$	\\
			&			&	$[0(x0)^{k-1}y,(x0)^ky,(x0)^{k-1}y]$	&		\\
\hline
\end{tabular}
\caption{Morphisms labelling the black edges starting from $4B$ in $\G'$}
\label{table: morphismes a partir de 4B}
\end{table}

\subsection{Final Result}
\label{subsection: final result}

Now we can give an $\S$-adic characterization of minimal and aperiodic subshift with first difference of complexity bounded by 2. It suffices to put together all what we proved until now.

\begin{theorem}
\label{thm: 2n final}
Let $(X,T)$ be a subshift over an alphabet $A$ and let 
\[
	\S = \{G,D,M,E_{01},E_{12}\}
\]
be the set of 5 morphisms as defined on page~\pageref{section: $S$-adicity of subshifts with complexity $2n$}. Then, $(X,T)$ is minimal and satisfies $1 \leq p_X(n+1) - p_X(n) \leq 2$ for all $n \in \N$ if and only if $(X,T)$ is $\S$-adic such that there exists a contraction $(\Gamma_n)_{n \in \N}$ of its directive word 
and a sequence of morphisms $(\alpha)_{n \in \N}$ labelling an infinite path $p$ in the graph represented at Figure~\ref{figure: modified graph of graphs} and such that

\begin{enumerate}
	\item 	there are infinitely many right proper morphisms in $(\alpha_n)_{n \in \N}$ and for all integers $n \geq 0$, $\Gamma_n$ is either $\alpha_n$ or $\alpha_n^{(L)}$ and there are infinitely many right proper morphisms and infinitely many left proper morphisms in $(\Gamma_n)_{n \in \N}$;
	
	\item	if $p$ ultimately stays in component $C_1$ (resp. $C_2$, $C_3$), then the suffix of $p$ that stays in that component satisfies the conditions of Proposition~\ref{lemma: allowed path for type 2} (resp. Proposition~\ref{proposition: allowed path for type 3}, Proposition~\ref{proposition: valid path in C3});
	
	\item	\label{condition 78 thm final}	if $p$ ultimately stays in component $C_4$, then the suffix $p'$ of $p$ that stays in that component satisfies the conditions of Proposition~\ref{proposition: valid path in C4} with the following additional condition: if $p'$ starts in $7/8$, if the edge preceding $p'$ in $p$ is labelled by some morphism $\alpha_n = [w_1,w_2 w_3^\mathfrak{k} w_4,w_2 w_3^{\mathfrak{k}-1} w_4]$ such that $\mathfrak{k} \geq 1$ corresponds to the greatest number of times that a circuit goes through the loop $v_2 u_2$ in Figure~\ref{figure: graph with 2 loops'}), if $h$ is the greatest integer such that $\alpha_{n+i} = [0,10,20]$ for all $i = 1, \dots, h$, then $h$ is finite and $\alpha_{n+h+1}$ can be in $\{[0,1],[1,0]\}$ if and only if $|u_1| + h (|u_1|+|v_1|) \geq |u_2| + (\mathfrak{k}-1) (|u_2|+|v_2|)$;
\end{enumerate}
\end{theorem}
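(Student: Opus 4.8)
The plan is to assemble the statement from the component-wise analyses of Sections~\ref{subsection: valid paths}--\ref{subsection: links between components}, treating the two implications separately and using Proposition~\ref{prop: valid path} as the pivot that converts the combinatorial notion of a valid path into the dynamical statement about $(X,T)$. Throughout, the modified graph of graphs $\G'$ of Figure~\ref{figure: modified graph of graphs} plays the role of the object into which every genuine sequence of Rauzy-graph evolutions must be translated.

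First I would prove necessity. Starting from a minimal aperiodic $(X,T)$ with $1 \leq p_X(n+1)-p_X(n) \leq 2$, I invoke Definition~\ref{definition: definition des morphismes} and Remark~\ref{remark: sigme = identity} to produce the sequence $(\gamma_{i_n})_{n \in \N}$, and Theorem~\ref{thm: 2n} to guarantee that it labels an infinite path in $\G$ with all labels in $\S^*$. Since this path arises from an actual subshift it is valid in the sense of Definition~\ref{def: valid path}, and in the finite graph it ultimately remains in exactly one of the strongly connected components $C_1 = \{2\}$, $C_2 = \{3\}$, $C_3 = \{4\}$, $C_4 = \{1,5,6,7,8,9,10\}$. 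I then push this path through the successive modifications of Section~\ref{subsection: component C_4} (merging $5,6$ into $5/6$ and $7,8$ into $7/8$ via Lemmas~\ref{Lemma: graph of type 5 or 6} and~\ref{Lemma: graph of type 7 or 8}, deleting vertex $9$ via Lemma~\ref{lemma: C4 type 9}, contracting the type-$10$ excursions via Lemma~\ref{Lemma: graph of type 10}), together with the link tables (Tables~\ref{table: morphismes a partir de 2}--\ref{table: morphismes a partir de 4B}), to obtain a contraction $(\alpha_n)_{n \in \N}$ labelling a path in $\G'$. The component propositions---Proposition~\ref{lemma: allowed path for type 2}, Proposition~\ref{proposition: allowed path for type 3}, Proposition~\ref{proposition: valid path in C3}, Proposition~\ref{proposition: valid path in C4}---then yield that the suffix staying in the relevant component satisfies the listed conditions. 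Properness is arranged by Lemma~\ref{lemma: toutes les images finissent pareil}, Lemma~\ref{lemma: conjugue de morphism}, Fact~\ref{fact: conjugue} and Lemma~\ref{lemma: convergence taU_n}: I contract so that infinitely many morphisms are right proper and replace the intervening factors by their left conjugates, producing $(\Gamma_n)_{n \in \N}$ with infinitely many right proper and infinitely many left proper morphisms, so that $(\Gamma_0 \cdots \Gamma_n(b_{n+1}))_n$ converges to a point of $X$.

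For sufficiency I reverse the argument. Given an $\S$-adic representation with $(\alpha_n)_{n \in \N}$ labelling a path $p$ in $\G'$ and satisfying conditions~(1)--(3), I read off edge by edge a sequence of evolutions of reduced Rauzy graphs; the fact that $p$ lives in $\G'$, whose construction encoded the local admissibility established in Lemmas~\ref{lemma: allowed path for type 4},~\ref{Lemma: graph of type 5 or 6},~\ref{Lemma: graph of type 7 or 8},~\ref{Lemma: graph of type 10} and Fact~\ref{fact: graph of type 1}, together with the side conditions~(A), (B) and~(i)--(iv) of Proposition~\ref{proposition: valid path in C4} and their analogues in the other components, guarantees that every prefix of $p$ is valid, i.e.\ the local condition of Proposition~\ref{prop: valid path} holds. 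Weak primitivity of $(\alpha_n)$, ensured by the occurrence conditions in each component and by Lemma~\ref{lemma: weakly primitive C_4}, supplies the global condition. Proposition~\ref{prop: valid path} then provides a minimal subshift with $1 \leq p(n+1)-p(n) \leq 2$ whose associated directive word is $(\gamma_{i_n})$, and the properness conditions of~(1) together with Lemma~\ref{lemma: conjugue de morphism} identify the $\S$-adic limit of $(\Gamma_n)$ as a uniformly recurrent point generating precisely this subshift, which is therefore $(X,T)$.

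The main obstacle will be the length-dependent bookkeeping at the vertices $5/6$ and $7/8$. The passage from $\G$ to $\G'$ absorbs most local-validity constraints into the graph structure, but the comparisons of path lengths $|u_1|,|u_2|,|v_1|,|v_2|,|p_1|,|p_2|$ appearing in Lemmas~\ref{Lemma: graph of type 5 or 6} and~\ref{Lemma: graph of type 7 or 8} cannot be encoded purely combinatorially, and they survive as the explicit side conditions~(A), (B) and as condition~\ref{condition 78 thm final}. Verifying that these residual numerical conditions, imposed on a path in $\G'$, are together \emph{exactly} equivalent---neither too strong nor too weak---to the validity of the original path is the delicate point, and it is precisely where the matching between the $(\eta_{j_n})$ decomposition of Lemma~\ref{lemma: decomposition du mot directeur} and the integer $\ell$ of inequality~\eqref{eq: inequ 78} must be checked against the chosen contraction.
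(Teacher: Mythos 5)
Your two-implication outline correctly reflects how the theorem is meant to follow from the machinery of Section~\ref{section: caraterisation} (Proposition~\ref{prop: valid path} as pivot, the component propositions for the suffix conditions, the length bookkeeping at $5/6$ and $7/8$ surviving as conditions (A), (B) and~\ref{condition 78 thm final}), and the paper itself dispatches all of that in one sentence (``it suffices to put together all what we proved until now''). However, you have omitted the one step that the paper's proof of this theorem actually consists of, and which is essential to the statement as written: verifying that every morphism $\Gamma_n$ --- that is, every $\alpha_n$ labelling an edge of $\G'$ \emph{and} every left conjugate $\alpha_n^{(L)}$ --- belongs to $\S^*$ for the specific five-element set $\S=\{G,D,M,E_{01},E_{12}\}$. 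Without this the conclusion is only $S$-adicity for the (infinite) family of morphisms listed in Proposition~\ref{proposition: valid path in C3}, Tables~\ref{table: label C_4_1}--\ref{table: label C_4_2} and Tables~\ref{table: morphismes a partir de 2}--\ref{table: morphismes a partir de 4B}, not $\S$-adicity with $\card\S=5$.

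You cannot delegate this to Theorem~\ref{thm: 2n}: the paper explicitly only sketches that theorem on an example and defers the complete verification to the proof of Theorem~\ref{thm: 2n final}, and in any case the contraction producing $(\alpha_n)$ introduces new composite morphisms (the labels of $\G'$, including the added loops on $5/6$ and the edge $10B \to 5/6$, and the ``black'' linking morphisms) whose membership in $\S^*$, together with that of their left conjugates, is not covered by the earlier sketch. The paper's proof handles exactly this by reducing all labels to a short list of normal forms such as $[x,y^k x,y^{k-1}x]$, $[x y^\ell z, y^k z, y^{k-1}z]$, etc., and exhibiting explicit factorizations, e.g. $[x,y^k x,y^{k-1}x]=M_{z,x}G_{z,y}^{k-1}D_{y,z}[x,y,z]$, where $G_{x,y}$, $D_{x,y}$, $M_{x,y}$, $E_{0,2}$ are themselves shown to be words in $G$, $D$, $M$, $E_{01}$, $E_{12}$. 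Your proposal needs this computation (or an equivalent argument) to be a proof of the stated theorem.
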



\begin{proof}
The last thing that remains to prove is that all morphisms $\Gamma_n$ belong to $\S^*$. To avoid long decompositions, we define the morphism $E_{0,2} = [2,1,0] = E_{0,1} E_{1,2} E_{0,1}$. We also define the following morphisms of $\S^*$. For $G_{x,y}$ (resp. $D_{x,y}$), read \enquote{add $y$ to the left (resp. right) of $x$}. For $M_{x,y}$, read \enquote{map $x$ to $y$}.
\begin{center}
\begin{tabular}{ll}
$G_{0,1} = [1 0,1,2] = G$								&	$D_{0,1} = [0 1,1,2] = D$ \\
$G_{0,2} = [2 0,1,2] = E_{1,2} G E_{1,2}$ 				&	$D_{0,2} = [0 2,1,2] = E_{1,2} D E_{1,2} $	\\
$G_{1,0} = [0,0 1,2] = E_{0,1} G E_{0,1}$				&	$D_{1,0} = [0,1 0,2] = E_{0,1} D E_{0,1}$ \\
$G_{1,2} = [0,2 1,2] = E_{0,1} G_{0,2} E_{0,1}$			&	$D_{1,2} = [0,1 2,2] = E_{0,1} D_{0,2} E_{0,1}$	\\
$G_{2,0} = [0,1,0 2] = E_{0,2} G_{0,2} E_{0,2}$			&	$D_{2,0} = [0,1,2 0] = E_{0,2} D_{0,2} E_{0,2}$	\\
$G_{2,1} = [0,1,1 2] = E_{1,2} G_{1,2} E_{1,2}$			&	$D_{2,1} = [0,1,2 1] = E_{1,2} D_{1,2} E_{1,2}$	\\
$M_{0,1} = [1,1,2]   = E_{0,2} M E_{0,2}$				&	$M_{1,0} = [0,0,2] 	 = E_{0,1} M_{0,1}$	\\
$M_{0,2} = [2,1,2]   = E_{0,1} E_{1,2} M E_{0,1}$		&	$M_{2,0} = [0,1,0]   = E_{0,2} M_{0,2}$	\\
$M_{1,2} = [0,2,2]   = E_{1,2} M$						&	$M_{2,1} = [0,1,1]   = M$	
\end{tabular}
\end{center}

Now we can compute all decompositions. The morphisms labelling inner edges in components $C_1$ and $C_2$ are easily seen to belong to $\S^*$. Hence, we can restrict ourselves to those labelling edges in components $C3$ and $C_4$ and labelling the black edges in Figure~\ref{figure: modified graph of graphs}. Furthermore, the only morphisms that really need some computation are those that depend on some exponents $k$ or $\ell$. Observe that the conditions on $k$ and $\ell$ given below are sometimes not restrictive enough; a given type morphism might label different edges, but the conditions on $k$ and $\ell$ can be different for these edges. The conditions we consider here are taken to be the most general. 

When having a look at the concerned morphisms in Proposition~\ref{proposition: valid path in C3}, Table~\ref{table: label C_4_1}, Table~\ref{table: label C_4_2}, Table~\ref{table: morphismes a partir de 2}, Table~\ref{table: morphismes a partir de V_i} and Table~\ref{table: morphismes a partir de 4B}, we see that all of them can be written as one of the following morphisms
\[
\begin{array}{ll}
	{[x,y^k x,y^{k-1}x]}, 				& 	k \geq 2				\\
	{[x,y^kz,y^{k-1}z]}, 				&	k \geq 1				\\	 
	{[x,zy^kx,zy^{k-1}x]},				&	k \geq 1				\\
	{[y^\ell x,zy^kx,zy^{k-1}x]},		&	k > \ell \geq 0		\\		
	{[y^k x,zy^kx,zy^{k-1}x]},			&	k \geq 1				\\
	{[x y^\ell z,y^k z,y^{k-1}z]},		&	k \geq \ell \geq 0, k+ \ell \geq 1	\\
\end{array}
\]

possibly up to exchange of some images, or up to applying a few last morphisms from the list given above. For instance, 
the morphism $[y(xy)^{\ell}z,(xy)^kz,(xy)^{k-1}z]$ in Table~\ref{table: morphismes a partir de 2} can be written
\[
	D_{x,y} E_{x,y} [x y^\ell z,y^k z,y^{k-1}z].
\]
Thus, all we have to do is to compute the decompositions of the previous morphisms, as well as the decomposition of their respective left conjugates when they exist (except for $[x,yz^kx,yz^{k-1}x]$ that only occurs as label of black edges, where it is useless to consider left conjugates). We obtain
\begin{eqnarray*}
	[x,y^k x,y^{k-1}x] 					& = &  	M_{z,x} G_{z,y}^{k-1} D_{y,z} [x,y,z]	\\
	{[x,x y^k ,xy^{k-1}]} 				& = &  	M_{z,x} D_{z,y}^{k-1} G_{y,z} [x,y,z]	\\
	{[x,y^k z,y^{k-1}z]} 				& = &	G_{z,y}^{k-1} D_{y,z} [x,y,z] [x,y,z]	\\
	{[x,zy^k x,zy^{k-1}x]}				& = &	D_{z,y}^{k-1} G_{y,z} D_{y,x} D_{z,x} [x,y,z]	\\
	{[y^\ell x,zy^kx,zy^{k-1}x]}			& = &	D_{z,y}^{k-\ell-1} G_{x,y}^\ell G_{y,z} D_{y,x} D_{z,x} [x,y,z]	\\
	{[xy^\ell ,xzy^k,xzy^{k-1}]}			& = &	G_{z,x} D_{z,y}^{k-1} D_{x,y}^\ell G_{y,z} [x,y,z]	\\
	{[y^k x,zy^kx,zy^{k-1}x]}			& = &	G_{x,y}^{k-1} D_{y,x} G_{x,z} D_{z,y} [y,z,x]		\\
	{[xy^k,xzy^k,xzy^{k-1}]}				& = &	G_{z,x} D_{z,y}^{k-1} D_{x,y}^k G_{y,z} [x,y,z]	\\
	{[x y^\ell z,y^k z,y^{k-1}z]}		& = &	D_{x,y}^\ell D_{x,z} G_{z,y}^{k-1} D_{y,z} [x,y,z]	\\
	{[zx y^\ell,zy^k,zy^{k-1}]}			& = &	G_{x,y}^\ell G_{x,z} D_{z,y}^{k-1} G_{y,z} [x,y,z]		\\
\end{eqnarray*}
which concludes the proof.
\end{proof}

\begin{remark}
Up to know, Theorem~\ref{thm: 2n final} is stated in such a way that we have to keep track of the Rauzy graphs to be able to compute the length of some paths $p_1$ and $p_2$ in Figure~\ref{figure: graph with no loop'} and $u_1$, $u_2$, $v_1$ and $v_2$ in Figure~\ref{figure: graph with 2 loops'}. This can actually be avoided by expressing these lengths only using the first morphisms of the directive word. Indeed, if $p$ is a valid path in $\G'$ (labelled by $(\alpha_n)_{n \in \N}$) and if $p'$ is a prefix of $p$ ending in $5/6$ (resp. in $7/8$), then the lengths $|p_1|$ and $|p_2|$ (resp. $|u_1|$, $|u_2|$, $|v_1|$ and $|v_2|$) can be expressed only with the label $(\alpha_n)_{0 \leq n \leq N}$ of $p'$. 
The interested reader can find the calculation in Section~\ref{appendix: computation of length}.
\end{remark}

To obtain the exact complexities $p(n) = 2n$ or $p(n) = 2n+1$, it suffices to impose respectively that $p(1) = 2$ or $p(1)=3$ and that for all $n \geq 1$, $p(n+1)-p(n) = 2$. This can be expressed by the fact the Rauzy graphs cannot be of type 1 (because these graphs are such that $p(n+1)-p(n)=1$). Consequently, one just has to impose that the path $p$ of the theorem does no go through vertex 1 except in some particular cases depending on the lengths $|u_1|$, $|u_2|$, $|v_1|$, $|v_2|$, $|p_1|$ and $|p_2|$.

\begin{corollary}
\label{2n exact}
A subshift $(X,T)$ is minimal and has complexity $p(n)=2n$ (resp. $p(n)=2n+1$) for all $n \geq 1$ if and only if it is an $\S$-adic subshift satisfying Theorem~\ref{thm: 2n final} and the following additional conditions:
\begin{enumerate}
	\item the path $p$ of Theorem~\ref{thm: 2n final} starts in vertex 1 or starts in vertex 2 and then $\alpha_0$ labels the edge to vertex $7/8$;
	
	\item in Condition~\ref{cond 7/8} of Proposition~\ref{proposition: valid path in C4} and in Condition~\ref{condition 78 thm final} of Theorem~\ref{thm: 2n final}, the inequality
	\[
		|u_1| + h (|u_1|+|v_1|) \geq |u_2| + (\mathfrak{k}-1) (|u_2|+|v_2|)
	\]
	is replaced by 
	\[
		|u_1| + h (|u_1|+|v_1|) = |u_2| + (\mathfrak{k}-1) (|u_2|+|v_2|)
	\]
	and in that case, $\alpha_{n+h+2}$ must label the edge from $1$ to $7/8$;

	\item in Condition~\ref{cond 7/8} of Proposition~\ref{proposition: valid path in C4}, the inequality $|p_1| \geq |p_2|$ is replaced by $|p_1| = |p_2|$.
\end{enumerate}
\end{corollary}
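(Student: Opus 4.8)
The plan is to prove Corollary~\ref{2n exact} by showing that each of the three added conditions is precisely the combinatorial translation of the requirement $p_X(n+1)-p_X(n)=2$ for all $n\ge 1$. The starting observation is that, by the dichotomy already established in Section~\ref{subsection: 10 chapes of rauzy graphs}, the first difference $p_X(n+1)-p_X(n)$ equals $1$ exactly when the Rauzy graph $G_n$ has a single right special vertex that is extendable by only two letters and a single left special vertex, i.e.\ when $g_n$ is of Type~1. Thus having $p(n+1)-p(n)=2$ for all $n$ is \emph{equivalent} to saying that no Rauzy graph $G_n$ is of Type~1 --- equivalently, in the directive word produced by Theorem~\ref{thm: 2n final}, the path $p$ never passes through vertex~$1$ \emph{at those indices $n$ where a bispecial explosion genuinely lowers the complexity difference}. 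The initial value $p(1)=2$ versus $p(1)=3$ corresponds exactly to whether the alphabet has $2$ or $3$ letters, hence to whether $p$ starts at vertex~$1$ (Sturmian, forbidden) or at vertex~$2$; this is what Condition~(1) records.

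First I would make Condition~(1) precise. The subshift has $p(1)=2$ exactly when $\card A=2$, and $p(1)=3$ exactly when $\card A =3$; by Remark~\ref{remark: rem: validity of prefix} a valid path starts at vertex~$1$ in the binary case and at vertex~$2$ in the ternary case. Requiring $p(n+1)-p(n)=2$ for \emph{all} $n\ge 1$ in the binary case is impossible unless the very first evolution already leaves Type~1 for a two-right-special shape: that is precisely the requirement that $\alpha_0$ labels the edge from $2$ to $7/8$ (the only outgoing edge producing two right special factors immediately), and in the genuinely binary case one must already have started at vertex~$1$ but then jumped to a graph of Type~$7/8$. I would verify against the graph of graphs $\G'$ (Figure~\ref{figure: modified graph of graphs}) that these are exactly the admissible first steps that avoid a Type~1 graph from the outset.

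Next I would handle Conditions~(2) and~(3), which are the heart of the matter. The inequalities appearing in Condition~\ref{cond 7/8} of Proposition~\ref{proposition: valid path in C4} and in Condition~\ref{condition 78 thm final} of Theorem~\ref{thm: 2n final} govern \emph{when} a graph of Type $7/8$ evolves back through a graph of Type~1. Using Lemma~\ref{Lemma: graph of type 7 or 8} (and the explicit length bookkeeping $e_1(i)=t+|u_1|+i(|u_1|+|v_1|)$, $e_2(j)=t+|u_2|+j(|u_2|+|v_2|)$ in its proof), the graph becomes Type~1 exactly when a vertex $B_1(\ell)$ of \emph{equal length} to $B_2(\mathfrak{k}-1)$ explodes simultaneously with only two circuits; the strict inequality $|u_1|+h(|u_1|+|v_1|)\ge |u_2|+(\mathfrak k-1)(|u_2|+|v_2|)$ in the weak version allows the passage through vertex~$1$, whereas replacing $\ge$ by $=$ forces the two bispecial factors to explode at the same length, so that the ensuing graph never degenerates to a single two-extendable right special vertex but instead immediately produces a Type~$7/8$ graph again --- which is exactly what the added clause ``$\alpha_{n+h+2}$ must label the edge from $1$ to $7/8$'' prescribes. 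The same length comparison applied to graphs of Type $5/6$ (where $|p_1|,|p_2|$ play the role of the loop-length parameters) gives Condition~(3): $|p_1|=|p_2|$ forces Type~6 rather than Type~5, hence two bispecials exploding together and no drop to difference~$1$.

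The main obstacle I anticipate is the passage through vertex~$1$ that is \emph{not} followed by a return to a two-right-special shape: I must show that any such passage \emph{necessarily} creates a length $n$ with $p(n+1)-p(n)=1$, and conversely that every length with difference~$1$ is produced by exactly such a passage and is captured by one of the three equality conditions. This is a two-sided verification over the finitely many edges of $\G'$ touching vertex~$1$, and the bookkeeping of the exponents $k,\ell,\mathfrak k$ is delicate because, as warned in Remark~\ref{remark: k et l differents}, the exponents in the morphisms $\gamma_{i_n}$ need not coincide with the loop-multiplicities in Lemmas~\ref{Lemma: graph of type 7 or 8} and~\ref{Lemma: graph of type 10}. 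I would organize the argument as a case analysis over the incoming edges of vertex~$1$ in Figure~\ref{figure: modified graph of graphs}, checking in each case that the equality version of the length condition is equivalent to the relevant bispecial factor being \emph{neutral} (so that the complexity difference stays at $2$) rather than weak (which would drop it to $1$), and I would close by invoking Theorem~\ref{thm: 2n final} to transfer the graph-theoretic conclusion back to the stated $\S$-adic conditions.
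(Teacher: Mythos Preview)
Your approach is essentially the same as the paper's, which gives only the short paragraph immediately preceding the corollary: requiring $p(n+1)-p(n)=2$ for all $n\ge1$ is equivalent to never having a Rauzy graph of type~1, and the three additional conditions are precisely the translation of this into the language of Theorem~\ref{thm: 2n final}. Your plan correctly identifies that the equality cases in Conditions~(2) and~(3) are exactly those in which the formal passage through vertex~$1$ in $\G'$ is an artifact of the decomposition of Lemma~\ref{lemma: decomposition du mot directeur} (two bispecial vertices of equal length exploding simultaneously), so that no genuine type~1 Rauzy graph occurs.
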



\section{Acknowledgement}
The author would like to thank Fabien Durand and Gwena\"el Richomme for their help and useful comments about this work. This paper was mainly written during the author's PhD thesis~\cite{Leroy_these} at the University of Picardie. It is also supported by the internal research project F1R-MTH-PUL-12RDO2 of
the University of Luxembourg.

\bibliographystyle{alpha}
\bibliography{bibliography}

\def\cprime{$'$}
\begin{thebibliography}{DLR13}

\bibitem[All94]{Allouche_survey}
J.-P. Allouche.
\newblock Sur la complexit\'e des suites infinies.
\newblock {\em Bull. Belg. Math. Soc. Simon Stevin}, 1(2):133--143, 1994.
\newblock Journ{\'e}es Montoises (Mons, 1992).

\bibitem[AR91]{Arnoux-Rauzy}
P.~Arnoux and G.~Rauzy.
\newblock Repr\'esentation g\'eom\'etrique de suites de complexit\'e {$2n+1$}.
\newblock {\em Bull. Soc. Math. France}, 119:199--215, 1991.

\bibitem[AS03]{Allouche-Shallit}
J.-P. Allouche and J.~Shallit.
\newblock {\em Automatic sequences: Theory, applications, generalizations}.
\newblock Cambridge University Press, Cambridge, 2003.

\bibitem[BHZ06]{Berthe-Holton-Zamboni}
V.~Berth{\'e}, C.~Holton, and L.~Q. Zamboni.
\newblock Initial powers of {S}turmian sequences.
\newblock {\em Acta Arith.}, 122:315--347, 2006.

\bibitem[Bos85]{Boshernitzan}
M.~Boshernitzan.
\newblock A unique ergodicity of minimal symbolic flows with linear block
  growth.
\newblock {\em J. Analyse Math.}, 44:77--96, 1984/85.

\bibitem[BR10]{CANT}
V.~Berth\'{e} and M.~Rigo, editors.
\newblock {\em Combinatorics, Automata and Number Theory}, volume 135 of {\em
  Encyclopedia of Mathematics and its Applications}.
\newblock Cambridge University Press, Cambridge, 2010.

\bibitem[Bra72]{Bratteli}
O.~Bratteli.
\newblock Inductive limits of finite dimensional {$C^{\ast} $}-algebras.
\newblock {\em Trans. Amer. Math. Soc.}, 171:195--234, 1972.

\bibitem[Cas96]{Cassaigne_big_thm}
J.~Cassaigne.
\newblock Special factors of sequences with linear subword complexity.
\newblock In {\em Developments in language theory, {II} ({M}agdeburg, 1995)},
  pages 25--34. World Sci. Publ., River Edge, NJ, 1996.

\bibitem[Cas97]{Cassaigne_resume}
J.~Cassaigne.
\newblock Complexit\'e et facteurs sp\'eciaux.
\newblock {\em Bull. Belg. Math. Soc. Simon Stevin}, 4:67--88, 1997.
\newblock {J}ourn{\'e}es Montoises (Mons, 1994).

\bibitem[DHS99]{Durand-Host-Skau}
F.~Durand, B.~Host, and C.~Skau.
\newblock Substitutional dynamical systems, {B}ratteli diagrams and dimension
  groups.
\newblock {\em Ergodic Theory Dynam. Systems}, 19:953--993, 1999.

\bibitem[Did98]{Didier_codage_combinatoire}
G.~Didier.
\newblock Combinatoire des codages de rotations.
\newblock {\em Acta Arith.}, 85(2):157--177, 1998.

\bibitem[DL12]{Durand-Leroy}
F.~Durand and J.~Leroy.
\newblock {$S$}-adic conjecture and {B}ratteli diagrams.
\newblock {\em C. R. Math. Acad. Sci. Paris}, 350(21-22):979--983, 2012.

\bibitem[DLR13]{Durand-Leroy-Richomme}
F.~Durand, J.~Leroy, and G.~Richomme.
\newblock Do the {P}roperties of an {$S$}-adic {R}epresentation {D}etermine
  {F}actor {C}omplexity?
\newblock {\em J. Integer Seq.}, 16(2):Art. 13.2.6, 30, 2013.

\bibitem[Dur98]{Durand_UR}
F.~Durand.
\newblock A characterization of substitutive sequences using return words.
\newblock {\em Discrete Math.}, 179:89--101, 1998.

\bibitem[Dur00]{Durand_LR}
F.~Durand.
\newblock Linearly recurrent subshifts have a finite number of non-periodic
  subshift factors.
\newblock {\em Ergodic Theory Dynam. Systems}, 20:1061--1078, 2000.

\bibitem[Dur03]{Durand_corrigentum}
F.~Durand.
\newblock Corrigendum and addendum to: ``{L}inearly recurrent subshifts have a
  finite number of non-periodic subshift factors''.
\newblock {\em Ergodic Theory Dynam. Systems}, 23:663--669, 2003.

\bibitem[Fer96]{Ferenczi}
S.~Ferenczi.
\newblock Rank and symbolic complexity.
\newblock {\em Ergodic Theory Dynam. Systems}, 16:663--682, 1996.

\bibitem[Fer99]{Ferenczi_survey}
S.~Ferenczi.
\newblock Complexity of sequences and dynamical systems.
\newblock {\em Discrete Math.}, 206(1-3):145--154, 1999.
\newblock Combinatorics and number theory (Tiruchirappalli, 1996).

\bibitem[Fog02]{Pytheas-Fogg}
N.~Pytheas Fogg.
\newblock {\em Substitutions in dynamics, arithmetics and combinatorics},
  volume 1794 of {\em Lecture Notes in Mathematics}.
\newblock Springer-Verlag, Berlin, 2002.
\newblock Edited by V. Berth{\'e}, S. Ferenczi, C. Mauduit and A. Siegel.

\bibitem[FZ08]{Ferenczi-Zamboni_language}
S.~Ferenczi and L.~Q. Zamboni.
\newblock Languages of {$k$}-interval exchange transformations.
\newblock {\em Bull. Lond. Math. Soc.}, 40(4):705--714, 2008.

\bibitem[GJ09]{Glen-Justin}
A.~Glen and J.~Justin.
\newblock Episturmian words: a survey.
\newblock {\em Theor. Inform. Appl.}, 43(3):403--442, 2009.

\bibitem[HPS92]{Herman-Putnam-Skau}
R.~H. Herman, I.~F. Putnam, and C.~F. Skau.
\newblock Ordered {B}ratteli diagrams, dimension groups and topological
  dynamics.
\newblock {\em Internat. J. Math.}, 3(6):827--864, 1992.

\bibitem[Ler12a]{Leroy_these}
J.~Leroy.
\newblock {\em Contribution to the resolution of the {$S$}-adic conjecture}.
\newblock PhD thesis, Universit\'e de Picardie Jules Verne, 2012.

\bibitem[Ler12b]{Leroy}
J.~Leroy.
\newblock Some improvements of the {$S$}-adic conjecture.
\newblock {\em Advances in Applied Mathematics}, 48(1):79 -- 98, 2012.

\bibitem[Lot97]{Lothaire1983}
M.~Lothaire.
\newblock {\em Combinatorics on words}.
\newblock Cambridge Mathematical Library. Cambridge University Press,
  Cambridge, 1997.
\newblock Corrected reprint of the 1983 original.

\bibitem[Lot02]{Lothaire}
M.~Lothaire.
\newblock {\em Algebraic combinatorics on words}, volume~90 of {\em
  Encyclopedia of Mathematics and its Applications}.
\newblock Cambridge University Press, Cambridge, 2002.

\bibitem[LR13]{Leroy-Richomme}
J.~Leroy and G.~Richomme.
\newblock A combinatorial proof of {$S$}-adicity for sequences with sub-affine
  complexity.
\newblock {\em Integers}, 13, 2013.

\bibitem[MH40]{Morse-Hedlund}
M.~Morse and G.~A. Hedlund.
\newblock Symbolic dynamics {II}. {S}turmian trajectories.
\newblock {\em Amer. J. Math.}, 62:1--42, 1940.

\bibitem[Mon05]{Monteil_phd}
T.~Monteil.
\newblock {\em Illumination in polygonal billiards and symbolic dynamics}.
\newblock PhD thesis, Universit\'e Aix-Marseille II, 2005.

\bibitem[MS93]{Mignosi-Seebold}
F.~Mignosi and P.~S{\'e}{\'e}bold.
\newblock Morphismes sturmiens et r\`egles de {R}auzy.
\newblock {\em J. Th\'eor. Nombres Bordeaux}, 5(2):221--233, 1993.

\bibitem[Pan84]{Pansiot}
J.-J. Pansiot.
\newblock Complexit\'e des facteurs des mots infinis engendr\'es par morphismes
  it\'er\'es.
\newblock In {\em Automata, languages and programming ({A}ntwerp, 1984)},
  volume 172 of {\em Lecture Notes in Comput. Sci.}, pages 380--389. Springer,
  Berlin, 1984.

\bibitem[Rau79]{Rauzy_iet}
G.~Rauzy.
\newblock \'{E}changes d'intervalles et transformations induites.
\newblock {\em Acta Arith.}, 34(4):315--328, 1979.

\bibitem[Rot94]{Rote}
G.~Rote.
\newblock Sequences with subword complexity {$2n$}.
\newblock {\em J. Number Theory}, 46:196--213, 1994.

\bibitem[Thu06]{Thue_1}
A.~Thue.
\newblock {\em {\\"U}ber unendliche Zeichenreihen}.
\newblock Skrifter udgivne af Videnskabsselskabet i Christiania. 1906.
\newblock Reprinted in \textit{Selected Mathematical Papers of Axel Thue}, T.
  Nagell \textit{et al.}, editors, Universitetsforlaget, Oslo, 1977, pp. 139--
  158.

\bibitem[Thu12]{Thue_2}
A.~Thue.
\newblock {\em {\\"U}ber die gegenseitige Lage gleicher Teile gewisser
  Zeichenreihen}.
\newblock Skrifter utg. av Videnskapsselskapet i Kristiania. I, Mat.-naturv.
  klasse. 1912.
\newblock Reprinted in \textit{Selected Mathematical Papers of Axel Thue}, T.
  Nagell \textit{et al.}, editors, Universitetsforlaget, Oslo, 1977, pp.
  413--477.

\bibitem[Ver82]{Vershik}
A.~M. Vershik.
\newblock A theorem on {M}arkov periodic approximation in ergodic theory.
\newblock {\em Zap. Nauchn. Sem. Leningrad. Otdel. Mat. Inst. Steklov. (LOMI)},
  115:72--82, 306, 1982.
\newblock Boundary value problems of mathematical physics and related questions
  in the theory of functions, 14.

\bibitem[VL92]{Vershik-Livshits}
A.~M. Vershik and A.~N. Livshits.
\newblock Adic models of ergodic transformations, spectral theory,
  substitutions, and related topics.
\newblock In {\em Representation theory and dynamical systems}, volume~9 of
  {\em Adv. Soviet Math.}, pages 185--204. Amer. Math. Soc., Providence, RI,
  1992.

\bibitem[War02]{Wargan}
K.~Wargan.
\newblock {\em {$S$}-adic Dynamical Systems and Bratteli Diagrams}.
\newblock PhD thesis, George Washington University, 2002.

\end{thebibliography}

\appendix

\newpage

\section{Evolution of Rauzy graphs such that $1 \leq p(n+1)-p(n) \leq 2$}	
\label{appendix: evolutions}

Here we present all possible evolutions of Rauzy graphs. They all correspond to edges in the graph of graphs (Figure~\ref{figure: graph of graphs}). We also give the corresponding morphisms $\gamma_{i_n}$. Note that some of them depend on the starting vertices $U_{i_n}$ and $U_{i_{n+1}}$ so we also give them accordingly to the graphs represented in Figure~\ref{figure: Rauzy graphs with at least 1 bispecial vertex}.

\subsection{Evolution of a Rauzy graph of type 1}

A graph of type 1 is represented in Figure~\ref{appendix type 1}. The possible evolutions are represented in Figure~\ref{appendix 1 to}.

{
\begin{figure}[h!tbp]
\centering
\begin{VCPicture}{(0,0.5)(4,3)}
\VSState{(2,2)}{B}
\EdgeLineDouble
\VCurveR[]{angleA=30,angleB=150,ncurv=20}{B}{B}{}
\VCurveL[]{angleA=-30,angleB=-150,ncurv=20}{B}{B}{}
\end{VCPicture}
\caption{Graph of type 1}
\label{appendix type 1}
\end{figure}

\begin{figure}[h!tbp]
\centering
\subfigure[To a graph of type 1]{
\begin{VCPicture}{(-0.5,-0.5)(4.5,4)}
\EdgeLineDouble
\VSState{(1,3)}{L1}
\VSState{(1,1)}{L2}
\VSState{(3,3)}{R1}
\VSState{(3,1)}{R2}
\VCurveR[]{angleA=30,angleB=150,ncurv=2}{R1}{L1}{}
\VCurveL[]{angleA=-30,angleB=-150,ncurv=2}{R2}{L2}{}
\Edge{L1}{R1}
\Edge{L1}{R2}
\Edge{L2}{R1}
\end{VCPicture}
}
\qquad
\subfigure[To a graph of type 7 or 8]{
\label{1 to 78}
\begin{VCPicture}{(-0.5,-0.5)(4.5,4)}
\EdgeLineDouble
\VSState{(1,3)}{L1}
\VSState{(1,1)}{L2}
\VSState{(3,3)}{R1}
\VSState{(3,1)}{R2}
\VCurveR[]{angleA=30,angleB=150,ncurv=2}{R1}{L1}{}
\VCurveL[]{angleA=-30,angleB=-150,ncurv=2}{R2}{L2}{}
\Edge{L1}{R1}
\Edge{L1}{R2}
\Edge{L2}{R1}
\Edge{L2}{R2}
\end{VCPicture}
}
\caption{Possible evolutions for a graph of type 1}
\label{appendix 1 to}
\end{figure}
}

\begin{table}[h!tbp]
\centering
\begin{tabular}{|l|c|l|l|}
\hline
From 1 to 	& 	$(U_{i_n},U_{i_{n+1}})$		&	Morphisms 	& 	Conditions		\\
\hline
1		&	$(B,B)$	&	$[x,yx]$, $[yx,x]$	&	\\
\hline
7 or 8	&	$(B,\star)$	&	$[x,y^kx,(y^{k-1}x)]$	&	$k \geq 2$ \\
\hline
\end{tabular}
\caption{List of morphisms coding the evolutions of a graph of type 1}
\label{List of morphisms coding the evolutions of a graph of type 1}
\end{table}

\newpage

\subsection{Evolution of a Rauzy graph of type 2}

A graph of type 2 is represented in Figure~\ref{appendix type 2}. The possible evolutions are represented in Figure~\ref{appendix 2 to 1}, Figure~\ref{appendix 2 to 234} and Figure~\ref{appendix 2 to 7810}.

\begin{figure}[h!tbp]
\centering
\begin{VCPicture}{(0,1)(4,4)}
\VSState{(2.5,3)}{B}
\EdgeLineDouble
\VCurveR{angleA=30,angleB=150,ncurv=20}{B}{B}{}
\VCurveL{angleA=-25,angleB=-155,ncurv=35}{B}{B}{}
\VCurveL{angleA=-40,angleB=-140,ncurv=20}{B}{B}{}
\end{VCPicture}
\caption{Graph of type 2}
\label{appendix type 2}
\end{figure}

\begin{figure}[h!tbp]
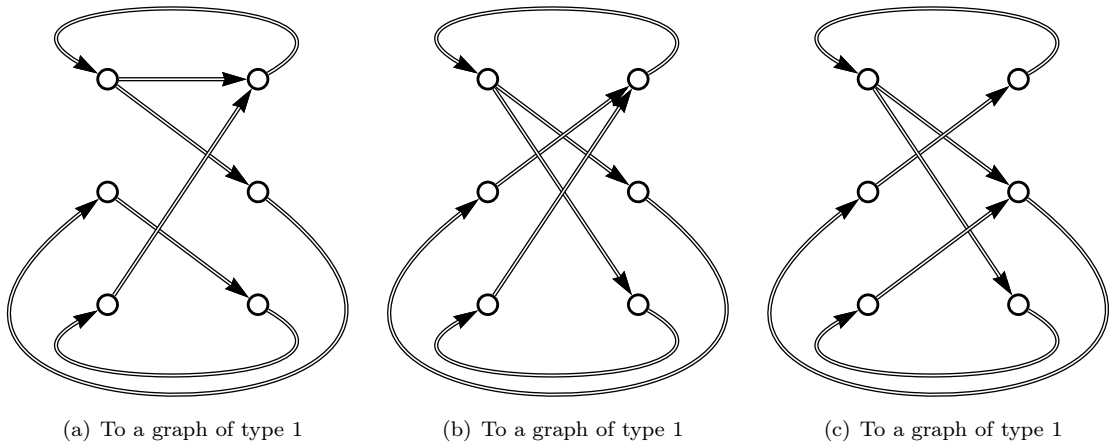

\centering
\subfigure[To a graph of type 1]{
\begin{VCPicture}{(0,-0.5)(4,5.5)}
\EdgeLineDouble
\VSState{(1,4)}{L1}
\VSState{(1,2.5)}{L2}
\VSState{(1,1)}{L3}
\VSState{(3,4)}{R1}
\VSState{(3,2.5)}{R2}
\VSState{(3,1)}{R3}
\VCurveR[]{angleA=30,angleB=150,ncurv=2}{R1}{L1}{}
\VCurveL[]{angleA=-38,angleB=-142,ncurv=5}{R2}{L2}{}
\VCurveL[]{angleA=-30,angleB=-150,ncurv=2}{R3}{L3}{}
\Edge{L1}{R1}
\Edge{L1}{R2}
\Edge{L2}{R3}
\Edge{L3}{R1}
\end{VCPicture}
}
\qquad
\subfigure[To a graph of type 1]{
\begin{VCPicture}{(0,-0.5)(4,5.5)}
\EdgeLineDouble
\VSState{(1,4)}{L1}
\VSState{(1,2.5)}{L2}
\VSState{(1,1)}{L3}
\VSState{(3,4)}{R1}
\VSState{(3,2.5)}{R2}
\VSState{(3,1)}{R3}
\VCurveR[]{angleA=30,angleB=150,ncurv=2}{R1}{L1}{}
\VCurveL[]{angleA=-38,angleB=-142,ncurv=5}{R2}{L2}{}
\VCurveL[]{angleA=-30,angleB=-150,ncurv=2}{R3}{L3}{}
\Edge{L1}{R2}
\Edge{L2}{R1}
\Edge{L1}{R3}
\Edge{L3}{R1}
\end{VCPicture}
}
\qquad
\subfigure[To a graph of type 1]{
\begin{VCPicture}{(0,-0.5)(4,5.5)}
\EdgeLineDouble
\VSState{(1,4)}{L1}
\VSState{(1,2.5)}{L2}
\VSState{(1,1)}{L3}
\VSState{(3,4)}{R1}
\VSState{(3,2.5)}{R2}
\VSState{(3,1)}{R3}
\VCurveR[]{angleA=30,angleB=150,ncurv=2}{R1}{L1}{}
\VCurveL[]{angleA=-38,angleB=-142,ncurv=5}{R2}{L2}{}
\VCurveL[]{angleA=-30,angleB=-150,ncurv=2}{R3}{L3}{}
\Edge{L1}{R2}
\Edge{L2}{R1}
\Edge{L1}{R3}
\Edge{L3}{R2}
\end{VCPicture}
}
\caption{Evolutions from 2 to 1}
\label{appendix 2 to 1}
\end{figure}

\begin{figure}[h!tbp]
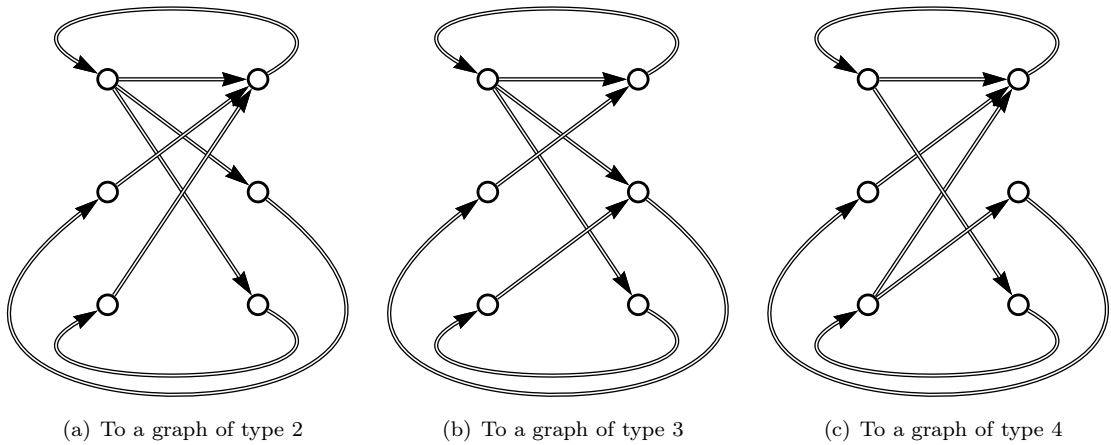

\centering
\subfigure[To a graph of type 2]{
\begin{VCPicture}{(0,-0.5)(4,5.5)}
\EdgeLineDouble
\VSState{(1,4)}{L1}
\VSState{(1,2.5)}{L2}
\VSState{(1,1)}{L3}
\VSState{(3,4)}{R1}
\VSState{(3,2.5)}{R2}
\VSState{(3,1)}{R3}
\VCurveR[]{angleA=30,angleB=150,ncurv=2}{R1}{L1}{}
\VCurveL[]{angleA=-38,angleB=-142,ncurv=5}{R2}{L2}{}
\VCurveL[]{angleA=-30,angleB=-150,ncurv=2}{R3}{L3}{}
\Edge{L1}{R1}
\Edge{L1}{R2}
\Edge{L1}{R3}
\Edge{L2}{R1}
\Edge{L3}{R1}
\end{VCPicture}
}
\qquad
\subfigure[To a graph of type 3]{
\begin{VCPicture}{(0,-0.5)(4,5.5)}
\EdgeLineDouble
\VSState{(1,4)}{L1}
\VSState{(1,2.5)}{L2}
\VSState{(1,1)}{L3}
\VSState{(3,4)}{R1}
\VSState{(3,2.5)}{R2}
\VSState{(3,1)}{R3}
\VCurveR[]{angleA=30,angleB=150,ncurv=2}{R1}{L1}{}
\VCurveL[]{angleA=-38,angleB=-142,ncurv=5}{R2}{L2}{}
\VCurveL[]{angleA=-30,angleB=-150,ncurv=2}{R3}{L3}{}
\Edge{L1}{R1}
\Edge{L1}{R2}
\Edge{L1}{R3}
\Edge{L2}{R1}
\Edge{L3}{R2}
\end{VCPicture}
}
\qquad
\subfigure[To a graph of type 4]{
\label{subfig: 2 to 4}
\begin{VCPicture}{(0,-0.5)(4,5.5)}
\EdgeLineDouble
\VSState{(1,4)}{L1}
\VSState{(1,2.5)}{L2}
\VSState{(1,1)}{L3}
\VSState{(3,4)}{R1}
\VSState{(3,2.5)}{R2}
\VSState{(3,1)}{R3}
\VCurveR[]{angleA=30,angleB=150,ncurv=2}{R1}{L1}{}
\VCurveL[]{angleA=-38,angleB=-142,ncurv=5}{R2}{L2}{}
\VCurveL[]{angleA=-30,angleB=-150,ncurv=2}{R3}{L3}{}
\Edge{L1}{R1}
\Edge{L2}{R1}
\Edge{L3}{R1}
\Edge{L1}{R3}
\Edge{L3}{R2}
\end{VCPicture}
}
\caption{Evolutions from 2 to $\{1,2,3,4\}$}
\label{appendix 2 to 234}
\end{figure}

\begin{figure}[h!tbp]
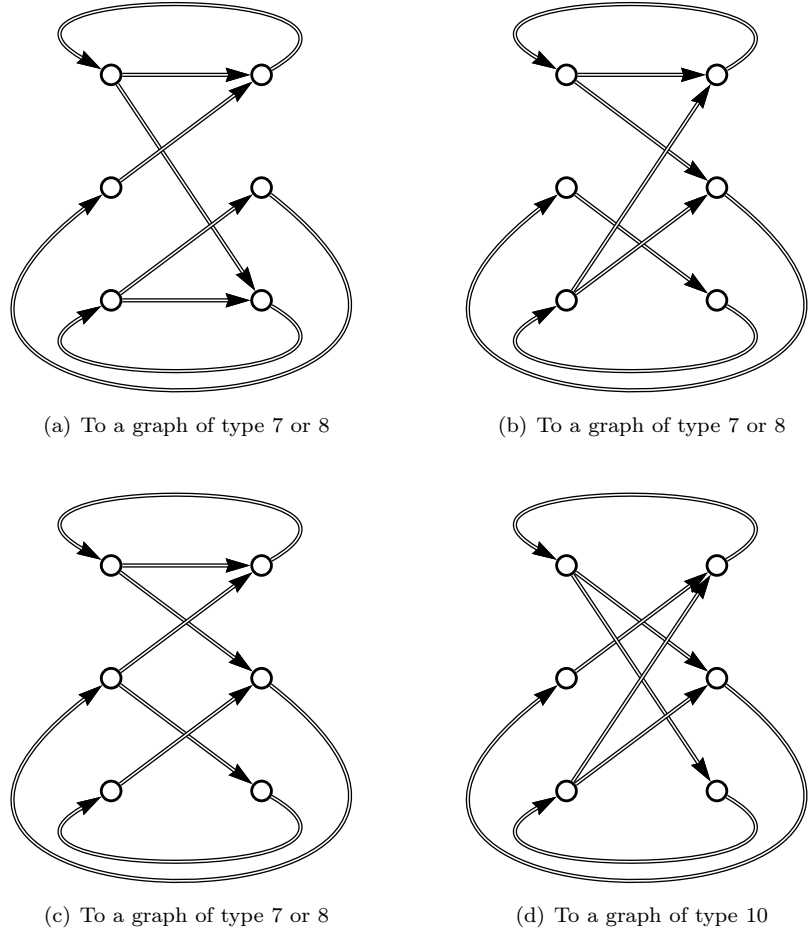

\centering
\subfigure[To a graph of type 7 or 8]{
\label{2 to 78 1}
\begin{VCPicture}{(-0.5,-0.5)(4.5,5.5)}
\EdgeLineDouble
\VSState{(1,4)}{L1}
\VSState{(1,2.5)}{L2}
\VSState{(1,1)}{L3}
\VSState{(3,4)}{R1}
\VSState{(3,2.5)}{R2}
\VSState{(3,1)}{R3}
\VCurveR[]{angleA=30,angleB=150,ncurv=2}{R1}{L1}{}
\VCurveL[]{angleA=-38,angleB=-142,ncurv=5}{R2}{L2}{}
\VCurveL[]{angleA=-30,angleB=-150,ncurv=2}{R3}{L3}{}
\Edge{L1}{R1}
\Edge{L1}{R3}
\Edge{L3}{R3}
\Edge{L3}{R2}
\Edge{L2}{R1}
\end{VCPicture}
}
\qquad
\subfigure[To a graph of type 7 or 8]{
\label{2 to 78 2}
\begin{VCPicture}{(-0.5,-0.5)(4.5,5.5)}
\EdgeLineDouble
\VSState{(1,4)}{L1}
\VSState{(1,2.5)}{L2}
\VSState{(1,1)}{L3}
\VSState{(3,4)}{R1}
\VSState{(3,2.5)}{R2}
\VSState{(3,1)}{R3}
\VCurveR[]{angleA=30,angleB=150,ncurv=2}{R1}{L1}{}
\VCurveL[]{angleA=-38,angleB=-142,ncurv=5}{R2}{L2}{}
\VCurveL[]{angleA=-30,angleB=-150,ncurv=2}{R3}{L3}{}
\Edge{L1}{R1}
\Edge{L1}{R2}
\Edge{L2}{R3}
\Edge{L3}{R2}
\Edge{L3}{R1}
\end{VCPicture}
}
\qquad
\subfigure[To a graph of type 7 or 8]{
\label{2 to 78 3}
\begin{VCPicture}{(-0.5,-0.5)(4.5,5.5)}
\EdgeLineDouble
\VSState{(1,4)}{L1}
\VSState{(1,2.5)}{L2}
\VSState{(1,1)}{L3}
\VSState{(3,4)}{R1}
\VSState{(3,2.5)}{R2}
\VSState{(3,1)}{R3}
\VCurveR[]{angleA=30,angleB=150,ncurv=2}{R1}{L1}{}
\VCurveL[]{angleA=-38,angleB=-142,ncurv=5}{R2}{L2}{}
\VCurveL[]{angleA=-30,angleB=-150,ncurv=2}{R3}{L3}{}
\Edge{L1}{R1}
\Edge{L1}{R2}
\Edge{L2}{R3}
\Edge{L3}{R2}
\Edge{L2}{R1}
\end{VCPicture}
}
\qquad
\subfigure[To a graph of type 10]{
\label{2 to 10}
\begin{VCPicture}{(-0.5,-0.5)(4.5,5.5)}
\EdgeLineDouble
\VSState{(1,4)}{L1}
\VSState{(1,2.5)}{L2}
\VSState{(1,1)}{L3}
\VSState{(3,4)}{R1}
\VSState{(3,2.5)}{R2}
\VSState{(3,1)}{R3}
\VCurveR[]{angleA=30,angleB=150,ncurv=2}{R1}{L1}{}
\VCurveL[]{angleA=-38,angleB=-142,ncurv=5}{R2}{L2}{}
\VCurveL[]{angleA=-30,angleB=-150,ncurv=2}{R3}{L3}{}
\Edge{L1}{R2}
\Edge{L2}{R1}
\Edge{L1}{R3}
\Edge{L3}{R2}
\Edge{L3}{R1}
\end{VCPicture}
}
\caption{Evolutions from 2 to $\{7,8,10\}$}
\label{appendix 2 to 7810}
\end{figure}

\begin{table}[h!tbp]
\centering
\begin{tabular}{|l|c|l|l|}
\hline
From 2 to 	& 	$(U_{i_n},U_{i_{n+1}})$		&	Morphisms 	& 	Conditions		\\
\hline
1	&	$(B,B)$		&	$[x,yzx]$, $[yzx,x]$, $[xy,zy]$	&	\\
	&				&	$[xy,zxy]$, $[zxy,xy]$			&	\\
\hline
2	&	$(B,B)$		&	$[0,1 0,2 0]$, $[0 1,1,2 1]$	&	\\
	&				&	$[0 2,1 2,2]$					&	\\
\hline
3	&	$(B,B)$		&	$[0,1 0,2 1 0]$, $[0,1 2 0,2 0]$	&	\\
	&				&	$[0 1,1,2 0 1]$, $[0 2 1,1,2 1]$	&	\\
	&				&	$[0 2,1 0 2,2]$, $[0 1 2,1 2,2]$	&	\\
\hline
4	&	$(B,R)$		&	$[xy^kz,y^{\ell}z,(xy^{k-1}z)]$	&	$k \geq \ell \geq 1$,	\\
	&				&	$[y^kz,xy^{\ell}z,(y^{k-1}z)]$	&	$k + \ell \geq 3$		\\
\hhline{~---}
	&	$(B,B)$		&	$[x,yx,yzx]$, $[x,yzx,yx]$		&	\\
\hline
7 or 8	&	$(B,\star)$	&	$[x,y^kzx,(y^{k-1}zx)]$				&	$k \geq 2$	\\
		&				&	$[x,zy^kx,(zy^{k-1}x)]$				&				\\
		&				&	$[x,{(yz)}^kx,({(yz)}^{k-1}x)]$		&				\\
		&				&	$[x,{(yz)}^kyx,({(yz)}^{k-1}yx)]$	&				\\
		&				&	$[xy,z^kxy,(z^{k-1}xy)]$			&				\\
		&				&	$[xy,z^ky,(z^{k-1}y)]$				&				\\
\hline
10	&	$(B,R)$		&	$[(xy)^kz,y(xy)^{\ell}z]$					&	$k \geq 1, \ell \geq 0, k+\ell \geq 2$	\\
\hhline{~~--}			
	&				&	$[(xy)^kz,y(xy)^{\ell}z,(xy)^{k-1}z]$		&	$k \geq 2, k > \ell \geq 0$			\\
\hhline{~~--}
	&				&	$[(xy)^kz,y(xy)^{\ell}z,y(xy)^{\ell-1}z]$	&	$\ell \geq k \geq 1$						\\
\hhline{~---}
	&	$(B,B)$		&	$[xy,zxy,zy]$	&	\\
\hline
\end{tabular}
\caption{List of morphisms coding the evolutions of a graph of type 2}
\label{List of morphisms coding the evolutions of a graph of type 2}
\end{table}

\newpage

\subsection{Evolution of a Rauzy graph of type 3}

A graph of type 3 is represented in Figure~\ref{appendix type 3}. The possible evolutions are represented in Figure~\ref{appendix 3 to}.

\begin{figure}[h!tbp]
\centering
\begin{VCPicture}{(0,0.5)(4,3.5)}
\VSState{(2,2)}{B}
\VSState{(2,0.5)}{L}
\EdgeLineDouble
\VCurveL[]{angleA=45,angleB=135,ncurv=20}{B}{B}{}
\VCurveR[]{angleA=-30,angleB=-10,ncurv=2}{B}{L}{}
\VCurveR[]{angleA=-150,angleB=190,ncurv=2}{B}{L}{}
\EdgeL{L}{B}{}
\end{VCPicture}
\caption{Graph of type 3}
\label{appendix type 3}
\end{figure}

\begin{figure}[h!tbp]
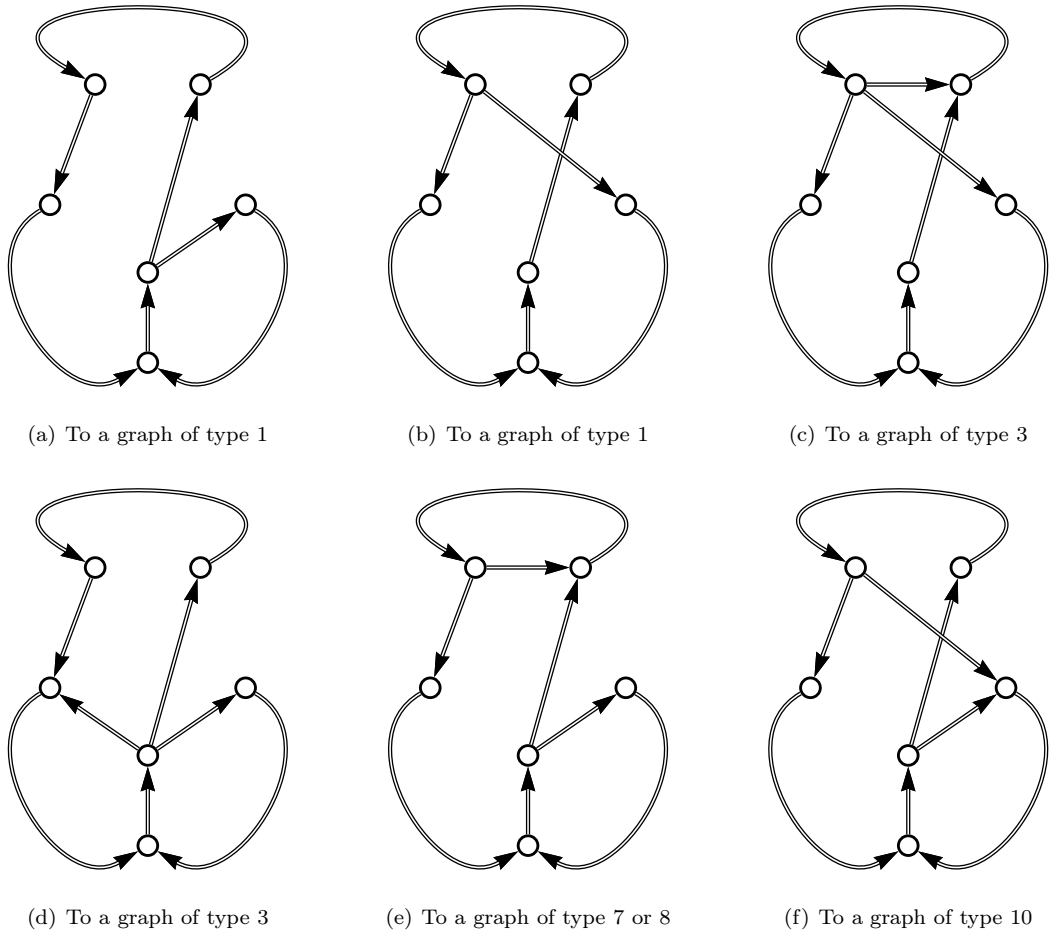

\centering
\subfigure[To a graph of type 1]{
\begin{VCPicture}{(0,-0.5)(4,5.4)}
\EdgeLineDouble
\VSState{(1.3,4)}{L1}
\VSState{(2.7,4)}{R1}
\VSState{(0.7,2.4)}{L2}
\VSState{(3.3,2.4)}{R2}
\VSState{(2,1.5)}{C1}
\VSState{(2,0.3)}{C2}
\VCurveR[]{angleA=30,angleB=150,ncurv=3}{R1}{L1}{}
\VCurveL[]{angleA=-30,angleB=-30,ncurv=1}{R2}{C2}{}
\VCurveL[]{angleA=-150,angleB=-150,ncurv=1}{L2}{C2}{}
\Edge{C2}{C1}
\Edge{C1}{R1}
\Edge{L1}{L2}
\Edge{C1}{R2}
\end{VCPicture}
}
\qquad
\subfigure[To a graph of type 1]{
\begin{VCPicture}{(0,-0.5)(4,5.4)}
\EdgeLineDouble
\VSState{(1.3,4)}{L1}
\VSState{(2.7,4)}{R1}
\VSState{(0.7,2.4)}{L2}
\VSState{(3.3,2.4)}{R2}
\VSState{(2,1.5)}{C1}
\VSState{(2,0.3)}{C2}
\VCurveR[]{angleA=30,angleB=150,ncurv=3}{R1}{L1}{}
\VCurveL[]{angleA=-30,angleB=-30,ncurv=1}{R2}{C2}{}
\VCurveL[]{angleA=-150,angleB=-150,ncurv=1}{L2}{C2}{}
\Edge{C2}{C1}
\Edge{C1}{R1}
\Edge{L1}{L2}
\Edge{L1}{R2}
\end{VCPicture}
}
\qquad
\subfigure[To a graph of type 3]{
\begin{VCPicture}{(0,-0.5)(4,5.4)}
\EdgeLineDouble
\VSState{(1.3,4)}{L1}
\VSState{(2.7,4)}{R1}
\VSState{(0.7,2.4)}{L2}
\VSState{(3.3,2.4)}{R2}
\VSState{(2,1.5)}{C1}
\VSState{(2,0.3)}{C2}
\VCurveR[]{angleA=30,angleB=150,ncurv=3}{R1}{L1}{}
\VCurveL[]{angleA=-30,angleB=-30,ncurv=1}{R2}{C2}{}
\VCurveL[]{angleA=-150,angleB=-150,ncurv=1}{L2}{C2}{}
\Edge{C2}{C1}
\Edge{C1}{R1}
\Edge{L1}{L2}
\Edge{L1}{R2}
\Edge{L1}{R1}
\end{VCPicture}
}
\qquad
\subfigure[To a graph of type 3]{
\begin{VCPicture}{(0,-0.5)(4,5.4)}
\EdgeLineDouble
\VSState{(1.3,4)}{L1}
\VSState{(2.7,4)}{R1}
\VSState{(0.7,2.4)}{L2}
\VSState{(3.3,2.4)}{R2}
\VSState{(2,1.5)}{C1}
\VSState{(2,0.3)}{C2}
\VCurveR[]{angleA=30,angleB=150,ncurv=3}{R1}{L1}{}
\VCurveL[]{angleA=-30,angleB=-30,ncurv=1}{R2}{C2}{}
\VCurveL[]{angleA=-150,angleB=-150,ncurv=1}{L2}{C2}{}
\Edge{C2}{C1}
\Edge{C1}{R1}
\Edge{L1}{L2}
\Edge{C1}{L2}
\Edge{C1}{R2}
\end{VCPicture}
}
\qquad
\subfigure[To a graph of type 7 or 8]{
\label{3 to 78}
\begin{VCPicture}{(0,-0.5)(4,5.4)}
\EdgeLineDouble
\VSState{(1.3,4)}{L1}
\VSState{(2.7,4)}{R1}
\VSState{(0.7,2.4)}{L2}
\VSState{(3.3,2.4)}{R2}
\VSState{(2,1.5)}{C1}
\VSState{(2,0.3)}{C2}
\VCurveR[]{angleA=30,angleB=150,ncurv=3}{R1}{L1}{}
\VCurveL[]{angleA=-30,angleB=-30,ncurv=1}{R2}{C2}{}
\VCurveL[]{angleA=-150,angleB=-150,ncurv=1}{L2}{C2}{}
\Edge{C2}{C1}
\Edge{C1}{R1}
\Edge{L1}{L2}
\Edge{L1}{R1}
\Edge{C1}{R2}
\end{VCPicture}
}
\qquad
\subfigure[To a graph of type 10]{
\label{3 to 10}
\begin{VCPicture}{(0,-0.5)(4,5.4)}
\EdgeLineDouble
\VSState{(1.3,4)}{L1}
\VSState{(2.7,4)}{R1}
\VSState{(0.7,2.4)}{L2}
\VSState{(3.3,2.4)}{R2}
\VSState{(2,1.5)}{C1}
\VSState{(2,0.3)}{C2}
\VCurveR[]{angleA=30,angleB=150,ncurv=3}{R1}{L1}{}
\VCurveL[]{angleA=-30,angleB=-30,ncurv=1}{R2}{C2}{}
\VCurveL[]{angleA=-150,angleB=-150,ncurv=1}{L2}{C2}{}
\Edge{C2}{C1}
\Edge{C1}{R1}
\Edge{L1}{L2}
\Edge{C1}{R2}
\Edge{L1}{R2}
\end{VCPicture}
}
\caption{Possible evolutions of a graph of type 3}
\label{appendix 3 to}
\end{figure}

\begin{table}[h!tbp]
\centering
\begin{tabular}{|l|c|l|l|}
\hline
From 3 to 	& 	$(U_{i_n},U_{i_{n+1}})$		&	Morphisms 	& 	Conditions		\\
\hline
1	&	$(B,B)$ 	&	$[xy,zy]$, $[xy,z]$, $[x,yz]$		&	\\
\hline
3	&	$(B,B)$		&	$[0,1 0,2 0]$, $[0,1 0,2]$, $[0,1,2 0]$	&	\\
	&				&	$[0 1,1,2 1]$, $[0 1,1,2]$, $[0,1,2 1]$	&	\\
	&				&	$[0 2,1 2,2]$, $[0 2,1,2]$, $[0,1 2,2]$	&	\\
\hline
7 or 8	&	$(B,\star)$	&	$[x,yz^kx,(yz^{k-1}x)]$	&	$ k \geq 1$	\\
\hhline{~~--}
		&				&	$[x,y^kz,(y^{k-1}z)]$	&	$k \geq 2$	\\
\hline
10	&	$(B,B)$		&	$[x,yx,yz]$					&		\\
\hhline{~---}
	&	$(B,R)$		&	$[x^ky,zx^{\ell}y]$					&	$k \geq 1$, $\ell \geq 0$, $k + \ell \geq 2$	\\
\hhline{~~--}
	&				&	$[x^ky,zx^{\ell}y,(x^{k-1}y)]$		&	$k \geq 2$, $k > \ell \geq 0$					\\
\hhline{~~--}
	&				&	$[x^ky,zx^{\ell}y,(zx^{\ell-1}y)]$	&	$\ell \geq k \geq 1$							\\	
\hline
\end{tabular}
\caption{List of morphisms coding the evolutions of a graph of type 3}
\label{List of morphisms coding the evolutions of a graph of type 3}
\end{table}

\newpage

\subsection{Evolution of a Rauzy graph of type 4}

A graph of type 3 is represented in Figure~\ref{appendix type 4}. The possible evolutions are represented in Figure~\ref{appendix 4 to}.

\begin{figure}[h!tbp]
\centering
\begin{VCPicture}{(0,0)(5,3)}
\VSState{(0,1.5)}{R}
\VSState{(4,1.5)}{B}
\EdgeLineDouble
\LArcL{R}{B}{}
\EdgeL{R}{B}{}
\VCurveR[]{angleA=40,angleB=-40,ncurv=20}{B}{B}{}
\LArcL{B}{R}{}
\end{VCPicture}
\caption{Graph of type 4}
\label{appendix type 4}
\end{figure}

\begin{figure}[h!tbp]
\centering
\subfigure[To a graph of type 1]{
\begin{VCPicture}{(0,0)(6,4)}
\EdgeLineDouble
\VSState{(0,2)}{LL}
\VSState{(3.5,3.5)}{L1}
\VSState{(3.5,2)}{L2}
\VSState{(3.5,0.5)}{L3}
\VSState{(5,2.5)}{R1}
\VSState{(5,1.5)}{R2}
\VCurveR[]{angleA=20,angleB=-20,ncurv=2}{R1}{R2}{}
\ArcL{LL}{L1}{}
\Edge{LL}{L2}{}
\ArcL{L3}{LL}{}
\Edge{R2}{L3}
\Edge{L1}{R1}
\Edge{L2}{L3}
\end{VCPicture}
}
\qquad
\subfigure[To a graph of type 1]{
\begin{VCPicture}{(0,0)(6,4)}
\EdgeLineDouble
\VSState{(0,2)}{LL}
\VSState{(3.5,3.5)}{L1}
\VSState{(3.5,2)}{L2}
\VSState{(3.5,0.5)}{L3}
\VSState{(5,2.5)}{R1}
\VSState{(5,1.5)}{R2}
\VCurveR[]{angleA=20,angleB=-20,ncurv=2}{R1}{R2}{}
\ArcL{LL}{L1}{}
\Edge{LL}{L2}{}
\ArcL{L3}{LL}{}
\Edge{R2}{L3}
\Edge{L1}{R1}
\Edge{L2}{R1}
\end{VCPicture}
}
\qquad
\subfigure[To a graph of type 4]{
\label{4 to 4}
\begin{VCPicture}{(0,0)(6,4)}
\EdgeLineDouble
\VSState{(0,2)}{LL}
\VSState{(3.5,3.5)}{L1}
\VSState{(3.5,2)}{L2}
\VSState{(3.5,0.5)}{L3}
\VSState{(5,2.5)}{R1}
\VSState{(5,1.5)}{R2}
\VCurveR[]{angleA=20,angleB=-20,ncurv=2}{R1}{R2}{}
\ArcL{LL}{L1}{}
\Edge{LL}{L2}{}
\ArcL{L3}{LL}{}
\Edge{R2}{L3}
\Edge{L1}{R1}
\Edge{R2}{R1}
\Edge{L2}{R1}
\end{VCPicture}
}
\qquad
\subfigure[To a graph of type 4]{
\label{4 to 4 bis}
\begin{VCPicture}{(0,0)(6,4)}
\EdgeLineDouble
\VSState{(0,2)}{LL}
\VSState{(3.5,3.5)}{L1}
\VSState{(3.5,2)}{L2}
\VSState{(3.5,0.5)}{L3}
\VSState{(5,2.5)}{R1}
\VSState{(5,1.5)}{R2}
\VCurveR[]{angleA=20,angleB=-20,ncurv=2}{R1}{R2}{}
\ArcL{LL}{L1}{}
\Edge{LL}{L2}{}
\ArcL{L3}{LL}{}
\Edge{R2}{L3}
\Edge{L1}{R1}
\Edge{L2}{L3}
\LArcL{L1}{L3}{}
\end{VCPicture}
}
\qquad
\subfigure[To a graph of type 7 or 8]{
\label{4 to 78}
\begin{VCPicture}{(0,0)(6,4)}
\EdgeLineDouble
\VSState{(0,2)}{LL}
\VSState{(3.5,3.5)}{L1}
\VSState{(3.5,2)}{L2}
\VSState{(3.5,0.5)}{L3}
\VSState{(5,2.5)}{R1}
\VSState{(5,1.5)}{R2}
\VCurveR[]{angleA=20,angleB=-20,ncurv=2}{R1}{R2}{}
\ArcL{LL}{L1}{}
\Edge{LL}{L2}{}
\ArcL{L3}{LL}{}
\Edge{R2}{L3}
\Edge{L1}{R1}
\Edge{R2}{R1}
\Edge{L2}{L3}
\end{VCPicture}
}
\qquad
\subfigure[To a graph of type 10]{
\label{4 to 10}
\begin{VCPicture}{(0,0)(6,4)}
\EdgeLineDouble
\VSState{(0,2)}{LL}
\VSState{(3.5,3.5)}{L1}
\VSState{(3.5,2)}{L2}
\VSState{(3.5,0.5)}{L3}
\VSState{(5,2.5)}{R1}
\VSState{(5,1.5)}{R2}
\VCurveR[]{angleA=20,angleB=-20,ncurv=2}{R1}{R2}{}
\ArcL{LL}{L1}{}
\Edge{LL}{L2}{}
\ArcL{L3}{LL}{}
\Edge{R2}{L3}
\Edge{L1}{R1}
\Edge{L2}{L3}
\Edge{L2}{R1}
\end{VCPicture}
}
\caption{Possible evolutions of a graph of type 4}
\label{appendix 4 to}
\end{figure}

\begin{table}[h!tbp]
\centering
\begin{tabular}{|l|c|l|l|}
\hline
From 4 to 	& 	$(U_{i_n},U_{i_{n+1}})$		&	Morphisms 	& 	Conditions		\\
\hline
1	&	$(R,B)$		&	$[x,y]$		&	$\card(C_n) = 2$ 	\\
\hline
4	&	$(R,R)$		&	$[0,1,(2)]$					&	\\
\hhline{~---}
	&	$(B,B)$		&	$[0,1 0,2 0]$, $[0,2 0,1 0]$	&	\\
\hhline{~---}
	&	$(R,B)$		&	$[1,0,2]$, $[1,2,0]$		&	\\
\hhline{~---}
	&	$(B,R)$		&	$[0 x^ky,x^{\ell}y,(0 x^{k-1}y)]$		&	$k \geq 1$, $k \geq \ell \geq 0$	\\
	&				&	$[x^ky,0 x^{\ell}y,(x^{k-1}y)]$			&										\\
\hline
7 or 8	&	$(R,\star)$		&	$[1,0,(2)]$		&	\\
\hhline{~---}
		&	$(B,\star)$		&	$[0,x^ky 0,(x^{k-1}y 0)]$		&	$k \geq 1$	\\
\hline
10	&	$(R,B)$		&	$[1,0,2]$	&	\\
\hhline{~---}
	&	$(B,R)$		&	$[0(x 0)^ky,(x 0)^{\ell}y]$						&	$k,\ell \geq 0$, $k+\ell \geq 1$	\\
\hhline{~~--}
	&				&	$[0(x 0)^ky,(x 0)^{\ell}y,0(x 0)^{k-1}y]$		&	$k \geq 1$, $k \geq \ell \geq 0$	\\
\hhline{~~--}
	&				&	$[0(x 0)^ky,(x 0)^{\ell}y,(x 0)^{\ell-1}y]$		&	$\ell > k \geq 0$					\\
\hline	
\end{tabular}
\caption{List of morphisms coding the evolutions of a graph of type 4}
\label{List of morphisms coding the evolutions of a graph of type 4}
\end{table}

\newpage

\subsection{Evolution of a Rauzy graph of type 5}

A graph of type 3 is represented in Figure~\ref{appendix type 5}. The possible evolutions are represented in Figure~\ref{appendix 5 to}.

\begin{figure}[h!tbp]
\centering
\begin{VCPicture}{(0,0.5)(6,3.5)}
\VSState{(0,2)}{B}
\VSState{(6,2)}{R}
\VSState{(4.5,2)}{L}
\EdgeLineDouble
\ArcL{B}{L}{}
\ArcR{B}{L}{}
\Edge{L}{R}
\VCurveL{angleA=-130,angleB=-50,ncurv=0.5}{R}{B}{}
\VCurveR{angleA=130,angleB=50,ncurv=0.5}{R}{B}{}		
\end{VCPicture}
\caption{Graph of type 5}
\label{appendix type 5}
\end{figure}

\begin{figure}[h!tbp]
\centering
\subfigure[To a graph of type 1]{
\begin{VCPicture}{(0,0)(6,4)}
\EdgeLineDouble
\VSState{(0,3.5)}{L1}
\VSState{(1.5,2.5)}{L2}
\VSState{(1.5,1.5)}{L3}
\VSState{(0,0.5)}{L4}
\VSState{(4,2)}{C}
\VSState{(5.5,2)}{R}
\Edge{C}{R}
\ArcR{R}{L1}{}
\ArcL{R}{L4}{}
\ArcL{L2}{C}{}
\ArcR{L3}{C}{}
\Edge{L1}{L2}
\Edge{L4}{L3}
\end{VCPicture}
}
\qquad
\subfigure[To a graph of type 10]{
\label{5 to 10}
\begin{VCPicture}{(0,0)(6,4)}
\EdgeLineDouble
\VSState{(0,3.5)}{L1}
\VSState{(1.5,2.5)}{L2}
\VSState{(1.5,1.5)}{L3}
\VSState{(0,0.5)}{L4}
\VSState{(4,2)}{C}
\VSState{(5.5,2)}{R}
\Edge{C}{R}
\ArcR{R}{L1}{}
\ArcL{R}{L4}{}
\ArcL{L2}{C}{}
\ArcR{L3}{C}{}
\Edge{L1}{L2}
\Edge{L4}{L3}
\Edge{L1}{L3}
\end{VCPicture}
}
\caption{Possible evolutions of a graph of type 5}
\label{appendix 5 to}
\end{figure}

\begin{table}[h!tbp]
\centering
\begin{tabular}{|l|c|l|l|}
\hline
From 5 to 	& 	$(U_{i_n},U_{i_{n+1}})$		&	Morphisms 	& 	Conditions		\\
\hline
1	&	$(R,B)$		&	$[x,y]$		&	$\card(C_n) = 2$ 	\\
\hline
10	&	$(R,B)$		&	$[1,2,0]$							&										\\
\hhline{~---}
	&	$(B,R)$		&	$[1,0 1,2]$							&										\\
\hhline{~~--}
	&				&	$[0^kc,1,(0^{k-1}2)]$				&	$k \geq 1$							\\	
\hhline{~~--}
	&				&	$[2^k 0,1 2^{\ell}0]$				&	$k,\ell \geq 0$, $k + \ell \geq 1$	\\
\hhline{~~--}
	&				&	$[2^k 0,1 2^{\ell}0,2^{k-1}0]$		&	$k \geq \ell \geq 0$, $k \geq 1$	\\
\hhline{~~--}
	&				&	$[2^k 0,1 2^{\ell}0,1 2^{\ell-1}0]$	&	$\ell > k \geq 0$					\\
\hline	
\end{tabular}
\caption{List of morphisms coding the evolutions of a graph of type 5}
\label{List of morphisms coding the evolutions of a graph of type 5}
\end{table}

\newpage

\subsection{Evolution of a Rauzy graph of type 6}

A graph of type 3 is represented in Figure~\ref{appendix type 6}. The possible evolutions are represented in Figure~\ref{appendix 6 to}.

\begin{figure}[h!tbp]
\centering
\begin{VCPicture}{(0,0.5)(6,3.5)}
\VSState{(0,2)}{L}
\VSState{(6,2)}{R}
\EdgeLineDouble
\ArcL{L}{R}{}
\ArcR{L}{R}{}
\VCurveL{angleA=-130,angleB=-50,ncurv=0.5}{R}{L}{}
\VCurveR{angleA=130,angleB=50,ncurv=0.5}{R}{L}{}		
\end{VCPicture}
\caption{Graph of type 6}
\label{appendix type 6}
\end{figure}

\begin{figure}[h!tbp]
\centering
\subfigure[To a graph of type 1]{
\begin{VCPicture}{(0,-0.5)(6,4)}
\EdgeLineDouble
\VSState{(0,3.5)}{L1}
\VSState{(1.5,2.5)}{L2}
\VSState{(1.5,1.5)}{L3}
\VSState{(0,0.5)}{L4}
\VSState{(6,3.5)}{R1}
\VSState{(4.5,2.5)}{R2}
\VSState{(4.5,1.5)}{R3}
\VSState{(6,0.5)}{R4}
\ArcR{R1}{L1}{}
\ArcL{R4}{L4}{}
\ArcL{L2}{R2}{}
\ArcR{L3}{R3}{}
\Edge{L1}{L2}
\Edge{L1}{L3}
\Edge{R2}{R1}
\Edge{R3}{R4}
\Edge{L4}{L2}
\end{VCPicture}
}
\qquad
\subfigure[To a graph of type 7 or 8]{
\label{6 to 78}
\begin{VCPicture}{(0,-0.5)(6,4)}
\EdgeLineDouble
\VSState{(0,3.5)}{L1}
\VSState{(1.5,2.5)}{L2}
\VSState{(1.5,1.5)}{L3}
\VSState{(0,0.5)}{L4}
\VSState{(6,3.5)}{R1}
\VSState{(4.5,2.5)}{R2}
\VSState{(4.5,1.5)}{R3}
\VSState{(6,0.5)}{R4}
\ArcR{R1}{L1}{}
\ArcL{R4}{L4}{}
\ArcL{L2}{R2}{}
\ArcR{L3}{R3}{}
\Edge{L1}{L2}
\Edge{L1}{L3}
\Edge{R2}{R1}
\Edge{R3}{R4}
\Edge{L4}{L2}
\Edge{L4}{L3}
\end{VCPicture}
}
\qquad
\subfigure[To a graph of type 7 or 8]{
\label{6 to 78 bis}
\begin{VCPicture}{(0,-0.5)(6,5)}
\EdgeLineDouble
\VSState{(0,3.5)}{L1}
\VSState{(1.5,2.5)}{L2}
\VSState{(1.5,1.5)}{L3}
\VSState{(0,0.5)}{L4}
\VSState{(6,3.5)}{R1}
\VSState{(4.5,2.5)}{R2}
\VSState{(4.5,1.5)}{R3}
\VSState{(6,0.5)}{R4}
\ArcR{R1}{L1}{}
\ArcL{R4}{L4}{}
\ArcL{L2}{R2}{}
\ArcR{L3}{R3}{}
\Edge{L1}{L2}
\Edge{L1}{L3}
\Edge{R2}{R1}
\Edge{R3}{R4}
\Edge{L4}{L3}
\Edge{R3}{R1}
\end{VCPicture}
}
\qquad
\subfigure[To a graph of type 10]{
\label{6 to 10}
\begin{VCPicture}{(0,-0.5)(6,5)}
\EdgeLineDouble
\VSState{(0,3.5)}{L1}
\VSState{(1.5,2.5)}{L2}
\VSState{(1.5,1.5)}{L3}
\VSState{(0,0.5)}{L4}
\VSState{(6,3.5)}{R1}
\VSState{(4.5,2.5)}{R2}
\VSState{(4.5,1.5)}{R3}
\VSState{(6,0.5)}{R4}
\ArcR{R1}{L1}{}
\ArcL{R4}{L4}{}
\ArcL{L2}{R2}{}
\ArcR{L3}{R3}{}
\Edge{L1}{L2}
\Edge{L1}{L3}
\Edge{R3}{R1}
\Edge{R3}{R4}
\Edge{L4}{L3}
\Edge{R2}{R4}
\end{VCPicture}
}
\caption{Possible evolutions of a graph of type 6}
\label{appendix 6 to}
\end{figure}

\begin{table}[h!tbp]
\centering
\begin{tabular}{|l|c|l|l|}
\hline
From 6 to 	& 	$(U_{i_n},U_{i_{n+1}})$		&	Morphisms 	& 	Conditions		\\
\hline
1	&	$(\star,B)$		&	$[x,yx], [yx,x]$		&	$\card(C_n) = 2$ 	\\
\hline
7 or 8	&	$(\star,\star)$	&	$[1,0^k2,(0^{k-1}2)]$	&	$k \geq 1$									\\
\hhline{~~--}
		&					&	$[x,y^kx,(y^{k-1}x)]$	&	$k \geq 2$ and $\card(C_n) = 2$		\\
\hline
10	&	$(\star,B)$		&	$[1,0 1,2]$						&										\\
\hhline{~---}
	&	$(\star,R)$		&	$[1 2^k 0,2^{\ell}0]$				&	$k,\ell \geq 0$, $k + \ell \geq 1$	\\
\hhline{~~--}
	&					&	$[1 2^k 0,2^{\ell}0,1 2^{k-1}0]$	&	$k \geq \ell \geq 0$, $k \geq 1$	\\
\hhline{~~--}
	&					&	$[1 2^k 0,2^{\ell}0, 2^{\ell-1}0]$	&	$\ell > k \geq 0$					\\
\hline	
\end{tabular}
\caption{List of morphisms coding the evolutions of a graph of type 6}
\label{List of morphisms coding the evolutions of a graph of type 6}
\end{table}

\newpage

\subsection{Evolution of a Rauzy graph of type 7}

A graph of type 3 is represented in Figure~\ref{appendix type 7}. The possible evolutions are represented in Figure~\ref{appendix 7 to}.

\begin{figure}[h!tbp]
\centering
\begin{VCPicture}{(0,1)(6,4)}
\VSState{(1,2)}{B}
\VSState{(5.5,2)}{R}
\VSState{(4,2.8)}{L}
\EdgeLineDouble
\LArcL{R}{B}{}
\ArcL{B}{L}{}
\ArcL{L}{R}{}
\VCurveL{angleA=-130,angleB=130,ncurv=20}{B}{B}{}
\VCurveL{angleA=60,angleB=60,ncurv=1.15}{R}{L}{}
\end{VCPicture}
\caption{Graph of type 7}
\label{appendix type 7}
\end{figure}

\begin{figure}[h!tbp]
\centering
\subfigure[To a graph of type 1]{
\begin{VCPicture}{(0,0.5)(6,4)}
\VSState{(1,2.5)}{L1}
\VSState{(1,1)}{L2}
\VSState{(2.5,2.5)}{R1}
\VSState{(2.5,1)}{R2}
\VSState{(5.5,1.75)}{R}
\VSState{(4,2.8)}{L}
\EdgeLineDouble
\ArcL{R}{R2}{}
\ArcL{R1}{L}{}
\ArcL{L}{R}{}
\Edge{L1}{R1}
\Edge{R2}{L2}
\VCurveL{angleA=180,angleB=180,ncurv=2}{L2}{L1}{}
\VCurveL{angleA=60,angleB=60,ncurv=1.15}{R}{L}{}
\end{VCPicture}
}
\qquad
\subfigure[To a graph of type 7 or 8]{
\label{7 to 78}
\begin{VCPicture}{(0,0.5)(6,4)}
\VSState{(1,2.5)}{L1}
\VSState{(1,1)}{L2}
\VSState{(2.5,2.5)}{R1}
\VSState{(2.5,1)}{R2}
\VSState{(5.5,1.75)}{R}
\VSState{(4,2.8)}{L}
\EdgeLineDouble
\ArcL{R}{R2}{}
\ArcL{R1}{L}{}
\ArcL{L}{R}{}
\Edge{L1}{R1}
\Edge{R2}{L2}
\Edge{L1}{L2}
\VCurveL{angleA=180,angleB=180,ncurv=2}{L2}{L1}{}
\VCurveL{angleA=60,angleB=60,ncurv=1.15}{R}{L}{}
\end{VCPicture}
}
\qquad
\subfigure[To a graph of type 9]{
\label{7 to 9}
\begin{VCPicture}{(0,0.5)(6,4)}
\VSState{(1,2.5)}{L1}
\VSState{(1,1)}{L2}
\VSState{(2.5,2.5)}{R1}
\VSState{(2.5,1)}{R2}
\VSState{(5.5,1.75)}{R}
\VSState{(4,2.8)}{L}
\EdgeLineDouble
\ArcL{R}{R2}{}
\ArcL{R1}{L}{}
\ArcL{L}{R}{}
\Edge{L1}{R1}
\Edge{R2}{L2}
\Edge{R2}{R1}
\VCurveL{angleA=180,angleB=180,ncurv=2}{L2}{L1}{}
\VCurveL{angleA=60,angleB=60,ncurv=1.15}{R}{L}{}
\end{VCPicture}
}
\caption{Possible evolutions of a graph of type 7}
\label{appendix 7 to}
\end{figure}

\begin{table}[h!tbp]
\centering
\begin{tabular}{|l|c|l|l|}
\hline
From 7 to 	& 	$(U_{i_n},U_{i_{n+1}})$		&	Morphisms 	& 	Conditions		\\
\hline
1	&	$(R,B)$		&	$[x,y]$		&	$\card(C_n) = 2$ 	\\
\hline
7 or 8	&	$(R,\star)$		&	$[0,1,(2)]$		&	\\
\hhline{~---}
		&	$(B,\star)$		&	$[0,1 0,(2 0)]$	&	\\
\hline
9	&	$(R,B)$		&	$[0,x,y]$	&	\\
\hhline{~---}
	&	$(B,R)$		&	$[0 1,1,(0 2)]$, $[1,0 1,(2)]$		&	\\
\hhline{~~--}
	&				&	$[0 1,2,(0 2)]$, $[1,0 2,(2)]$		&	$\card(C_n) = 3$	 	\\
\hline	
\end{tabular}
\caption{List of morphisms coding the evolutions of a graph of type 7}
\label{List of morphisms coding the evolutions of a graph of type 7}
\end{table}

\newpage

\subsection{Evolution of a Rauzy graph of type 8}

A graph of type 3 is represented in Figure~\ref{appendix type 8}. The possible evolutions are represented in Figure~\ref{appendix 8 to}.

\begin{figure}[h!tbp]
\centering
\begin{VCPicture}{(0,1)(6,4)}
\VSState{(1,2)}{L}
\VSState{(5,2)}{R}
\EdgeLineDouble
\LArcL{L}{R}{}
\LArcL{R}{L}{}
\VCurveL{angleA=-130,angleB=130,ncurv=20}{L}{L}{}
\VCurveL{angleA=50,angleB=-50,ncurv=20}{R}{R}{}
\end{VCPicture}
\caption{Graph of type 8}
\label{appendix type 8}
\end{figure}

\begin{figure}[h!tbp]
\centering
\subfigure[To a graph of type 1]{
\begin{VCPicture}{(0,0.5)(6,3)}
\VSState{(1,2)}{L1}
\VSState{(1,1)}{L2}
\VSState{(2,2)}{L3}
\VSState{(2,1)}{L4}
\VSState{(4,2)}{R1}
\VSState{(4,1)}{R2}
\VSState{(5,2)}{R3}
\VSState{(5,1)}{R4}
\EdgeLineDouble
\ArcL{L3}{R1}{}
\ArcL{R2}{L4}{}
\Edge{L1}{L3}
\Edge{L4}{L2}
\Edge{R1}{R3}
\Edge{R4}{R2}
\Edge{L1}{L2}
\VCurveL{angleA=180,angleB=180,ncurv=2}{L2}{L1}{}
\VCurveL{angleA=0,angleB=0,ncurv=2}{R3}{R4}{}
\end{VCPicture}
}
\qquad
\subfigure[To a graph of type 1]{
\begin{VCPicture}{(0,0.5)(6,3)}
\VSState{(1,2)}{L1}
\VSState{(1,1)}{L2}
\VSState{(2,2)}{L3}
\VSState{(2,1)}{L4}
\VSState{(4,2)}{R1}
\VSState{(4,1)}{R2}
\VSState{(5,2)}{R3}
\VSState{(5,1)}{R4}
\EdgeLineDouble
\ArcL{L3}{R1}{}
\ArcL{R2}{L4}{}
\Edge{L1}{L3}
\Edge{L4}{L2}
\Edge{R1}{R3}
\Edge{R4}{R2}
\Edge{L4}{L3}
\VCurveL{angleA=180,angleB=180,ncurv=2}{L2}{L1}{}
\VCurveL{angleA=0,angleB=0,ncurv=2}{R3}{R4}{}
\end{VCPicture}
}
\qquad
\subfigure[To a graph of type 5 or 6]{
\begin{VCPicture}{(0,0.5)(6,3)}
\VSState{(1,2)}{L1}
\VSState{(1,1)}{L2}
\VSState{(2,2)}{L3}
\VSState{(2,1)}{L4}
\VSState{(4,2)}{R1}
\VSState{(4,1)}{R2}
\VSState{(5,2)}{R3}
\VSState{(5,1)}{R4}
\EdgeLineDouble
\ArcL{L3}{R1}{}
\ArcL{R2}{L4}{}
\Edge{L1}{L3}
\Edge{L4}{L2}
\Edge{R1}{R3}
\Edge{R4}{R2}
\Edge{L4}{L3}
\Edge{R1}{R2}
\VCurveL{angleA=180,angleB=180,ncurv=2}{L2}{L1}{}
\VCurveL{angleA=0,angleB=0,ncurv=2}{R3}{R4}{}
\end{VCPicture}
}
\qquad
\subfigure[To a graph of type 7 or 8]{
\begin{VCPicture}{(0,0.5)(6,3)}
\VSState{(1,2)}{L1}
\VSState{(1,1)}{L2}
\VSState{(2,2)}{L3}
\VSState{(2,1)}{L4}
\VSState{(4,2)}{R1}
\VSState{(4,1)}{R2}
\VSState{(5,2)}{R3}
\VSState{(5,1)}{R4}
\EdgeLineDouble
\ArcL{L3}{R1}{}
\ArcL{R2}{L4}{}
\Edge{L1}{L3}
\Edge{L4}{L2}
\Edge{R1}{R3}
\Edge{R4}{R2}
\Edge{R4}{R3}
\Edge{L1}{L2}
\VCurveL{angleA=180,angleB=180,ncurv=2}{L2}{L1}{}
\VCurveL{angleA=0,angleB=0,ncurv=2}{R3}{R4}{}
\end{VCPicture}
}
\qquad
\subfigure[To a graph of type 7 or 8]{
\begin{VCPicture}{(0,0.5)(6,3)}
\VSState{(1,2)}{L1}
\VSState{(1,1)}{L2}
\VSState{(2,2)}{L3}
\VSState{(2,1)}{L4}
\VSState{(4,2)}{R1}
\VSState{(4,1)}{R2}
\VSState{(5,2)}{R3}
\VSState{(5,1)}{R4}
\EdgeLineDouble
\ArcL{L3}{R1}{}
\ArcL{R2}{L4}{}
\Edge{L1}{L3}
\Edge{L4}{L2}
\Edge{R1}{R3}
\Edge{R4}{R2}
\Edge{L1}{L2}
\Edge{L4}{L3}
\VCurveL{angleA=180,angleB=180,ncurv=2}{L2}{L1}{}
\VCurveL{angleA=0,angleB=0,ncurv=2}{R3}{R4}{}
\end{VCPicture}
}
\qquad
\subfigure[To a graph of type 9]{
\begin{VCPicture}{(0,0.5)(6,3)}
\VSState{(1,2)}{L1}
\VSState{(1,1)}{L2}
\VSState{(2,2)}{L3}
\VSState{(2,1)}{L4}
\VSState{(4,2)}{R1}
\VSState{(4,1)}{R2}
\VSState{(5,2)}{R3}
\VSState{(5,1)}{R4}
\EdgeLineDouble
\ArcL{L3}{R1}{}
\ArcL{R2}{L4}{}
\Edge{L1}{L3}
\Edge{L4}{L2}
\Edge{R1}{R3}
\Edge{R4}{R2}
\Edge{L4}{L3}
\Edge{R4}{R3}
\VCurveL{angleA=180,angleB=180,ncurv=2}{L2}{L1}{}
\VCurveL{angleA=0,angleB=0,ncurv=2}{R3}{R4}{}
\end{VCPicture}
}
\caption{Possible evolutions of a graph of type 7}
\label{appendix 8 to}
\end{figure}

\begin{table}[h!tbp]
\centering
\begin{tabular}{|l|c|l|l|}
\hline
From 8 to 	& 	$(U_{i_n},U_{i_{n+1}})$		&	Morphisms 	& 	Conditions		\\
\hline
1	&	$(\star,B)$		&	$[x,yx]$, $[yx,x]$	&	$\card(C_n) = 2$ 	\\
\hline
5 or 6	&	$(\star,\star)$		&	$[0 x,y,(0 y)], [x,0 y,(y)]$	&	$\card(C_n) = 3$	\\
\hline
7 or 8	&	$(\star,\star)$		&	$[0,1 0,(2 0)]$			&	\\
\hhline{~~--}
		&						&	$[x,y^kx,(y^{k-1}x)]$	&	$k \geq 2$ and $\card(C_n) = 2$				\\
\hline
9	&	$(\star,B)$		&	$[0,x 0,y 0]$		&	\\
\hhline{~---}
	&	$(\star,R)$		&	$[0 1,1,(0 2)]$, $[1,0 1,(2)]$	&	$\card(C_n) = 3$ 	\\
\hhline{~~--}
	&					&	$[0 1,2,(0 2)]$, $[1,0 2,(2)]$	&					\\
\hline	
\end{tabular}
\caption{List of morphisms coding the evolutions of a graph of type 8}
\label{List of morphisms coding the evolutions of a graph of type 8}
\end{table}

\newpage

\subsection{Evolution of a Rauzy graph of type 9}

A graph of type 3 is represented in Figure~\ref{appendix type 9}. The possible evolutions are represented in Figure~\ref{appendix 9 to}.

\begin{figure}[h!tbp]
\centering
\begin{VCPicture}{(0,1)(6,4)}
\VSState{(0.5,2)}{R}
\VSState{(3.5,2)}{L}
\VSState{(4.5,2)}{B}
\EdgeLineDouble
\ArcL{R}{L}{}
\ArcR{R}{L}{}
\EdgeL{L}{B}{}
\LArcL{B}{R}{}
\VCurveL{angleA=45,angleB=-45,ncurv=20}{B}{B}{}
\end{VCPicture}
\caption{Graph of type 9}
\label{appendix type 9}
\end{figure}

\begin{figure}[h!tbp]
\centering
\subfigure[To a graph of type 1]{
\label{9 to 1}
\begin{VCPicture}{(0,0.5)(6,4)}
\VSState{(0,2.5)}{R}
\VSState{(3,2.5)}{L}
\VSState{(4,2.5)}{L1}
\VSState{(4,1.5)}{L2}
\VSState{(5,2.5)}{R1}
\VSState{(5,1.5)}{R2}
\EdgeLineDouble
\ArcL{R}{L}{}
\ArcR{R}{L}{}
\Edge{L}{L1}
\LArcL{L2}{R}{}
\Edge{L1}{R1}
\Edge{R2}{L2}
\VCurveL{angleA=10,angleB=-10,ncurv=2}{R1}{R2}{}
\end{VCPicture}
}
\qquad
\subfigure[To a graph of type 5 or 6]{
\label{9 to 56}
\begin{VCPicture}{(0,0.5)(6,4)}
\VSState{(0,2.5)}{R}
\VSState{(3,2.5)}{L}
\VSState{(4,2.5)}{L1}
\VSState{(4,1.5)}{L2}
\VSState{(5,2.5)}{R1}
\VSState{(5,1.5)}{R2}
\EdgeLineDouble
\ArcL{R}{L}{}
\ArcR{R}{L}{}
\Edge{L}{L1}
\LArcL{L2}{R}{}
\Edge{L1}{R1}
\Edge{R2}{L2}
\Edge{L1}{L2}
\VCurveL{angleA=10,angleB=-10,ncurv=2}{R1}{R2}{}
\end{VCPicture}
}
\qquad
\subfigure[To a graph of type 9]{
\label{9 to 9}
\begin{VCPicture}{(0,0.5)(6,4)}
\VSState{(0,2.5)}{R}
\VSState{(3,2.5)}{L}
\VSState{(4,2.5)}{L1}
\VSState{(4,1.5)}{L2}
\VSState{(5,2.5)}{R1}
\VSState{(5,1.5)}{R2}
\EdgeLineDouble
\ArcL{R}{L}{}
\ArcR{R}{L}{}
\Edge{L}{L1}
\LArcL{L2}{R}{}
\Edge{L1}{R1}
\Edge{R2}{L2}
\Edge{R2}{R1}
\VCurveL{angleA=10,angleB=-10,ncurv=2}{R1}{R2}{}
\end{VCPicture}
}
\caption{Possible evolutions of a graph of type 9}
\label{appendix 9 to}
\end{figure}

\begin{table}[h!tbp]
\centering
\begin{tabular}{|l|c|l|l|}
\hline
From 9 to 	& 	$(U_{i_n},U_{i_{n+1}})$		&	Morphisms 	& 	Conditions		\\
\hline
1	&	$(R,B)$		&	$[x,y]$		&	$\card(C_n) = 2$ 	\\
\hline
5 or 6	&	$(R,\star)$		&	$[0,1,(2)]$, $[2,1,0]$		&	\\
\hhline{~---}
		&	$(B,\star)$		&	$[0 x,y,(0 y)]$, $[x,0 y,(y)]$		&	\\
\hline
9	&	$(R,R)$		&	$[0,1,(2)]$		&	\\
\hhline{~---}
	&	$(B,B)$		&	$[0,x 0,y 0]$		&	\\		
\hline	
\end{tabular}
\caption{List of morphisms coding the evolutions of a graph of type 9}
\label{List of morphisms coding the evolutions of a graph of type 9}
\end{table}

\newpage

\subsection{Evolution of a Rauzy graph of type 10}

A graph of type 3 is represented in Figure~\ref{appendix type 10}. The possible evolutions are represented in Figure~\ref{appendix 10 to}.

\begin{figure}[h!tbp]
\centering
\begin{VCPicture}{(0,0.5)(6,4)}
\VSState{(0,2)}{R}
\VSState{(4,3)}{L}
\VSState{(6,2)}{B}
\EdgeLineDouble
\ArcL{R}{L}{}
\ArcL{L}{B}{}
\Edge{R}{B}
\LArcL{B}{R}{}
\VCurveR{angleA=45,angleB=45,ncurv=1}{B}{L}{}
\end{VCPicture}
\caption{Graph of type 10}
\label{appendix type 10}
\end{figure}

\begin{figure}[h!tbp]
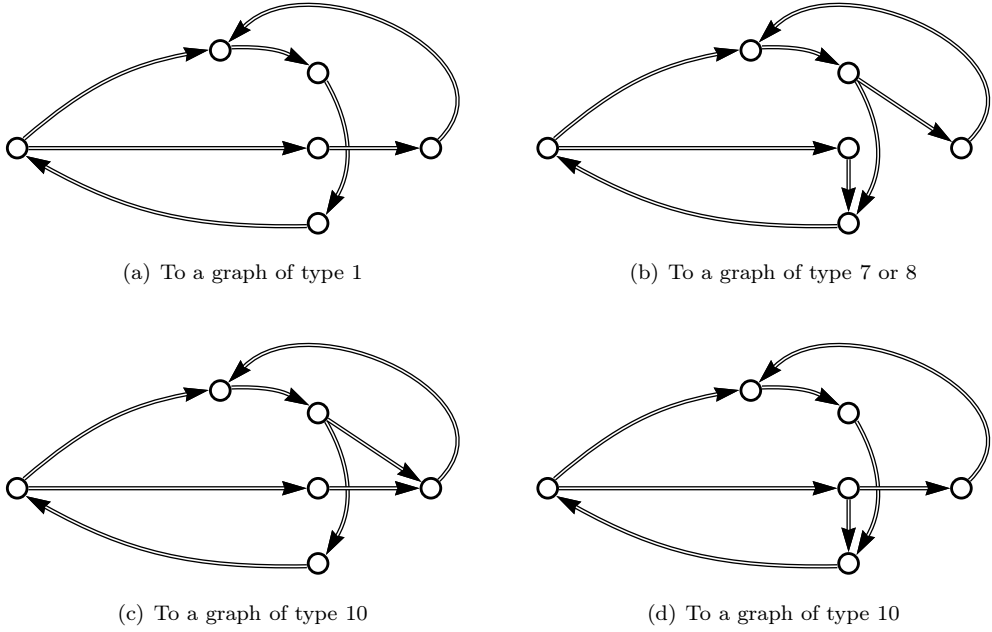

\centering
\subfigure[To a graph of type 1]{
\begin{VCPicture}{(0,0)(6,4)}
\VSState{(0,1.5)}{R}
\VSState{(2.7,2.8)}{L}
\VSState{(4,2.5)}{L1}
\VSState{(4,1.5)}{L2}
\VSState{(4,0.5)}{L3}
\VSState{(5.5,1.5)}{R1}
\EdgeLineDouble
\ArcL{R}{L}{}
\ArcL{L}{L1}{}
\Edge{R}{L2}
\ArcL{L3}{R}{}
\LArcL{L1}{L3}{}
\Edge{L2}{R1}
\VCurveR{angleA=45,angleB=45,ncurv=1}{R1}{L}{}
\end{VCPicture}
}
\qquad
\subfigure[To a graph of type 7 or 8]{
\label{10 to 78}
\begin{VCPicture}{(0,0)(6,4)}
\VSState{(0,1.5)}{R}
\VSState{(2.7,2.8)}{L}
\VSState{(4,2.5)}{L1}
\VSState{(4,1.5)}{L2}
\VSState{(4,0.5)}{L3}
\VSState{(5.5,1.5)}{R1}
\EdgeLineDouble
\ArcL{R}{L}{}
\ArcL{L}{L1}{}
\Edge{R}{L2}
\ArcL{L3}{R}{}
\LArcL{L1}{L3}{}
\Edge{L1}{R1}
\Edge{L2}{L3}
\VCurveR{angleA=45,angleB=45,ncurv=1}{R1}{L}{}
\end{VCPicture}
}
\qquad
\subfigure[To a graph of type 10]{
\label{10 to 10}
\begin{VCPicture}{(0,0)(6,4)}
\VSState{(0,1.5)}{R}
\VSState{(2.7,2.8)}{L}
\VSState{(4,2.5)}{L1}
\VSState{(4,1.5)}{L2}
\VSState{(4,0.5)}{L3}
\VSState{(5.5,1.5)}{R1}
\EdgeLineDouble
\ArcL{R}{L}{}
\ArcL{L}{L1}{}
\Edge{R}{L2}
\ArcL{L3}{R}{}
\LArcL{L1}{L3}{}
\Edge{L2}{R1}
\Edge{L1}{R1}
\VCurveR{angleA=45,angleB=45,ncurv=1}{R1}{L}{}
\end{VCPicture}
}
\qquad
\subfigure[To a graph of type 10]{
\label{10 to 10bis}
\begin{VCPicture}{(0,0)(6,4)}
\VSState{(0,1.5)}{R}
\VSState{(2.7,2.8)}{L}
\VSState{(4,2.5)}{L1}
\VSState{(4,1.5)}{L2}
\VSState{(4,0.5)}{L3}
\VSState{(5.5,1.5)}{R1}
\EdgeLineDouble
\ArcL{R}{L}{}
\ArcL{L}{L1}{}
\Edge{R}{L2}
\ArcL{L3}{R}{}
\LArcL{L1}{L3}{}
\Edge{L2}{R1}
\Edge{L2}{L3}
\VCurveR{angleA=45,angleB=45,ncurv=1}{R1}{L}{}
\end{VCPicture}
}
\caption{Possible evolutions of a graph of type 10}
\label{appendix 10 to}
\end{figure}

\begin{table}[h!tbp]
\centering
\begin{tabular}{|l|c|l|l|}
\hline
From 10 to 	& 	$(U_{i_n},U_{i_{n+1}})$		&	Morphisms 	& 	Conditions		\\
\hline
1	&	$(R,B)$		&	$[x,y]$		&	$\card(C_n) = 2$	\\
\hline
7 or 8	&	$(R,\star)$		&	$[1,0,(2)]$					&	\\
\hhline{~---}
		&	$(B,\star)$		&	$[0,2^k 1,(2^{k-1}1)]$		&	$k \geq 1$	\\
\hline
10	&	$(R,R)$		&	$[1,0,(2)]$		&	\\
\hhline{~---}
	&	$(B,B)$		&	$[0,2 0,1]$		&	\\
\hhline{~---}
	&	$(R,B)$		&	$[0,1,2]$		&	$\card(C_n) = 3$	\\
\hhline{~---}
	&	$(B,R)$		&	$[0 1^k 2,1^{\ell}2]$				&	$k,\ell \geq 0$, $k+\ell \geq 1$	\\
\hhline{~~--}
	&				&	$[0 1^k 2,1^{\ell}2,0 1^{k-1}2]$	&	$k \geq 1$, $k \geq \ell \geq 0$	\\
\hhline{~~--}
	&				&	$[0 1^k 2,1^{\ell}2,1^{\ell-1}2]$	&	$\ell > k \geq 0$					\\	
\hline	
\end{tabular}
\caption{List of morphisms coding the evolutions of a graph of type 10}
\label{List of morphisms coding the evolutions of a graph of type 10}
\end{table}

\section{Proof of Lemma~\ref{lemma: weakly primitive C_4}}
\label{appendix: weakly primitive}

Let us prove the following result which is equivalent to Lemma~\ref{lemma: weakly primitive C_4} but with more details.

\begin{lemma}
\label{lemma: almost primitive C_4'}
An infinite path $p$ in Figure~\ref{Figure: pre-graph C_4} has a weakly primitive label $(\alpha_n)_{n \geq N}$ if and only if one of the following conditions is satisfied:

\begin{enumerate}

	\item $p$ ultimately stays in vertex 1 and both morphisms $[0,10]$ and $[01,1]$ occur infinitely often in $(\alpha_n)_{n \geq N}$;
	
	\item $p$ ultimately stays in vertex $10B$ and for all integers $r \geq N$, $(\alpha)_{n \geq r}$ does not only contain occurrences of $[0,20,1]$, neither of $[01^k2,1^{k+1}2,1^k2]$ for $k \in \N$ and is not only composed of finite sub-sequences of morphisms in
	\[
		\left\{ {[0,20,1]}^{2n}, {[02,12,2]}^n \mid n \in \N \setminus \{0\} \right\};
	\]

	\item $p$ ultimately stays in the subgraph $\{1,7/8\}$, goes through both vertices infinitely often and for all suffixes $p'$ of $p$ starting in vertex $7/8$, the label of $p'$ is not only composed of finite sub-sequences of morphisms in
	\[
		\left( [0,10]^* [0,1] [0,10]^* \{[0,1^k0] \mid k \geq 2\} \right)	
		\cup \left( [0,10]^* [1,0] [01,1]^* \{ [1,0^k1] \mid k \geq 2\} \right);
	\]

	\item $p$ ultimately stays in the subgraph $\{5/6,7/8\}$, goes through both vertices infinitely often and for all suffixes $p'$ of $p$ starting in vertex $7/8$, the label of $p'$ is not only composed of finite sub-sequences of morphisms in	
	\[
		[0,10,20]^* \left\{ [1,02,2], [01,2,02] \right\} [1,02,2]
	\]
and not only composed of finite sub-sequences of morphisms in 
	\[
		\{ [2,01,1], [1,02,2] \} \left\{ [1,0^k2,0^{k-1}2] , [12^{k-1}0,2^{\ell}0,2^{\ell-1}0] \mid \ell > k \geq 1 \right\};
	\]	

	\item $p$ ultimately stays in the subgraph $\{5/6,7/8,10B\}$, goes through the three vertices infinitely often and if $(q_n)_{n \in \N}$ (resp. $(t_n)_{n \in \N}$) is the sequence of finite sub-paths or $p$ that start and end in $7/8$ and go through $10B$ (resp. that start and end in $7/8$ and do not go through $10B$), then for all integers $r \geq N$, the following conditions hold:
	\begin{itemize}
		\item[-] if for all $n \geq r$, the label of $q_n$ is in
		\begin{multline*}
			\{ [1,02,2],[01,2,02] \} [1,01,2] \{ {[0,20,1]}^{2n}, [02,12,2] \mid n \in \N \}^* \\
			\{ [2,012,02], [0,20,1][0,21,1] \},
		\end{multline*}
		then the sequence $(t_n)_{n \in \N}$ is infinite and contains infinitely many occurrences of finite paths whose label is not in
		\[
			\left\{ [1,02,2], [01,2,02] \right\} [1,02,2];
		\]
		
		\item[-] if for all $n \geq r$, the label of $q_n$ is in
		\begin{multline*}
			\{ [1,02,2],[2,01,1] \} \{ [12^k0,2^{k+1}0,2^k0] \mid k \geq 0 \} \\
			\{ [01^k2,1^{k+1}2,1^k2] \mid k \geq 0 \} \{ [0,2^k1,2^{k-1}1] \mid k \geq 2 \},
		\end{multline*}
		then the path $p$ goes infinitely often through the loop on $7/8$ or, the sequence $(t_n)_{n \in \N}$ is infinite and contains infinitely many occurrences of finite paths whose label is not in
		\[
			\{ [2,01,1], [1,02,2] \} 
				\left\{ [1,0^k2,0^{k-1}2] , [12^{k-1}0,2^{\ell}0,2^{\ell-1}0] \mid \ell > k \geq 1 	\right\};
		\]	
	\end{itemize}

	\item $p$ contains infinitely many occurrences of sub-paths $q$ that start in $1$ and end in $5/6$.

\end{enumerate}
\end{lemma}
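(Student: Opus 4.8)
The plan is to reduce the statement to a positivity question about incidence matrices and then run a case analysis on the recurrent behaviour of the path. For a morphism $\sigma$ (on $\{0,1\}$ or $\{0,1,2\}$) write $M(\sigma)$ for its incidence matrix, whose $(a,b)$ entry counts the occurrences of $a$ in $\sigma(b)$; since every image is non-empty, $M(\sigma)$ is nonnegative with no zero column. By the definition of weak primitivity together with Remark~\ref{remark: weak prim and prim}, the label $(\alpha_n)_{n\geq N}$ is weakly primitive if and only if for every $r\geq N$ there is $s>r$ with $M(\alpha_r)\cdots M(\alpha_s)$ strictly positive (every letter of $A_r$ occurring in the image of every letter of $A_{s+1}$). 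Two elementary facts drive the whole argument: first, right-multiplication by a matrix with no zero column preserves positivity, so the property is monotone in $s$ and it suffices to find one positive product; second, if a product contains a strictly positive rectangular block $P$ flanked by factors with no zero row (on the left) and no zero column (on the right), the whole product is positive. The contrapositive is a \emph{trap}: a proper non-empty family of subsets $S_n\subsetneq A_n$ with $\alpha_n(S_{n+1})\subseteq S_n^{*}$ for all $n\geq r$, which forces a permanent zero (the rows indexed by $A_r\setminus S_r$ against columns in $S_{s+1}$). So in each case I must either exhibit a recurring positive sub-product or produce such a trap.

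The case analysis is organised by the set $V_\infty$ of vertices of Figure~\ref{Figure: pre-graph C_4} visited infinitely often. Inspecting the edges shows the only possibilities are $V_\infty\in\{\{1\},\{7/8\},\{10B\},\{1,7/8\},\{5/6,7/8\},\{5/6,7/8,10B\}\}$ together with the situation where $1$ and $5/6$ both lie in $V_\infty$: there is no loop on $5/6$, and one cannot visit $10B$ and $7/8$ infinitely often without also visiting $5/6$, since every edge into $10B$ or from $10B$ towards $7/8$ passes through $5/6$. The set $\{7/8\}$ is possible (through its loop $[0,10,20]$) but is never weakly primitive, because $0\mapsto 0$ makes $S_n=\{0\}$ a trap; hence it appears in no condition. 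Since any passage between $1$ and $5/6$ uses a sub-path $1\to 7/8\to 5/6$, the last situation is exactly Condition~6. For each admissible $V_\infty$ I read off from Table~\ref{table: label C_4_1} the finite list of morphisms labelling the recurring edges and compute their incidence matrices.

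For the positive direction I would exhibit an explicit positive product in each surviving configuration. In $V_\infty=\{1\}$ the matrices of $[0,10]$ and $[01,1]$ are the two unipotent triangular $2\times2$ matrices, whose product in either order is strictly positive, giving Condition~1. In Condition~6 a single sub-path $1\to 7/8\to 5/6$ already yields a composite whose rectangular incidence matrix is strictly positive (using $k\geq 2$ on the edge into $7/8$), and such sub-paths recur infinitely often, so the flanking fact applies. The same recipe handles the generic parts of $\{1,7/8\}$, $\{5/6,7/8\}$ and $\{5/6,7/8,10B\}$: once the path avoids the listed exceptional families, each long stretch composes two morphisms acting in opposite triangular directions, whose product is positive. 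For the negative direction I point to the trap in each excluded family: $[0,20,1]$ fixes $0$, so even powers of it together with $[02,12,2]$ share a common triangular zero-pattern and never carry $0$ into the image of $2$; and $[01^{k}2,1^{k+1}2,1^{k}2]$ sends $1,2$ into $\{1,2\}^{*}$, making $S_n=\{1,2\}$ a trap. These yield the exclusions in Conditions~2--5.

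The main obstacle will be Condition~5, where $p$ oscillates among $5/6$, $7/8$ and $10B$ and the admissibility of a morphism depends on the exponents $k,\ell$ and on whether a circuit traverses the loops $u_2v_2$ of Figure~\ref{figure: graph with 2 loops'}. Here I would group the path into the maximal sub-paths $q_n$ returning to $7/8$ \emph{through} $10B$ and the sub-paths $t_n$ returning to $7/8$ \emph{avoiding} $10B$, compute the composite incidence matrices of their labels, and show that the two explicitly listed families of $q_n$-labels are precisely those preserving a triangular structure along the whole excursion, while every other allowed composite breaks it into a positive block. Weak primitivity then fails exactly when, from some index on, every $q_n$ lies in one of these families and neither the $t_n$ nor the loop on $7/8$ ever destroys the invariant subset. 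The delicate point is to verify that this dichotomy is sharp—that no other interleaving of the permitted morphisms of Table~\ref{table: label C_4_1} produces a persistent trap, and that leaving the two families forces a strictly positive product—and this bookkeeping over the finitely many morphism shapes is the technical heart of the argument.
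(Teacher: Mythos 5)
Your framework is, in substance, the same as the paper's: the incidence-matrix supports you propose to track are exactly the map $\alp(q)=(\alp(\sigma(0)),\alp(\sigma(1)),\alp(\sigma(2)))$ that the paper's proof computes for each excursion $q$, and your \enquote{traps} are exactly the paper's forbidden letter-occurrence patterns (a proper non-empty subset of letters whose images remain inside that subset along the whole suffix). The decomposition by the set of vertices visited infinitely often, the dismissal of a path staying in $7/8$ alone, and the reduction of Condition~6 to the positivity of the composite label of a single sub-path $1\to 7/8\to 5/6$ all coincide with what the paper does.

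The genuine gap is that the content of the lemma is precisely the enumeration you defer. The paper's proof consists almost entirely of (i) computing all possible values of $\alp(q)$ for every excursion $q=7/8\to 5/6\to 7/8$ and every excursion $q = 7/8\to 5/6\to 10B(\to 10B)^{*}\to 7/8$ (Tables~\ref{table 785678} and~\ref{table 78561078}), (ii) isolating which of those values can persist as a trap, and (iii) back-translating each trapped value into the explicit edge-label families that appear in Conditions~3, 4 and~5. None of this is carried out in your proposal, and it cannot be waved through: whether an excursion is trapped depends on the interaction of the two or more edge labels composing it, not on any uniform \enquote{triangular} shape, and the sharpness of the dichotomy in Condition~5 (your acknowledged \enquote{technical heart}) is exactly the part that must be checked against the finitely many label combinations of Table~\ref{table: label C_4_1}. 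Moreover, the one trap you do make explicit is wrong: for the family $\left\{[0,20,1]^{2n},[02,12,2]^{n}\right\}$ you claim the compositions \enquote{never carry $0$ into the image of $2$}, but $[0,20,1]^{2}=[0,10,20]$ sends $2$ to $20$; the actual invariant set is $\{0,2\}$, i.e.\ the letter $1$ never occurs in the image of $0$ or of $2$ under products of these blocks. The slip is minor in itself, but it shows that the trap identification has to be done morphism by morphism --- which is exactly the verification still missing from the proposal.
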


\begin{proof}
The method to prove this result is to study the almost primitivity in each subgraph of Figure~\ref{Figure: pre-graph C_4}. Among all these subgraphs, those in which there exist some infinite paths are
\[
	\{1\}, \{7/8\}, \{10B\}, \{1,7/8\}, \{5/6,7/8\}, 
	 \{1, 5/6, 7/8\}, \{5/6, 7/8, 10B\}, \{1,5/6, 7/8, 10B\}.
\]

It is easily seen that all valid paths in the subgraph $\{7/8\}$ do not have almost primitive labels. Also, for the subgraphs $\{1\}$, $\{10B\}$, the given conditions of the result are trivially equivalent to the almost primitivity.

\bigskip

Let us study the subgraph $\{1,7/8\}$. If $q$ is a path starting in vertex $7/8$, going through vertex 1, possibly staying in it for a while and then coming back to vertex $7/8$, then its label belongs to the set
\begin{multline*}
	Q = \left\{[x,y][x,yx],[xy,y] \mid \{x,y\} = \{0,1\} \right\} \{[0,10],[01,1]\}^* \\
	\left\{ [0,1^k0,1^{k-1}0],[1,0^k1,0^{k-1}1] \mid k \geq 2 \right\}.
\end{multline*}
If $p$ ultimately stays in the subgraph $\{1,7/8\}$, it means that its label is ultimately composed of finite subsequences of morphisms in that set and of occurrences of the morphism $[0,10,20]$ labelling the loop on vertex $7/8$. However, morphisms labelling the edge from $7/8$ to $1$ do not contain the letter 2 in their images. Consequently, the third component of all morphisms can be ignored. Now it can be checked that for all finite sequences of morphisms $\alpha_1 \cdots \alpha_m$ in $Q$, $\alpha_1 \cdots \alpha_m(1)$ contains some occurrences of both $0$ and $1$. Since the morphism labelling the loop on $7/8$ is $[0,10]$, the label $(\alpha_n)_{n \geq N}$ of any infinite path $p$ in $\{1,7/8\}$ is not almost primitive if and only if there is an integer $r \geq N$ such that for all $n \geq r$, $\alpha_r \alpha_{r+1} \cdots \alpha_n(0)=0$. To conclude the proof for the subgraph $\{1,7/8\}$, it suffices to notice that the finite sequences of morphisms $\alpha_1' \cdots \alpha_m'$ in
\[
	[0,1] [0,10]^* [0,1^k0] \cup [1,0] [01,1]^* [1,0^k1]
\]
are the only ones in $Q$ such that $\alpha_1' \cdots \alpha_m'(0)=0$.

\bigskip

Let us study the subgraph $\{5/6,7/8\}$. For any word $u$ over $\{0,1,2\}$ we let $\alp(u)$ be the smallest lexicographic word over $\{0,1,2\}$ such that all letters occurring in $u$ occur in  $\alp(u)$ too. By abuse of notation, for any path $q$ with label $\sigma = \alpha_1 \cdots \alpha_m$ we write 
\[
	\alp(q) = (\alp(\sigma(0)),\alp(\sigma(1)),\alp(\sigma(2))).
\]

It can be algorithmically checked that, if $q$ is a path of length two that starts in $7/8$ and goes through $5/6$ before coming back to $7/8$, then  $\alp(q)$ is one of those given in Table~\ref{table 785678}.
\begin{table}[h!tbp]
\centering
\begin{tabular}{|l|l|l|l|l|}
\hline
(01,12,1) & (01,12,12) 	& (012,12,12) & (02,12,12) 	& (02,12,2) 	\\
\hline
(012,012,012) & (01,012,012) & (02,012,012) & (12,012,012) 	& (1,012,012) 	\\
\hline
(2,012,012) & (1,012,01) 	& (2,012,02) 	& \multicolumn{2}{l}{} 	\\
\hhline{---~~}
\end{tabular}
\caption{List of $\alp(q)$ for $q= 7/8 \to 5/6 \to 7/8$.}
\label{table 785678}
\end{table}

We let $Q_1$ denote the set of paths $q$ of length 2 that start in $7/8$, go through $5/6$ and come back to $7/8$ and such that $\alp(q)$ is one of the following:
\begin{center}
\begin{tabular}{|l|l|l|l|}
\hline
(012,012,012) 	& (01,012,012) 	& 	(02,012,012) &	(12,012,012)	\\
\hline
(1,012,012) 	&	(2,012,012)	&	(1,012,01) 	 & 	\multicolumn{1}{l}{}	\\
\hhline{---~}
\end{tabular}
\end{center}
Obviously, the label $(\alpha_n)_{n \geq N}$ of any infinite path $p$ in the subgraph $\{5/6,7/8\}$ that contains infinitely many occurrences of sub-paths $q$ in $Q_1$ is almost primitive. Indeed, if $p$ is a finite path in the subgraph $\{5/6,7/8\}$ that contains two occurrences of paths in $Q_1$, then the letter $1$ occurs in the three components of $\alp(p)$ which makes $(\alpha_n)_{n \geq N}$ almost primitive because for all paths $q$ in $Q_1$, the second component of $\alp(q)$ contains occurrences of the three letters.

Let us consider an infinite path $p$ such that all sub-paths $q$ of length 2 that start in $7/8$ and go through $5/6$ do not belong to $Q_1$, so are such that $\alp(q)$ is one of the following:
\begin{center}
\begin{tabular}{|l|l|l|}
\hline
(01,12,1)  & (01,12,12) 	& (012,12,12)  	\\
\hline
(02,12,12) & (02,12,2) 	& (2,012,02) 	\\
\hline
\end{tabular}
\end{center}
For such paths $q$, we can see two problems for the almost primitivity:
\begin{itemize}
	\item[-] 	except for paths $q$ such that $\alp(q) = (2,012,02)$, the letter $0$ never occurs in the two last components of $\alp(q)$;
	\item[-]	for paths $q$ such that $\alp(q) \in \{ (02,12,2), (2,012,02) \}$, the letter $1$ never occurs in the first and in the last component of $\alp(q)$.
\end{itemize}
Consequently, the following holds true: the label of any infinite path $p$ in $\{5/6,7/8\}$ such that all sub-paths $q: 7/8 \to 5/6 \to 7/8$ are such that
\begin{enumerate}
	\item 	$\alp(q) \in \{(02,12,2),(2,012,02)\}$ cannot be almost primitive;
	\item 	$\alp(q) \in \{(01,12,1), (01,12,12), (012,12,12), (02,12,12), (02,12,2)\}$ is almost primitive if and only if $\alp(q)$ is not ultimately $(02,12,2)$ and the path $p$ goes infinitely often through the loop on $7/8$ (because it is labelled by $[0,10,20]$).
\end{enumerate}
One can also check that if there are infinitely many occurrences of paths $q$ and $q'$ in $p$ such that $\alp(q) = (2,012,02)$ and 
\[
	\alp(q') \in \{(01,12,1), (01,12,12), (012,12,12), (02,12,12) \},
\]
then the label of $p$ is almost primitive. 

To conclude the proof for the subgraph $\{5/6,7/8\}$, it suffices now to study which labelled paths $q = 7/8 \to 5/6 \to 7/8$ correspond to the \enquote{forbidden cases} listed just above. If $q$ is such a path and if $\alpha_1$ (resp. $\alpha_2$) labels the edge $7/8 \to 5/6$ (resp. $5/6 \to 7/8$), then we have
\begin{eqnarray*}
	&\alp(q) = (02,12,2)  \Leftrightarrow  
					\begin{cases}
						\alpha_1  =  [1,02,2] \\
						\alpha_2  =  [1,02,2]
					\end{cases}& \\
	&\alp(q) = (2,012,02)  \Leftrightarrow  
					\begin{cases}
						\alpha_1  =  [01,2,02] \\
						\alpha_2  =  [1,02,2]
					\end{cases}& 			
\end{eqnarray*}
and
\begin{eqnarray*}
&	\alp(q) \in \{ (01,12,1), (01,12,12), (012,12,12), (02,12,12), (02,12,2) \} 	& 	\\ 
&	\Updownarrow		&	\\
&	\begin{cases}
		\alpha_1 \in \{[1,02,2], [2,01,1]\} \\
		\alpha_2 \in \{[1,0^k2,0^{k-1}2] \mid k \geq 1 \} \cup \{[12^k0,2^{\ell}0,2^{\ell-1}0] \mid \ell > k+1 \geq 1 \}
	\end{cases} 	&
\end{eqnarray*}

\bigskip

Let us study the subgraph $\{5/6,7/8,10B\}$. As for $\{5/6,7/8\}$, it can be algorithmically checked that, if $q$ is a finite path in $\{5/6,7/8,10B\}$ that starts and ends in $7/8$ and that goes through $10B$, then $\alp(q)$ is one of those given in Table~\ref{table 78561078}.
\begin{table}[h!tbp]
\centering
\begin{tabular}{|l|l|l|l|l|}
\hline
(01,012,01) & (01,012,012) 	& (012,012,012) & (012,12,12) 	& (02,012,012) 	\\
\hline
(02,012,02) & (1,012,01) 	& (1,012,012) 	& (2,012,012) 	& (2,012,02) 	\\
\hline
\end{tabular}
\caption{List of $\alp(q)$ for $q = 7/8 \to 5/6 \to 10B (\to 10B)^* \to 7/8$.}
\label{table 78561078}
\end{table}

Let us start by determining some non-almost primitive infinite labelled paths. First, it is easily seen that if $p_1$ is an infinite path in $\{5/6,7/8,10B\}$ whose sub-paths $q_{1,1} = 7/8 \to 5/6 \to 10B (\to 10B)^* \to 7/8$ are ultimately such that $\alp(q_{1,1}) \in \{(2,012,02),(02,012,02)\}$, then the label of $p_1$ is almost primitive if and only if $p_1$ contains infinitely many occurrences of sub-paths $q_{1,2} = 7/8 \to 5/6 \to 7/8$ such that\footnote{The problem is the same as the one met in the subgraph $\{5/6,7/8\}$: the letter $1$ never occurs in the image of $02$.}
\[
	\alp(q_{1,2})  \notin  \{(02,12,2), (2,012,02)\}.
\]

Next, one can also see that if $p_2$ is an infinite path in $\{5/6,7/8,10B\}$ whose sub-paths $q_{2,1} = 7/8 \to 5/6 \to 10B (\to 10B)^* \to 7/8$ are ultimately such that $\alp(q_{2,1}) = (012,12,12)$, then the label of $p_2$ is almost primitive if and only if
$p_2$ contains infinitely many occurrences of loops $7/8 \to 7/8$ or of sub-paths $q_{2,2} = 7/8 \to 5/6 \to 7/8$ such that\footnote{This is again a problem met in the subgraph $\{5/6,7/8\}$: the letter $0$ never occurs in the image of $12$.}
\[
	\alp(q_{2,2})  \notin  \{(01,12,1), (01,12,12), (012,12,12), (02,12,12), (02,12,2)\}.
\]

Now let us show that all other infinite paths $p_3$ in $\{5/6,7/8,10B\}$ that goes infinitely often through the three vertices have an almost primitive label. We can see that in all remaining values of $\alp(q)$, i.e., for all paths $q = 7/8 \to 5/6 \to 10B (\to 10B)^* \to 7/8$ with 
\[
	\alp(q) \notin \{(2,012,02),(02,012,02),(012,12,12)\},
\]
the second component of $\alp(q)$ is $012$. This makes the label of $p_3$ almost primitive because if $p'$ is a finite path in $\{5/6,7/8,10B\}$ that contains two occurrences of paths $q = 7/8 \to 5/6 \to 10B (\to 10B)^* \to 7/8$ with 
\[
	\alp(q) \notin \{(2,012,02),(02,012,02),(012,12,12)\},
\]
then each component of $\alp(p')$ contains an occurrence of the letter $1$.

To conclude the proof for the subgraph $\{\{5/6,7/8,10B\}$, it suffices (like for the subgraph $\{5/6,7/8\}$) to study which labelled paths $q = 7/8 \to 5/6 \to 10B (\to 10B)^* \to 7/8$ correspond to the \enquote{forbidden cases}, i.e., which ones are such that
\[
	\alp(q) \in \{(2,012,02),(02,012,02),(012,12,12)\}.
\]
If the label of $q = 7/8 \to 5/6 \to 10B (\to 10B)^* \to 7/8$ is $\alpha_1 \alpha_2 \cdots \alpha_m$ with $m \geq 3$ such that $\alpha_1$ (resp. $\alpha_2$, $\alpha_m$) labels the edge $7/8 \to 5/6$ (resp. $5/6 \to 10B$, $10B \to 7/8$) and $\alpha_3 \cdots \alpha_{m-1}$ labels the loop $10B \to 10B$, then it is not difficult (though a bit long) to check that the following holds true:
\begin{eqnarray*}
	&	\alp(q) \in \{(2,012,02),(02,012,02)\}	& 	\\
	&	\Updownarrow	&	\\
	&	\begin{cases}
			\alpha_1 \alpha_2 \in \{ [1,02,2],[01,2,02]\} [1,01,2]	\\
			\alpha_3 \cdots \alpha_{m-2} \in \left\{ {[0,20,1]}^{2n}, {[02,12,2]}^n \mid n \in \N \right\}^*	\\
			\alpha_m = [2,012,02] \text{ or } (m \geq 4 \text{ and } \alpha_{m-1} \alpha_m = [0,20,1] [0,21,1])
		\end{cases}
\end{eqnarray*}
and
\begin{eqnarray*}
	&	\alp(q)  = [012,12,12]	& 	\\
	&	\Updownarrow	&	\\
	&	\begin{cases}
			\alpha_1 \alpha_2 \in \{ [1,02,2],[2,01,2]\} \left\{ [12^k0,2^{k+1}0,2^k0] \mid k \geq 0 \right\}	\\
			\alpha_3 \cdots \alpha_{m-1} \in \left\{ [01^k2,1^{k+1}2,1^k2] \mid k \geq 0 \right\}	\\
			\alpha_m \in \left\{ [0,2^k1,2^{k-1}1] \mid k \geq 2 \right\}
		\end{cases}.
\end{eqnarray*}

\bigskip

To conclude the whole proof, it remains to show that the label of any path that goes infinitely often through the four vertices or that ultimately stays in the subgraph  $\{1,5/6,7/8\}$ is almost primitive. This can be easily seen: any such path must contain infinitely many occurrences of finite paths $1 \to 7/8 \to 5/6$ and all these paths have a strongly primitive label.
\end{proof}

\section{Computation of length of paths in Rauzy graphs}	
\label{appendix: computation of length}

To complete the proof of Theorem~\ref{thm: 2n final}, we need to be able to compute some lengths in Rauzy graphs. However, when computing the $\S$-adic representation of our subshifts, we do not keep track of the order $n$ of $G_n$. Consequently, we cannot simply compute the desired Rauzy graph and count the number of edges in the paths we are interested in. Moreover, that technique would not be efficient since the Rauzy graphs are getting bigger and bigger, making them harder to compute. To avoid this problem, we will compute lengths using the morphisms already computed. In other words, if for instance $\tau$ is a morphism labelling an edge to the vertex $7/8$ and coding a loop (i.e., containing an exponent $k$ or $\ell$), we will express the lengths $|u_1|$, $|u_2|$, $|v_1|$ and $|v_2|$ using $\tau$ and morphisms preceding $\tau$ in the directive word. 

Let us introduce some notations. We let $(\gamma_{i_n})_{n \in \N}$ be the sequence of morphisms of Definition~\ref{definition: definition des morphismes} (and Remark~\ref{remark: sigme = identity}) and for all $n \geq 0$, we let $\gamma_{[0,n]}$ denote the morphism $\gamma_{i_0} \cdots \gamma_{i_n}$; thus it is the morphism coding the evolution from $G_0$ to $G_{i_n+1}$. For any two words (or paths) $u$ and $v$, we also let $\CPref(u,v)$ and $\CSuff(u,v)$ respectively denote the longest common prefix and suffix of $u$ and $v$.


The computation of lengths in Rauzy graphs is based on the following fact which is a direct consequence of the constructions. 
\begin{fact}
Let $G_{i_{n+1}}$ be a Rauzy graph of a minimal subshift whose first difference of complexity satisfies $1 \leq p(n+1)-p(n) \leq 2$ for all $n$. If $\gamma_{[0,n]}$ is the morphism coding the evolution from $G_0$ to $G_{i_{n+1}}$, then for all $x \in \{0,1,2\}$, we have
\[
	\gamma_{[0,n]}(x) = \lambda_{R,i_{n+1}} \circ \theta_{i_{n+1}}(x).
\]
\end{fact}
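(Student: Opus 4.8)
The plan is to prove the Fact by a short induction establishing the more transparent, index-free identity
\[
  \gamma_0 \gamma_1 \cdots \gamma_{m-1} = \lambda_{R,m} \circ \theta_m \qquad (m \geq 0),
\]
viewed as an equality of morphisms $A_m^* \to A^*$ (the empty product on the left, for $m=0$, being the identity), where $\lambda_{R,m}$ denotes the right-label map on paths of $G_m(X)$. Once this is available, I would specialize it to $m = i_{n+1}$ and discard the identity factors: by Remark~\ref{remark: sigme = identity} every $\gamma_k$ with $k \notin \{i_0,i_1,\dots\}$ is the identity, so $\gamma_0 \cdots \gamma_{i_{n+1}-1} = \gamma_{i_0}\cdots\gamma_{i_n} = \gamma_{[0,n]}$, which yields exactly $\gamma_{[0,n]}(x) = \lambda_{R,i_{n+1}}(\theta_{i_{n+1}}(x))$ for each $x \in A_{i_{n+1}} \subseteq \{0,1,2\}$.

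Two preliminary observations make both sides of the displayed identity genuine morphisms and drive the induction. First, $\lambda_{R,m}\circ\theta_m$ is a morphism: $\theta_m$ sends a word to the corresponding concatenation of $m$-circuits based at $U_m$, such concatenations are composable closed walks, and the right label of a concatenation of composable paths is the concatenation of their right labels. Second, $\psi_m$ preserves right labels, i.e. $\lambda_{R,m}\circ\psi_m = \lambda_{R,m+1}$ on paths of $G_{m+1}$; this is immediate from the definition of $\psi_m$, which selects the unique path of $G_m$ carrying the same right label. I would also record that $\psi_m$ respects concatenation of composable paths (the shared interior vertex of $\psi_m(p)$ and $\psi_m(q)$ is forced to be the common length-$m$ suffix of the middle vertex), so that the defining relation $\theta_m\gamma_m = \psi_m\theta_{m+1}$ of Definition~\ref{definition: definition des morphismes}, valid on letters, holds in fact as an equality of morphisms on $A_{m+1}^*$.

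The induction then runs as follows. For the base case $m=0$ the left side is the identity, and by the normalization of Remark~\ref{remark = A_0 = A} the bijection $\theta_0$ satisfies $\lambda_R\theta_0(a)=a$, so $\lambda_{R,0}\circ\theta_0$ is also the identity on $A_0 = A$. For the inductive step, assuming the identity at level $m$, I compute
\[
  \gamma_0 \cdots \gamma_{m-1}\gamma_m
  = (\lambda_{R,m}\circ\theta_m)\circ\gamma_m
  = \lambda_{R,m}\circ(\psi_m\circ\theta_{m+1})
  = \lambda_{R,m+1}\circ\theta_{m+1},
\]
using the inductive hypothesis, the defining relation $\theta_m\gamma_m = \psi_m\theta_{m+1}$, and the right-label preservation of $\psi_m$ in succession. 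The decomposition furnished by Lemma~\ref{lemma: decomposition circuit} is exactly what guarantees that $\psi_m\theta_{m+1}$ lands in the image of $\theta_m$, so that $\gamma_m$ is well defined and the middle equality makes sense.

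There is no serious obstacle here; the argument is a verification once the right framing is chosen. The only points needing care are bookkeeping rather than mathematics: treating both $\lambda_{R,m}\circ\theta_m$ and $\psi_m\theta_{m+1}$ as morphisms on all of $A_m^*$, respectively $A_{m+1}^*$ (not merely on letters), and reconciling the index $i_{n+1}$ of the statement with the order $i_n+1$ appearing in the surrounding notation. Both issues dissolve because the intermediate morphisms $\gamma_{i_n+1}, \dots, \gamma_{i_{n+1}-1}$ are identities, whence $\lambda_{R,i_n+1}\circ\theta_{i_n+1}$ and $\lambda_{R,i_{n+1}}\circ\theta_{i_{n+1}}$ coincide on the common alphabet, and the stated Fact follows by restricting the established morphism identity to single letters.
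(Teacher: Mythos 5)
Your proof is correct and matches what the paper intends: the paper states this Fact without proof, calling it ``a direct consequence of the constructions,'' and your induction on $m$ using the defining relation $\theta_m\gamma_m=\psi_m\theta_{m+1}$, the right-label preservation of $\psi_m$, and the normalization $\lambda_R\theta_0(a)=a$ of Remark~\ref{remark = A_0 = A} is exactly that verification made explicit. The care you take with extending the letter-level identities to morphisms on $A_m^*$ and with the $i_n+1$ versus $i_{n+1}$ indexing (both resolved by Remark~\ref{remark: sigme = identity}) is appropriate and introduces no gap.
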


Observe that, since the sequence $(\alpha_n)_{n \in \N}$ of Theorem~\ref{thm: 2n final} is a contraction of $(\gamma_{i_n})_{n \in \N}$, this result can easily be translated using $(\alpha_n)_{n \in \N}$, provided that we consider the good indices $k$ for $\theta_k(x)$. On the other hand, it does not hold anymore if we replace $\gamma_{[0,n]}$ by $\Gamma_{[0,n]} = \Gamma_0 \cdots \Gamma_n$ where $(\Gamma_n)_{n \in \N}$ is as defined in Theorem~\ref{thm: 2n final}. We will also need the following lemma.

\begin{lemma}
\label{lemma: longueur des circuits}
Let $(X,T)$ be a subshift over $A$. For all words $u \in \fac{}{X}$, there is at most one return word $r$ to $u$ such that $|w| \leq \frac{|u|}{2}$. As a corollary, for all $n$ at most one $n$-circuit has for length at most $\frac{n}{2}$.
\end{lemma}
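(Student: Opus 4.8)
The plan is to reduce both assertions to elementary combinatorics of periodicity, with the Fine--Wilf theorem as the central tool. For the first statement, fix $u \in \fac{}{X}$ and let $r$ be a return word to $u$ with $|r| \leq |u|/2$ (the statement being vacuous when $|u| \leq 1$, since return words are non-empty). First I would show that $u$, and in fact all of $ur$, has period $|r|$: by definition $ur \in \fac{}{X}$ carries an occurrence of $u$ as a prefix (at position $1$) and one as a suffix (at position $|r|+1$), and since $1 \leq |r| < |u|$ these two occurrences overlap in at least one position, which forces $u_i = u_{i+|r|}$ throughout the overlap. A word of period $p$ is determined by its first $p$ letters, so $ur$ is the length-$(|u|+|r|)$ prefix of the purely periodic one-sided sequence $(u[1,|r|])^{\omega}$; in particular $r$ is completely determined by $u$ together with the single integer $|r|$.

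Next I would take two return words $r_1,r_2$ to $u$, both of length at most $|u|/2$, and prove $r_1 = r_2$. If $|r_1| = |r_2|$, the previous paragraph already yields $r_1 = r_2$. Otherwise assume $|r_1| < |r_2|$. By the first step $u$ admits both periods $|r_1|$ and $|r_2|$, and since $|r_1| + |r_2| \leq |u| \leq |u| + \gcd(|r_1|,|r_2|)$, Fine--Wilf gives that $u$ has period $d := \gcd(|r_1|,|r_2|)$, with $1 \leq d \leq |r_1| < |r_2|$. Because $d$ divides each $|r_i|$ and $|r_i| \leq |u|$, one gets $(u[1,|r_1|])^{\omega} = (u[1,|r_2|])^{\omega} = (u[1,d])^{\omega} =: w$, so both $ur_1$ and $ur_2$ are prefixes of the same periodic sequence $w$. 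As $w$ has period $d$ and $w[1,|u|] = u$, the word $u$ reoccurs in $ur_2$ at position $d+1$; since $1 < d+1 < |r_2|+1$ and $ur_2$ has length $|u| + |r_2| \geq |u| + d$, this reoccurrence lies strictly between the prescribed prefix occurrence (position $1$) and suffix occurrence (position $|r_2|+1$). This is a third occurrence of $u$ in $ur_2$, contradicting the definition of a return word. Hence $|r_1| = |r_2|$ and $r_1 = r_2$, proving that at most one return word of length $\leq |u|/2$ exists.

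For the corollary I would invoke the correspondence of Remark~\ref{remark: lien entre n-circuit et mot de retour}: for a right special vertex $u$ of $G_n(X)$ (so $|u| = n$), the map $p \mapsto \lambda_R(p)$ is a bijection between allowed $n$-circuits starting from $u$ and return words to $u$, and the length of the circuit equals $|\lambda_R(p)|$. The first part then gives at most one allowed $n$-circuit starting from $u$ with length $\leq n/2$. Alternatively the same periodicity argument applies directly to the full label $\lambda(p) = u\,\lambda_R(p)$ of any $n$-circuit, whose length-$n$ prefix and suffix both equal $u$: the clause ``no interior vertex is $o(p)$'' forces $|\lambda_R(p)|$ to be the minimal period of $\lambda(p)$, so two such circuits with the same base vertex coincide. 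I would stress that the bound is one \emph{per starting vertex}: the full shift on $\{0,1\}$ (where $0^n$ carries the $1$-circuit with label $0$ and $(01)^{n/2}$ carries the $2$-circuit with label $01$) shows that distinct right special vertices of the same length can each support a short circuit, so no global uniqueness holds or is claimed.

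The main obstacle will be the distinct-length case of the second paragraph, namely converting the two periods of $u$ into a genuine extra occurrence of $u$ strictly inside $ur_2$. The delicate point is verifying that the reoccurrence produced by Fine--Wilf really sits between the two distinguished occurrences, and this is precisely where the hypothesis $|r_i| \leq |u|/2$ is used twice: to guarantee $|r_1| + |r_2| \leq |u|$ so that Fine--Wilf applies, and to ensure $d < |r_2|$. The remaining steps are routine bookkeeping about periodic words.
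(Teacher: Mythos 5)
Your proof is correct and follows essentially the same route as the paper's: both establish that a return word $r$ with $|r|\leq |u|/2$ forces $|r|$-periodicity of $u$, then apply Fine--Wilf to two such return words to obtain the common period $d=\gcd(|r_1|,|r_2|)$ and derive a contradiction with the two-occurrence property of return words, and both deduce the circuit statement from Remark~\ref{remark: lien entre n-circuit et mot de retour}. The only (harmless) difference is the endgame --- the paper shows $u[1,d]$ is itself a return word and invokes minimality of $|r|$, whereas you directly exhibit a third occurrence of $u$ at position $d+1$ inside $ur_2$ --- and your observation that uniqueness is per starting vertex matches the paper's implicit reading.
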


\begin{proof}
The last part of the lemma is a direct consequence of Remark~\ref{remark: lien entre n-circuit et mot de retour} (page~\pageref{remark: lien entre n-circuit et mot de retour}). 

Let $u \in \fac{}{X}$ and let $r$ be a return word to $u$ with minimal length. By definition, $u$ is suffix of $ur$. Therefore, if $|r| \leq \frac{|u|}{2}$, $r$ is a suffix of $u$ and we can write $u = r_{[j,|r|]} r^k$ with $k \in \N$, $k \geq 1$ and $j \in \{2, \ldots, |r|+1\}$. Consequently, $u$ is $|r|$-periodic, i.e., $u_{i+|r|} = u_i$ for all $i \in \{1, \dots ,|u|-|r|\}$. 

If there is another return word $s$ to $u$ such that $|s| \leq \frac{|u|}{2}$, we deduce similarly that $u$ is $|s|$-periodic. Moreover, since $|s| \geq |r|$, we have $s = r_{[t,|r|]} r^q$ with $q \in \N$, $q \geq 1$ and $t \in \{2,\dots,|r|+1\}$. By Fine and Wilf's Theorem (see Theorem 8.1.4 in~\cite{Lothaire}) the word $u$ is therefore also $p$-periodic with $p = gcd(|r|,|s|)$. Consequently, there is a word $v$ of length $p$ such that $u = v_{[i,|v|]} v^l$ with $l \geq 1$ and $i \in \{2,\dots,p+1\}$. We also have $r = v^m$ for an integer $m \geq 1$. Therefore, the word $u$ is suffix of $uv$ and does not occur more than twice in $uv$. So, by definition $v$ is a return word to $u$ and, by hypothesis on the length of $r$, we have $v = r$ hence $p=|r|$. Thus $s = r^q$ and there are $q+1$ occurrences of $u$ in $us$ (because $u = r_{[j,|r|]} r^k$). Consequently, $s$ is a return word to $u$ if and only if $s=r$.
\end{proof}

\subsection{Computation of $|u_1|$, $|u_2|$, $|v_1|$ and $|v_2|$}
\label{subsubsection: computation of lengths u v}

First let us compute the length of paths $u_1$, $u_2$, $v_1$ and $v_2$ in Rauzy graphs as represented in Figure~\ref{figure: rauzy graph of type 7,8}. As in Lemma~\ref{Lemma: graph of type 7 or 8} and Theorem~\ref{thm: 2n final}, we let $\k$ denote the maximal number of times that a circuit goes through the loop $v_2 u_2$. In case the graph is $G_{i_n+1}$, this corresponds to the number of times that the circuit $\theta_{i_n+1}(1)$ goes through that loop.

\begin{figure}[h!tbp]
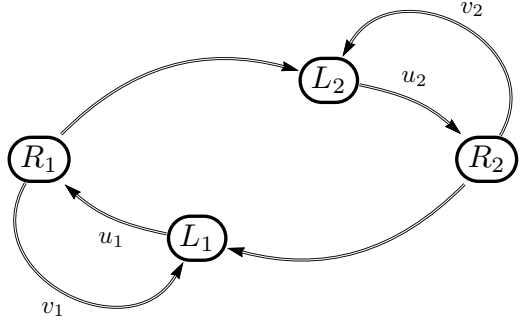

\centering
\scalebox{0.7}{
\begin{VCPicture}{(0,-0.5)(9,5)}
\ChgEdgeLabelScale{0.8}
\StateVar[R_1]{(0,2.5)}{R_1}
\StateVar[L_2]{(5.5,4)}{L_2}
\StateVar[R_2]{(8.5,2.5)}{R_2}
\StateVar[L_1]{(3,1)}{L_1}
\EdgeLineDouble
\LArcL{R_1}{L_2}{}
\VCurveR[]{angleA=-120,angleB=-120,ncurv=1.2}{R_1}{L_1}{v_1}
\VCurveR []{angleA=60,angleB=60,ncurv=1.2}{R_2}{L_2}{v_2}
\LArcL{R_2}{L_1}{}
\ArcL{L_2}{R_2}{u_2}
\ArcL{L_1}{R_1}{u_1}
\end{VCPicture}
}
\caption{Rauzy graphs of type 7 or 8}
\label{figure: rauzy graph of type 7,8}
\end{figure}

\subsubsection{Morphisms in Table~\ref{table: morphismes a partir de 2}}

Let us first consider the morphisms labelling the black edge from vertex $2$ to vertex $7/8$ in Figure~\ref{figure: modified graph of graphs}; they are listed in Table~\ref{table: morphismes a partir de 2}. 
The starting type of graph is represented in Figure~\ref{appendix type 2}.


\begin{enumerate}

	\item	$\gamma_{i_n} = [x,y^kzx,(y^{k-1}zx)]$ with $k \geq 2$ coming from the vertex $2$. The evolution corresponding to this morphism is represented in Figure~\ref{2 to 78 1} (page~\pageref{2 to 78 1}) with $U_{i_n+1}$ corresponding to the right special vertex on the top. We immediately obtain $|v_1| = |v_2| = 1$, $|u_1| = |\gamma_{[0,n-1]}(x)|-1$, $|u_2| = |\gamma_{[0,n-1]}(y)|-1$ and $\k = k-1$.

	\item	$\gamma_{i_n} = [x,zy^kx,(zy^{k-1}x)]$ with $k \geq 2$ coming from the vertex $2$. The evolution corresponding to this morphism is represented in Figure~\ref{2 to 78 1} (page~\pageref{2 to 78 1}) with $U_{i_n+1}$ corresponding to the right special vertex at the bottom. We immediately obtain $|v_1| = |v_2| = 1$, $|u_1| = |\gamma_{[0,n-1]}(x)|-1$, $|u_2| = |\gamma_{[0,n-1]}(y)|-1$ and $\k = k-1$.

	\item	$\gamma_{i_n} = [x,{(yz)}^kx,({(yz)}^{k-1}x)]$ with $k \geq 2$ coming from the vertex $2$. The evolution corresponding to this morphism is represented in Figure~\ref{2 to 78 2} (page~\pageref{2 to 78 2}) with $U_{i_n+1}$ corresponding to the right special vertex on the top. We immediately obtain $|v_1| = |v_2| = 1$, $|u_1| = |\gamma_{[0,n-1]}(x)|-1$, $|u_2| = |\gamma_{[0,n-1]}(yz)|-1$ and $\k = k-1$.

	\item	$\gamma_{i_n} = [xy,z^kxy,(z^{k-1}xy)]$ with $k \geq 2$ coming from the vertex $2$. The evolution corresponding to this morphism is represented in Figure~\ref{2 to 78 2} (page~\pageref{2 to 78 2}) with $U_{i_n+1}$ corresponding to the right special vertex at the bottom. We immediately obtain $|v_1| = |v_2| = 1$, $|u_1| = |\gamma_{[0,n-1]}(xy)|-1$, $|u_2| = |\gamma_{[0,n-1]}(z)|-1$ and $\k = k-1$.

	\item	$\gamma_{i_n} = [x,{(yz)}^kyx,({(yz)}^{k-1}yx)]$ with $k \geq 1$ coming from the vertex $2$. The evolution corresponding to this morphism is represented in Figure~\ref{2 to 78 3} (page~\pageref{2 to 78 3}) with $U_{i_n+1}$ corresponding to the right special vertex on the top. We immediately obtain $|v_1| = 1$, $|v_2| = |\gamma_{[0,n-1]}(z)|+1$, $|u_1| = |\gamma_{[0,n-1]}(x)|-1$, $|u_2| = |\gamma_{[0,n-1]}(y)|-1$ and $\k = k$.

	\item	$\gamma_{i_n} = [xy,z^ky,(z^{k-1}y)]$ with $k \geq 2$ coming from the vertex $2$. The evolution corresponding to this morphism is represented in Figure~\ref{2 to 78 3} (page~\pageref{2 to 78 3}) with $U_{i_n+1}$ corresponding to the right special vertex at the bottom. We immediately obtain $|v_1| = |\gamma_{[0,n-1]}(x)|+1$, $|v_2| = 1$, $|u_1| = |\gamma_{[0,n-1]}(y)|-1$, $|u_2| = |\gamma_{[0,n-1]}(z)|-1$ and $\k = k-1$.

	\item	$\gamma_{i_n} = [z^{\ell}x,yz^kx,yz^{k-1}x]$ with $k-1 > \ell \geq 1$ coming from the vertex $2$. The sequence of evolutions corresponding to that morphisms is the following. First, the graph evolves to a graph of type 4 as in Figure~\ref{subfig: 2 to 4} (page~\pageref{subfig: 2 to 4}) such that $\theta_{i_n+1}(0)$ and $\theta_{i_n+1}(1)$ go respectively $k-1$ and $\ell-1$ times through the loop. Then, the graph becomes a graph as in Figure~\ref{appendix type 4} and it evolves $\ell - 1$ times as represented in Figure~\ref{4 to 4}. Finally, it evolves to a a graph of type 7 or 8 as in Figure~\ref{4 to 78}. It is obviously seen that we have $|v_2| = 1$, $|u_2| = |\gamma_{[0,n-1]}(z)|-1$ and $|u_1| + |v_1| = |\gamma_{[0,n-1]}(z^{\ell}x)|$. Moreover, the path in Figure~\ref{appendix type 4} that will become $u_1$ corresponds to the segment which is not curved. After the first evolution (from 2 to 4), this path has for length $|\gamma_{[0,n-1]}(z)|$ (check in Figure~\ref{subfig: 2 to 4}) and at each evolution to a graph of type 4 (as in Figure~\ref{4 to 4}), its length increases by $|\gamma_{[0,n-1]}(z)|$. With the last evolution, we obtain $|u_1| = \ell |\gamma_{[0,n-1]}(z)| +1$. Finally we can check that $\k = k - \ell - 1$.

	\item	$\gamma_{i_n} = [yz^{\ell}x,z^kx,z^{k-1}x]$ with $k-1 > \ell \geq 1$ coming from the vertex $2$. The computation is the same as for the previous morphism. In this case we obtain $|v_2| = 1$, $|u_2| = |\gamma_{[0,n-1]}(z)|-1$, $|u_1|+|v_1| = |\gamma_{[0,n-1]}(yz^{\ell}x)|$, $|u_1| = |\gamma_{[0,n-1]}(y)| + \ell |\gamma_{[0,n-1]}(z)| +1$ and $\k = k - \ell - 1$.

	\item	$\gamma_{i_n} = [y(xy)^{\ell}z,(xy)^kz,(xy)^{k-1}z]$ with $k-1 > \ell \geq 1$ coming from the vertex $2$. The sequence of evolutions corresponding to that morphisms is the following. First, the graph evolves to a graph of type 10 as in Figure~\ref{2 to 10} (page~\pageref{2 to 10}) such that $\theta_{i_n+1}(0)$ and $\theta_{i_n+1}(1)$ go respectively $k-1$ and $\ell$ times through the loop. Then, the graph becomes a graph as in Figure~\ref{appendix type 10} and it evolves $2 \ell$ times as represented in Figure~\ref{10 to 10}. Finally, it evolves to a a graph of type 7 or 8 as in Figure~\ref{10 to 78}. It is obviously seen that we have $|u_1|+|v_1| = |\gamma_{[0,n]}(0)|$ and $|u_2| + |v_2| = |\gamma_{[0,n-1]}(xy)|$. In Figure~\ref{appendix type 10}, the path that will become $u_1$ is the segment from the bispecial vertex to the right special vertex. Once the graph has evolved as in Figure~\ref{2 to 10}, it has for length $|\gamma_{[0,n-1]}(z)|$ and we can see in Figure~\ref{10 to 10} that, during the $2 \ell$ evolutions to graphs of type 10, it keeps the same length. With the final evolution as in Figure~\ref{10 to 78}, we obtain $|u_1| = |\gamma_{[0,n-1]}(z)-1|$. For $|u_2|$ and $|v_2|$, we see in Figure~\ref{appendix type 10} that the path that will become $u_2$ is the path from the left special vertex to the bispecial vertex. Once the graph has evolved as in Figure~\ref{2 to 10}, we also see that it has for length $|\gamma_{[0,n-1]}(x)|$. Then, when the graph evolves as in Figure~\ref{10 to 10}, we see that the path that will become $u_2$ and $v_2$ always keep the same length but are exchanged at each time. However, since this evolution occurs $2 \ell$ times, we obtain (with the last evolution) $|u_2| = |\gamma_{[0,n-1]}(x)-1|$. We finally have $\k = k - \ell -1$.

	\item	$\gamma_{i_n} = [(xy)^kz,y(xy)^{\ell}z,y(xy)^{\ell-1}z]$ with $\ell > k \geq 1$ coming from the vertex $2$. The computation is the same as for the previous morphism. We still have $|u_1|+|v_1| = |\gamma_{[0,n]}(0)|$, $|u_2| + |v_2| = |\gamma_{[0,n-1]}(xy)|$ and $|u_1| = |\gamma_{[0,n-1]}(z)-1|$. However, once the graph has evolved as in Figure~\ref{2 to 10}, it evolves an odd number of times as in Figure~\ref{10 to 10} ($2(k-1)+1$ times). Consequently we have $|v_2| = |\gamma_{[0,n-1]}(x)-1|$ instead of $|u_2|$. We also have $\k = \ell - k$.

\end{enumerate}

\subsubsection{Morphisms in Table~\ref{table: morphismes a partir de V_i}}

Now let us consider the morphisms labelling the black edges from component $C_2$ to vertex $7/8$ in Figure~\ref{figure: modified graph of graphs}; they are listed in Table~\ref{table: morphismes a partir de V_i}. 
The starting type of graph is represented in Figure~\ref{appendix type 3}. 

For that kind of evolutions, we need to know the length of the path from the left special vertex to the right special vertex in Figure~\ref{appendix type 3}. Indeed, for instance in Figure~\ref{3 to 78}, we see that this path will become either $u_1$ or $u_2$, depending on the choice of the starting vertex $U_{i_n+1}$. This is achieved by the following lemma.


\begin{lemma}
\label{lemma: length type 3}
Let $G_{i_n}$ be a Rauzy graph of type 3 and let $\gamma_{[0,n-1]}$ be the morphism coding the evolution from $G_0$ to $G_{i_n}$. Suppose that $\{x,y,z\} = \{0,1,2\}$ and that $\theta_{i_n}(x)$ is the top loop in Figure~\ref{appendix type 3}. Let also $M$ be the length of the longest $i_{n+1}$-circuit. If $i$ and $j$ are such that $\min \{ |\gamma_{[0,n-1]}(x^i)|, |\gamma_{[0,n-1]}(y^j)|\} \geq 2M$, then the path from the left special vertex to the bispecial vertex has for length
\[
	|\CSuff(\gamma_{[0,n-1]}(y),\gamma_{[0,n-1]}(z))| - |\CSuff(\gamma_{[0,n-1]}(x^i),\gamma_{[0,n-1]}(y^j))|.
\]
\end{lemma}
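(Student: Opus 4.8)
Looking at this lemma, I need to prove a formula for the length of the path from the left special vertex $L$ to the bispecial vertex $B$ in a type 3 Rauzy graph, expressed purely in terms of the morphism $\gamma_{[0,n-1]}$ and common suffixes of letter images.

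\textbf{Setting up the approach.} The plan is to exploit the Fact stated just before the lemma, namely that $\gamma_{[0,n-1]}(w) = \lambda_{R,i_n} \circ \theta_{i_n}(w)$ for each letter $w$, so that the images $\gamma_{[0,n-1]}(x)$, $\gamma_{[0,n-1]}(y)$, $\gamma_{[0,n-1]}(z)$ are precisely the right labels of the three loops/segments of $G_{i_n}$ read through the full evolution from $G_0$. In a type 3 graph (Figure~\ref{appendix type 3}) there is a bispecial vertex $B$ carrying the top loop $\theta_{i_n}(x)$, a left special vertex $L$, and two paths $B \to L$ whose right labels extend to $\theta_{i_n}(y)$ and $\theta_{i_n}(z)$ (these are the two circuits passing through $L$), together with the single segment $L \to B$ that we wish to measure. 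By Proposition~\ref{prop: path in Rauzy graphs}, the right label of any path of length $\le i_n$ ending at a vertex is a suffix of that vertex, so the length of the common segment $L \to B$ is controlled by how much the right labels of the two circuits through $L$ agree as suffixes, versus how much the loop at $B$ forces agreement.

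\textbf{Key steps.} First I would identify the $i_n$-circuits $\theta_{i_n}(y)$ and $\theta_{i_n}(z)$ as the two paths $B \to L \to B$, both of which share the common final segment $L \to B$; hence their right labels $\gamma_{[0,n-1]}(y)$ and $\gamma_{[0,n-1]}(z)$ end with the right label of $L\to B$, which is itself a suffix of the vertex $B$. This gives the upper term $|\CSuff(\gamma_{[0,n-1]}(y),\gamma_{[0,n-1]}(z))|$ as the length of the suffix of $B$ up to which the two circuits coincide. However, this common suffix overshoots: the two circuits continue to agree \emph{beyond} $L$, namely along the top loop at $B$ which both traverse before splitting. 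The second step is to measure exactly this overshoot by iterating the loop $\theta_{i_n}(x)$: comparing $\gamma_{[0,n-1]}(x^i)$ with $\gamma_{[0,n-1]}(y^j)$ captures how far the purely-$x$ behaviour (staying on the top loop) matches the repeated-$L$ behaviour, and the hypothesis $\min\{|\gamma_{[0,n-1]}(x^i)|,|\gamma_{[0,n-1]}(y^j)|\}\ge 2M$ guarantees that the words are long enough that this common suffix stabilizes and genuinely reflects the loop length at $B$ rather than a boundary artefact. Subtracting $|\CSuff(\gamma_{[0,n-1]}(x^i),\gamma_{[0,n-1]}(y^j))|$ removes precisely the portion belonging to the loop at $B$, leaving the length of $L \to B$.

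\textbf{Main obstacle.} The delicate point is justifying that the common suffix $\CSuff(\gamma_{[0,n-1]}(x^i),\gamma_{[0,n-1]}(y^j))$ is well-defined and stable, i.e. independent of $i,j$ once they are large enough, and that it equals the ``overshoot'' length. This is where Lemma~\ref{lemma: longueur des circuits} and the bound $2M$ enter: I expect to argue that if the common suffix kept growing with $i,j$, the vertex $B$ would be eventually periodic with a short period, forcing an $n$-circuit of length at most $\tfrac{n}{2}$ to recur in a way that contradicts the minimality/aperiodicity structure (at most one short circuit by Lemma~\ref{lemma: longueur des circuits}, combined with Lemma~\ref{lemma: pas de court}). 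Concretely, the hard part will be showing that the stabilized common suffix of the two families $(\gamma_{[0,n-1]}(x^i))_i$ and $(\gamma_{[0,n-1]}(y^j))_j$ is exactly the maximal suffix of $B$ lying strictly after the segment $L\to B$, so that the difference of the two $\CSuff$ quantities is neither an over- nor under-count. Once this stabilization is established, the formula follows by a direct bookkeeping of suffix lengths along the two coinciding circuits, and the $2M$ hypothesis is exactly what makes the comparison legitimate by Proposition~\ref{prop: path in Rauzy graphs}.
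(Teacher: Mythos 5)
Your overall scaffolding (invoke the Fact identifying $\gamma_{[0,n-1]}(w)$ with $\lambda_R\theta_{i_n}(w)$, use Proposition~\ref{prop: path in Rauzy graphs} to relate right labels to suffixes of vertices, and use Lemma~\ref{lemma: longueur des circuits} to exploit the $2M$ hypothesis) matches the paper, but you have misidentified what both $\CSuff$ quantities actually measure, and this is not a cosmetic slip: it derails the key step. The circuits $\theta_{i_n}(y)$ and $\theta_{i_n}(z)$ do \emph{not} traverse the top loop at $B$ before splitting --- they are the two circuits $B\to L\to B$ through the two distinct branches, sharing only the final segment $s:L\to B$. The ``overshoot'' in $|\CSuff(\gamma_{[0,n-1]}(y),\gamma_{[0,n-1]}(z))|$ is not ``the portion belonging to the loop at $B$''; it is $i_n$, the length of the vertex $B$ itself as a word. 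Concretely, the factor of $X$ traced by each of these circuits ends with the full label $L\,\lambda_R(s)$ of the common segment, of length $i_n+|s|$, and the agreement stops there because the two branches enter $L$ through its two distinct incoming edges, whose left labels differ ($L$ is left special). Hence the first term equals $|s|+i_n$. Correspondingly, the whole content of the second term is the identity $|\CSuff(\gamma_{[0,n-1]}(x^i),\gamma_{[0,n-1]}(y^j))|=i_n$: by Lemma~\ref{lemma: longueur des circuits} one has $2M>i_n$, so the hypothesis forces both words to have length $>i_n$, both therefore admit $B$ as a suffix of length $i_n$, and the letters in position $i_n+1$ from the end differ because the loop $\theta_{i_n}(x)$ and the segment $s$ enter $B$ through its two distinct incoming edges with distinct left labels (this is exactly what $B$ being bispecial, hence left special with $\delta^-(B)=2$ in a type~3 graph, provides). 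The subtraction then cancels the two $i_n$'s and leaves $|s|$.

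Your proposed mechanism for the ``hard part'' --- a stabilization argument in $i,j$ via eventual periodicity of $B$, Fine--Wilf-style, contradicting the existence of short circuits --- would not establish the needed equality. There is nothing to stabilize: once the two words exceed length $i_n$, their common suffix is exactly $i_n$ by the one-line left-label argument above, independently of $i$ and $j$; and your identification of the stabilized value as ``the maximal suffix of $B$ lying strictly after the segment $L\to B$'' does not equal $i_n$ and would not make the bookkeeping close. So the two correction terms you describe do not add up to the claimed formula; the missing idea is precisely that both $\CSuff$ expressions carry the vertex length $i_n$ as their common hidden summand, with the second one equal to $i_n$ on the nose.
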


\begin{proof}
Indeed, by Proposition~\ref{prop: path in Rauzy graphs} (page~\pageref{prop: path in Rauzy graphs}) we immediately deduce that the length of the path from the left special vertex to the bispecial vertex is
\[
	|\CSuff(\gamma_{[0,n-1]}(y),\gamma_{[0,n-1]}(z))| - i_n.
\]
Consequently, it suffices to prove that $i_n = |\CSuff(\gamma_{[0,n-1]}(x^i),\gamma_{[0,n-1]}(y^j))|$. By Lemma~\ref{lemma: longueur des circuits} we know that $2M$ is greater than $i_n$ and that so are $|\gamma_{[0,n-1]}(x^i))|$ and $|\gamma_{[0,n-1]}(y^j))|$. Consequently, Proposition~\ref{prop: path in Rauzy graphs} implies that both $\gamma_{[0,n-1]}(x^i))$ and $\gamma_{[0,n-1]}(y^j))$ admit the bispecial vertex $B$ as a suffix. Moreover, it is easily seen that if they have a longer common suffix, $B$ would not be bispecial so the result holds.
\end{proof}

In this section, we let $q$ denote the path from the left special vertex to the bispecial vertex in Figure~\ref{appendix type 3}.

\begin{enumerate}

	\item	$\gamma_{i_n} = [i,xy^ki,xy^{k-1}i]$ with $k \geq 1$ coming from the vertex $V_i$, $i \in \{0,1,2\}$. The evolution corresponding to that morphism is represented in Figure~\ref{3 to 78} with vertex $U_{i_n+1}$ corresponding to the right special vertex on the top. In that case we immediately have $|u_1| = |\gamma_{[0,n-1]}(i)| - 1$, $|v_1| = 1$, $|u_2| + |v_2| = |\gamma_{[0,n-1]}(y)|$ and $|u_2| = |q|-1$. We also have $\k = k$.

	\item	$\gamma_{i_n} = [x,i^ky,i^{k-1}y]$ with $k \geq 2$ coming from the vertex $V_i$, $i \in \{0,1,2\}$. The evolution corresponding to that morphism is represented in Figure~\ref{3 to 78} with vertex $U_{i_n+1}$ corresponding to the right special vertex at the bottom. In that case we immediately have $|u_2| = |\gamma_{[0,n-1](i)}| - 1$, $|v_2| = 1$, $|u_1| + |v_1| = |\gamma_{[0,n-1]}(x)|$ and $|u_1| = |q|-1$. We also have $\k = k-1$.

	\item	$\gamma_{i_n} = [xy^{\ell}i,y^ki,y^{k-1}i]$ with $k-1 > \ell \geq 0$ coming from the vertex $V_i$, $i \in \{0,1,2\}$. The sequence of evolutions corresponding to that morphism is the following. First the graph evolves to a graph of type 10 as in Figure~\ref{3 to 10} with starting vertex corresponding to the right special vertex on the top. Then, the graph becomes a graph as in Figure~\ref{appendix type 10} and evolves $2 \ell$ times to graphs of type 10 as in Figure~\ref{10 to 10}. Finally, the graph evolves as in Figure~\ref{10 to 78}. For this morphism, we directly see that $|u_1| + |v_1| = |\gamma_{[0,n]}(0)|$ and that $|u_2| + |v_2| = |\gamma_{[0,n-1]}(y)|$. We also see in Figure~\ref{appendix type 10} that the path that will become $u_2$ is the path from the left special vertex to the bispecial vertex. Once the graph has evolved as in Figure~\ref{3 to 10}, we see that this path has for length $|\gamma_{[0,n-1]}(y)| - |q| - 1$. Then, we see that its length is unchanged after $2$ evolutions as in Figure~\ref{10 to 10} (such an evolution exchanged the curved part of the loop in Figure~\ref{appendix type 10} with the other part). Consequently, we obtain $|u_2| = |\gamma_{[0,n-1]}(y)| - |q| - 1$. Next, in Figure~\ref{appendix type 10} we see that the path that will become $u_1$ is the segment from the bisepcial vertex to the right special vertex. Once the graph has evolved as in Figure~\ref{3 to 10}, we see that it has for length $|\gamma_{[0,n-1]}(i)|$. We also see in Figure~\ref{10 to 10} that it keeps the same length while these $2 \ell$ evolutions. While the last evolution as in Figure~\ref{10 to 78}, we have $|u_1| = |\gamma_{[0,n-1]}(i)|-1$. Finally, we have $\k = k - \ell - 1$.

	\item	$\gamma_{i_n} = [y^ki,xy^{\ell}i,xy^{\ell-1}i]$ with $\ell > k \geq 1$ coming from the vertex $V_i$, $i \in \{0,1,2\}$. The computation is the same as for the previous morphism. In this case we still have $|u_1| + |v_1| = |\gamma_{[0,n]}(0)|$, $|u_2| + |v_2| = |\gamma_{[0,n-1]}(y)|$ and $|u_1| = |\gamma_{[0,n-1]}(i)|-1$. However, in this case the graph evolves an odd number of times as in Figure~\ref{10 to 10} ($2(k-1)+1$ times) so we have $|v_2| = |\gamma_{[0,n-1]}(y)| - |q| - 1$ instead of $|u_2|$. We also have $\k = \ell - k$.

\end{enumerate}

\subsubsection{Morphisms in Table~\ref{table: morphismes a partir de 4B}}

Now let us consider the morphisms labelling the black edges from component $C_3$ to vertex $7/8$ in Figure~\ref{figure: modified graph of graphs}; they are listed in Table~\ref{table: morphismes a partir de 4B}. 
The starting type of graph is represented in Figure~\ref{appendix type 4}.

\begin{enumerate}

	\item	$\gamma_{i_n} = [0,x^ky0,x^{k-1}y0]$ with $k \geq 1$ coming from the vertex $4B$. The evolution corresponding to that morphism is represented in Figure~\ref{4 to 78}. In this case we immediately obtain the lengths $|u_1| = |\gamma_{[0,n-1]}(0)-1|$, $|v_1| = 1$, $|u_2| + |v_2| = |\gamma_{[0,n-1]}(x)|$, $|u_2| = |\CPref(\gamma_{[0,n-1]}(x),\gamma_{[0,n-1]}(y))| - 1$ and $\k = k$.

	\item	$\gamma_{i_n} = [x^{\ell}y,0x^ky,0x^{k-1}y]$ with $k-1 > \ell \geq 0$ coming from the vertex $4B$. The sequence of evolutions corresponding to that morphism is the following: first the graph evolves to graph of type 4 as in Figure~\ref{4 to 4 bis}. Then it becomes a graph as in Figure~\ref{appendix type 4} such that the starting vertex is not the bispecial vertex. It then evolves $\ell$ times as in Figure~\ref{4 to 4} and finally evolves as in Figure~\ref{4 to 78}. It is obviously seen that we have $|u_1| + |v_1| = |\gamma_{[0,n]}(0)|$, $|u_2| = |\gamma_{[0,n-1]}(x)|-1$ and that $|v_2|=1$. We also see that the path in Figure~\ref{appendix type 4} that will become $u_1$ is the segment from the bispecial vertex to the right special vertex. We see in Figure~\ref{4 to 4} that, during this evolution, it always keeps the same length. So, its has the same length than the path in Figure~\ref{4 to 4 bis} from the leftmost right special vertex to the right special vertex on the top. This path has for length $|\gamma_{[0,n-1]}(y)| - |\CPref(\gamma_{[0,n-1]}(x),\gamma_{[0,n-1]}(y))|$. With the last evolution (as in Figure~\ref{4 to 78}), we finally obtain $|u_1| = |\gamma_{[0,n-1]}(y)| - |\CPref(\gamma_{[0,n-1]}(x),\gamma_{[0,n-1]}(y))| - 1$. We also have $\k = k-1 - \ell$.

	\item	$\gamma_{i_n} = [0x^{\ell}y,x^ky,x^{k-1}y]$ with $k-1 > \ell \geq 0$ coming from the vertex $4B$. The computation and the lengths are exactly the same as for the previous morphism.

	\item	$\gamma_{i_n} = [(x0)^{\ell}y,0(x0)^ky,0(x0)^{k-1}y]$ with $k > \ell \geq 0$ coming from the vertex $4B$. The sequence of evolutions corresponding to that morphism is the following. First the graph evolves to a graph of type 10 as in Figure~\ref{4 to 10} and becomes a graph as in Figure~\ref{appendix type 10} such that the starting vertex is not the bispecial one. Then, the graph evolves $2\ell$ times as in Figure~\ref{10 to 10} and it finally evolves as in Figure~\ref{10 to 78}. We immediately have $|u_1|+|v_1| = |\gamma_{[0,n]}(0)|$ and $|u_2|+|v_2| = |\gamma_{[0,n-1]}(x0)|$. Moreover, we see that the path in Figure~\ref{appendix type 10} that will become $u_1$ is the segment from the bispecial vertex to the right special vertex. Once the graph has evolved as in Figure~\ref{4 to 10}, we see that this path has for length $|\CPref(\gamma_{[0,n-1]}(x),\gamma_{[0,n-1]}(y))|$. Then, we see in Figure~\ref{10 to 10} that after 2 such evolutions, this path still have the same length (the two segments starting from the right special vertex which is not bispecial get simply exchanged). Consequently, it still have the same length after the $2\ell$ evolutions to graphs of type 10. With the last evolution as in Figure~\ref{10 to 78} we obtain $|u_1| = |\CPref(\gamma_{[0,n-1]}(x),\gamma_{[0,n-1]}(y))| - 1$. We see that the paths in Figure~\ref{appendix type 10} that will become $u_2$ and $v_2$ are respectively the path $q$ from the left special vertex to the bispecial vertex and the path $q'$ from the bispecial vertex to the left special vertex. Once the graph has evolved as in Figure~\ref{4 to 10}, the path that will become $q$ has for length $|\gamma_{[0,n-1]}(0)|$. Then, at each evolution as in Figure~\ref{10 to 10}, $q$ and $q'$ are exchanged. As there is an even number of such evolutions, we finally get (after the last evolution as in Figure~\ref{10 to 78}) $|u_2| = |\gamma_{[0,n-1]}(0)|-1$. We also have $\k = k - \ell$.

	\item	$\gamma_{i_n} = [0(x0)^ky,(x0)^{\ell}y,(x0)^{\ell-1}y]$ with $\ell -1 > k \geq 0$ coming from the vertex $4B$. The computation is the same as for the previous morphism. In this case we still have $|u_1|+|v_1| = |\gamma_{[0,n]}(0)|$, $|u_2|+|v_2| = |\gamma_{[0,n-1]}(x0)|$ and $|u_1| = |\CPref(\gamma_{[0,n-1]}(x),\gamma_{[0,n-1]}(y))| - 1$. For $u_2$, in this case the graph evolves an odd number of times as in Figure~\ref{10 to 10} so we have $|v_2| = |\gamma_{[0,n-1]}(0)|-1$ instead of $|u_2|$. We also have $\k = \ell - k - 1$.
	
\end{enumerate}

\subsubsection{Morphisms in Table~\ref{table: label C_4_1}}

Now let us consider the morphisms labelling the black edges from component $C_4$ to vertex $7/8$ in Figure~\ref{figure: modified graph of graphs}; they are listed in Table~\ref{table: label C_4_1}. 
The starting types of graph are represented in Figure~\ref{appendix type 4}, Figure~\ref{appendix type 5}, Figure~\ref{appendix type 6} and Figure~\ref{appendix type 10}.

To compute lengths in this component, we have to be careful with the vertex $5/6$. Indeed, this vertex corresponds to the evolution from a graph of type 5 or 6 depending on the length of $p_1$ and $p_2$ in Figure~\ref{figure: graph with no loop'} (page~\pageref{figure: graph with no loop'}). To clearly explain how graphs evolve and how we compute lengths, we will always consider that the starting graph is of type 6. The reader is invited to check that all computations also hold when the graph is of type 5.

In the computations given below, we sometimes need to know the order of the starting Rauzy graph when it is of type 10. For this type of graph, we also need to know the length of the simple path from the left special vertex to the bispecial vertex. These information are given in the following lemma whose proof is similar to the proof of Lemma~\ref{lemma: length type 3} and left to the reader.

\begin{lemma}
\label{lemma: type 10 ordre et long}
Let $G_{i_n}$ be a Rauzy graph of type 10 as in Figure~\ref{appendix type 10}. Let $\gamma_{[0,n-1]}$ be the morphism coding the evolution from $G_0$ to $G_{i_n}$ and suppose that $U_{i_n}$ is the bispecial vertex. If $x \in \{0,1,2\}$ is such that $|\theta_{i_n}(x)| = \max \{ |\theta_{i_n}(i)| \mid i \in \{0,1,2 \} \}$ and if $l_0$, $l_1$ and $l_2$ are the smallest positive integers such that
\[
	\min \{ l_i |\gamma_{[0,n-1]}(i)| \mid i \in \{0,1,2 \} \} \geq 2 |\gamma_{[0,n-1]}(x)|,
\]
then we have
\[
	i_n = \left| \CSuff\left( \gamma_{[0,n-1]}(1^{l_1}), \gamma_{[0,n-1]}(2^{l_2}) \right) \right|.
\]
Moreover, the simple path from the left special vertex to the bispecial vertex in $G_{i_n}$ has for length
\[
	\left| \CSuff\left( \gamma_{[0,n-1]}(0^{l_0}), \gamma_{[0,n-1]}(1^{l_1}) \right) \right| - i_n.
\]
\end{lemma}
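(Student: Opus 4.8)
# Proof Plan for Lemma~\ref{lemma: type 10 ordre et long}

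The final statement is Lemma~\ref{lemma: type 10 ordre et long}, which computes the order $i_n$ of a type-10 Rauzy graph and the length of the simple path from the left special vertex to the bispecial vertex, expressed entirely in terms of the morphism $\gamma_{[0,n-1]}$ coding the evolution from $G_0$ to $G_{i_n}$. Let me sketch how I would prove this.

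\textbf{Overall approach.} The proof should mirror that of Lemma~\ref{lemma: length type 3} exactly (the excerpt itself says it is "similar to the proof of Lemma~\ref{lemma: length type 3} and left to the reader"), so I would follow that template closely. The key structural fact is the Fact stated just before Section~\ref{subsubsection: computation of lengths u v}: for all $x \in \{0,1,2\}$, we have $\gamma_{[0,n]}(x) = \lambda_{R,i_{n+1}} \circ \theta_{i_{n+1}}(x)$; shifting the index down by one, $\gamma_{[0,n-1]}(x)$ is the right label of the $i_n$-circuit $\theta_{i_n}(x)$. Combined with Proposition~\ref{prop: path in Rauzy graphs}, which says that right labels of short paths are suffixes of their incoming vertex, this lets me read off suffix-coincidence information directly from the images of letters.

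\textbf{Computing $i_n$.} First I would argue that the bispecial vertex $B = U_{i_n}$ (a word of length $i_n$) is a common suffix of the circuit labels. By the choice of $x$ maximizing $|\theta_{i_n}(x)|$ and the hypothesis $\min_i \{ l_i |\gamma_{[0,n-1]}(i)| \} \geq 2|\gamma_{[0,n-1]}(x)|$, each padded circuit $\theta_{i_n}(i^{l_i})$ has right label of length at least $2|\gamma_{[0,n-1]}(x)|$, which is at least $2 i_n$ (since the longest circuit has length at least $i_n$, so $|\gamma_{[0,n-1]}(x)| \geq i_n$). By Lemma~\ref{lemma: longueur des circuits}, at most one circuit can be shorter than $\tfrac{i_n}{2}$, so these padded labels are genuinely longer than $i_n$, and Proposition~\ref{prop: path in Rauzy graphs} forces all three to admit $B$ as a suffix. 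Next I restrict to the two circuits starting with letters $1$ and $2$ (the labels $\theta_{i_n}(1)$ and $\theta_{i_n}(2)$), which by the type-10 structure in Figure~\ref{appendix type 10} pass through the loop, and argue that their longest common suffix is exactly $B$: if it were strictly longer, the corresponding vertex would still be right special with the same out-degree, contradicting $B$ being the unique bispecial vertex where these two circuits diverge. This yields $i_n = |\CSuff(\gamma_{[0,n-1]}(1^{l_1}), \gamma_{[0,n-1]}(2^{l_2}))|$.

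\textbf{Computing the path length and the main obstacle.} For the second formula, I would apply Proposition~\ref{prop: path in Rauzy graphs} to the left special vertex $L$: the length of the simple path from $L$ to $B$ equals $|L| - |B| = |L| - i_n$, where $|L|$ is recovered as the longest common suffix of the circuit starting with $0$ (the segment that does not pass through the branching, labelled $\gamma_{[0,n-1]}(0^{l_0})$) and one of the looping circuits, giving $|\CSuff(\gamma_{[0,n-1]}(0^{l_0}), \gamma_{[0,n-1]}(1^{l_1}))|$; subtracting the already-computed $i_n$ gives the stated expression. The delicate step — and the main obstacle — is verifying that the correct letter pairs are compared: I must confirm from the geometry of Figure~\ref{appendix type 10} that it is precisely the circuits through $0$ and $1$ (not $0$ and $2$) whose common suffix isolates the left special vertex $L$, and that the padding exponents $l_i$ are large enough that no \emph{accidental} longer common suffix arises from periodicity (this is exactly where Lemma~\ref{lemma: longueur des circuits} and the $2|\gamma_{[0,n-1]}(x)|$ threshold are needed). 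Once the combinatorial identification of which vertices correspond to which longest-common-suffix computation is pinned down, the rest is a direct application of Proposition~\ref{prop: path in Rauzy graphs}, just as in Lemma~\ref{lemma: length type 3}.
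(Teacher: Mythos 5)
Your proposal is correct and follows exactly the route the paper intends: the paper leaves this proof to the reader as being \enquote{similar to the proof of Lemma~\ref{lemma: length type 3}}, and your argument is precisely that template --- Proposition~\ref{prop: path in Rauzy graphs} and Lemma~\ref{lemma: longueur des circuits} guarantee that the padded circuit labels all admit the bispecial vertex $B$ as a suffix, and the type-10 geometry shows that circuits $1$ and $2$, read backwards, diverge at $B$ itself (they enter $B$ by distinct edges, hence with distinct left labels) while circuits $0$ and $1$ diverge only at the left special vertex $L$. One cosmetic caution: state the path length directly as $|\CSuff(\gamma_{[0,n-1]}(0^{l_0}),\gamma_{[0,n-1]}(1^{l_1}))| - i_n$, the common suffix being the \emph{full label} of the path from $L$ to $B$ (of length $i_n$ plus the path length), rather than as \enquote{$|L|-|B|$}, since $L$ and $B$ are both vertices of $G_{i_n}$ and hence words of the same length $i_n$.
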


Now let us compute the lengths $|u_1|$, $|u_2|$, $|v_1|$ and $|v_2|$.

\begin{enumerate}

	\item $\gamma_{i_n} = [x,y^kx,y^{k-1}x]$ with $k \geq 2$ coming from the vertex 1 or from the vertex $5/6$. The evolutions corresponding to that morphism are represented in Figure~\ref{1 to 78} and in Figure~\ref{6 to 78}. We can easily see that $|u_1| = |\gamma_{[0,n-1]} (x)|-1$, $|v_1| = 1$, $|u_2| = |\gamma_{[0,n-1]} (y)|-1$ and $|v_2| = 1$. We also have $\k = k-1$.

	\item $\gamma_{i_n} = [1,0^k2,(0^{k-1}2)]$ with $k \geq 1$ coming from the vertex $5/6$. For this evolution, we directly have the lengths $|u_1|+|v_1| = |\gamma_{[0,n]}(0)|$, $|u_2|+|v_2| = |\gamma_{[0,n-1]}(0)|$, $|u_2| = |\CPref ( \gamma_{[0,n-1]} (0), \gamma_{[0,n-1]} (2) )| - 1$, $|u_1| = |\gamma_{[0,n-1]} (2) | - |\CPref ( \gamma_{[0,n-1]} (0), \gamma_{[0,n-1]} (2) )| - 1$ and $\k = k$.

	\item $\gamma_{i_n} = [2^{\ell}0,1 2^k 0,(1 2^{k-1}0)]$ with $k > \ell \geq 0$ coming from the vertex $5/6$. The sequence of evolutions corresponding to that morphism is the following. First the graph evolves to a graph of type 10 as in Figure~\ref{6 to 10} and becomes a graph as in Figure~\ref{appendix type 10} such that the starting vertex is not the bispecial one. Then, the graph evolves $2\ell$ times as in Figure~\ref{10 to 10} and it finally evolves as in Figure~\ref{10 to 78}. We immediately have $|u_1|+|v_1| = |\gamma_{[0,n]}(0)|$ and $|u_2|+|v_2| = |\gamma_{[0,n-1]}(2)|$. Moreover, we see that the path in Figure~\ref{appendix type 10} that will become $u_1$ is the segment from the bispecial vertex to the right special vertex. Once the graph has evolved as in Figure~\ref{6 to 10}, we see that this path has for length $|\gamma_{[0,n-1]}(0)|-|\CPref(\gamma_{[0,n-1]}(0),\gamma_{[0,n-1]}(1))|$. Then, we see in Figure~\ref{10 to 10} that after two such evolutions, this path still have the same length (because with such an evolution, the two segments starting from the right special vertex which is not bispecial simply get exchanged). Consequently, it still have the same length after the $2\ell$ evolutions to graphs of type 10. With the last evolution as in Figure~\ref{10 to 78} we obtain $|u_1| = |\gamma_{[0,n-1]}(0)|-|\CPref(\gamma_{[0,n-1]}(0),\gamma_{[0,n-1]}(2))| - 1$. For $u_2$ and $v_2$ we see that the paths in Figure~\ref{appendix type 10} that will become them are respectively the path $q$ from the left special vertex to the bispecial vertex and the path $q'$ from the bispecial vertex to the left special vertex. Once the graph has evolved as in Figure~\ref{6 to 10}, the path that will become $q$ has for length $|\gamma_{[0,n-1]}(2)| - |\CPref(\gamma_{[0,n-1]}(0),\gamma_{[0,n-1]}(2))|$. Then, at each evolution as in Figure~\ref{10 to 10}, $q$ and $q'$ are exchanged. Since there are an even number of such evolutions, we finally get (after the last evolution as in Figure~\ref{10 to 78}) $|u_2| = |\gamma_{[0,n-1]}(2)| - |\CPref(\gamma_{[0,n-1]}(0),\gamma_{[0,n-1]}(2))|-1$. We also have $\k = k - \ell$.

	\item $\gamma_{i_n} = [1 2^k 0,2^{\ell}0, (2^{\ell-1}0)]$ with $\ell > k+1 \geq 1$ coming from the vertex $5/6$. The computation is the same as for the previous morphism. In this case we still have $|u_1|+|v_1| = |\gamma_{[0,n]}(0)|$, $|u_2|+|v_2| = |\gamma_{[0,n-1]}(2)|$ and $|u_1| = |\gamma_{[0,n-1]}(0)|-|\CPref(\gamma_{[0,n-1]}(0),\gamma_{[0,n-1]}(2))| - 1$. For $u_2$, in this case the graph evolves an odd number of times as in Figure~\ref{10 to 10} so we have $|v_2| = |\gamma_{[0,n-1]}(2)| - |\CPref(\gamma_{[0,n-1]}(0),\gamma_{[0,n-1]}(2))|-1$ instead of $|u_2|$. We also have $\k = \ell - k - 1$.

	\item $\gamma_{i_n} = [0,2^k 1,2^{k-1}1]$ with $k \geq 1$ coming from the vertex $10B$. The evolution corresponding to that morphism is represented in Figure~\ref{10 to 78}. We immediately see that $|u_1| + |v_1| = |\gamma_{[0,n]}(0)|$, $|u_1| + |v_1| = |\gamma_{[0,n-1]}(2)|$, $|u_2| = |\CPref(\gamma_{[0,n-1]}(1),\gamma_{[0,n-1]}(2))|-1$. Moreover, by Lemma~\ref{lemma: type 10 ordre et long} we have (with the same notation) $|u_1| = \left| \CSuff\left( \gamma_{[0,n-1]}(0^{l_0}), \gamma_{[0,n-1]}(1^{l_1}) \right) \right| - \left| \CSuff\left( \gamma_{[0,n-1]}(1^{l_1}), \gamma_{[0,n-1]}(2^{l_2}) \right) \right| -1$. We also have $\k = k$.

	\item $\gamma_{i_n} = [1^{\ell}2,0 1^k 2,(0 1^{k-1}2)]$ with $k > \ell \geq 0$ coming from the vertex $10B$. The sequence of evolutions corresponding to that morphism is the following. First the graph evolves to a graph of type 10 as in Figure~\ref{10 to 10bis} and becomes a graph as in Figure~\ref{appendix type 10} such that the starting vertex is not the bispecial one. Then, the graph evolves $2\ell$ times as in Figure~\ref{10 to 10} and it finally evolves as in Figure~\ref{10 to 78}. We immediately have $|u_1|+|v_1| = |\gamma_{[0,n]}(0)|$ and $|u_2|+|v_2| = |\gamma_{[0,n-1]}(1)|$. Moreover, we see that the path in Figure~\ref{appendix type 10} that will become $u_1$ is the segment from the bispecial vertex to the right special vertex. Once the graph has evolved as in Figure~\ref{10 to 10bis}, we see that this path has for length $|\gamma_{[0,n-1]}(2)|-|\CPref(\gamma_{[0,n-1]}(1),\gamma_{[0,n-1]}(2))|$. Then, we see in Figure~\ref{10 to 10} that after two such evolutions, this path still have the same length (because with such an evolution, the two segments starting from the right special vertex which is not bispecial simply get exchanged). Consequently, it still has the same length after the $2\ell$ evolutions to graphs of type 10. With the last evolution as in Figure~\ref{10 to 78} we obtain $|u_1| = |\gamma_{[0,n-1]}(2)|-|\CPref(\gamma_{[0,n-1]}(1),\gamma_{[0,n-1]}(2))| - 1$. We see in Figure~\ref{appendix type 10} that the paths that will become $u_2$ and $v_2$ are respectively the path $q$ from the left special vertex to the bispecial vertex and the path $q'$ from the bispecial vertex to the left special vertex. Once the graph has evolved as in Figure~\ref{10 to 10bis}, we know from Lemma~\ref{lemma: type 10 ordre et long} that $q$ has for length $\left| \CSuff\left( \gamma_{[0,n-1]}(0^{l_0}), \gamma_{[0,n-1]}(1^{l_1}) \right) \right| - \left| \CSuff\left( \gamma_{[0,n-1]}(1^{l_1}), \gamma_{[0,n-1]}(2^{l_2}) \right) \right|$. Then, at each evolution as in Figure~\ref{10 to 10}, $q$ and $q'$ are exchanged. As there are an even number of such evolutions, we finally get (after the last evolution as in Figure~\ref{10 to 78}) $|u_2| = \left| \CSuff\left( \gamma_{[0,n-1]}(0^{l_0}), \gamma_{[0,n-1]}(1^{l_1}) \right) \right| - \left| \CSuff\left( \gamma_{[0,n-1]}(1^{l_1}), \gamma_{[0,n-1]}(2^{l_2}) \right) \right|-1$. We also have $\k = k - \ell$.

	\item $\gamma_{i_n} = [0 1^k 2,1^{\ell}2,(1^{\ell-1}2)]$ with $\ell > k+1 \geq 1$ coming from the vertex $10B$. The computation is the same as for the previous morphism. In this case we still have $|u_1|+|v_1| = |\gamma_{[0,n]}(0)|$, $|u_2|+|v_2| = |\gamma_{[0,n-1]}(1)|$ and $|u_1| = |\gamma_{[0,n-1]}(2)|-|\CPref(\gamma_{[0,n-1]}(1),\gamma_{[0,n-1]}(2))| - 1$. For $u_2$, in this case the graph evolves an odd number of times as in Figure~\ref{10 to 10} so we have $|v_2| = \left| \CSuff\left( \gamma_{[0,n-1]}(0^{l_0}), \gamma_{[0,n-1]}(1^{l_1}) \right) \right| - \left| \CSuff\left( \gamma_{[0,n-1]}(1^{l_1}), \gamma_{[0,n-1]}(2^{l_2}) \right) \right|-1$ instead of $|u_2|$. We also have $\k = \ell - k - 1$.

\end{enumerate}

\subsection{Computation of $|p_1|$ and $|p_2|$}
\label{subsubsection: computation of lengths p}

The aim of this section is to compute the length of the paths $p_1$ and $p_2$ of Figure~\ref{figure: graph with no loop'} when evolving to such a graph, i.e., when considering an edge to the vertex $5/6$ in Figure~\ref{Figure: graph C_4}. These lengths do not only depend on the last morphism applied but on a finite number of morphisms. First, the next lemma shows how to compute these lengths when passing through the vertex $7/8$ in Figure~\ref{Figure: graph C_4}. The other cases will be particular cases of this one. Indeed, morphisms labelling the loop on vertex $5/6$ in Figure~\ref{Figure: graph C_4} are simply compositions
of the morphism $[1,0^k2,0^{k-1}2]$ (labelling the edge from $5/6$ to $7/8$) with a morphism in $\{ [0x,y,0y], [x,0y,y] \}$ (labelling the edge from $7/8$ to $5/6$). In other words, it simply corresponds to the case $h = 0$ in Lemma~\ref{lemma: length of p1 and p2} below. For morphisms labelling the edge from $10B$ to $5/6$ in Figure~\ref{Figure: graph C_4}, the reasoning is the same but this time, the morphisms labelling the edge from $10B$ to $5/6$ are compositions of the morphism $[0,2^k1,2^{k-1}1]$ (labelling the edge from $10B$ to $7/8$) with a morphism in $\{ [0x,y,0y], [x,0y,y] \}$ (labelling the edge from $7/8$ to $5/6$).

\begin{lemma}
\label{lemma: length of p1 and p2}
Let $G_{i_{n-1}+1}$ be a Rauzy graph as represented in Figure~\ref{figure: graph with 2 loops'} (page~\pageref{figure: graph with 2 loops'}) and let $\gamma_{[0,n-1]}$ be the morphism coding the evolution from $G_0$ to $G_{i_{n-1}+1}$ (so to $G_{i_n}$). Suppose that $U_{i_{n-1}+1}$ corresponds to the vertex $R_1$ in Figure~\ref{figure: graph with 2 loops'} and that the circuit $\theta_{i_{n-1}+1}(1)$ goes exactly $k$ times through the loop $v_2 u_2$.

Let $\ell$ be the unique integer such that 
\[
	|u_1| + (\ell-1) (|u_1| + |v_1|) < |u_2| + (k-1) (|u_2| + |v_2|) \leq |u_1| + \ell (|u_1| + |v_1|)
\]
and let $h$ be the greatest integer such that for all $r \in \{0,\dots,h-1\}$, $\gamma_{i_{n+r}} =[0,10,20]$.
Suppose that $\gamma_{i_{n+h}}$ labels the edge from $7/8$ to $5/6$ (so belongs to $\{ [0 x,y,(0 y)], [x,0 y,(y)] \mid \{x,y\} = \{1,2\} \}$), then $G_{i_{n+h}+1}$ is a graph as represented in Figure~\ref{figure: graph with no loop'} (page~\pageref{figure: graph with no loop'}) and the lengths of $p_1$ and $p_2$ are as follows.

If $h < \ell$, then for $k' = \min \{ i \in \N \mid |u_2| + i (|u_2| + |v_2|) \geq |u_1| + h (|u_1| + |v_1|) \}$, we have
\begin{eqnarray*}
|p_1| 	&=& \left| \CPref\left( \gamma_{[0,n-1]}(1),\gamma_{[0,n-1]}(2) \right) \right| - (k-1 - k') (|u_2| + |v_2|) \\
		& & - (|u_2| + k'(|u_2|+|v_2|) - (|u_1| + h(|u_1|+|v_1|))) - 1	\\
|p_2| 	&=& |\gamma_{[0,n-1]}(2)| - \left| \CPref \left(\gamma_{[0,n-1]}(1),\gamma_{[0,n-1]}(2) \right) \right| - 1
\end{eqnarray*}
and if $h \geq \ell$, we have
\begin{eqnarray*}
|p_1| 	&=& \left| \CPref( \gamma_{[0,n-1]}(1),\gamma_{[0,n-1]}(2) ) \right| -1	\\
|p_2| 	&=& \left| \gamma_{[0,n-1]}(2) \right| - \left| \CPref( \gamma_{[0,n-1]}(1),\gamma_{[0,n-1]}(2) ) \right|  \\
		& & - (|u_1| + \ell(|u_1|+|v_1|) - (|u_2| + (k-1)(|u_2|+|v_2|))) - 1.
\end{eqnarray*}
\end{lemma}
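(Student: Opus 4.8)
The plan is to track how the relevant paths in the Rauzy graph transform under each elementary evolution, reducing everything to the combinatorial bookkeeping already set up in the proof of Lemma~\ref{Lemma: graph of type 7 or 8}. First I would fix the setup of Figure~\ref{figure: graph with 2 loops'}: starting from $G_{i_{n-1}+1}$ with $U_{i_{n-1}+1}=R_1$, I know by the \texttt{Fact} preceding Lemma~\ref{lemma: longueur des circuits} that $\gamma_{[0,n-1]}(x)=\lambda_{R,i_{n-1}+1}\circ\theta_{i_{n-1}+1}(x)$, so the full labels of the three circuits are exactly the images $\gamma_{[0,n-1]}(0)$, $\gamma_{[0,n-1]}(1)$, $\gamma_{[0,n-1]}(2)$. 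The proof of Lemma~\ref{Lemma: graph of type 7 or 8} already identifies the bispecial vertices $B_1(i)=\lambda(u_1(v_1u_1)^i)$ and $B_2(j)=\lambda(u_2(v_2u_2)^j)$ with their lengths $e_1(i)=t+|u_1|+i(|u_1|+|v_1|)$ and $e_2(j)=t+|u_2|+j(|u_2|+|v_2|)$, and the integer $\ell$ is defined precisely to sandwich $e_2(k-1)$ between $e_1(\ell-1)$ and $e_1(\ell)$. The role of $h$ is then to record how many of the $B_1(i)$ explode as in Figure~\ref{figure: $j < k-1$} (each coded by $[0,10,20]$) before the first non-trivial evolution, i.e.\ before $\gamma_{i_{n+h}}$ sends us from $7/8$ to $5/6$.

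Next I would split according to whether $h<\ell$ or $h\geq\ell$, exactly mirroring the two regimes in Lemma~\ref{Lemma: graph of type 7 or 8}. In both cases the target graph $G_{i_{n+h}+1}$ is of type 5 or 6, so it has the shape of Figure~\ref{figure: graph with no loop'} with the two simple paths $p_1$ (from $L_1$ to $R_1$) and $p_2$ (from $L_2$ to $R_2$). The key observation is that the two right-special vertices of $G_{i_{n+h}+1}$ are precisely the two descendants of $R_1$ and $R_2$, and that $p_1$ and $p_2$ measure how far each has advanced past its originating bispecial vertex. Since the evolutions coded by $[0,10,20]$ are the ``neutral'' explosions of Figure~\ref{figure: $j<k-1$}, each such step advances the $R_1$-branch by one full period $|u_1|+|v_1|$ while leaving the $R_2$-branch governed by $e_2$. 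Applying Proposition~\ref{prop: path in Rauzy graphs}, the lengths $|p_1|$ and $|p_2|$ are differences of lengths of common prefixes of the circuit labels: the common prefix of $\gamma_{[0,n-1]}(1)$ and $\gamma_{[0,n-1]}(2)$ pins down where the two branches separate, and subtracting the appropriate multiple of $|u_2|+|v_2|$ (respectively the residual $e_1(h)-e_2(k'-1)$, respectively $e_1(\ell)-e_2(k-1)$) accounts for the offset accumulated during the $h$ neutral steps. This is exactly where the auxiliary integer $k'=\min\{i:|u_2|+i(|u_2|+|v_2|)\geq|u_1|+h(|u_1|+|v_1|)\}$ enters in the case $h<\ell$: it is the smallest $B_2$-explosion surviving past the $h$-th $B_1$-explosion.

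The main obstacle, and the step I would carry out most carefully, is verifying the $-1$ corrections and the precise prefix/suffix that governs each formula. The length of a simple path is the difference of two vertex lengths, and since vertex lengths are measured as lengths of words while path lengths count edges, each such difference loses one unit; keeping the book straight between $\CPref$ of the \emph{full labels} (which give the branching point) and the order $i_{n-1}+1$ of the graph (which is itself a $\CSuff$ by Lemma~\ref{lemma: type 10 ordre et long}-type reasoning) is where errors creep in. I would therefore isolate the claim that $|p_2|=|\gamma_{[0,n-1]}(2)|-|\CPref(\gamma_{[0,n-1]}(1),\gamma_{[0,n-1]}(2))|-1$ in the regime $h<\ell$ as the cleanest anchor: here $R_2$ has not yet exploded at all, so $p_2$ is simply the part of $\gamma_{[0,n-1]}(2)$ lying strictly after the branch point, minus one for the edge count. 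From that anchor the remaining three formulas follow by adding or subtracting the net displacement $e_1(h)-e_2(k'-1)$ or $e_1(\ell)-e_2(k-1)$, both of which are nonnegative by the defining inequalities for $\ell$ and $k'$. The case $h\geq\ell$ is symmetric with the roles of $p_1$ and $p_2$ interchanged, because now it is the $R_1$-branch that has overtaken and the residual offset lands on $p_2$ instead. I expect the verification to be routine once the anchor and the sign conventions for the two residuals are fixed, so I would present it as a direct computation following the explosion analysis of Lemma~\ref{Lemma: graph of type 7 or 8} rather than re-deriving the graph dynamics from scratch.
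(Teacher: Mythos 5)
Your overall route is the paper's: you reduce the computation to the ordered list of explosions of the bispecial vertices $B_1(i)$ and $B_2(j)$ set up in the proof of Lemma~\ref{Lemma: graph of type 7 or 8}, split into the regimes $h<\ell$ and $h\geq\ell$, and anchor each formula on the branch that has not yet exploded. Two points in your sketch, however, would produce wrong formulas if executed as written, and they are precisely the corrections you defer. First, the residual offset in the regime $h<\ell$ is not $e_1(h)-e_2(k'-1)$ but $e_2(k')-e_1(h)=|u_2|+k'(|u_2|+|v_2|)-\bigl(|u_1|+h(|u_1|+|v_1|)\bigr)$; since these two quantities sum to $|u_2|+|v_2|$, they genuinely differ. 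Geometrically, at the moment $B_1(h)$ explodes the graph is of type $7$ and the path that will become $p_1$ runs from $B_1(h)$ to the left special vertex: its length is $\bigl|\CPref(\gamma_{[0,n-1]}(1),\gamma_{[0,n-1]}(2))\bigr|$ shortened by the $(k-1-k')$ loops of label $\lambda_R(v_2u_2)$ still to be traversed \emph{and} by the distance from $B_1(h)$ forward to $B_2(k')$, not backward to $B_2(k'-1)$.

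Second, the two regimes are not symmetric under exchanging $p_1$ and $p_2$. In the case $h<\ell$ the correction to $|p_1|$ carries the extra term $(k-1-k')(|u_2|+|v_2|)$ just mentioned, whereas in the case $h\geq\ell$ the correction to $|p_2|$ is the single offset $|u_1|+\ell(|u_1|+|v_1|)-\bigl(|u_2|+(k-1)(|u_2|+|v_2|)\bigr)$ with no accompanying multiple of $|u_1|+|v_1|$. The reason is that in the second regime the remaining explosions $B_1(\ell),\dots,B_1(h-1)$ occur while the graph is already of type $9$, and during those evolutions the two paths that will become $p_1$ and $p_2$ keep constant length (they join a left special vertex to a left special vertex, respectively a right special vertex to a right special vertex, in Figure~\ref{9 to 9}). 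That invariance through the intermediate type-$9$ phase is what licenses computing both lengths once and for all in the last graph of type $7$ or $8$; you gesture at it by deferring to the bookkeeping of Lemma~\ref{Lemma: graph of type 7 or 8}, but it must be stated explicitly, since without it neither your anchor nor your offsets are attached to a well-defined graph.
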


\begin{proof}
Let us recall notation introduced in the proof of Lemma~\ref{Lemma: graph of type 7 or 8}. For all non-negative integers $r$ and $s$, $B_1(r)$ and $B_2(s)$ are respectively the words $\lambda(u_1 (v_1u_1)^r)$ and $\lambda(u_2 (v_2u_2)^s)$. For $s \in \{0,\dots,k-1\}$, $B_2(s)$ is a bispecial vertex in $G_{|B_2(s)|}$ and $B_2(k)$ does not belong to the language of the considered subshift. Also, for all non-negative integers $r$, if $B_1(r)$ is in the language of the considered subshift, then it is a bispecial vertex in $G_{|B_1(r)|}$.

Now let us determine the sequence of evolutions corresponding to the sequence of morphisms $(\gamma_{i_m})_{n < m \leq n+h+1}$. The graph $G_{i_n+1}$ will evolve to a graph of type 7 or 8 depending on $|u_1|$ and $|v_1|$. Thanks to Lemma~\ref{lemma: decomposition du mot directeur} we can suppose without loss of generality that it evolves to a graph of type 7.

Let us start studying the behaviours of vertices $B_2(s)$. The hypothesis on $\theta_{i_{n-1}+1}(1)$ implies that for all $s \in \{0,\dots,k-2\}$, $B_2(s)$ will explode as represented in Figure~\ref{figure: $j < k-1$} (page~\pageref{figure: $j < k-1$}). Then, the hypothesis on $\gamma_{i_{n+h}}$ implies that $B_2(k-1)$ will explode as in Figure~\ref{figure: $j = k-1$ and 3 circuits} (because there are three distinct letters in its images). 

Now let us study the behaviours of vertices $B_1(r)$. By constructions of the morphisms $\gamma_{i_m}$, for $r \in \{0,\dots,h\}$, the hypothesis on $\gamma_{i_{n+r}}$ implies that $B_1(r)$ is a bispecial vertex of the subshift and that for $r \in \{0,\dots,h-1\}$, $B_1(r)$ explodes like $B_2(j)$ does in Figure~\ref{figure: $j < k-1$}. However, the hypothesis on $\ell$ implies that at most the first $\ell$ vertices among $B_1(0), B_1(1), \dots$ can explode strictly before that $B_2(k-1)$ explodes. Also, the hypothesis on $\gamma_{i_{n+h}}$ implies that $B_1(h)$ explodes like $B_2(j)$ does in Figure~\ref{figure: $j = k-1$ and 3 circuits}.

Now let us exactly describe the sequence of evolution depending on $h$ and $\ell$.

When $h < \ell$, the vertex $B_1(h)$ explodes before $B_2(k-1)$. Let $k'$ be the smallest integer such that $|B_2(k')| \geq |B_1(h)|$. We obviously have $k' \leq k-1$. Then, all bispecial vertices $B_1(0), \dots, B_1(h-1), B_2(0), \dots, B_2(k'-1)$ explode and make the graph keeping type 7 or 8. Then, the explosion of $B_1(h)$ makes the graph $G_{|B_1(h)|}$ evolve as represented in\footnote{Thanks to Lemma~\ref{lemma: decomposition du mot directeur}, we can still suppose that the graph os of type 7.} Figure~\ref{7 to 9} (page~\pageref{7 to 9}) so the graph evolves to a graph of type 9 as in Figure~\ref{appendix type 9}. Then, the explosions of $B_2(k'), \dots, B_2(k-2)$ make the graph evolve as in Figure~\ref{9 to 9}. Finally, the explosion of $B_2(k-1)$ makes the graph evolve as in Figure~\ref{9 to 56}.

When $h \geq \ell$, it means that vertex $B_1(h)$ will not explode strictly before that $B_2(k-1)$ explodes. In that case, Lemma~\ref{lemma: decomposition du mot directeur} allows us to suppose that $B_1(\ell)$ explodes strictly after that $B_2(k-1)$ has exploded and, as a consequence, that so does $B_1(h)$. Consequently, vertices $B_1(0), \dots, B_1(\ell-1), B_2(0), \dots, B_2(k-2)$ explode and make graphs keeping type 7 or 8. Then, the explosion of $B_2(k-1)$ makes the graph $G_{|B_2(k-1)|}$ evolve as in Figure~\ref{7 to 9} so it evolves to a graph of type 9 as in Figure~\ref{appendix type 9}. Then, vertices $B_1(\ell), \dots, B_1(h-1)$ make graphs keeping type 9 as in Figure~\ref{9 to 9}. Finally, the explosion of $B_1(h)$ makes the graph $G_{|B_1(h)|}$ evolve as in Figure~\ref{9 to 56}. 

Now let us compute $|p_1|$ and $|p_2|$. In Figure~\ref{9 to 56}, we see that the two paths in Figure~\ref{appendix type 9} that will become $p_1$ and $p_2$ are the path from the left special vertex to the bispecial vertex and the path from the bispecial vertex to the right special vertex\footnote{Which one is $p_1$ depends on the starting vertex for the circuits.}. In Figure~\ref{9 to 9}, we also see that, while graphs keep being graphs of type 9, these paths always have the same length (because, in Figure~\ref{9 to 9}, they are paths from a left special vertex to a left special vertex and from a right special vertex to a right special vertex). Consequently, the lengths of the paths in Figure~\ref{appendix type 9} that will become $p_1$ and $p_2$ can be computed in the evolution from the last graph of type 7 to the first graph of type 9, i.e., in the evolution of $G_{|B_1(h)|}$ when $h < \ell$ and of $G_{|B_2(k-1)|}$ otherwise.

Suppose that $h$ is smaller than $\ell$. It means that $G_{|B_1(h)|}$ is a  graph of type 7 as represented in Figure~\ref{appendix type 7} where $U_{|B_1(h)|} = B_1(h)$ is the bispecial vertex. It is easily seen that in Figure~\ref{appendix type 7}, the path from the left special vertex to the right special vertex has for length 
\[
	|B_2(k')| - |B_1(h)| = |u_2| + k'(|u_2|+|v_2|) - (|u_1| + h(|u_1|+|v_1|)).
\]
We also see in Figure~\ref{7 to 9} that the path in $G_{|B_1(h)|}$ that will become $p_1$ (resp. that will become $p_2$) is the path from $B_1(h)$ to the left special vertex (resp. from the right special vertex to $B_1(h)$). Consequently, we directly have
\[
	|p_2| = |\gamma_{[0,n-1]}(2)| - \left| \CPref \left(\gamma_{[0,n-1]}(1),\gamma_{[0,n-1]}(2) \right) \right| -1.
\]
To compute, $|p_1|$, we can notice that the longest common prefix of $\theta_{i_{n-1}+1}(1)$ and $\theta_{i_{n-1}+1}(2)$ has the same length as the path starting from $B_1(h)$, going $k-1 - k'$ times through the loop with label $\lambda_R(v_2u_2)$ and ending in the right special vertex which is not $B_1(h)$. Consequently, the path from $B_1(h)$ to the left special vertex has for length
\[
	\left| \CPref\left( \gamma_{[0,n-1]}(1),\gamma_{[0,n-1]}(2) \right) \right| - (k-1 - k') (|u_2| + |v_2|) - (|B_2(k')| - |B_1(h)|)
\]
so
\begin{eqnarray*}
	|p_1| &=& \left| \CPref\left( \gamma_{[0,n-1]}(1),\gamma_{[0,n-1]}(2) \right) \right| - (k-1 - k') (|u_2| + |v_2|) \\
		  & & - (|u_2| + k'(|u_2|+|v_2|) - (|u_1| + h(|u_1|+|v_1|))) -1
\end{eqnarray*}

Now suppose that $h$ is not smaller than $\ell$. It means that $G_{|B_2(k-1)|}$ is a  graph of type 7 as represented in Figure~\ref{appendix type 7} where $U_{|B_2(k-1)|}$ is not the bispecial vertex. It is easily seen that in Figure~\ref{appendix type 7}, the path from the left special vertex to the right special vertex has for length 
\[
	|B_1(\ell)| - |B_2(k-1)| = |u_1| + \ell(|u_1|+|v_1|) - (|u_2| + (k-1)(|u_2|+|v_2|)).
\]
From what precedes, we know that the paths in $G_{|B_2(k-1)|}$ that will become $p_1$ and $p_2$ are respectively the simple path from $U_{|B_2(k-1)|}$ to $B_2(k-1)$ and the path from $B_2(k-1)$ to the left special vertex. Consequently, we directly have
\[
	|p_1| = \left| \CPref( \gamma_{[0,n-1]}(1),\gamma_{[0,n-1]}(2) ) \right| -1
\] 
and
\begin{eqnarray*}
	|p_2| 	&=& \left| \gamma_{[0,n-1]}(2) \right| - \left| \CPref( \gamma_{[0,n-1]}(1),\gamma_{[0,n-1]}(2) ) \right|  \\
			& & - (|u_1| + \ell(|u_1|+|v_1|) - (|u_2| + (k-1)(|u_2|+|v_2|))) - 1.
\end{eqnarray*}
\end{proof}

\end{document}